\pgfplotsset{compat=1.14, set layers}
\definecolor{refcolor}{rgb}{0.23, 0.27, 0.29}
\newtheorem{theorem}{Theorem}[section]
\newtheorem{lemma}[theorem]{Lemma}
\newtheorem{corollary}[theorem]{Corollary}
\newtheorem{definition}[theorem]{Definition}
\newtheorem{remark}[theorem]{Remark}
\newtheorem{observation}[theorem]{Observation}
\newcommand{\arr}{\xrightarrow{}}
\newcommand{\larr}{\xleftarrow{}}
\newcommand{\fA}{\mathcal A}
\newcommand{\fC}{\mathcal C}
\newcommand{\fG}{\mathcal G}
\newcommand{\fP}{\mathcal P}
\newcommand{\ps}{\mathcal P}
\newcommand{\undec}{\texttt{undecided}}
\newcommand{\incid}{\operatorname{inc}}
\newcommand{\aps}{\operatorname{act}}
\newcommand{\pairs}{\operatorname{pairs}}
\newcommand{\pred}{\operatorname{pred}}
\newcommand{\first}{\operatorname{first}}
\newcommand{\last}{\operatorname{last}}
\newcommand{\comp}{M}
\newcommand{\merge}{\operatorname{merge}}
\newcommand{\prev}{\operatorname{prev}}
\newcommand{\new}{\operatorname{new}}
\newcommand{\stays}{\operatorname{stay}}
\newcommand{\timey}{\operatorname{time}}
\newcommand{\fin}{\operatorname{fin}}
\newcommand{\betw}{\operatorname{between}}
\newcommand{\done}{\operatorname{done}}
\newcommand{\succc}{\operatorname{succ}}
\newcommand{\aone}{\mathcal A'_{\operatorname{I}}}
\newcommand{\atwo}{\mathcal A'_{\operatorname{II}}}
\newcommand{\lovasz}{Lov\'{a}sz\xspace}
\newcommand{\inn}{{\operatorname{in}}}
\newcommand{\out}{{\operatorname{out}}}
\newcommand{\sinn}{\Sigma_{\inn}}
\newcommand{\sout}{\Sigma_{\out}}
\newcommand{\phinn}{\phi_{\inn}}
\newcommand{\phout}{\phi_{\out}}
\newcommand{\ginn}{g_{\inn}}
\newcommand{\gout}{g_{\out}}
\newcommand{\pinn}{p_{\inn}}
\newcommand{\pout}{p_{\out}}
\newcommand{\noco}{\mathcal N}
\newcommand{\edco}{\mathcal E}
\newcommand{\gee}{g}
\newcommand{\rooot}{r}
\newcommand{\mpc}{\textsf{MPC}\xspace}
\newcommand{\lcl}{\textsf{LCL}\xspace}
\newcommand{\lcls}{\textsf{LCL}s\xspace}
\newcommand{\local}{\textsf{LOCAL}\xspace}
\newcommand{\congest}{\textsf{CONGEST}\xspace}
\newcommand{\llle}{\textsf{LLL}\xspace}
\newcommand{\class}{\textsf{CLASS\xspace}}
\newcommand{\poly}{\operatorname{\text{{\rm poly}}}}
\newcommand{\logstar}[1]{\log^{*} #1}
\newif\ifdraft
\Crefname{paragraph}{Paragraph}{Paragraphs}
\begin{document}

\begin{center}
    {\huge \bf Exponential Speedup Over Locality in \mpc with Optimal Memory} \\
\vspace{1cm}

\begin{minipage}[H]{17cm} 
{\large \textbf{Alkida Balliu}, Gran Sasso Science Institute -- \href{mailto:alkida.balliu@gssi.it}{\texttt{alkida.balliu@gssi.it}}} \vspace{0.5mm}\\
{\large \textbf{Sebastian Brandt}, CISPA Helmholtz Center for Information Security -- \href{mailto:brandt@cispa.de}{\texttt{brandt@cispa.de}}} \vspace{0.5mm}\\
{\large \textbf{Manuela Fischer}, ETH Zurich -- \href{mailto:manuela.fischer@inf.ethz.ch}{\texttt{manuela.fischer@inf.ethz.ch}}} \vspace{0.5mm}\\
{\large \textbf{Rustam Latypov\footnotemark}, Aalto University -- \href{mailto:rustam.latypov@aalto.fi}{\texttt{rustam.latypov@aalto.fi}}} \vspace{0.5mm}\\
{\large \textbf{Yannic Maus}, TU Graz -- \href{mailto:yannic.maus@ist.tugraz.at}{\texttt{yannic.maus@ist.tugraz.at}}} \vspace{1mm}\\
{\large \textbf{Dennis Olivetti}, Gran Sasso Science Institute -- \href{mailto:dennis.olivetti@gssi.it}{\texttt{dennis.olivetti@gssi.it}}} \vspace{0.5mm}\\
{\large \textbf{Jara Uitto}, Aalto University -- \href{mailto:jara.uitto@aalto.fi}{\texttt{jara.uitto@aalto.fi}}} \vspace{0.5mm}\\
\end{minipage}

\vspace{5mm}
\begin{minipage}[H]{13.3cm}
\begin{center}
    {\bf Abstract} \\ 
\end{center}

Locally Checkable Labeling (\lcl) problems are graph problems in which a solution is correct if it satisfies some given constraints in the local neighborhood of each node. 
Example problems in this class include maximal matching, maximal independent set, and coloring problems.
A successful line of research has been studying the complexities of \lcl problems on paths/cycles, trees, and general graphs, providing many interesting results for the \local model of distributed computing. In this work, we initiate the study of \lcl problems in the low-space Massively Parallel Computation (\mpc) model. In particular, on forests, we provide a method that, given the complexity of an \lcl problem in the \local model, automatically provides an exponentially faster algorithm for the low-space \mpc setting that uses optimal global memory, that is, truly linear. \\

While restricting to forests may seem to weaken the result, we emphasize that all known (conditional) lower bounds for the \mpc setting are obtained by lifting lower bounds obtained in the distributed setting \emph{in tree-like networks} (either forests or high girth graphs), and hence the problems that we study are challenging already on forests. Moreover, the most important technical feature of our algorithms is that they use optimal global memory, that is, memory linear in the number of edges of the graph. In contrast, most of the state-of-the-art algorithms use more than linear global memory. Further,  they typically start with a dense graph, sparsify it, and then solve the problem on the residual graph, exploiting the relative increase in global memory. On forests, this is not possible, because the given graph is already as sparse as it can be, and using optimal memory requires new solutions.

\end{minipage}
\end{center}

\vfill
\thispagestyle{empty}
\footnotetext{Supported in part by the Academy of Finland, Grant 334238}

\newpage
\thispagestyle{empty}
\tableofcontents

\newpage
\pagenumbering{arabic}

\section{Introduction}
The Massively Parallel Computation (\mpc) model, introduced in~\cite{KarloffSV10} and later refined by~\cite{mpcrefine1, mpcrefine2, broadcast}, is a mathematical abstraction of modern data processing platforms such as MapReduce~\cite{dg04}, Hadoop~\cite{White:2012}, Spark~\cite{ZahariaCFSS10}, and Dryad~\cite{Isard:2007}.
Recently, tremendous progress has been made on fundamental \emph{graph problems} in this model, such as maximal independent set (MIS), maximal matching (MM)~\cite{GU19, Czumaj20}, and coloring problems~\cite{ChangMPCColoring, detcol}.
All these problems, and many others, fall under the umbrella of \emph{Locally Checkable} problems, in which the feasibility of a solution can be checked by inspecting local neighborhoods. They also serve as abstractions for fundamental primitives in large-scale graph processing and have recently gained a lot of attention~\cite{BCMOS21, Balliu21, detcol,spanner, balliu2020, Chang2020,GGC20}. \emph{Locally checkable labelings (\lcls)} are locally checkable problems restricted to constant degree graphs. They are defined through a set of feasible configurations from the viewpoint of each individual node. A more formal definition of \lcls is deferred to \Cref{sec:prelim}.

\lcls have been a rich source of research in various models of computation, because they can be seen as a starting point to understand locally checkable problems in general, and this holds independently from the model. For example, in the distributed setting, techniques developed to understand \lcls \cite{sinkless16} have then been used to prove lower bounds in the unbounded degree setting, which the \lcl setting does not include, e.g., for the the maximal independent set problem, or the $\Delta$-coloring problem \cite{Balliu2019,lbrs,hideandseek}.
In the distributed \local model of computing, a lot is known about \lcls: for example, if the graph on which we want to solve the problem is a tree, then there is a discrete set of possible complexities, and in some cases, given an \lcl, we can even automatically decide its distributed time complexity.  Our goal is to bring to the parallel setting, and in particular to the \mpc model, the knowledge that researchers developed about \lcls in the distributed setting, while also developing new techniques that can be used in the parallel setting.
We show that, on forests, the mere knowledge of what is the distributed complexity of a problem is enough to obtain blazingly fast algorithms in the \mpc setting. In particular, we obtain \mpc algorithms that are \emph{exponentially faster} than the best distributed ones. 
We summarize our main result. 

\begin{framed}
	\centering
	The complexity of any \lcl problem on forests in the \mpc model is exponentially lower than its distributed complexity, even when using optimal memory bounds.
\end{framed}

More in detail, in our work, we solve \lcl problems in forests in the most restrictive \emph{low-space} \mpc model with \emph{linear} total memory, which is the most \emph{scalable} variant of the \mpc model.
Our results provide an automatic method that, for all \lcl problems, yields an algorithm that solves the given problem \emph{exponentially} faster than its optimal distributed counterpart.
The resulting algorithms are \emph{component-stable} \cite{focs,componentstable}, which implies that the solutions in individual connected components are independent of the other components. Our results are in some sense optimal: for problems that in the \local model can be solved in $n^{o(1)}$, finding more-than-exponentially faster component-stable algorithms would violate the widely-believed $1$ vs.\ $2$ cycle conjecture in the \mpc setting.

\subparagraph{Why do we care about trees and forests?}
All known conditional lower bounds\footnote{Proving unconditional lower bounds for the \mpc model would imply a major breakthrough in circuit complexity and seems out of reach \cite{Roughgarden18}.} for problems in the \mpc setting are derived by lifting lower bounds that hold in the \local model of distributed computing \cite{focs,componentstable}. Most of the lower bounds known in the \local model are actually proved either on trees or on high-girth graphs (where the neighborhood of each node corresponds to a tree): see, e.g., \cite{KuhnMW16,hideandseek,Balliu2019,lbrs,sinkless16}. It follows that essentially all the conditional lower bounds known in the \mpc setting already hold on forests\footnote{As lifting lower bounds from the \local model to the \mpc model requires hereditary graph classes one cannot immediately lift a lower bound in the \local model that holds on trees. Instead, a lower bound in the \local model on trees implies the same lower bound in the \local model for forests which can then be lifted to a lower bound for \mpc algorithms on forests.}. Despite this fact, with a few exceptions, there is no work on upper bounds on forests in the \mpc model---a gap we aim to fill.

Moreover, understanding the complexity of problems on trees has been already shown to be essential in the \local model: it is typically the case that interesting problems are already challenging on trees, and often even in regular balanced trees of small degree. In fact, most lower bounds known in the \local model hold exactly in this setting. Due to the lifting, the same statement adapted to forests is true for all recent \mpc lower bounds. Hence, to decrease the relevance of trees and forests, we either need completely new lower bound techniques in the \local model coupled with completely new lifting theorems, or completely new lower bound techniques for the \mpc model. 

At first glance it may seem that our results are easy to achieve, because we restrict to forests. Conversely, we would like to emphasize that many state-of-the-art algorithms for problems like MIS and coloring work as follows~\cite{GU19, Czumaj20}: start with a dense graph which requires a lot of memory to store, sparsify it, and then use the freed global memory to solve the problem faster on the sparsified part. On forests, this is not possible, because the given graph is already as sparse as it can be.

\subparagraph{The \mpc Model.}

In the \mpc model, we have $M$ machines who communicate in an all-to-all fashion. We focus on problems where the input is modeled as a graph with $n$ vertices, $m$ edges and maximum degree $\Delta$; we call this graph the \textit{input graph}. 
Each node has a unique ID of size $b=O(\log n)$ bits from a domain $\{1,2,\dots,N\}$, where $N = \text{poly}(n)$. Each node and its incident edges are hosted on a machine(s) with $S = O(n^\delta)$ \textit{local memory}, where $\delta \in (0,1)$ and the units of memory are \emph{words} of $O(\log n)$ bits. When the local memory is bounded by $O(n^\delta)$, the model is called \textit{low-space} (or \textit{sublinear}). 
The number of machines is chosen such that $M = m / S = \Theta(m/n^\delta)$. For trees, where $m=\Theta(n)$, this results in $\Theta(n^{1-\delta})$ machines, that is, a \emph{total memory} (or \emph{global memory}) of $M \cdot S = \Theta(n)$. For simplicity\footnote{In practice, it is assumed that the virtual machines can be shuffled between physical machines, such that the sum of the memory of the virtual machines hosted on any single physical machine is $O(n^\delta)$.}, we assume that each machine $i$ simulates one virtual machine for each node and its incident edges that $i$ hosts, such that the local memory restriction becomes that no virtual machine can use more than $O(n^\delta)$ memory. 

During the execution of an \mpc algorithm, computation is performed in synchronous, fault-tolerant rounds. In each round, every machine performs some (unbounded) computation on the locally stored data, then sends/receives messages to/from any other machine in the network. Each message is sent to exactly one other machine specified by the sending machine. All messages sent and received by each machine in each round, as well as the output, have to fit into local memory. The time complexity is the number of rounds it takes to solve a problem. Upon termination, each node (resp.\ its hosting machine) must know its own part of the solution.
For example in the case of node-coloring, the machine hosting node $u$ must decide on the color of $u$ upon termination of the algorithm.

Unlike in most other works, our algorithms employ $O(m+n)$ words of total memory, which is the strictest possible as it is only enough to store a constant number of copies of the input graph. Note that if we were to allow superlinear $O(m^{1+\delta})$ global memory in our constant-degree setting, many \local algorithms with complexity $O(\log n)$ could be trivially sped up exponentially in the low-space \mpc model by applying the well-known graph exponentiation technique by Lenzen and Wattenhofer~\cite{wattenhofer}. A crucial challenge that comes with the linear global memory restriction is that only a small fraction of $n^{1 - \delta}$ of the (virtual) machines can simultaneously utilize all of their available local memory. Thus, with strictly linear global memory we are forced to develop new techniques which must avoid gathering local neighborhoods, i.e., fundamentally divert from direct simulations of message passing algorithms.

\subsection{The Distributed Complexity Landscapes}\label{ss:landscape}
In the last decade, there has been tremendous progress in understanding the complexities of \lcl{}s in various models of distributed and parallel computing. 
A prime example is the \local model~\cite{linial}, where the input graph corresponds to a message passing system, and the nodes must output their part of the solution according only to local information about the graph. Another example is the \congest model, which is a \local model variant where the message size is restricted to $O(\log n)$ bits~\cite{peleg}. A curious fact about \lcl{}s in the distributed setting is the existence of \emph{complexity gaps}, that is, some complexities are not possible at all. For example, it is known that there are no \lcl{}s with a distributed time complexity in the \local and \congest model that lies between $\omega(\log^* n)$ and $o(\log n)$.
In these two models, the \textit{whole} complexity landscape of \lcl problems is now understood for some important graph families. For instance, a rich line of work~\cite{NaorS95, tree1, CP19, tree2, tree3, balliu2020, Chang2020} recently came to an end when a complexity gap between $\omega(1)$ and $o(\log^* n)$ was proved~\cite{lclcomplete}, completing the randomized/deterministic complexity landscape of \lcl problems in the \local model for trees. 
In the \congest model, the authors of \cite{BCMOS21} showed that, on trees, the complexity of an \lcl problem is asymptotically equal to its complexity in the \local model, whereas the same does not hold in general graphs.
In the randomized/deterministic \local and \congest models, recent work showed that the complexity landscapes of \lcl problems for rooted regular trees are fully understood~\cite{Balliu21}, while the complexity landscapes of \lcl problems in the \local model for rings and tori have already been known for some while~\cite{Brandt17}. Even for general (constant-degree) graphs, the \local complexity landscape of \lcl problems is almost fully understood~\cite{NaorS95, sinkless16, CKP19, CP19, GhaSu17, FischerG17, GHK18, BHKLOS18, tree3, RozhonG20, GhaffariGR21}, only missing a small part of the picture related to the randomized complexity of Lovász Local Lemma (\llle). 

In the case of trees, for deterministic algorithms in the \local model, it is known that there is a discrete set of possible complexities, that we  divide into four categories:

\begin{itemize}
	\setlength{\itemsep}{2mm}
	\item \textbf{Tiny regime}: contains the complexities $O(1)$ and $\Theta(\log^* n)$. \vspace{-1mm}
	\begin{itemize}[noitemsep]
		\item Example problems: maximal independent set, maximal matching, $(\Delta+1)$-vertex coloring\footnote{We denote the maximum degree of the graph by $\Delta$.}, $(2\Delta-1)$-edge coloring, and trivial problems (e.g., all nodes must output $0$). 
	\end{itemize}
	\item \textbf{Mid regime}: contains the complexity $\Theta(\log n)$. \vspace{-1mm}
	\begin{itemize}[noitemsep]
		\item Example problems: sinkless orientation \cite{sinkless16}, $3$-coloring, and $\Delta$-coloring.
	\end{itemize}
	\item \textbf{High regime}: contains the complexities $\Theta(n^{1/k})$, for all $k \in \mathbb{N}$. \vspace{-1mm}
	\begin{itemize}[noitemsep]
		\item Example problems: 2-coloring and $2\tfrac{1}{2}$-coloring \cite{CP19}.
	\end{itemize}
\end{itemize}

Moreover, it is known that randomness can help only in the mid regime, and in particular that \emph{some} problems requiring $\Theta(\log n)$ for deterministic algorithms have randomized complexity $\Theta(\log \log n)$, which constitutes our fourth category---\textbf{Low regime}. Problems residing in the low regime include sinkless orientation and $\Delta$-coloring.

On forests, the complexity landscape in the \local model is the same as on trees.
While this is intuitively evident, it can also be shown formally using an analogous approach to the one used in the proof of~\cite[Lemma 3.3]{lclcomplete}.

\subsection{Our Contributions}

Our main contribution is showing that, given any \lcl problem (see \Cref{def:lcl}) on trees that has deterministic (resp.\ randomized) complexity $T$ in the \local model, we can automatically obtain an \mpc algorithm with deterministic (resp.\ randomized) complexity $O(\log T)$ on forests. In particular, we prove the following.

\begin{restatable}{theorem}{thmSpeedUp} \label{thm:SpeedUp}
	Consider an \lcl problem on trees with deterministic time complexity $f(n)$ and randomized time complexity $g(n)$ in the \local model. This problem has deterministic time complexity $O(\log f(n))$ and randomized time complexity $O(\log g(n))$ in the low-space \mpc model on forests using optimal $O(m+n)$ words of global memory. The provided algorithms are component-stable.
\end{restatable}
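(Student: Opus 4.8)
The plan is to organise the argument around the discrete classification of \lcl complexities on forests recalled in \Cref{ss:landscape}. Up to constants, $f(n)$ is one of $O(1)$, $\Theta(\log^* n)$, $\Theta(\log n)$, $\Theta(n^{1/k})$ for a fixed $k\in\mathbb{N}$, and $g(n)$ can differ from $f(n)$ only in the mid regime, where $(f,g)=(\Theta(\log n),\Theta(\log\log n))$ is possible. So it suffices to give, for each regime, a low-space \mpc algorithm of the stated round complexity using $O(m+n)$ total memory and computing each connected component independently; and since $g=\Theta(f)$ outside the mid regime, a genuinely randomized argument is needed only for $(f,g)=(\Theta(\log n),\Theta(\log\log n))$. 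The constant regime is immediate: gather each $O(1)$-radius ball in $O(1)$ \mpc rounds (size $O(1)$ as $\Delta=O(1)$, so total memory stays linear) and run the \local algorithm locally. For the remaining regimes I would \emph{not} simulate the \local algorithm verbatim; instead I would use the structural side of the classification --- an \lcl in a given regime admits a \local algorithm of restricted shape (solvable on a rake-and-compress decomposition; solvable given an $O(1)$-colouring; solvable via nested ``long-path'' decompositions of depth $n^{1/k}$) --- and design a tailored \mpc algorithm exploiting that shape.

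The workhorse is a memory-efficient \mpc computation of a rake-and-compress-type decomposition: alternately delete low-degree ``leaf-like'' nodes (rake) and contract maximal low-degree paths (compress). The crucial point is that compression contracts a path of length $\ell$ in $O(\log\ell)$ \mpc rounds by \emph{pointer jumping}, which has $O(1)$ memory footprint per node, whereas collecting $\ell$-radius balls would cost a factor $\Delta^{\Theta(\ell)}$ in memory; and each rake shrinks the still-active part geometrically, so later phases can exponentiate more aggressively without the global memory leaving $O(m+n)$. In the $f(n)=\Theta(\log n)$ regime I would compute and then process, from the top layer down, an $O(\log n)$-layer decomposition in which each layer reads only a constant number of \emph{virtual} hops (constant-radius in the contracted graph, realised through the precomputed jump pointers), and show the whole thing fits into $O(\log\log n)$ \mpc rounds. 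In the $\Theta(\log^* n)$ regime the \local algorithm needs only a weak symmetry-breaking object (a bounded-distance $O(1)$-colouring), which I would produce in $O(\log\log^* n)$ \mpc rounds via a Linial-style iterated colour reduction accelerated so that one \mpc round performs exponentially many reduction steps, arranged --- using the decomposition and the geometric shrinking of the set of not-yet-finalised nodes --- to stay within linear memory. For $f(n)=\Theta(n^{1/k})$ we are allowed $O(\log n)$ \mpc rounds, so we can afford to peel and pointer-jump-contract ``long path'' structures of length $\Theta(n^{1/k})$ in $O(\log n)$ rounds and recurse $k-1$ times, mirroring the $\Theta(n^{1/k})$-round \local algorithms.

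For the randomized bound I would feed a randomized $\Theta(g(n))$-round \local algorithm into the shattering framework: $O(\log g(n))$ \mpc rounds reach a state in which the unsolved nodes induce components of $\poly\log n$ size, and then the deterministic procedure above solves each component in $O(\log\log\log n)=O(\log g(n))$ rounds; the shattering parameters are chosen so the exponentiated phases only touch a sub-instance small enough to keep the memory linear. Component-stability holds by construction: the decomposition, the jump pointers, the shattering, and the layer-by-layer solution are each computed separately inside every connected component and never cross component boundaries.

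The main obstacle --- and the reason the classical Lenzen--Wattenhofer graph-exponentiation speedup \cite{wattenhofer} does not suffice --- is the truly linear global memory: only an $O(n^{1-\delta})$ fraction of machines can be ``full'' at once, so we cannot let every node gather its $\omega(1)$-radius ball, since each such ball already has size $\Delta^{\omega(1)}=\omega(1)$ and summing over all $n$ nodes overflows. Overcoming this forces the two ingredients above to interact: (i) pointer jumping along the low-degree paths produced by compression, so the heavy rounds have $O(1)$ footprint per node, and (ii) an amortised accounting showing that whenever aggressive exponentiation is applied the active sub-instance is small enough that the extra memory, summed over all layers and phases, stays $O(m+n)$. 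Making (i) and (ii) fit together --- in particular proving that the top-down solution phase ever only needs a constant number of virtual hops, and bounding memory across all phases simultaneously --- is where I expect the real work to be.
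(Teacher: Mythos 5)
Your overall architecture for the tiny, mid and low regimes essentially coincides with the paper's: a case analysis over the discrete complexity classes, a memory-frugal rake-and-compress computed by interleaving peeling with exponentiation (\Cref{sec:mid}), and, for the randomized case, constant-round shattering to components of size $O(\log n)$ that are then handed to the deterministic mid-regime solver (\Cref{sec:low}). One small divergence in the tiny regime: the paper does not rely on any ``geometric shrinking of not-yet-finalised nodes''; linear memory is obtained by reducing to directed pseudoforests and observing that each node need only remember one far identifier plus a vector of $O(\log^* n)$ small colors, which fits in $O(1)$ words (\Cref{lem:distanceColoringMPC}). Also note that the theorem hands you only the complexity, not an algorithm; the paper closes this with a brute-force search argument (\Cref{sec:automatic}), which your appeal to ``restricted shape'' implicitly presupposes.

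The genuine gap is the high regime. Your plan --- ``peel and pointer-jump-contract long-path structures of length $\Theta(n^{1/k})$ and recurse $k-1$ times, mirroring the $\Theta(n^{1/k})$-round \local algorithms'' --- is not a worked-out algorithm and skips exactly the difficulties that constitute the bulk of the paper (\Cref{sec:high}, \Cref{sec:apphigh}). The paper does not mirror the \local algorithm at all: it roots the tree in $O(\log n)$ rounds (\Cref{sec:rooting}), runs a single leaves-to-root pass that maintains completability sets on pointers, and then a root-to-leaves pass that fixes labels, for \emph{every} solvable \lcl, independently of $k$. The two issues your sketch does not address are precisely the critical ones. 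First, the input tree may have depth $\omega(\log n)$ and new degree-$2$ nodes keep joining paths while pointer jumping on those paths is still in progress; it is not at all obvious that this interleaved, non-sequential process terminates in $O(\log n)$ \mpc rounds, and the paper needs a specifically designed potential function (\Cref{lem:calc}, \Cref{lem:runtimephaseone}) to prove it. Second, storing completability information on all pointers created by exponentiation costs up to $\Theta(\log n)$ words per node, which already violates $O(m+n)$ global memory; the paper's fix is a dedicated $O(\log\log n \cdot \log^* n)$-round preprocessing that shrinks the instance to a compatibility tree with $O(n/\log n)$ nodes before the pointer phase starts (\Cref{sec:trafothere}, \Cref{cor:fewernodes}), plus a matching postprocessing to re-expand the solution. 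You flag the amortized memory accounting as ``where the real work is,'' but without these two ingredients (or substitutes for them) the $O(\log n)$ runtime and the linear-memory claim for $f(n)=\Theta(n^{1/k})$ do not follow, so the high-regime case of the theorem remains unproved in your proposal.
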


Put it differently, a problem in the \local model can only have a deterministic complexity  $f(n) \in \{ \Theta(1), \Theta(\log^* n), \Theta(\log n)\} \cup \{\Theta(n^{1/k}) ~|~ k \in \mathbb{N}\}$, and we show that it is enough to know the asymptotic value of $f(n)$ in order to obtain a deterministic \mpc algorithm with complexity $O(\log(f(n))) \in \{ O(1), O(\log \log^* n), O(\log \log n), O(\log n)\}$.

Moreover, it is known that for all $f(n) \not\in \Theta(\log n)$, the \local randomized complexity of the problem is the same as the deterministic one. Instead, for $f(n) \in \Theta(\log n)$, the \local randomized complexity $g(n)$ can be either $\Theta(\log n)$ or $\Theta(\log \log n)$. If it is $\Theta(\log \log n)$, then we provide an \mpc algorithm with randomized complexity $O(\log \log \log n)$.
If we dismiss the component-stability requirement, we can obtain the same $O(\log \log \log n)$ runtime with a deterministic \mpc algorithm. 

\begin{restatable}{theorem}{corSpeedUpDet} \label{cor:SpeedUpDet}
	Consider an \lcl problem on trees with randomized time complexity $g(n) = \Theta(\log \log n)$ in the \local model. This problem has deterministic time complexity $O(\log \log \log n)$ in the low-space \mpc model on forests using optimal $O(m+n)$ words of global memory. This algorithm is component-unstable.
\end{restatable}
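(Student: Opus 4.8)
The plan is to \emph{derandomize} the randomized low-space \mpc algorithm that \Cref{thm:SpeedUp} already provides. Since $g(n)=\Theta(\log\log n)$, that theorem gives a component-stable \emph{randomized} \mpc algorithm of round complexity $O(\log\log\log n)$ using $O(m+n)$ words of memory, and the goal is to match it deterministically. Component-stability will be the price: the derandomization fixes a single (pseudo)random seed by a computation over the whole forest, so a node's output can depend on the structure of other connected components --- which \Cref{cor:SpeedUpDet} permits.

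The first step is to exploit the shape of the randomized \local algorithm underlying \Cref{thm:SpeedUp}. By the \local classification on trees (\Cref{ss:landscape}), a problem with randomized complexity $\Theta(\log\log n)$ is solved by a ``shattering''-type algorithm: a randomized phase commits a partial labeling so that, with high probability, the still-uncommitted part of the forest breaks into components of size at most $\operatorname{poly}(\log n)$, after which a deterministic completion is run inside those components. The \mpc speedup of \Cref{thm:SpeedUp} (and the linear-memory machinery behind it) handles the deterministic completion in $O(\log\log\log n)$ rounds, so it suffices to replace the randomized phase by a deterministic procedure that, in $O(\log\log\log n)$ \mpc rounds, outputs a partial labeling leaving only $\operatorname{poly}(\log n)$-size uncommitted components.

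For this I would use the method of conditional expectations. The event ``some uncommitted component is too large'' is dominated by a pessimistic estimator $\Phi$ written as a sum of local contributions --- one per subtree of a suitable size --- each depending only on the random bits in a bounded region and with $\mathbb{E}[\Phi]=o(1)$. Replacing the true random bits by the output of a small-seed, almost $k$-wise independent generator (the analysis tolerates this because the probabilities involved are only polynomially small, so a polynomially small bias is harmless), one then fixes the seed bits block by block: a candidate assignment is scored by having every machine aggregate its share of $\Phi$, and an \mpc summation over the forest evaluates the estimator; a block value that does not increase it is kept. The seed emerging from this search is a global object, which is exactly the source of component-instability.

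The crux --- and the step I expect to be the real work --- is making the seed-search fit within $O(\log\log\log n)$ \mpc rounds. A direct implementation is too slow: the seed governing a multi-round randomized process may have $\operatorname{poly}(\log n)$ bits, and even with an $O(\log n)$-bit seed the contributions to $\Phi$ concern subtrees of diameter up to $\operatorname{poly}(\log n)$, which no single machine can assemble in fewer than $\Theta(\log\log n)$ rounds. Overcoming this needs the fine structure of the $\Theta(\log\log n)$ regime on forests: concentrating the randomness into very few (ideally $O(1)$) rounds with $O(1)$ coins per node, so that each estimator contribution becomes genuinely $O(1)$-local and is evaluable on one machine; shrinking the seed to $O(\log n)$ bits so that all of it is fixed in $O(1/\delta)=O(1)$ blocks; and reusing the diameter-oblivious, linear-memory primitives developed for \Cref{thm:SpeedUp} to evaluate and minimize $\Phi$. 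Granting this, the derandomization costs $O(1)$ extra rounds and the deterministic complexity is $O(\log\log\log n)$. The remaining verifications are routine: that the shattering analysis survives bounded independence, that the residual ``complete the partial labeling'' problem is an \lcl of deterministic \local complexity $O(\log\log n)$ as a function of $n$ (so that \Cref{thm:SpeedUp} applies to it), and that every intermediate object fits in $O(m+n)$ words.
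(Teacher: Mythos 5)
Your proposal follows essentially the same route as the paper: reduce to a constant-round shattering step plus residual instances of size $O(\log n)$, derandomize the shattering via the method of conditional expectations over a globally shared seed (which is precisely the source of component-instability), and finish the small instances with the deterministic mid-regime machinery. The ``crux'' you leave conditional---concentrating the randomness into $O(1)$ rounds with $O(1)$ bits per node and carrying out the seed search within the memory and round budget---is exactly what the paper resolves by citation rather than construction: the normal form in \Cref{lem:sublogarithmicReduction} (from \cite{BCMOS21,CP19}, built on the $\poly\Delta=O(1)$-round shattering of \cite{FischerG17}) gives the constant-locality randomized phase, \cite{componentstable} supplies its deterministic, memory-optimal, component-unstable replacement in $O(1)$ rounds, and \Cref{thm:mid} then solves the residual components in $O(\log\log\log n)$ rounds.
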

By \cite{focs,componentstable}, we know that \Cref{thm:SpeedUp} is in some sense optimal: if a problem requires $T$ deterministic rounds in the \local model, then it requires $\Omega(\min\{\log T, \log \log n\})$ rounds in the low-space \mpc setting for component-stable algorithms, assuming that the infamous $1$ vs.\ $2$ cycle conjecture holds \cite{mpcrefine2,focs,Roughgarden18}. In contrast, \Cref{cor:SpeedUpDet} shows that one can break the conditional  lower bound of $\Omega(\log\log n)$ for deterministic \mpc algorithms for all \lcl problems in the aforementioned class by diverting to component-unstable algorithms. Achieving the same result even for a single problem without dismissing the component-stability requirement would be a major breakthrough, as it would falsify the conjecture.

As a subroutine for solving all problems that belong to the high regime in $O(\log n)$ \mpc rounds, we also develop an $O(\log n)$ round \mpc algorithm for rooting a forest. This rooting algorithm is component-stable, and may be of independent interest, since it is also compatible with arbitrary degrees (see \Cref{lemma:rooting}).

\subparagraph{Additional observations.}
There is a long line of research that provided algorithms for \mpc that are exponentially faster than the best algorithms for the \local model. Most existing results achieved these speedup results by using additional global memory, that is, $\omega(m)$ words~\cite{Behnezhad19, GGC20, detcol, componentstable}. We emphasize that, deviating from the usual approach,  all of our results use \emph{optimal} \mpc parameters, in the sense that we work in the low-space setting with $O(n^\delta)$ words of local memory and $O(m+n)$ words of global memory. 

Hence, our contribution is twofold, on the one hand we prove that we can indeed achieve this exponential speedup for all \lcl{}s, while on the other hand we show that this exponential speedup can be achieved without requiring any additional memory. 
Furthermore, graph problems in trees and forests are widely unexplored, despite their central role that we have already elaborated on.
It is known that a $4$-coloring, MIS, and maximal matching can be found in $O(\log \log n)$ rounds~\cite{GGC20}.
However, the coloring result heavily relies on randomness and the MIS and matching results require a (small) overhead in the total memory.
To compare, our results deterministically yield a $3$-coloring in $O(\log \log n)$ rounds with linear total memory.
It is not clear whether randomness can even help in the case of $3$-coloring, which is a significant difference to the case of $4$-coloring.
Furthermore, it is not clear whether the previous approaches to MIS and matching can be extended to work deterministically with the same runtime and with linear total memory.
While the previous work is designed for arbitrary degree graphs, it is not clear whether the algorithms could be tuned to work faster with constant degrees.

\subparagraph{Open Questions.}
In the tiny regime, our results extend to general graphs (see \Cref{thm:logstar}). In the low regime, our results extend to general graphs if we allow slightly more global memory (see \Cref{thm:generalGraphMoreSpace}). Once we reach the mid regime, i.e., logarithmic distributed complexities, we do not know the behaviour in general graphs. This leads to an interesting open question. As mentioned, the asymptotic complexity of any problem on trees is identical in the \local and \congest model, and the same is true (modulo the exact complexity of the \llle in both models) on general graphs as long as the complexity is sublogarithmic~\cite{BCMOS21}. However, there is a an exponential separation between the models for complexities that are at least logarithmic \cite{BCMOS21}. Does such a separation between the complexity of an \lcl in the \local model and the \mpc model also hold for large complexities? Here, of course, we would want to have a doubly exponential separation. 

Interestingly, current conditional lower bounds for the \mpc model cannot prove \mpc lower bounds that are  $\omega(\log\log n)$. So, while our results in the high regime show that any problem on forests can be solved in $O(\log n)$ rounds in the \mpc model, it remains unclear whether we cannot improve on this bound, even without falsifying the $1$ vs. $2$ cycle conjecture.

\subparagraph{Component-stability.}
The term of a \emph{component-stable} \mpc algorithm has been introduced in \cite{focs} in the context of lifting distributed lower bounds to the \mpc setting. By their definition, informally, an algorithm is component-stable if the output of a node does not change if other connected components in the graph are altered (see \Cref{def:componentStability}). 

While initially believed that it might be an artifact of their lifting techniques, Czumaj, Davies and Parter \cite{componentstable} showed the contrary, i.e., they showed that \emph{component-unstable} algorithms can beat the conditional lower bounds of \cite{focs}. 
Their results hold assuming their revised definition of component-stability, which is argued to be more robust (see \Cref{def:componentStabilityRev}). Under their definition, it is not strictly easier nor harder to design algorithms to be component-stable, as compared to the definition of \cite{focs}.
The main difference is that they allow the output of component-stable algorithms to depend on the total number of nodes in the graph and the maximum degree. In our work, we adopt the revised definition of component-stability \cite{componentstable}. See \Cref{ssec:components-stability} and the discussion therein for further details. 
	  
\subsection{Challenges \& Key Techniques}
We now provide an overview of the challenges that we had to tackle in order to prove our results, and a very high level explanation of the key techniques that we used to solve them.

The tiny regime serves as a good warm-up to see why using an optimal amount of global memory is difficult. The most technically involved part is the high regime, where we obtain an $O(\log n)$-time \mpc algorithm for any \lcl problem.

\subparagraph{Graph Exponentiation.} A reoccurring challenge for all regimes lies in respecting the linear global memory, which roughly means that on average, every node can use only a constant amount of memory. This is particularly unfortunate because almost all recent \mpc results---and in particular all that achieve exponential speedups---rely on the memory-intense \emph{graph exponentiation} technique \cite{wattenhofer}. Informally, this technique enables a node to gather its $2^k$-hop neighborhood in $k$ communication rounds. Doing this in parallel for every node in the graph results in a $\Delta^{2^k}$ overhead in global memory. For this technique to be useful, $k$ has to be $\omega(1)$, yielding a non-constant multiplicative increase in the global memory requirement. In order to use this technique but not violate linear global memory, we develop new solutions that are discussed in the following paragraphs.

\subparagraph{Tiny regime $f(n)=\Theta(1)$ and $f(n)=\Theta(\logstar n)$:}
Handling the $\Theta(1)$ complexity is trivial, since any \local algorithm for \lcl{}s can be simulated in the \mpc setting. For the $\Theta(\logstar n)$ class, it is known from prior work that all problems can be solved in the \local model in a very specific way: reduce to the problem of computing a distance-$k$ coloring with a small enough number of colors, where $k$ is a constant that depends on the problem. In a distance-$k$ $c$-coloring, each node is assigned a color in $\{1,\ldots, c\}$ such that nodes at distance at most $k$ have different colors.  Such a coloring can be computed in $O(\log^* n)$ rounds in the \local model, and it could be computed easily in the \mpc setting in $O(\log \log^* n)$ rounds, by exploiting the graph exponentiation technique, if we allow an additional $O(\log^* n)$ factor overhead in the amount of global memory.

We show that this overhead is not required, by developing a novel \mpc algorithm for coloring. The algorithm that we provide reduces the problem of coloring a general graph to coloring directed pseudoforests, that is, graphs where all edges are oriented and every node has at most one outgoing edge. Then, we show that in directed pseudoforests, it is possible to solve the coloring problem through a variant of graph exponentiation that only requires keeping track of a constant number of IDs. This way, the memory use of each node is constant, and the global memory is linear.

\subparagraph{High regime $f(n)=\Theta(n^{1/k})$, for all $k \in \mathbb{N}$:}

We explicitly provide, for any solvable \lcl, a novel algorithm that has a runtime of $O(\log n)$. Essentially, we solve each tree in the forest separately, hence we will consider trees in the following argumentation.  On a high level, our algorithm first roots the tree using our $O(\log n)$-time tree rooting algorithm (see \Cref{sec:rooting}), and then proceeds in two phases. In the first phase, roughly speaking, the goal is to compute, for a substantial number of nodes $v$, the set of possible output labels that can be output at $v$ such that the label choice can be extended to a (locally) correct solution in the subtree hanging from $v$. This is done in an iterative manner, proceeding from the leaves towards the root. The second phase consists of using the computed information to solve the given \lcl from the root downwards.

While this outline sounds simple, there are a number of intricate challenges that require the development of novel techniques, both in the design of the algorithm and its analysis. For instance, the depth of the input tree may be $\omega(\log n)$ (which prevents us from performing the above ideas in a sequential manner),
and the storage of the required completability information grows exponentially when using graph exponentiation, exceeding the available global memory. Our key technical contributions are the following.
\begin{itemize}
	\item The design of a process that allows for interleaving graph exponentiation steps and compressing the graph (and compatibility information) such that the process is also reversible (second phase of the algorithm). The main challenge here is that multiple graph exponentiation processes executed on individual parts of the tree have to be merged, simultaneously or at different times, into one process during the execution.
	
	\item The design of a fine-tuned potential function for the analysis of the complex algorithm resulting from addressing the aforementioned issues and the highly non-sequential behavior arising from interleaving graph exponentiation steps.
\end{itemize}

\subparagraph{Mid regime $f(n)=\Theta(\log n)$: } We would wish to use the algorithm of Chang and Pettie \cite{CP19} as a black box. On a very high level idea, their \local algorithm uses $O(\log n)$ rounds to compute a rake-and-compress decomposition of size $O(\log n)$, which is essentially the classic $H$-partition by Miller and Reif \cite{MillerReif89}. Then, compatibility information of the given \lcl problem (see \Cref{sec:mid} for more details) is propagated layer by layer to the top, and then labels are fixed at the top and propagated down.

Applying known \mpc techniques like graph exponentiation to speed up this process does not work out of the box for several reasons. First, the compatibility information they propagate grows exponentially, which creates congestion in the \mpc model. Secondly, since the input graph is as sparse as it could possibly be, the direct application of graph exponentiation would violate the optimal global memory bounds we are striving for. We resolve the first issue by first observing that the compatibility information can be reduced to constant size in every iteration. The second issue is remedied by interleaving exponentiation steps with memory freeing steps in a balanced way.

\subparagraph{Low regime $g(n)=\Theta(\log \log n)$:} With an additional $O(\log n)$ factor of global memory, this result is easy to obtain. Previous work \cite{BCMOS21} has a constant time reduction to instances of size $N=\log n$, resulting in a \local algorithm with runtime $\poly (\log N) = \poly(\log\log n)$. 
A straightforward application of graph exponentiation would yield an \mpc algorithm with runtime $O(\log\log \log n)$. Exploiting additional global memory in this manner has been used in a similar setting in \cite{componentstable}. 
However, without the additional memory it is harder to solve the small instances in triple logarithmic time.
The work around for this memory issue is to use our mid regime algorithm on the small instances, yielding a memory efficient algorithm with runtime $O(\log\log\log n)$. 
To the best of our knowledge there is no other paper that can efficiently deal with such occurring small instances---small instances occur also in many other problems like MIS and graph coloring---with optimal global memory.

\subsection{Further Related Work}

For many of the classic graph problems, simple $O(\log n)$-time \mpc algorithms follow from classic literature in the \local model and \textsf{PRAM}~\cite{Alon86, linial, Luby85}. In particular in the case of bounded degree graphs, it is often straightforward to simulate algorithms from other models. However, it is usually desirable to get algorithms that run \emph{much faster} than their \local counterparts.
If the \mpc algorithms are given \emph{linear} $\Theta(n)$ or even \emph{superlinear} $\Theta(n^{1+\delta})$ local memory, fast algorithms are known for many classic graph problems.

In the sublinear (or low-space) model, \cite{Chang2019} provided a randomized algorithm for the $(\Delta + 1)$-coloring problem that, combined with the new network decomposition results~\cite{RozhonG20,GhaffariGR21}, yields an $O(\log \log \log n)$ \mpc algorithm, that is exponentially faster than its \local counterpart. A recent result by Czumaj, Davies, and Parter~\cite{detcol} provides a deterministic $O(\log \log \log n)$-time algorithm for the same problem using derandomization techniques. 
For many other problems, the current state of the art in the sublinear model is still far from the aforementioned exponential improvements over the \local counterparts, at least in the case of general graphs. For example, the best known MIS, maximal matching, $(1+\epsilon)$-approximation of maximum matching, and 2-approximation of minimum vertex cover algorithms run in $\widetilde{O}(\sqrt{\log \Delta} + \sqrt{\log \log n})$ time~\cite{GU19}, whereas the best known \local algorithm has a logarithmic dependency on $\Delta$~\cite{Ghaffari16}. For restricted graph classes, such as trees and graphs with small arboricity\footnote{The arboricity of a graph is the minimum number of disjoint forests into which the edges of the graph can be partitioned.} $\alpha$, better algorithms are known~\cite{sirocco, Behnezhad19}. Through a recent work by Ghaffari, Grunau and Jin, the current state of the art for MIS and maximal matching are $O(\sqrt{\log \alpha} \cdot \log \log \alpha + \log \log n)$-time algorithms using $\widetilde{O}(n + m)$ words of global memory~\cite{GGC20}.

As for lower bounds, \cite{focs} gave conditional lower bounds of $\Omega(\log \log n)$ for component-stable sublinear \mpc algorithms for constant approximation of maximum matching and minimum vertex cover, and MIS. In addition, the authors provided a lower bound of $\Omega(\log \log \log n)$ for \llle. Their hardness results are conditioned on a widely believed conjecture in \mpc about the complexity of the connectivity problem, which asks to detect the connected components of a graph. It is argued that disproving this conjecture would imply rather strong and surprising implications in circuit complexity~\cite{Roughgarden18}. When assuming component-stability, they also argue that all known algorithms in the literature are component-stable or can easily be made component-stable with no asymptotic increase in the round complexity. However, recent work~\cite{componentstable} gave a separation between stable and unstable algorithms, and that some particular problems (e.g., computing an independent set of size $\Omega(n/\Delta)$) can be solved faster with unstable algorithms than with stable ones.

It is also worth discussing the complexity of rooting a tree, as it is an important subroutine in our high regime. On the randomized side, \cite{sirocco} gave an $O(\log d \cdot \log \log n)$ time algorithm, where $d$ is the diameter of the graph. On the deterministic side, Coy and Czumaj~\cite{coy2021deterministic} gave an $O(\log n)$ time algorithm using (component-unstable) derandomization methods, which is the current state of the art. In \Cref{sec:rooting} we provide a totally different rooting algorithm that is also deterministic and takes $O(\log n)$ time, but is component-stable. We note that \cite{pathexp} uses similar techniques in a more general setting, but in $\omega(\log n)$ time.

\subsection{Outline}

After the formal introduction of \lcl problems and other notations in \Cref{sec:prelim}, we start proving the exponential speedup for the different regimes \Cref{thm:SpeedUp} in separate sections. In \Cref{sec:tiny}, we warm-up with the tiny regime.  In \Cref{sec:high},  we present the algorithm for our most involved result, the high regime.
In \Cref{sec:rooting} we present the rooting algorithm that is used as a subroutine in the high regime.  Due to its complexity and length, the formal analysis for the high regime is deferred to  \Cref{sec:apphigh}. 
In \Cref{sec:low,sec:mid}, we present the speedup for the low and mid regime, respectively. As the proof of \Cref{cor:SpeedUpDet} requires the same techniques as the speedup for the mid regime, its proof is also presented in \Cref{sec:mid}.
Some of our speedup results use a description of a distributed algorithm with the claimed runtime to obtain the speedup. In \Cref{sec:automatic} we show that such a description can be inferred merely by knowing the distributed complexity class in which the problem resides. The section also contains additional reasons why our results apply to forests, for all the cases not reasoned elsewhere.
Lastly, in \Cref{sec:broadcasttree}, we describe the \mpc broadcast tree for completeness, which is an important primitive of the model, and is used implicitly throughout the paper.

\section{Definitions and Notation} \label{sec:prelim}

We work with undirected, finite, simple graphs $G = (V,E)$ with $n=|V|$ nodes and $m=|E|$ edges such that $E \subseteq [V]^2$ and $V \cap E = \emptyset$. Let $\deg_G(v)$ denote the degree of a node $v$ in $G$ and let $\Delta$ denote the maximum degree of $G$. The distance $d_G(v,u)$ between two vertices $v,u$ in $G$ is the length of a shortest $v - u$ path in $G$; if no such path exists, we set $d_G(v, u) \coloneqq \infty$. The greatest distance between any two vertices in $G$ is the diameter of $G$, denoted by $\text{diam}(G)$. For a subset $S \subseteq V$, we use $G[S]$ to denote the subgraph of $G$ induced by nodes in $S$. Let $G^k$, where $k \in \mathbb{N}$, denote the $k$:th power of a graph $G$, which is another graph on the same vertex set, but in which two vertices are adjacent if their distance in $G$ is at most $k$. In the context of \mpc, $G^k$ is the resulting virtual graph after performing $\log k$ steps of graph exponentiation~\cite{wattenhofer}.

For each node $v$ and for every radius $k \in \mathbb{N}$, we denote the $k$-hop (or $k$-radius) neighborhood of $v$ as $N^k(v) = \{ u \in V : d(v,u) \leq k\}$. The topology of a neighborhood $N^k(v)$ of $v$ is simply $G[N^k(v)]$. However, with slight abuse of notation, we sometimes refer to $N^k(v)$ both as the node set and the subgraph induced by node set $N^k(v)$. Neighborhood topology knowledge is often referred to as vision, e.g., node $v$ sees $N^k(v)$.
In trees and forests, the number $n$ of nodes and the number $m$ of edges are asymptotically equal, and we may use them interchangeably throughout the paper when reasoning about global memory.

\subsection{\lcl Definitions}

In their seminal work~\cite{NaorS95}, Naor and Stockmeyer introduced the notion of a locally checkable labeling problem (\lcl problem or just \lcl for short). The definition they provide restricts attention to problems where nodes are labeled (such as vertex coloring problems), but they remark that a similar definition can be given for problems where edges are labeled (such as edge coloring problems). A modern way to define \lcl problems that captures both of the above types of problems (and combinations thereof) labels \emph{half-edges} instead, i.e., pairs $(v,e)$ where $e$ is an edge incident to vertex $v$. Let us first define a half-edge labeling formally, and then provide this modern \lcl problem definition.

\begin{definition}[Half-edge labeling]\label{def:halfedge}
	A \emph{half-edge} in a graph $G = (V,E)$ is a pair $(v,e)$, where $v \in V$ is a vertex, and $e \in E$ is an edge incident to $v$.
	A half-edge $(v,e)$ is incident to some vertex $w$ if $v = w$.
	We denote the set of half-edges of $G$ by $H = H(G)$.
	A \emph{half-edge labeling} of $G$ with labels from a set $\Sigma$ is a function $g \colon H(G) \to \Sigma$.
\end{definition}

We distinguish between two kinds of half-edge labelings: \emph{input labelings} that are part of the input and \emph{output labelings} that are provided by an algorithm executed on input-labeled instances. Throughout the paper, we will assume that any considered input graph $G$ comes with an input labeling $\ginn \colon H(G) \to \sinn$ and will refer to $\sinn$ as the \emph{set of input labels}; if the considered \lcl problem does not have input labels, we can simply assume that $\sinn = \{\bot\}$ and that each node is labeled with $\bot$.
Then, \Cref{def:solve} details how a correct solution for an \lcl problem is formally specified.

\begin{definition}[\lcl] \label{def:lcl}
	An \lcl problem, \lcl for short, is a quadruple $\Pi = (\sinn, \sout, r, \fP)$ where $\sinn$ and $\sout$ are finite sets (of input and output labels, respectively), $r \geq 1$ is an integer, and $\fP$ is a finite set of labeled graphs $(P, \pinn, \pout)$. The input and output labeling of $P$ are specified by $\pinn \colon H(P) \to \sinn$ and $\pout \colon H(P) \to \sout$, respectively.\footnote{Note that the original definition given in~\cite{NaorS95} considers \emph{centered graphs}; however, since we only consider trees, considering uncentered graphs instead suffices.}
\end{definition}

Recall that $N^r(v)$ denotes the subgraph of $G$ induced by all nodes at distance at most $r$ from $v$. This naturally extends to labeled graphs.
\begin{definition}[Solving an \lcl] \label{def:solve}
	 A \emph{correct solution} for an \lcl problem $\Pi = (\sinn, \sout, r, \fP)$ on a graph $(G, \ginn)$ labeled with elements from $\sinn$ is a half-edge labeling $\gout \colon H(G) \to \sout$ s.t.~for each node $v \in V(G)$, the neighborhood $N^r(v)$ in $(G, \ginn, \gout)$ is isomorphic to some member of $\fP$. We require that the isomorphism respects\footnote{In other words, any two half-edges (in $N^r(v)$ and the member of $\fP$, respectively) that are (implicitly) mapped to each other via the isomorphism are required to have identical input and output labels.} the input and output labelings of $N^r(v)$ and the member of $\fP$. We say that an algorithm $\fA$ \emph{solves} an \lcl problem $\Pi$ on a graph class $\fG$ if it provides a correct solution for $\Pi$ for every $G \in \fG$.
\end{definition}

Note that the \lcl definitions above implicitly require that graph class $\fG$ has constant degree.
It is often useful to rephrase a given \lcl in a way that minimizes the integer $r$ in the \lcl definition. In fact, since we only consider trees, any \lcl can be rephrased in a special form, called \emph{node-edge-checkable \lcl}, where $r$ is essentially set to $1$.\footnote{Arguably, this can be seen as $r = 1/2$, which might provide a better intuition.} While the formal definition of a node-edge-checkable \lcl appears complicated, the intuition behind it is simple: essentially, we have a list of allowed output label combinations around nodes, a list of allowed output label combinations on edges, and a list of allowed input-output label combinations, all of which a correct solution for the \lcl has to satisfy.

\begin{definition}[Node-edge-checkable \lcl]\label{def:nodeedge}
	Let $\Delta$ be some non-negative integer constant. A \emph{node-edge-checkable \lcl} is a quintuple $\Pi = (\sinn, \sout, \noco, \edco, \gee)$ where $\sinn$ and $\sout$ are finite sets, $\noco = \{\noco_1, \dots, \noco_{\Delta} \}$ consists of sets $\noco_i$ of cardinality-$i$ multisets with elements from $\sout$, $\edco$ is a set of cardinality-$2$ multisets with elements from $\sout$, and $\gee \colon \sinn \to 2^{\sout}$ is a function mapping input labels to sets of output labels.
	We call $\noco_1 \cup \dots \cup \noco_{\Delta}$ and $\edco$ the \emph{node constraint} and \emph{edge constraint} of $\Pi$, respectively.
	Furthermore, we call each element of $\noco$ a \emph{node configuration}, and each element of $\edco$ an \emph{edge configuration}.	
	For a node $v$, denote the half-edges of the form $(v,e)$ for some edge $e$ by $h_1^v, \dots, h_{\deg(v)}^v$ (in arbitrary order).
	For an edge $e$, denote the half-edges of the form $(v,e)$ for some node $v$ by $h_1^e, h_2^e$ (in arbitrary order).
	A correct solution for $\Pi$ is a half-edge labeling $\gout \colon H(G) \to \sout$ such that
	\begin{enumerate}
		\item for each node $v$, the multiset of outputs assigned by $\gout$ to $h_1^v$, $\dots$, $h_{\deg(v)}^v$ is an element of $\noco_{\deg(v)}$,
		\item for each edge $e$, the cardinality-$2$ multiset of outputs assigned by $\gout$ to $h_1^e, h_2^e$ is an element of $\edco$, and
		\item for each half-edge $h \in H(G)$, we have $\gout(h) \in \gee(\iota)$, where $\iota = \ginn(h)$ is the input label assigned to $h$.
	\end{enumerate}
\end{definition}

On trees, each \lcl $\Pi$ (with parameter $r$ in its definition) can be transformed into a node-edge-checkable \lcl $\Pi'$ by the standard technique of requiring each node $v$ to output, on each incident half-edge $h$, an encoding of its entire $r$-hop neighborhood (including input labels, output labels, and a marker indicating which of the half-edges in the encoded tree corresponds to half-edge $h$).
From the definition of $\Pi'$, it follows immediately that $\Pi'$ is equivalent to $\Pi$ in the sense that any solution for $\Pi$ can be transformed (by a deterministic distributed algorithm) in constant time into a solution for $\Pi'$, and vice versa.
Hence, for the purposes of this work, we can safely restrict our attention to node-edge-checkable \lcls.

\subsection{Component-stability} \label{ssec:components-stability}

The term of a \emph{component-stable} \mpc algorithm has been introduced in \cite{focs} in the context of lifting distributed lower bounds to the \mpc setting. It was later revised by Czumaj, Davies and Parter \cite{componentstable} and argued to be made more robust.

\begin{definition}[Component-stability, \cite{focs}] \label{def:componentStability}
	An \mpc algorithm is component-stable if the outputs of nodes in different connected components are independent. Formally, assume that for a graph $G$, $\mathcal{D}_G$ denotes the initial distribution of the edges of $G$ among the $M$ machines and the assignment of unique IDs to the nodes of $G$. For a subgraph $H$ of $G$ let $\mathcal{D}_H$ be defined as $\mathcal{D}_G$ restricted to the nodes and edges of $H$. Let $H_v$ be the connected component of node $v$. An \mpc algorithm $\mathcal{A}$ is called component-stable if for each node $v \in V$, the output of $v$ depends (deterministically) on the node $v$ itself, the initial distribution and ID assignment $\mathcal{D}_{H_v}$ of the connected component $H_v$ of $v$, and on the shared randomness $\mathcal{S}_M$.
\end{definition}

In their revised definition, \cite{componentstable} assume the setting where all input graphs are legal.

\begin{definition}[Legal graph] \label{def:legalGraph}
    A graph $G$ is called legal if it is equipped with functions ID, name: $V(G) \arr [\poly(n)]$ providing nodes with IDs and names, such that all names are fully unique and all IDs are unique in every connected component.
\end{definition}

\begin{definition}[Component-stability (revised), \cite{componentstable}] \label{def:componentStabilityRev}
	A randomized \mpc algorithm $A_{\mpc}$ is component-stable if its output at any node $v$ is entirely, deterministically, dependent on the topology and IDs (but independent of names) of $v$’s connected component (which we will denote $CC(v)$), $v$ itself, the exact number of nodes $n$ and maximum degree $\Delta$ in the \emph{entire} input graph, and the input random seed $\mathcal{S}$. That is, the output of $A_{\mpc}$ at $v$ can be expressed as a deterministic function $A_{\mpc}(CC(v),v,n,\Delta,\mathcal{S})$. A deterministic \mpc algorithm $A_{\mpc}$ is component-stable under the same definition, but omitting dependency on the random seed $\mathcal{S}$.
\end{definition}

As opposed to \cite{focs}, \cite{componentstable} allow the output of component-stable algorithms to depend on the total number of nodes in the graph and the maximum degree of the graph. Additionally, they assume the following setting: all input graphs are \emph{legal} (see \Cref{def:legalGraph}), i.e., all nodes have an ID that is unique in every connected component, and a \emph{name} that is unique across the whole input graph. Assuming the above setting, the output of a component-stable algorithm is allowed to depend on the IDs of all nodes in the same components, but not the names. 

In our work, we adopt the revised definition of component-stability~\cite{componentstable}.
In all of our algorithms, nodes from different components only communicate in order to maintain a certain global synchrony.
This synchrony influences \emph{when} certain steps are executed and hence the \emph{execution} of our algorithms.
However, the output at each node is not influenced by the global communication.

\Cref{cor:SpeedUpDet} shows that the lower bounds for component-stable algorithms can be beaten for a large class of problems on trees and forests even with optimal memory. The long term effect of the term component-stable in this setting is unclear, but it provides room for many interesting open questions. One interesting aspect would be to see under which circumstances one can obtain algorithms with stronger component dependent guarantees, e.g., one may want to develop algorithms for which not just the output of a node, but also the time until it has computed its output  can only depend on the size of its component. Our algorithms do not meet this stronger definition. Besides an ID space dependence our algorithms have the following runtime behaviour. In the low and mid regime the time until we know the output of a node depends on the number of nodes in the largest connected component. In the high regime this time depends on the number of nodes in the whole graph. Going from trees to forests in the high regime relies on the recent beautiful (deterministic) connected components algorithm by Czumaj and Coy~\cite{coy2021deterministic,googleconnectivity}.

\section{The Tiny Regime} \label{sec:tiny}

In this section, we show that any \lcl problem on general graphs that can be solved in the \local model in $O(\log^* n)$ rounds, can be solved in the \mpc model in $O(\log \log^* n)$ rounds. By combining this result with known gaps in the landscape of possible complexities in the \local model \cite{CKP19}, we obtain the following result.

\begin{theorem}\label{thm:logstar}
	Let $\Pi$ be an \lcl problem on general graphs. Assume that there is a deterministic algorithm for the \local model that solves $\Pi$ in $o(\log n)$ rounds, or a randomized algorithm that solves it in $o(\log \log n)$ rounds. Then, the problem $\Pi$ can be solved deterministically in $O(\log \log^* N)$ rounds in the low-space \mpc model using $O(m+n)$ words of global memory, where $N=\poly(n)$ is the size of the ID space. The algorithm works even if the graph consists of disconnected components, and it is components-stable.
\end{theorem}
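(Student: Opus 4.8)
The plan is to establish the result in two stages. First, by the known complexity gaps in the \local model \cite{CKP19}, any \lcl $\Pi$ on general graphs that is solvable deterministically in $o(\log n)$ rounds (or randomized in $o(\log\log n)$ rounds) is in fact solvable deterministically in $O(\log^* n)$ rounds. Moreover—and this is the structural fact I would invoke from prior work on the tiny regime—every such \lcl admits a \local algorithm of a very rigid form: there is a constant $k = k(\Pi)$ such that computing a \emph{distance-$k$ coloring} of $G$ with $O(1)$ colors (a coloring where any two nodes at distance $\le k$ receive distinct colors) reduces the problem to a purely local post-processing step, i.e., once such a coloring is in hand, each node can determine its output by inspecting only its $O(1)$-radius neighborhood (using the coloring to break symmetry greedily, as in the standard Cole–Vishkin / Linial-type reductions). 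The post-processing step is trivially simulable in $O(1)$ \mpc rounds since degrees and $k$ are constant, so the whole task boils down to computing a distance-$k$ $O(1)$-coloring in $O(\log\log^* N)$ \mpc rounds with $O(m+n)$ global memory.

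Second, I would build that coloring algorithm. Observe that a distance-$k$ coloring of $G$ is just a proper coloring of $G^k$, and since $\Delta$ and $k$ are constant, $G^k$ has constant degree; but we cannot afford to materialize $G^k$ because even a single round of graph exponentiation on every node blows the global memory up by a constant factor per step, hence by $\omega(1)$ over $\log^* $-many steps. The trick, as sketched in the excerpt, is to \emph{first reduce coloring to directed pseudoforests}: orient/partition the edges so that the problem decomposes into $O(1)$ instances each of which is a directed pseudoforest (every node has out-degree $\le 1$). On a directed pseudoforest, pointer-jumping along the unique out-pointer is a form of graph exponentiation that never needs to store a growing neighborhood: after $i$ steps, each node $v$ knows only the ID of the node $2^i$ hops ahead along its out-path, i.e., a \emph{constant} amount of data per node, so global memory stays $O(m+n)$. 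Using $O(\log\log^* N)$ rounds of such pointer-jumping, every node learns the IDs of its out-ancestors up to distance $\mathrm{poly}(\log^* N)$ along the pseudoforest (handling the short cycles present in pseudoforests separately, e.g.\ by detecting them via the jumping and coloring them directly), which suffices to run the classical $O(\log^* N)$-round iterated-$\log$ color reduction in $O(\log\log^* N)$ \mpc rounds down to $O(1)$ colors. Combining the $O(1)$ pseudoforest instances (their Cartesian-product coloring has $O(1)$ colors since each factor does) yields the distance-$k$ $O(1)$-coloring.

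The main obstacle I anticipate is the memory-respecting implementation of the pointer-jumping/exponentiation on the pseudoforests: one must argue that throughout all $O(\log\log^* N)$ rounds, each (virtual) machine holds only $O(1)$ words beyond its own node and incident edges, so that the total memory never exceeds $O(m+n)$, while still being able to run the $\log^*$-type color-reduction, which nominally wants each node to see a path of length $\mathrm{poly}(\log^* N)$ ahead of it. The resolution is that the color-reduction itself only requires, per step, that a node learn the color of its out-neighbor—but we want to \emph{simulate} $O(\log^* N)$ such sequential \local steps in $O(\log\log^* N)$ \mpc rounds, so we need to have precomputed, for each node, the (current-color) information about a window of $\mathrm{poly}(\log^* N)$ out-ancestors. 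Since IDs are $O(\log N)$ bits and $\log^* N$ ancestors' worth of $O(1)$-bit colors is $o(\log N)$ bits, this window fits in $O(1)$ words; the delicate point is organizing the jumping so that each node \emph{gathers} exactly this window without any node ever being responsible for more than $O(1)$ words at once (a node should push, not pull, and reuse the doubling structure so the aggregate data moved per round is $O(m+n)$). A secondary, more routine obstacle is dealing with directed cycles in the pseudoforests—short ones are found and colored locally during the jumping, and long ones behave like paths for the purpose of the color reduction. Finally, the edge-to-pseudoforest decomposition and the fact that the $O(1)$-coloring suffices for the \lcl post-processing both follow from standard arguments, so the real content is the memory-efficient exponentiation on pseudoforests.
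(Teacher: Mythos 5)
Your overall route is essentially the paper's: invoke the \cite{CKP19} gap/normal form to reduce $\Pi$ to (i) a distance-$k$ $O(1)$-coloring plus (ii) a constant-radius post-processing step, then reduce the coloring problem to $O(1)$ directed pseudoforests and speed up the $O(\log^* n)$-round color reduction on each pseudoforest by a pointer-doubling scheme in which every node retains only the ID of its farthest known successor together with the vector of intermediate colors, finally combining the pseudoforest colorings by taking tuples. (Your worry about cycles in pseudoforests is harmless, and not needing an explicit algorithm $A$ is also fine for a pure existence statement, though the paper additionally shows the constant-radius algorithm $B$ can be found by brute force via Naor--Stockmeyer.)

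There is, however, one concrete slip in the memory accounting, and it sits exactly at the point the theorem is about. You claim that the window a node gathers consists of ``$\log^* N$ ancestors' worth of $O(1)$-bit colors,'' hence $O(1)$ words per node. But at the start of the color-reduction simulation the ``colors'' are the initial colors the reduction is run from, i.e., the $\poly(n)$-coloring induced by the IDs, each of which is $\Theta(\log N)$ bits. Gathering a window of $\Theta(\log^* N)$ such colors costs $\Theta(\log^* N)$ words per node, i.e., $\Theta(n \log^* N)$ words globally, which breaks the $O(m+n)$ bound you are trying to prove. The fix (which is what the paper does) is to first spend one round of Linial's algorithm on each pseudoforest to shrink the palette to $O(\log n)$ colors, i.e., $O(\log\log n)$ bits per color; then a window of $O(\log^* n)$ colors occupies $O(\log^* n \cdot \log\log n) = O(\log n)$ bits, which together with the single retained successor ID is $O(1)$ words per node, and the doubling-with-color-windows argument goes through verbatim. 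With that one extra preprocessing step your proof matches the paper's.
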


The rest of this section is devoted to proving \Cref{thm:logstar}.
\paragraph{A Universal Algorithm.}
In the \local model, it is known that, if an \lcl can be solved with an algorithm $A$ in $o(\log n)$ deterministic rounds, or in $o(\log \log n)$ randomized rounds, then it can also be solved with a deterministic algorithm $A'$ that requires just $O(\log^* n)$ rounds \cite{CKP19}. 
In order to prove this result, \cite{CKP19} shows how to convert any such algorithm $A$ into an algorithm $A'$ that works as follows (for some constant $k$ that depends on the problem $\Pi$ and the algorithm $A$):
\begin{enumerate}
	\item Compute a distance-$k$ $O(\Delta^{2k})$-coloring of the graph;\label{ckpstep1}
	\item Run a $k$-round algorithm $B$ that uses the computed coloring to produce the final output.\label{ckpstep2}
\end{enumerate}
In \cite{CKP19} is shown that the constant $k$, and the $k$-round algorithm $B$, can be mechanically determined from the original algorithm $A$. 
The runtime of algorithm $A'$ is $O(\log^* n)$ rounds since this is the runtime for the first step, while the second step only requires constant time.

\paragraph{Why it Works.}
The high-level purpose of computing the coloring in \Cref{ckpstep1} is to provide new identifiers at the nodes that are unique up to distance $k$ and come from a much smaller space than the original identifiers (that are part of the setting in the \local model).
Roughly speaking, this ensures that the $k$-hop view of any node that interprets the computed colors as identifiers is consistent with the node living in a constant-sized graph (with a constant-sized identifier space).

In \cite{CKP19}, it is argued why this approach works, and on a high level, the reason can be summarized as follows. For some sufficiently large constant $k$, algorithm $A$ can be executed on all graphs of a suitable constant size with a runtime of just $k$ rounds. Since each node of the original graph executing this $k$-round algorithm cannot distinguish between living in the original graph with the generated new identifiers and living in (a suitable) one of these constant-sized graphs (on all of which the algorithm is correct), the $k$-round algorithm must also be correct on the (much larger) original graph.
This is just a high-level sketch of the proof presented in \cite{CKP19}; there are a number of intricate details that have to be taken care of and are explained in \cite{CKP19}.

\paragraph{How We Proceed.}
For our purpose, we do not actually need to know the details of \cite{CKP19} on how $A'$ is constructed as a function of $A$, and we just use the following statement that comes from \cite{CKP19}: if the problem $\Pi$ can be solved in $o(\log n)$ deterministic rounds or $o(\log \log n)$ randomized rounds, then it can also be solved in $O(\log^* n)$ deterministic rounds using an algorithm that first applies \Cref{ckpstep1} and then applies \Cref{ckpstep2}. 
In fact, in our case, we are not even given the algorithm $A$ as input: we just know that the problem can be solved in $o(\log n)$ deterministic or $o(\log \log n)$ randomized rounds, but we are not given an algorithm $A$ with such a complexity. Hence, we cannot apply the construction of \cite{CKP19} directly.

In \Cref{sec:automatic}, we show that this is not an issue, in the sense that, if an algorithm exists, then it can be found by brute force. To show that, we use the following two important ingredients presented in \cite{NaorS95}:
\begin{itemize}
    \item  Any constant time algorithm that solves an \lcl{} in the \local model can be transformed into an algorithm that does not require nodes to have IDs.
    \item For every $k$, it is decidable whether there exists a $k$-round algorithm that solves a given problem in a setting where we do not have IDs and we are given a (suitable) distance-$k$ coloring. The reason is that, in this setting, there are only a finite number of possible algorithm candidates (and they can be enumerated), and given a candidate, it is possible to check if it constitutes a correct algorithm by using a centralized offline procedure. 
\end{itemize}
We use the above ingredients as follows. If we just know that $\Pi$ can be solved in $o(\log n)$ deterministic rounds or $o(\log \log n)$ randomized rounds, even if no algorithm is given, we can use \cite{CKP19} to claim that there exists a $k$ for which there is a $k$-round algorithm $B$ that solves $\Pi$ given a distance-$k$ coloring, and then use the first ingredient to claim that this algorithm does not need the presence of IDs. Finally, we use the second ingredient to say that if we try increasing values of $k$, we are going to find the algorithm $B$ that we need.

From the above discussion, in order to prove \Cref{thm:logstar}, we only need to show how to compute a distance-$k$ $O(\Delta^{2k})$-coloring in $O(\log \log^* n)$ deterministic \mpc rounds.

\subsection{\local Algorithm}
We start by presenting an algorithm for computing such a coloring in the \local model. While computing such a coloring in the \local model is easy, we present an algorithm amenable to be converted into a faster \mpc algorithm. This algorithm is not new: it has been already presented in \cite{GoldbergPS88,PanconesiR01}, and we report it here, with minor modifications, for completeness.
\begin{lemma}\label{lemma: local-k-dist}
	For any constant $k$, the distance-$k$ $O(\Delta^{2k})$-coloring problem on general graphs can be solved in the \local model with a deterministic algorithm running in $O(\log^* n)$ rounds.
\end{lemma}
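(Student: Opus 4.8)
The plan is to reduce the distance-$k$ coloring problem to ordinary ($O(\Delta)$-)coloring of a bounded-degree graph, via the power graph $G^{k}$, and then invoke the classical Cole--Vishkin / Goldberg--Plotkin--Shannon $O(\log^* n)$ iterated-log-shrinking technique. First I would form the graph $G^{k}$ on the same vertex set $V$, where $u$ and $v$ are adjacent iff $d_G(u,v)\le k$; a proper vertex coloring of $G^{k}$ is exactly a distance-$k$ coloring of $G$. Since $G$ has maximum degree $\Delta$ and $k$ is a constant, $G^{k}$ has maximum degree at most $\Delta^{k}=:\Delta'$, another constant; moreover one communication round in $G^{k}$ is simulated by $k$ rounds in $G$, so any $T$-round \local algorithm on $G^{k}$ runs in $O(kT)=O(T)$ rounds on $G$. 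Thus it suffices to properly color a graph of constant maximum degree $\Delta'$ with $O((\Delta')^{2})=O(\Delta^{2k})$ colors in $O(\log^* n)$ rounds of the \local model.

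For the coloring of $G^{k}$ itself I would run the standard procedure, stated so that it is amenable to the later \mpc speedup. Start from the input IDs, viewed as a proper coloring of $G^{k}$ with $N=\poly(n)$ colors, i.e.\ with color palette of bit-length $O(\log n)$. One round of the GPS/Cole--Vishkin reduction has each node $v$ compare its current color string with that of each neighbor in $G^{k}$, pick for each neighbor an index where the two strings differ together with the differing bit of $v$, and form its new color as the concatenation of (a suitable encoding of) one such index--bit pair --- this must be done carefully so the result is still a proper coloring; the color length drops from $c$ bits to $O(\log c + \log \Delta')$ bits in each round. Iterating, after $O(\log^* n)$ rounds the palette has been reduced to $O((\Delta')^{2})$ colors (the precise constant being the fixed point of the bit-length recursion for degree $\Delta'$), and a final constant number of cleanup rounds --- each iterating over the $O(\Delta')$ "extra" color classes and recoloring greedily --- brings the palette down to the target $O((\Delta')^{2})=O(\Delta^{2k})$. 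Each of these rounds is a genuine \local round on $G^{k}$, hence $O(k)=O(1)$ rounds on $G$, so the total is $O(\log^* n)$ rounds on $G$, as claimed. I would cite \cite{GoldbergPS88,PanconesiR01} for the details of the bit-reduction step and the final palette-reduction, reproducing only the structure here.

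The one genuine subtlety --- and the step I expect to require the most care --- is not the $\log^*$ accounting but making the argument go through on \emph{general} graphs (not just paths or oriented trees, where Cole--Vishkin is usually stated): the bit-comparison step must produce, simultaneously for all nodes, a coloring that is still proper on $G^{k}$, which is why each node records an index--bit pair for \emph{every} neighbor (not just one) before concatenating, and why the reduced length carries the additive $\log \Delta'$ term. Since $\Delta'=\Delta^{k}$ is constant this additive term is absorbed and the recursion $c \mapsto O(\log c)$ still terminates in $O(\log^* n)$ steps; but the bookkeeping that the intermediate colorings remain proper is the place where one must be precise. Everything else is routine, and the lemma follows. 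Note that phrasing the algorithm as "work on $G^{k}$, run the iterated-log shrink, finish with constant-round palette reduction" is exactly the form needed for the later sections, where the $G^{k}$ construction will be replaced by graph exponentiation and each $O(\log^* n)$-round \local phase sped up to $O(\log\log^* n)$ \mpc rounds.
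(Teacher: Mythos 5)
Your proof is correct for the lemma as stated, but it takes a genuinely different route from the paper. You run a general-graph Cole--Vishkin/GPS color reduction directly on $G^k$; the paper instead decomposes $G$ (equivalently $G^k$) into $\Delta^k$ edge-disjoint \emph{directed pseudoforests} by orienting edges arbitrarily and having each node label its outgoing edges, $3$-colors each pseudoforest in $O(\log^* n)$ rounds using the directed-path color-reduction of \cite{GoldbergPS88,ColeV86} (after one round of Linial to get down to $O(\log n)$ colors), combines the colorings into a $3^{\Delta^k}$-coloring via tuples, and finally reduces greedily to $O(\Delta^{2k})$ colors in constantly many rounds. Both arguments are valid in the \local model; two remarks on yours. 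First, a small quantitative slip: the per-round map for your ``one index--bit pair per neighbor, then concatenate'' rule is $c \mapsto O(\Delta'(\log c+1))$ bits, whose fixed point is $\Theta(\Delta'\log\Delta')$ bits, i.e.\ $\Delta'^{\Theta(\Delta')}$ colors rather than $O(\Delta'^2)$; since $\Delta'=\Delta^k$ is constant this is repaired by your constant number of greedy cleanup rounds, so the lemma is unaffected, but the claim that the iterated-log shrink alone reaches $O(\Delta'^2)$ is not accurate. Second, and more importantly, your closing claim that this formulation is ``exactly the form needed for the later sections'' does not match the paper's purpose: simulating $O(\log^* n)$ rounds of a general-graph reduction on $G^k$ via graph exponentiation forces each node to gather a ball of $\Delta^{k\cdot O(\log^* n)}$ nodes, a superconstant per-node overhead that breaks the optimal $O(m+n)$ global memory bound. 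The paper chooses the pseudoforest decomposition precisely because there each node has a unique successor, so the \mpc speedup only needs to carry one ID plus an $O(\log^* n)$-length vector of $O(\log\log n)$-bit colors per node --- $O(1)$ words --- which is what makes the $O(\log\log^* n)$-round, linear-memory implementation of Lemma~\ref{lem:distanceColoringMPC} possible. So your proof buys simplicity and familiarity in the \local model, while the paper's buys the structure needed downstream.
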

\begin{proof}
	We present an algorithm that is able to compute an $O(\Delta^2)$ coloring of a given graph $G$, where $\Delta$ is the maximum degree of $G$, in $O(\log^* n)$ rounds. By simulating such an algorithm on $G^k$, the $k$-th power of $G$, which has maximum degree $\Delta^{k}$, we obtain the claimed result. Note that the running time is also asymptotically the same, since $k$ is a constant.
	
	The algorithm works as follows. At the beginning, each edge is oriented arbitrarily. Then, each node marks its incident outgoing edges with different numbers from $\{1,\ldots,\Delta\}$. In this way, we decomposed our graph $G$ into $\Delta$ edge-disjoint directed subgraphs $G_1,\ldots,G_\Delta$, where each $G_i$ is the graph induced by edges marked $i$. Also, notice that by construction, for each $i$, each node in $G_i$ has at most a single outgoing edge, and hence each $G_i$ is a directed pseudoforest.
	
	Assume we can color each directed pseudoforest with $3$ colors in $O(\log^* n)$ rounds. Then, we can obtain a proper coloring for the nodes of $G$ with $3^\Delta$ colors, by letting each node construct the tuple $c(v) = (c_1(v),\ldots,c_\Delta(v))$, where $c_i(v)$ is the color of $v$ in $G_i$. In fact, consider two neighboring nodes $u$ and $v$ connected through an edge $e$. Assume that $e$ is oriented from $u$ to $v$, and that $u$ marked $e$ with value $i$. Then, in $G_i$, $u$ and $v$ are neighbors, and hence they obtained different colors $c_i(u)$ and $c_i(v)$, implying that $c(u) \neq c(v)$. Once a $3^\Delta$-coloring is obtained, we can then spend $O(3^\Delta)$ rounds to reduce the number of colors to $O(\Delta^2)$, by using a simple greedy algorithm.
	
	We now show that each pseudoforest can be $3$-colored efficiently. Let $P$ be an arbitrary pseudotree. At first, we can use the IDs of the nodes to produce a $\poly(n)$-coloring of $P$. Then we apply $1$ round of Linial's coloring algorithm \cite{Linial92} in order to obtain an $O(\log n)$-coloring of $P$. While this step of coloring is not necessary for the \local algorithm, it allows us to reduce the amount of information that we will later need to transmit in the \mpc algorithm. Nodes can then spend $T = O(\log^* n)$ rounds to gather the color of their successors in $P$ at distance at most $T$, and it is known that, with this information, nodes can compute a proper coloring of $P$, by simulating $O(\log^* n)$ steps of a color reduction algorithm for directed paths \cite{GoldbergPS88,ColeV86}.
\end{proof}

\subsection{\mpc Implementation}

We now show how to convert the \local algorithm into an exponentially faster low-space \mpc algorithm. The \local algorithm consists of two main steps: The distance-$k$ $O(\Delta^{2k})$-coloring and the $k$-round algorithm. Since $k$ and $\Delta$ are constant, the latter step is trivial, and the former step can be computed efficiently using graph exponentiation, where nodes keep track of the IDs of the two outermost nodes, and the colors of all nodes in between. \Cref{lem:distanceColoringMPC} of the following paragraph proves the former step, completing the proof for \Cref{thm:logstar}. Component-stability and compatibility with disconnected components follows directly from the fact that all arguments are local, i.e., nodes in separate components never communicate, and that the runtime depends only on $N$.

\paragraph*{Distance-\texorpdfstring{$k$}{Lg} Coloring}

We show that the initial distance-$k$ coloring can be computed in $O(\log \log^* n)$ low-space \mpc rounds, while respecting linear global memory. First, we observe that using the standard graph exponentiation technique, we can compute the $k$th power of a graph; for constant $k$, the memory overhead is only a constant. Then, we will apply techniques similar to the ones used in the \local model in \Cref{lemma: local-k-dist}.
\begin{observation}\label{obs: mpcpower}
	For an input graph $G$ with $n$ nodes, $m$ edges, and maximum degree $\Delta$, the power graph $G^k$ can be computed deterministically in $O(\log k)$ low-space \mpc rounds with $O(\Delta^k)$ words of local and $O(m + n \cdot \Delta^k)$ words of global memory, as long as $\Delta^k<n^\delta$.
\end{observation}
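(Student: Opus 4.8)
The plan is to implement the graph-exponentiation technique of \cite{wattenhofer} with a truncation twist that keeps each node's footprint at $O(\Delta^k)$ words. I maintain the invariant that after round $i$ every node $v$ stores the set $D_i(v) := \{(u, d_G(v,u)) : u \in N^{r_i}(v)\}$, i.e., its radius-$r_i$ ball together with exact distances, where $r_0 = 1$ and $r_{i+1} = \min(2 r_i, k)$. Since $r_i \le k$ always, $|D_i(v)| \le |N^k(v)| = O(\Delta^k)$ throughout. After $\lceil \log_2 k \rceil = O(\log k)$ rounds we have $r_i = k$, and $v$ reads off its neighbors in $G^k$ as $\{u : 1 \le d_G(v,u) \le k\}$ directly from $D_{\lceil \log_2 k \rceil}(v)$. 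The process is deterministic.

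For one doubling step from $r := r_i$ to $t := r_{i+1}$ I use the elementary fact that a shortest $v$--$w$ path of length $\ell \le t \le 2r$ contains a vertex $u$ with $d_G(v,u) \le r$ and $d_G(u,w) = \ell - d_G(v,u) \le r$ (take the vertex at index $\min(\ell, r)$ along the path). Hence it is enough for each node $u$ to forward, to every $v$ at distance $s := d_G(u,v) \le r$, the \emph{truncated} ball $N^{\min(r,\, t-s)}(u)$ with distances, after which $v$ updates $d_G(v,w) \gets \min_u\bigl(d_G(v,u) + d_G(u,w)\bigr)$ over all pairs it received and discards every $w$ whose resulting value exceeds $t$. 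Correctness is immediate from the midpoint fact. The reason for forwarding a truncated ball rather than the whole of $D_i(u)$ is memory: the number of $(\text{node},\text{distance})$ records a node sends or receives in one round is $\sum_{s=0}^{r} O(\Delta^{s}) \cdot O(\Delta^{\min(r,\,t-s)}) = O(k\,\Delta^{k}) = O(\Delta^{k})$ for constant $k$. Thus every node handles $O(\Delta^k)$ words per round, which fits into the $O(n^\delta)$ local memory of a (virtual) machine precisely under the hypothesis $\Delta^k < n^\delta$; summing ball sizes over all nodes, and adding $O(m)$ for the input graph, gives $O(m + n\,\Delta^k)$ words of global memory.

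It remains to realize one such merge step as $O(1)$ \mpc rounds, which is routine with the standard sorting/aggregation primitives (see \Cref{sec:broadcasttree}): each record $D_i(v)$ is replicated, tagged with its intended recipient $u \in N^{r_i}(v)$ and truncated as above, the resulting $(\text{sender},\text{recipient},\text{node},\text{distance})$ quadruples are sorted by recipient so that all messages for $u$ become contiguous, and $u$ performs its update. Since each node creates only $O(\Delta^k)$ such quadruples per round, the sort processes $O(n\,\Delta^k)$ items in total and respects both memory budgets. The only place that requires care --- and the main obstacle --- is exactly this truncation bookkeeping: the naive version in which every node ships its entire current ball to every node in that ball blows the per-node volume up to $O(\Delta^{2k})$ and would only yield an $O(\Delta^{2k})$ local-memory bound; the midpoint identity plus sending each recipient only the precisely-needed prefix of the ball is what matches the claimed $O(\Delta^k)$ local memory and the exact hypothesis $\Delta^k < n^\delta$. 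Everything else is the textbook correctness of graph exponentiation and the $O(1)$-round simulation of a bounded-memory communication round in \mpc.
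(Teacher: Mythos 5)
Your argument is the same one the paper has in mind: the observation is stated without proof as an instance of standard graph exponentiation~\cite{wattenhofer}, and your doubling-with-truncation scheme (maintain exact distance balls $D_i(v)$ of radius $r_i$, double via a midpoint argument, realize each merge in $O(1)$ rounds by sorting tagged records) is a correct and careful write-up of exactly that; the correctness of the midpoint/truncation step and the $O(\log k)$ round count are fine.

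The one place where your accounting falls short of the stated bound is the per-round message volume. Your sum $\sum_{s=0}^{r} O(\Delta^{s})\cdot O(\Delta^{\min(r,\,t-s)})$ is $O\bigl((2r-t+1)\Delta^{t}\bigr)$: in pure doubling rounds ($t=2r$) this is $O(\Delta^{t})=O(\Delta^{k})$, but in the final capped round ($t=k$, $r\geq k/2$) it can be $\Theta(k\,\Delta^{k})$, and you only recover $O(\Delta^{k})$ by assuming $k=O(1)$. The observation (and \Cref{lem:distanceColoringMPC}, which uses it) is stated for any $k$ with $\Delta^{k}<n^{\delta}$, e.g.\ $k=\Theta(\log n)$ for constant $\Delta$, where your bound loses a factor $k$ in both local and global memory. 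The fix is small: make the last merge receiver-driven, i.e.\ $v$ collects the \emph{full} radius-$r$ ball only from the nodes $u$ with $d_G(v,u)\leq t-r$ (the same midpoint fact, choosing the midpoint at index $\min(\ell,\,t-r)$, shows this suffices), so the received volume is $\sum_{s=0}^{t-r}O(\Delta^{s})\cdot O(\Delta^{r})=O(\Delta^{t})=O(\Delta^{k})$ by a geometric sum, matching the claimed $O(\Delta^{k})$ words of local and $O(m+n\Delta^{k})$ words of global memory without any constancy assumption on $k$. In the paper's actual applications $k$ is a constant, so nothing downstream is affected, but as a proof of the observation as stated you should close this factor-$k$ slack.
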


\begin{observation}
	Every $k$-round \local algorithm can be simulated in $O(\log k)$ low-space \mpc rounds with $O(\Delta^k)$ words of local and $O(m + n \cdot \Delta^k)$ words of global memory, as long as $\Delta^k<n^\delta$.
	If the \local algorithm is deterministic, then the \mpc algorithm is deterministic as well.
\end{observation}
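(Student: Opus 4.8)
The plan is to reduce this statement to \Cref{obs: mpcpower} (and the exponentiation process in its proof): simulating a $k$-round \local algorithm amounts to letting every node learn its entire $k$-hop \emph{labeled} neighborhood and then computing locally, with no further communication. Recall that the output of a $k$-round deterministic \local algorithm at a node $v$ is, by definition, a function only of $v$'s radius-$k$ view, i.e., the subgraph $G[N^k(v)]$ together with the IDs and input labels of its nodes; for a randomized algorithm the view additionally includes the private random strings of the nodes in $N^k(v)$, which we take to consist of $O(1)$ words each. Hence it suffices to make every node $v$ collect this view within $O(\log k)$ \mpc rounds while respecting the claimed memory bounds, after which the machine hosting $v$ internally runs the \local algorithm on the stored view and writes down $v$'s output.

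To collect the views, I would run essentially the same graph exponentiation process as in \Cref{obs: mpcpower}, with two additions: (i) together with the set of node IDs within the current radius, each node also stores the induced topology on that set, the corresponding input labels, and (in the randomized case) the sampled random strings; (ii) the radii are grown so as to reach exactly $k$ and never overshoot it. Concretely, each node first exchanges its ID and input label with its neighbors, so it knows $N^1(v)$; it then maintains a radius $r$ and repeatedly either doubles it (while $2r \le k$) or, once $r > k/2$, jumps to $r := k$ by adding $s := k - r \le r$. In a step growing the radius from $r$ to $r+s$ with $s \le r$, node $v$ requests from each $u \in N^s(v)$ (a subset of the ball it already holds) the labeled ball of radius $r$ around $u$, and takes the union to obtain $N^{r+s}(v)$. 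Since $r + s \le k$ throughout, every stored ball has size $O(\Delta^k)$, and in each step the total volume a node sends or receives is at most $\Delta^s \cdot \Delta^r = \Delta^{r+s} = O(\Delta^k)$ words, so the $O(\Delta^k)$ local-memory bound is respected whenever $\Delta^k = O(n^\delta)$; the total memory is $O(m) + n \cdot O(\Delta^k) = O(m + n \Delta^k)$, and the number of steps is $O(\log k)$.

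Correctness then follows because after this process every node $v$ holds an exact copy of its radius-$k$ labeled view (including random bits in the randomized case), which is precisely the information on which the \local algorithm's decision at $v$ depends; thus the local simulation produces exactly the output $v$ would produce in the \local model, so the resulting half-edge labeling is a correct solution. If the \local algorithm is deterministic, then no part of the simulation — neither the exponentiation nor the final local computation — uses randomness, so the \mpc algorithm is deterministic as well. The only point that requires care (rather than being a genuine obstacle) is the radius bookkeeping in (ii): without the invariant $r + s \le k$, an intermediate doubling step could produce balls of size $\Theta(\Delta^{2k})$ and violate the memory bound, so one must grow to exactly $k$; correspondingly, the per-round message volume must be charged on both the sending and the receiving side, which the same invariant handles.
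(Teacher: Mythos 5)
Your proposal is correct and matches the paper's argument: the paper likewise invokes \Cref{obs: mpcpower} to gather each node's $k$-hop (labeled) neighborhood via graph exponentiation and then simulates the $k$-round \local algorithm locally in $O(1)$ additional rounds. Your extra bookkeeping (capping the radius at exactly $k$ so intermediate balls never exceed $O(\Delta^k)$, and carrying labels, IDs, and random bits along) is just a more explicit rendering of the same exponentiation step the paper treats as a black box.
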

\begin{proof}
	Using \Cref{obs: mpcpower}, we can collect the $k$-hop neighborhood of each node and hence, simulate a $k$-round \local algorithm in an additional $O(1)$ low-space \mpc rounds. Observe that this also holds for general graphs.
\end{proof}

\begin{lemma}
	\label{lem:distanceColoringMPC}
	The distance-$k$ $O(\Delta^{2k})$-coloring problem on general graphs can be solved in the low-space \mpc model with a $O(\log \log^* n + \log k)$-time deterministic algorithm, as long as $\Delta^k<n^\delta$. 
	The algorithm requires $O(\Delta^k)$ words of local and $O(m + n \cdot \Delta^k)$ words of global memory.
	If $k$ and $\Delta$ are constants, the runtime reduces to $O(\log \logstar n)$ and we require $O(1)$ words of local and $O(m+n)$ words of global memory.
\end{lemma}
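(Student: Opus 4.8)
The plan is to implement, inside \mpc, the very same \local algorithm of \Cref{lemma: local-k-dist}, and to speed up only its single expensive ingredient. Recall the structure of that algorithm: (i) it works on $G^k$, which has maximum degree $\Delta^k$; (ii) it orients the edges of $G^k$ and lets each node mark its $\le \Delta^k$ outgoing edges with distinct labels, thereby decomposing $G^k$ into $\le \Delta^k$ directed pseudoforests $H_1,\dots,H_{\Delta^k}$; (iii) in each $H_i$ it turns the node IDs into a $\poly(n)$-coloring, applies one round of Linial's algorithm to reach an $O(\Delta^{2k}\log n)$-coloring, and then runs $T=O(\log^* n)$ rounds of a Cole--Vishkin-type reduction to reach a proper $3$-coloring of $H_i$; and (iv) it takes, at each node, the tuple of its $\Delta^k$ three-colorings to obtain a proper $3^{\Delta^k}$-coloring of $G^k$, which it greedily reduces to $O(\Delta^{2k})$ colors. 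Step~(i) costs $O(\log k)$ \mpc rounds with the stated memory by \Cref{obs: mpcpower}. Once $G^k$ is materialized, step~(ii), the single Linial round, the tuple-forming step~(iv), and — when $\Delta$ and $k$ are constant — the final greedy reduction are all $O(1)$-round \local computations on a graph of degree $\Delta^k$, hence each costs $O(1)$ \mpc rounds. So the only thing left to do is to execute the $O(\log^* n)$ color-reduction rounds of step~(iii) in $O(\log\log^* n)$ \mpc rounds without violating the memory budget.

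The key observation driving the speedup is that a Cole--Vishkin round at a node $v$ of a pseudoforest needs only the current color of $v$'s \emph{unique} successor $\succc(v)$; consequently the color of $v$ after all $T$ rounds is a fixed function of the post-Linial colors of $v,\succc(v),\succc^2(v),\dots,\succc^T(v)$. I would have each node gather exactly this length-$(T{+}1)$ sequence by pointer jumping along the successor map: $v$ keeps a pointer $p(v)$ (initially $\succc(v)$) together with a buffer holding the post-Linial colors of the nodes strictly between $v$ and $p(v)$; in each round $v$ contacts $p(v)$, appends $p(v)$'s color and buffer to its own, and sets $p(v)\leftarrow p(p(v))$. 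After $\lceil\log(T{+}1)\rceil=O(\log\log^* n)$ rounds $v$ knows the whole sequence and locally simulates all $T$ Cole--Vishkin rounds; doing this in parallel in all $\Delta^k$ pseudoforests and then running step~(iv) finishes the algorithm. Correctness needs no new argument, since we faithfully simulate the algorithm of \Cref{lemma: local-k-dist} (including its treatment of the roots of tree components and of the cyclic components, where the pointer sequence simply truncates or wraps around). Combining the rounds of the phases gives $O(\log k+\log\log^* n)$ total, as claimed, and disconnected-component compatibility and component-stability hold because no node ever communicates outside its component and the runtime depends only on global parameters.

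The delicate point — and the reason the Linial round is inserted in step~(iii) at all — is the memory accounting during the pointer jumping. The crucial design choice is that a node stores only the \emph{two} endpoint IDs $v$ and $p(v)$ (each $O(\log n)$ bits), together with a list of the \emph{colors} of the intermediate nodes, and never their IDs. After the Linial round each color fits in $O(\log\log n + k\log\Delta)$ bits, and the buffer ever holds at most $T=O(\log^* n)$ of them, so for constant $\Delta,k$ the buffer is $O(\log^* n\cdot\log\log n)=o(\log n)$ bits, i.e.\ $O(1)$ words per pseudoforest and hence $O(1)$ words overall, keeping the global memory at $O(m+n)$; for general $\Delta,k$ the same bookkeeping, with $\Delta^k<n^\delta$ guaranteeing everything still fits, yields the stated $O(\Delta^k)$ local and $O(m+n\Delta^k)$ global bounds, the one extra adjustment being to replace the slow greedy reduction in step~(iv) by a constant number of Linial-style one-round reductions, which brings the color count down to $\Delta^{O(k)}$ (equal to $O(\Delta^{2k})$ in the constant-parameter regime the applications rely on). I expect this memory accounting, rather than any individual algorithmic step, to be the main obstacle: one must be careful that the graph-exponentiation/pointer-jumping is carried out while only ever materializing short objects per node, which is exactly why what is shipped around are small colors plus two IDs rather than gathered neighborhoods.
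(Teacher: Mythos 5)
Your proposal is correct and follows essentially the same route as the paper's proof: compute $G^k$ via graph exponentiation, decompose into directed pseudoforests, apply one Linial round to shrink colors to $O(\log n)$ many, and then pointer-jump along successors for $O(\log\log^* n)$ rounds while storing only the colors of intermediate nodes plus a single far-endpoint identifier, which is exactly the memory-saving trick the paper uses before locally simulating the $O(\log^* n)$ Cole--Vishkin rounds. The only cosmetic difference is your remark about replacing the final greedy reduction by Linial-style rounds in the non-constant-parameter regime, a detail the paper's proof also treats only implicitly.
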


\begin{proof}
	Using \Cref{obs: mpcpower}, we can first compute $G^{k}$ in $O(\log k)$ rounds, and operate on $G^{k}$ instead of the input graph $G$ henceforth. The application of \Cref{obs: mpcpower} requires $O(\Delta^k)$ words of local memory and $O(m + n \cdot \Delta^k)$ words of global memory.
	Then, similarly to \Cref{lemma: local-k-dist}, we can reduce the coloring problem to $O(1)$-coloring of directed pseudoforests that are initially colored with $O(\log n)$ colors.
	
	Next, our goal is to use the graph exponentiation technique such that each node can collect the topology and the colors of its $O(\log^* n)$ successors in its pseudoforest in $O(\log \log^* n)$ time.
	Here, we have to take care of the subtle detail that the \emph{color} of a successor is not enough to determine the machine on which this successor lies.
	Suppose that each node is initially labeled with its $O(\log \log n)$-bit color and its $O(\log n)$-bit identifier that encodes both the identity (color) of the node and the machine containing the node.
	Then, in round 1, each node knows the identifier and the color of its successor.
	For an inductive argument, suppose that each node $u$ knows the identifier the successor $v_i$ in distance $i$ and the vector of colors of all nodes in between $u$ and $v_i$, on the directed path from $u$ to $v_i$.
	Then, in $O(1)$ \mpc rounds, $u$ can learn the identifier of the $2i$:th successor $v_{2i}$ and the colors of all nodes between $u$ and $v_{2i}$.
	After learning the identifier of $v_{2i}$, node $u$ can forget about the identifier of $v_i$ and hence, $u$ only keeps track of one identifier.
	By induction, node $u$ learns the colors of its $O(\log^* n)$ successors in $O(\log \log^* n)$ \mpc rounds.
	
	Using the vector of colors of the successors, in $O(1)$ \mpc rounds, each node can simulate the $O(\log^* n)$-time \local algorithm to obtain an $O(\Delta^{2k})$-coloring.
	This requires $O(\log^* n \cdot \log \log n + \log n) = O(\log n)$ \emph{bits} of memory per node per pseudoforest that the node belongs to, counting the colors of the successors and the identifier of the furthest successor.
	Altogether, this results in a global memory requirement of $O(n \log n \cdot \Delta^k)$ bits which fits $O(n \cdot \Delta^{k})$ words.
\end{proof}

\section{The High Regime}\label{sec:high}

In this section, we will prove that all solvable \lcl problems on forests, i.e., \emph{all \lcl problems that have a correct solution on every forest}, can be solved deterministically in $O(\log n)$ time in the low-space \mpc model using $O(m+n)$ words of global memory. Our proof is constructive: we explicitly provide, for any solvable \lcl, an algorithm that has a runtime of $O(\log n)$. In fact, our construction can be used to find an $O(\log n)$-time algorithm $\fA$ \emph{even for unsolvable \lcl{}s}, with the guarantee that on any instance that admits a correct solution the given output will be correct (while the algorithm detects it if no solution exists). We show the following theorem.

\begin{theorem} \label{thm:high}
	For any solvable \lcl problem $\Pi$ on a forest, there is an $O(\log n)$-time deterministic low-space \mpc algorithm that is component-stable and uses $O(m+n)$ words of global memory.
\end{theorem}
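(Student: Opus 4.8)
The plan is to reduce any solvable \lcl $\Pi$ on a forest to a node-edge-checkable \lcl (as justified in \Cref{sec:prelim}), root every tree in the forest in $O(\log n)$ \mpc rounds via \Cref{lemma:rooting} (which is component-stable and handles arbitrary degree, and whose forest-vs-tree issue is settled by the deterministic connectivity algorithm of Coy--Czumaj), and then work on each rooted tree independently. Once the tree is rooted, the target is the classical two-phase strategy: a bottom-up phase that computes, for as many nodes $v$ as possible, the set $C_v \subseteq \sout$ of labels that can be placed on the half-edge toward $v$'s parent and still admit a correct completion of the subtree hanging from $v$; followed by a top-down phase that fixes an actual output label at the root and propagates consistent choices downward. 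Since $C_v$ is a subset of the fixed constant-size set $\sout$, there are only $O(1)$ possible values, so the completability information that must be transmitted is of constant size at every step — this is the key reason congestion can be controlled.

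The heart of the proof is that the input tree may have depth $\omega(\log n)$, so the bottom-up phase cannot be run sequentially level by level. Instead I would run a rake-and-compress / graph-exponentiation hybrid: repeatedly rake leaves (updating $C_v$ for the new leaves, which only requires knowing the $C$-values of the children) and compress long degree-two paths via graph exponentiation, so that each node learns the aggregated compatibility function along a path of doubling length. The crucial twist forced by the linear global memory budget is that pure graph exponentiation blows up storage, so exponentiation steps must be \emph{interleaved with compression steps that free memory}, and — so that the top-down phase can undo the process — the whole transformation has to be made \emph{reversible}: every time we contract a path into a virtual edge we store enough (constant-size, by the previous paragraph) information to later expand it and assign outputs consistently. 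Several exponentiation processes running on different parts of the tree will meet and have to be merged, possibly at different times, into a single process; orchestrating these merges while keeping per-node memory $O(1)$ on average is the main technical obstacle. I expect this to be \emph{the} hard part, and I would control it with a carefully designed potential function (tracking, e.g., the number of surviving nodes weighted by the length of the path each represents) to show that $O(\log n)$ iterations suffice.

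For the analysis I would argue: (i) \emph{correctness} — because $\Pi$ is solvable on every forest, the root always has a nonempty set of globally-completable labels, and the reversibility of the compression guarantees that a choice fixed at the root extends, via the stored constant-size compatibility data, to a correct half-edge labeling of the entire original tree satisfying the node, edge, and input-output constraints of \Cref{def:nodeedge}; (ii) \emph{runtime} — the potential drops by a constant factor per phase, giving $O(\log n)$ rounds, with each round being a constant number of \mpc communication rounds; (iii) \emph{memory} — the interleaving of exponentiation and memory-freeing steps keeps total memory $O(m+n)$ words, and local memory stays $O(n^\delta)$; (iv) \emph{component-stability} — the only cross-component communication is for global synchronization of \emph{when} steps happen (and to run the rooting/connectivity subroutine), so the \emph{output} at each node depends only on its own component's topology and IDs, $n$, and $\Delta$, as required by \Cref{def:componentStabilityRev}. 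The detailed bookkeeping for the interleaving and the potential-function argument is what I would defer to \Cref{sec:apphigh}.
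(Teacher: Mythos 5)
Your proposal is correct in outline and follows essentially the same route as the paper's proof: root the forest (using the rooting algorithm and the deterministic connected-components algorithm to handle separate components), run a leaves-to-root completability phase and a root-to-leaves assignment phase on each rooted tree, process leaves and degree-$2$ paths in a rake-and-compress/graph-exponentiation hybrid with constant-size completability data per pointer, bound the number of iterations by a fine-tuned potential function, and derive component-stability from the fact that cross-component communication only affects timing, not outputs. The only point where the paper is more concrete than your sketch of ``interleaving exponentiation with memory-freeing compression'' is the memory mechanism: instead of keeping per-node memory $O(1)$ throughout, it first runs a $\Theta(\log\log n)$-iteration MIS-based preprocessing that shrinks the instance to a compatibility tree on $O(n/\log n)$ nodes, so the subsequent pointer-forwarding phases may accumulate $O(\log n)$ constant-size pointers per remaining node while still using $O(m+n)$ words overall.
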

The runtime bound of \Cref{thm:high} follows from  \Cref{cor:fewernodes},  \Cref{lem:gi,lem:runtimephaseone,lem:runtimephasetwo},  and the implementation details that we provide in \Cref{sec:highimple}. Its correctness is proven in \Cref{lem:gi,lem:wellandcorrect}.
We elaborate on this in \Cref{sec:proofhigh}.
In particular, in \Cref{sec:proofhigh} we provide a method to solve any LCL on forests if we can solve it on trees.
Hence, w.l.o.g., we can restrict attention to trees and will do so for the remainder of the discussion of the high regime.

\subsection{High-level Overview of the Algorithm and Its Analysis}\label{sec:overviewhcr}

Consider an arbitrary solvable \lcl problem $\Pi$ on trees.
Throughout this section, we will assume that the \lcl is given as a node-edge-checkable \lcl (\Cref{def:nodeedge}), which we can do w.l.o.g., as observed in \Cref{sec:prelim}. In the following, we will give a slightly simplified view of the algorithm $\fA$ we will use to solve $\Pi$ in $O(\log n)$ time. On a high level, algorithm $\fA$ proceeds in $2$ phases. Assume that already before the first phase we root the input tree by using the algorithm described in \Cref{sec:rooting}.

In the first phase, which we will refer to as the \emph{leaves-to-root phase}, roughly speaking, the goal is to compute, for a substantial number of edges $e = (u, v)$, the set of output labels that can be output at half-edge $(v, e)$ such that the label choice can be extended to a (locally) correct solution in the subtree hanging from $v$ via $e$.
This is done in an iterative manner, proceeding from the leaves towards the root. When, at last, the root has computed this set of output labels for each incident half-edge, it can, on each such half-edge, select an output label from the computed set such that the obtained node configuration is contained in the node constraint of $\Pi$ and the input-output constraints of $\Pi$ (given by the function $\gee$ in the definition of $\Pi$) are satisfied. Such a selection must exist due to the fact that $\Pi$ has a correct solution on the considered instance. We refer to these sets as the \emph{completability information}. 

The second phase, which we will refer to as the \emph{root-to-leaves phase}, consists of completing the solution from the root downwards, by iteratively propagating the selected solution further towards the leaves.
With the same argumentation as at the root, certain nodes $v$ can select an output label at the half-edge leading to its parent and output labels from the sets computed on its incident half-edges leading to its children such that the obtained node configuration is contained in the node constraint of $\Pi$, the obtained edge configuration on the edge from $v$ to its parent is contained in the edge constraint of $\Pi$, and the input-output constraints of $\Pi$ are satisfied.
The fact that the selected labels come from the sets computed in the first phase ensures that after each choice the current partial solution is part of a correct global solution.
While this outline sounds simple, there are a number of intricate challenges to make the mentioned ideas work in $O(\log n)$ rounds while staying within the memory bound of $O(m+n)$. 

Unfortunately, if the depth of the input tree is $\omega(\log n)$ the outlined approach has  $\omega(\log n)$ steps and running them sequentially is insufficient for an  $O(\log n)$-time algorithm. In order to mitigate this issue, we will not only process the leaves of the remaining unprocessed tree in each iteration, but also the nodes of degree $2$, inspired by the rake-and-compress decomposition by Miller and Reif \cite{MillerReif89} which guarantees that after $O(\log n)$ iterations of removing all degree-$1$ and degree-$2$ nodes all nodes have been removed. The advantage of degree-$2$ nodes over higher-degree nodes w.r.t.\ storing completability information (as in the above outline) is that they form paths, which by definition only have two endpoints; the idea, when processing such a path, is to simply store in the two endpoints the information for which pairs of labels at the two half-edges at the ends of the path there exists a correct completion of the solution inside the path. This allows to naturally add processing degree-$2$ nodes to the leaves-to-root phase, while for the root-to-leaves phase, the information stored at the endpoints $s, t$ of a path essentially allows us to start extending the current partial solution on the path itself (and thereafter on the subtrees \emph{hanging} from nodes on the path) one step after the output labels at $s$ and $t$ are selected. Note that the degrees of nodes change throughout the process due to the removal of nodes  and hence new nodes might become degree-$2$ nodes after every step of the algorithm.

Unfortunately, there are further challenges in obtaining an $O(\log n)$ runtime. In the leaves-to-root phase, even when using graph exponentiation, processing a path of degree-$2$ nodes of length $L$ involves coordination between its endpoints and takes $\Omega(\log L)$ time, whereas the $O(\log n)$ time guarantee of the rake-and-compress technique crucially relies on the fact that each iteration (optimally, an iteration would remove all leaves and all degree-$2$ nodes)  can be performed in constant time.
Hence, essentially, we will only perform one step of graph exponentiation on paths in each iteration.
Here, a new obstacle arises: before the graph exponentiation is finished, new nodes (that just became degree-2 nodes due to all except one of their remaining children being conclusively processed in the most recent iteration) might join the path.
Nevertheless, we will show that this process still terminates in logarithmic time by designing a fine-tuned potential function that is inspired by the idea of counting how many nodes from certain groups of degree-$2$ nodes are contained in any fixed ``pointer chain'' from some leaf to the root.

Another issue is that we have to be able to store the completability information (recall, the sets) that we compute in the leaves-to-root phase until we use it (again) in the root-to-leaves phase.
Recall that the graph exponentiation technique adds new edges/pointers. Even on paths their number can be up to logarithmic in $n$ per node (even on average), yielding  a logarithmic overhead in global memory.

In order to remedy this problem, we perform preprocessing before the leaves-to-root phase, and, as a result thereof, postprocessing after the root-to-leaves phase. The preprocessing can be thought of as a more memory-efficient (hence relatively slower) version of (a few iterations in) the leaves-to-root phase. It differs by processing the degree-$2$ nodes, i.e., paths, in a way that guarantees that the number of new edges introduced by the graph exponentiation (which we should rather call pointer forwarding at this point) on each path in each iteration is only a constant fraction of the length of the respective path.
This is achieved by finding, in each iteration, a maximal independent set (MIS) on each path, letting only MIS nodes forward pointers, and removing the MIS nodes afterwards.
The preprocessing runs for $\Theta(\log \log n)$ iterations, and computing an MIS on paths in each of them takes $O(\logstar N)$ time, where $N$ is the size of the ID space. Note that due to the removal of vertices and the way we treat paths, new paths can appear in each iteration and we need to pay the $O(\logstar N)$ runtime in each iteration, yielding a runtime of $O(\log\log n \cdot \logstar N)$ for the preprocessing, which is much less than the target runtime of $O(\log n)$ rounds.

We will show that the number of remaining nodes is $O(n/\log n)$ after the preprocessing. 
This property ensures that the memory overhead of $O(\log n)$ edges per node introduced in the leaves-to-root phase does not exceed the desired global memory of $O(m+n)$ words.
The postprocessing runs for $\Theta(\log \log n)$ iterations and is conceptually very similar to the preprocessing. We simply iteratively extend the partial solution (computed so far) on the edges that were processed during preprocessing, analogous to the approach in the root-to-leaves phase.
Lastly, we also have to ensure that the local memory restrictions of low-space \mpc are not exceeded; we take care of this in \Cref{sec:highimple}.

\subsection{The Algorithm}\label{sec:highalgo}

In this section, we provide the desired algorithm that can be implemented in $O(\log n)$ time in the low-space \mpc model and prove its correctness. 
The details about the exact implementation in the \mpc model are deferred to \Cref{sec:highimple}.
Let $\Pi = (\sinn, \sout, \noco, \edco, \gee)$ be the considered \lcl, and let $G$ denote the input tree.
Before describing the algorithm, we need to introduce the new notion of a \emph{compatibility tree}.
In a sense, a compatibility tree is a structure that stores the constraints that a given \lcl imposes for a given input tree, i.e., the constraints that two labels on an edge or $\deg(v)$ labels around a node $v$ have to satisfy (as well as which output labels can be used at which half-edge, which the \lcl encodes via input labels) are explicitly encoded on the edge and around the node.

\begin{definition}\label{def:comptree}
	A \emph{compatibility tree} is a rooted tree $T$ (without input labels)
	where each edge $(u,v)$ is labeled with a subset $S_{uv}$ of $\sout \times \sout$, and each node $w$ is labeled with a tuple $S_w$ consisting of tuples of the form $(s_w^e)_{e \in \incid(w)}$ where $\incid(w)$ denotes the set of edges incident to $w$, and $s_w^e \in \sout$ for each $e \in \incid(w)$.
	A \emph{correct solution} for a compatibility tree is an assignment $\gout \colon H(T) \to \sout$ s.t.
	\begin{enumerate}
		\item for each edge $e = (u,v)$, we have $(\gout((u,e)), \gout((v,e))) \in S_{uv}$, and
		\item for each node $w$, there exists a tuple $(s_w^e)_{e \in \incid(w)} \in S_w$ such that, for each edge $e \in \incid(w)$, we have $\gout((w,e)) = s_w^e$.
	\end{enumerate}
\end{definition}

Now, we are set to describe the desired algorithm.
The algorithm starts by rooting $G$, using the method described in \Cref{sec:rooting}. We denote the root by $\rooot$. Then, we transform $G$ (which from now on will denote the rooted version of the input tree) into a compatibility tree $G'$ by iteratively removing nodes of degree $1$ and $2$ (while suitably updating the edge set) and assigning a subset $S_{uv}$, resp.\ $S_w$, to each remaining edge $(u,v)$, resp.\ remaining node $w$. We call this step preprocessing (resp. postprocessing when extending the solution back to the removed nodes) and formally define it in \Cref{sec:trafothere}. Next, we design an algorithm $\fA'$ that computes a correct solution for the compatibility tree $G'$. Algorithm $\fA'$ is divided into algorithms $\aone$ and $\atwo$, which largely correspond to Phase I (\Cref{sec:phaseone}) and Phase II (\Cref{sec:phasetwo}) mentioned in the high-level overview. Algorithm $\aone$ finds a correct solution for $G'$, and algorithm $\atwo$ transforms the obtained solution into a correct solution for \lcl $\Pi$ on $G$.

\subsubsection{Reducing the \lcl to a Small Compatibility Tree (Pre- and Postprocessing)}\label{sec:trafothere}
In this section, we show how to transform the rooted tree $G$ into a compatibility tree with $O(n/\log n)$ nodes, where $n$ is the number of nodes of $G$, and how to transform any correct solution for $G'$ into a correct solution for the given \lcl $\Pi$ on $G$.
In other words, we show how to reduce the problem of solving $\Pi$ on $G$ to the problem of finding a correct solution for a compatibility tree with fewer nodes.
The idea behind this approach is that the new, smaller instance can be solved in logarithmic time without exceeding the desired global memory of $O(m+n)$ words.
To obtain a good overall runtime, we will also show how to perform the reduction (and recover the solution) in $O(\log \log n \cdot \log^* N) = O(\log \log n \cdot \log^* n)$ rounds. Recall, that $N$ denotes the size of our ID space.

We start by describing how to obtain $G'$ from $G$.
To this end, we will first transform $G$ into a compatibility tree $G_0$, and then iteratively derive a sequence $G_1, G_2, \dots, G_t$ of compatibility trees from $G_0$, where $t \in O(\log \log n)$ is a parameter we will choose later.

We define $G_0$ in the natural way, by essentially encoding the given \lcl $\Pi$.
The nodes and edges of $G_0$ are precisely the same as in $G$.
For any edge $(u,v)$ with input labels $\iota_u$ and $\iota_v$ at the two half-edges belonging to $(u,v)$, we set $S_{uv}$ to be the set of all pairs $(\ell, \ell') \in \sout \times \sout$ such that the multiset $\{ \ell, \ell' \}$ is contained in the edge constraint $\edco$ of $\Pi$, and we have $\ell \in \gee(\iota_u)$ and $\ell' \in \gee(\iota_v)$.
For any node $w$, we set $S_w$ to be the set of all tuples $(\ell^e)_{e \in \incid(w)}$ such that the multiset $\{ \ell^e \mid e \in \incid(w) \}$ is contained in $\noco_{\deg(w)}$.
Note that the asymmetric nature of this definition comes from the fact that we only need to require compatibility with the function $\gee$ once, in the constraints for nodes or (as we chose) for edges.
From the definition of $G_0$, we obtain directly the following observation.

\begin{observation}\label{obs:equiv}
	A half-edge labeling of $G$ is a correct solution for \lcl $\Pi$ if and only if it is a correct solution for the compatibility tree $G_0$ (under the natural isomorphism between $G$ and the graph underlying $G_0$).
\end{observation}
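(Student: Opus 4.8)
The statement is essentially a bookkeeping verification: we must check that the three defining conditions of a correct \lcl solution (node constraint, edge constraint, and input-output constraint from $\gee$) for $\Pi$ on $G$ are, when unpacked, exactly the two defining conditions of a correct solution for the compatibility tree $G_0$. The plan is to fix an arbitrary half-edge labeling $\gout \colon H(G) \to \sout$ and argue the two directions of the equivalence in parallel, condition by condition, using the fact that $G_0$ and $G$ share the same underlying graph (so the isomorphism is the identity and half-edges of $G$ correspond canonically to half-edges of $G_0$).

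First I would handle the edge condition. By definition of $G_0$, for an edge $e = (u,v)$ with input labels $\iota_u,\iota_v$, the set $S_{uv}$ consists of all pairs $(\ell,\ell')$ with $\{\ell,\ell'\} \in \edco$, $\ell \in \gee(\iota_u)$, and $\ell' \in \gee(\iota_v)$. Hence the compatibility-tree requirement $(\gout((u,e)),\gout((v,e))) \in S_{uv}$ holds for every edge if and only if, for every edge, the edge configuration at $e$ lies in $\edco$ \emph{and} the input-output constraint $\gout(h) \in \gee(\ginn(h))$ holds at both half-edges of $e$. Since every half-edge of $G$ belongs to exactly one edge, ranging over all edges the latter condition is precisely Condition~3 of \Cref{def:nodeedge} together with the edge part (Condition~2) of \Cref{def:nodeedge}. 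Next I would handle the node condition: by definition of $S_w$, there exists a tuple $(\ell^e)_{e\in\incid(w)}\in S_w$ matching $\gout$ on the incident half-edges of $w$ if and only if the multiset of values $\gout$ assigns to the half-edges incident to $w$ is contained in $\noco_{\deg(w)}$ — which is exactly Condition~1 of \Cref{def:nodeedge}. Combining: $\gout$ satisfies Conditions~1--3 of the node-edge-checkable \lcl $\Pi$ (i.e.\ is a correct solution for $\Pi$ on $G$) iff it satisfies Conditions~1--2 of \Cref{def:comptree} for $G_0$ (i.e.\ is a correct solution for the compatibility tree $G_0$).

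There is no real obstacle here; the only point requiring a sentence of care is the ``asymmetric'' placement of the $\gee$-constraint, already flagged in the text: the input-output compatibility is folded into the edge sets $S_{uv}$ rather than the node sets $S_w$, and since each half-edge is incident to exactly one edge, enforcing $\gout(h)\in\gee(\ginn(h))$ on both half-edges of every edge is equivalent to enforcing it on every half-edge of $G$ — so nothing is lost or double-counted. I would also remark that the multiset formulation matches on both sides because the compatibility tree stores tuples indexed by incident edges whereas the \lcl stores multisets, and the two agree precisely because $S_w$ was defined as the set of tuples whose underlying multiset lies in $\noco_{\deg(w)}$. With these remarks the equivalence follows immediately, and the claim that the identity map is the relevant isomorphism (respecting the absence of input labels in $G_0$, which have been ``compiled away'' into the sets $S_{uv}$) is immediate from the construction.
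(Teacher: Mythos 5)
Your proposal is correct and matches the paper's reasoning: the paper simply asserts that the observation "follows directly from the definition of $G_0$," and your condition-by-condition unpacking (edge sets $S_{uv}$ encoding $\edco$ together with the $\gee$-constraint, node sets $S_w$ encoding $\noco_{\deg(w)}$) is exactly the verification the paper leaves implicit. No issues.
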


We now describe how to obtain $G_i$ from $G_{i-1}$, for any $1 \leq i \leq t$.
We transform $G_{i-1}$ into $G_i$ in two steps.
In the first step, we start by finding an MIS $Z$ on the subgraph of $G_{i-1}$ induced by all nodes of degree precisely $2$.
Then, for each node $v \in Z$ with incident edges $e = (u,v)$ and $e' = (v,w)$, we remove $e$ and $e'$ from $G_{i-1}$ and replace them by a new edge $e'' = (u,w)$.
Furthermore, for the new edge, we set $S_{uw}$ to be the set of all label pairs $(\ell, \ell') \in \sout \times \sout$ such that there exist labels $\ell_1, \ell_2$ satisfying $(\ell, \ell_1) \in S_{uv}$, $(\ell_2, \ell') \in S_{vw}$, $s^e_v = \ell_1$, and $s^{e'}_v = \ell_2$.
From the perspective of the nodes $u$ and $w$, the new edge $e''$ replaces the old edges $e$ and $e'$, respectively, in the indexing hidden in the definition of the tuples $S_u$ and $S_w$.
Call the obtained graph $G'_{i-1}$.

In the second step, executed after the first step has finished, each edge $e^* = (x,y)$ such that $x$ is a leaf is removed together with $x$.
Moreover, for such a removed edge, we set $S_y$ to be the set of all tuples $(\ell^e_y)_{e \in \incid(y)}$ such that there exist labels $\ell, \ell'$ such that (in $G'_{i-1}$) we have $(\ell, \ell') \in S_{xy}$ and there exists some tuple $(s^e_y)_{e \in \incid(y)}$ with $s^e_y = \ell^e_y$ (for all $e \neq e^*$) and $s^{e^*}_y = \ell'$.
(Here the first occurrence of $\incid(y)$ denotes the set of edges incident to $y$ after removing $e^*$, while the second occurrence denotes the set before removing $e^*$.)
If a node $y$ of $G'_{i-1}$ has multiple children that are leaves, then we can think of removing the respective edges one by one, each time updating $S_y$.
However, for the actual computation, node $y$ can perform all of these steps at once.

We obtain the following lemma. 

\begin{lemma}\label{lem:gi}
	Let $1 \leq i \leq t$.
	If there exists a correct solution for $G_{i-1}$, then there also exists a correct solution for $G_i$.
	Moreover, given any correct solution for $G_i$, we can transform it into a correct solution for $G_{i-1}$ in a constant number of rounds in the low-space \mpc model using $O(m+n)$ words of global memory.
	Finally, given $G_{i-1}$, we can compute $G_i$ in $O(\log^* N) = O(\log^* n)$ rounds in the described setting.
\end{lemma}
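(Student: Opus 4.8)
\textbf{Proof proposal for \Cref{lem:gi}.}

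The plan is to verify the three assertions in order, since each is essentially a bookkeeping argument about how the sets $S_{uv}$ and $S_w$ are updated, combined with a standard fact about computing an MIS on paths. Throughout, I would keep in mind that $G_{i-1}$ and $G_i$ differ only locally: in the first step we contract each degree-$2$ node of an MIS $Z$ (replacing its two incident edges by a single merged edge), and in the second step we rake the leaves (deleting each leaf-edge and folding its constraint into the parent's node-set).

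\emph{Existence of a solution for $G_i$.} Take a correct solution $\gout$ for $G_{i-1}$. First consider the contraction step producing $G'_{i-1}$. For a contracted node $v \in Z$ with incident edges $e=(u,v)$, $e'=(v,w)$, set $\ell_1 := \gout((v,e))$ and $\ell_2 := \gout((v,e'))$; these satisfy the edge constraints on $e,e'$ and appear as the $v$-coordinates of some tuple in $S_v$ (by correctness of $\gout$ at $v$). Hence the pair $(\gout((u,e)),\gout((w,e')))$ lies in the newly defined $S_{uw}$ by construction, so the restriction of $\gout$ to the non-contracted half-edges is correct for $G'_{i-1}$; the node-constraints at $u$ and $w$ are unaffected because the merged edge $e''$ simply replaces $e$ (resp.\ $e'$) in their indexing and $\gout$ on the corresponding half-edge is unchanged. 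For the leaf-raking step, for each removed leaf-edge $e^*=(x,y)$ the pair $(\gout((x,e^*)),\gout((y,e^*)))$ witnessed by the old $S_y$-tuple shows that the surviving $y$-coordinates $(\gout((y,e)))_{e\neq e^*}$ lie in the updated $S_y$; all other constraints are untouched. Iterating over all raked leaves of $y$ (one at a time, as in the definition) gives a correct solution for $G_i$.

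\emph{Reversibility in $O(1)$ \mpc rounds.} This is the same correspondence read backwards. Given a correct solution $\gout'$ for $G_i$: for each leaf $x$ raked at step $i$ with parent $y$ along $e^*$, by definition of the updated $S_y$ there exist labels $\ell,\ell'$ with $(\ell,\ell')\in S_{xy}$ (in $G'_{i-1}$) extending the current $y$-tuple; set $\gout'((y,e^*)):=\ell'$ and $\gout'((x,e^*)):=\ell$. (If $y$ raked several leaves, all these choices are made jointly from the single witnessing tuple of $S_y$.) Then, for each contracted $v\in Z$ with merged edge $e''=(u,w)$, the pair $(\gout'((u,e'')),\gout'((w,e'')))\in S_{uw}$ comes, by definition of $S_{uw}$, with labels $\ell_1,\ell_2$ such that $(\gout'((u,e'')),\ell_1)\in S_{uv}$, $(\ell_2,\gout'((w,e'')))\in S_{vw}$, and $(\ell_1,\ell_2)$ are the $v$-coordinates of a tuple in $S_v$; set $\gout'((v,e)):=\ell_1$, $\gout'((v,e')):=\ell_2$, and relabel the $u$- and $w$-half-edges on the split edges with their old values. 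Every such extension is a purely local decision at $y$ (resp.\ $v$) given $O(1)$-size data on its incident edges, so it is done in a constant number of \mpc rounds; since only $O(m+n)$ words of constraint data ever need to be stored, the memory bound holds. One should note that the MIS on the degree-$2$ subgraph guarantees that contracted nodes are non-adjacent, so the splits do not interfere and can all be carried out in parallel.

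\emph{Computing $G_i$ from $G_{i-1}$ in $O(\log^* N)$ rounds.} The degree-$2$ nodes of $G_{i-1}$ induce a disjoint union of paths and cycles; an MIS on such a bounded-degree graph can be computed in $O(\log^* N)$ \mpc rounds (indeed in $O(\log^* N)$ \local rounds by the classic path/cycle MIS algorithm via $O(\Delta^2)$-coloring, which is trivially simulated in \mpc). Once $Z$ is known, each contracted node $v$ locally computes $S_{uw}$ from $S_{uv}$, $S_{vw}$, $S_v$ in $O(1)$ rounds, and each node with raked leaf-children locally computes its updated $S_y$ in $O(1)$ rounds; routing the merged-edge identifiers to $u$ and $w$ is a constant number of rounds. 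Adding up gives $O(\log^* N)=O(\log^* n)$ rounds, and since every set manipulated has size bounded by a function of $|\sout|$ and $\Delta$ (hence $O(1)$), the local- and global-memory budgets are respected. The only genuinely nontrivial ingredient here is the MIS-on-paths subroutine; everything else is constant-round local postprocessing, and the main care needed is making sure the two steps are sequenced (contractions fully completed before leaf-raking, as the definition prescribes) so that the updated constraint sets are well defined.
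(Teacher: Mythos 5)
Your proposal is correct and follows essentially the same route as the paper's own (much terser) proof: existence by restricting the $G_{i-1}$-solution through the contraction and leaf-raking steps, reversibility by local extension at the raked leaves and then at the MIS-contracted degree-$2$ nodes (parallelizable since $Z$ is independent), and the round bound from computing an MIS on the degree-$2$-induced paths in $O(\log^* N)$ rounds plus constant-round local updates. Your write-up simply spells out the definitional bookkeeping that the paper dismisses as "follows directly from the definition of $G_i$."
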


\begin{proof}
	The first statement follows directly from the definition of $G_i$.
	For the second statement, observe that from the definition of $G_i$, it follows that any correct solution for $G_i$ provides a partial solution for $G_{i-1}$ (under the natural transformation that subdivides edges and adds the ``removed'' leaves with their incident edges) that is part of a correct solution for $G_{i-1}$ (and this factors through $G'_{i-1}$ in the obvious way).
	Hence, we can first obtain a correct solution for $G'_{i-1}$ by extending the provided solution on the removed leaves with their incident edges, and then obtain a correct solution for $G_{i-1}$ by doing the same on the subdivided edges.
	Note that the first extension can be performed by the nodes that are incident to the leaves (all of which have only constantly many output labels to determine), and the second extension by the computed nodes in the MIS $Z$ (which we can do in parallel since no two nodes in $Z$ are neighbors).
	The third statement follows from the definition of $G_i$, the fact that an MIS can be computed in $O(\log^* N) = O(\log^* n)$ rounds (already in the \local model), and the above observation about parallelization.
\end{proof}

Next, we bound the number of nodes of $G_i$, which we denote by $n_i$.

\begin{lemma}\label{lem:rcanalysis}
	For any $1 \leq i \leq t$, we have $n_i \leq 2/3 \cdot n_{i-1}$.
\end{lemma}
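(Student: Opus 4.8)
The plan is to show that each iteration from $G_{i-1}$ to $G_i$ removes at least a $1/3$ fraction of the nodes, by accounting separately for the two removal steps and using a standard argument about maximal independent sets on paths. First, I would classify the nodes of $G_{i-1}$ by degree: leaves (degree $1$), degree-$2$ nodes, and nodes of degree $\geq 3$. In a tree (more precisely, a rooted forest where the root may have low degree, but let me treat each tree separately), the number of nodes of degree $\geq 3$ is bounded by the number of leaves minus one, by a handshake/double-counting argument: $\sum_v (\deg(v) - 2) = -2$ per tree, so $\sum_{v:\deg v \geq 3}(\deg v - 3 + 1) \le \sum_{v : \deg v = 1} 1$, giving roughly $\#\{\deg \geq 3\} \le \#\{\text{leaves}\}$. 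The degree-$2$ step removes a maximal independent set $Z$ on the subgraph induced by the degree-$2$ nodes; since that subgraph is a disjoint union of paths (and possibly cycles, but in a tree there are no cycles), any maximal independent set on a path of $\ell$ vertices has size at least $\lceil \ell/3 \rceil \geq \ell/3$, so $|Z|$ is at least one third of the number of degree-$2$ nodes. The leaf-removal step removes all leaves.

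The key step is then to combine these bounds. Write $L$ for the number of leaves, $D_2$ for the number of degree-$2$ nodes, and $D_{\geq 3}$ for the number of nodes of degree at least $3$, so $n_{i-1} = L + D_2 + D_{\geq 3}$. The number of nodes removed is at least $L + |Z| \geq L + D_2/3$. Using $D_{\geq 3} \leq L$ (from the tree structure), we get $n_{i-1} = L + D_2 + D_{\geq 3} \leq 2L + D_2 \leq 3(L + D_2/3) \cdot$ — wait, more carefully: we want $L + D_2/3 \geq n_{i-1}/3 = (L + D_2 + D_{\geq 3})/3$, i.e.\ $3L + D_2 \geq L + D_2 + D_{\geq 3}$, i.e.\ $2L \geq D_{\geq 3}$, which holds since $D_{\geq 3} \leq L$. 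Hence at least $n_{i-1}/3$ nodes are removed, so $n_i \leq (2/3) n_{i-1}$. One subtlety to handle: the second (leaf-removal) step operates on $G'_{i-1}$, not $G_{i-1}$, and removing the MIS $Z$ in the first step can create new leaves (when a degree-$2$ node's neighbor had degree $2$, or when a path endpoint adjacent to a removed MIS node becomes a leaf) and can lower the degree of some nodes; this only helps, since the leaf-removal step then removes at least all the original leaves of $G_{i-1}$ that survived, and the counting above uses only the original leaves $L$. I should also note the degenerate cases: a tree that is a single node or a single edge, and the forest having several trees — summing the per-tree inequality $D_{\geq 3} \le L$ over all trees still gives the global bound, with the only caveat being isolated vertices (which are leaves/degree-$0$ and get removed immediately), and tiny components, for which the bound can be checked directly or absorbed.

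The main obstacle I anticipate is being careful about exactly which graph each removal step acts on and making sure no node is double-counted: the degree-$2$ nodes removed in step one and the leaves removed in step two are disjoint sets in $G_{i-1}$ (a node that is a leaf of $G_{i-1}$ is not a degree-$2$ node, and the MIS $Z$ consists only of degree-$2$ nodes of $G_{i-1}$), so the counts $L$ and $|Z|$ add up without overlap. A second, minor, obstacle is the MIS-on-paths bound: I should state clearly that a maximal — not necessarily maximum — independent set on a path $v_1, \dots, v_\ell$ has size $\geq \ell/3$ because between any two consecutive chosen vertices there are at most two unchosen ones (otherwise maximality is violated), and the endpoints are handled by the same argument; for cycles (which cannot occur in a forest, so this is only for safety) the same $\ell/3$ bound holds. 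With these pieces in place the inequality $n_i \leq (2/3)\, n_{i-1}$ follows immediately, so I would keep the write-up short and front-load the degree-counting identity as the one genuinely structural ingredient.
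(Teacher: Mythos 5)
Your proposal is correct and follows essentially the same route as the paper's own proof: all leaves of $G_{i-1}$ are removed, at least a third of the degree-$2$ nodes lie in the maximal independent set (each MIS node covers at most three degree-$2$ nodes), and the tree fact that the number of degree-$\geq 3$ nodes is at most the number of leaves turns these counts into a removal of at least $n_{i-1}/3$ nodes. One minor remark: the first step cannot in fact create new leaves or change any surviving node's degree, since removing a degree-$2$ MIS node $v$ with neighbors $u,w$ replaces $(u,v),(v,w)$ by the edge $(u,w)$ — but, as you note, this only makes your accounting (which uses only the original leaves) safe.
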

\begin{proof}
	By the construction of $G_i$, all nodes that are contained in $G_{i-1}$ but not in $G_i$ are either leaves or degree-$2$ nodes in $G_{i-1}$.
	In particular, as the set of leaves is the same in $G_{i-1}$ and $G'_{i-1}$, all leaves of $G_{i-1}$ are not contained in $G_i$.
	Regarding degree-$2$ nodes in $G_{i-1}$, we observe that at least a third of them must be part of the chosen MIS since (i) each degree-$2$ node must be in the MIS or have an MIS node as neighbor, and (ii) each MIS node covers at most three nodes (in the sense that it is equal or adjacent to them).
	
	Also, since the average degree of a node in a tree is below $2$, the number of leaves in a tree is larger than the number of nodes of degree at least $3$. Hence, when going from $G_{i-1}$ to $G_i$, at least half of the nodes of degree $\neq 2$ are removed, and in total we obtain that the number of nodes that are removed is at least $1/3 \cdot n_i$, which proves the lemma.
\end{proof}

Now, by setting $t \coloneqq 2 \log \log n$ and $G' \coloneqq G_t$, we obtain the following straightforward corollary.
\begin{corollary}\label{cor:fewernodes}
	The number of nodes of $G'$ is at most $n/\log n$.
\end{corollary}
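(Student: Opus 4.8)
The plan is to derive \Cref{cor:fewernodes} by iterating the per-step contraction bound from \Cref{lem:rcanalysis}. Concretely, starting from $G_0 = G$ with $n_0 = n$ nodes and applying \Cref{lem:rcanalysis} for $i = 1, 2, \dots, t$, we get $n_t \le (2/3)^t \cdot n_0 = (2/3)^t \cdot n$. Since $G' = G_t$, it remains to check that $(2/3)^t \le 1/\log n$ for the chosen value $t = 2\log\log n$.

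First I would record the telescoped inequality $n_t \le (2/3)^t n$ explicitly, as a one-line induction on $i$ using \Cref{lem:rcanalysis} as the inductive step (the base case $n_0 = n$ is the definition of $G_0$). Then I would reduce the target bound to the numeric claim $(3/2)^{2\log\log n} \ge \log n$. Taking logarithms (base $2$, say), this is equivalent to $2\log\log n \cdot \log(3/2) \ge \log\log n$, i.e. $2\log(3/2) \ge 1$, which holds since $\log_2(3/2) = \log_2 3 - 1 > 1.58 - 1 = 0.58 > 1/2$. Hence $(2/3)^t \le 1/\log n$ and therefore $n_t \le n/\log n$, as claimed.

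There is essentially no obstacle here — the corollary is a routine consequence of \Cref{lem:rcanalysis} and the choice of $t$; the only thing to be slightly careful about is that the statement as phrased holds for $n$ large enough (so that $\log\log n$ is well-defined and positive and the rounding in ``$t \coloneqq 2\log\log n$'' does not matter), which is the standard convention for asymptotic statements of this kind. If one wanted to be pedantic, one could note that for small $n$ the bound $n/\log n$ may fail to make sense, but this does not affect any downstream argument, since the high-regime analysis only needs the $O(n/\log n)$ bound asymptotically. I would therefore state the proof in two sentences: telescope \Cref{lem:rcanalysis}, then plug in $t = 2\log\log n$ and observe $(2/3)^{2\log\log n} \le 1/\log n$ for sufficiently large $n$.
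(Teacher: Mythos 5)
Your proof is correct and is exactly the argument the paper intends: the corollary is stated as a "straightforward" consequence of \Cref{lem:rcanalysis} with $t = 2\log\log n$, i.e., telescoping $n_t \le (2/3)^t n$ and checking $2\log_2(3/2) \ge 1$. Your remark about base-2 logarithms and sufficiently large $n$ is the right (and only) caveat, and it matches the paper's asymptotic usage.
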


Moreover, by Observation~\ref{obs:equiv}, \Cref{lem:gi}, and \Cref{lem:rcanalysis}, we know that we can compute $G'$ in $O(\log \log n \cdot \log^* N) = O(\log \log n \cdot \log^* n)$ rounds, that there is a correct solution for $G'$ (provided the \lcl $\Pi$ admits a correct solution on $G$), and that we can transform any correct solution for $G'$ into a correct solution for $\Pi$ on $G$ in $O(\log \log n)$ rounds.
The stated runtimes are under the premise that we can implement all of the $O(\log \log n)$ steps without running into memory issues, which we will show to be the case in \Cref{sec:highimple}.
(Note that \Cref{lem:gi} only makes statements about single steps.)

\subsubsection{Phase I (leaves-to-root)}\label{sec:phaseone}
In Phase I, we maintain a set of pointers $(u,v)$, which encode the output labels that can be chosen at $u$ and $v$ such that the solution can be correctly completed on the path between $u$ and $v$ and the subtrees hanging from this path.
The goal is to increase the lengths of these pointers, until we obtain leaf-to-root pointers that allow us (in Phase II) to fix output labels at the root that can be completed to a correct solution on the whole tree.
The algorithm in Phase I proceeds in iterations $i = 1, 2, \dots$.
Before explaining the steps taken in each iteration, we need to introduce some definitions.

A pointer is simply a pair $(u, v)$ of nodes such that $v$ is an ancestor of $u$ in $G'$, i.e., $v$ is a node on the path from $u$ to the root $\rooot$, and $v \neq u$.
We say that a pointer $(u, v)$ \emph{starts} in $u$ and \emph{ends} in $v$; we also call $(u, v)$ an \emph{incoming} pointer when considering node $v$, and an \emph{outgoing} pointer when considering node $u$. 
On each pointer $p = (u, v)$, we store several pieces of information which are required for Phase II:
\begin{enumerate}
	\item a set $\pairs_p \subseteq \sout \times \sout$ (encoding completability information as outlined above),
	\item a node $\pred_p$ which might also be empty, i.e., $\pred_p \in V(G') \cup \{\bot\}$ (encoding the information about which node created the pointer), and
	\item pair $(\first_p, \last_p)$ where $\first_p$ is the first and $\last_p$ the last edge on the unique path from $u$ to $v$ (possibly $\first_p = \last_p$).
\end{enumerate}

The initial pointer set $\ps_0$ is set to $E(G)$, where for each pointer $p = (u,v)$, we set $\pairs_p \coloneqq S_{uv}$, $\pred_p = \bot$, and $\first_p = \last_p = (u,v)$.
We also maintain a set of \emph{active pointers} which is initially set to $\aps_0 \coloneqq \ps_0$.
Throughout Phase I, we will guarantee that the set of active pointers contains, for each node $u \neq \rooot$, at most one pointer starting in $u$, and no pointers starting in the root $\rooot$.
We call a node \emph{active} in iteration $i$ if it is the root $\rooot$ or has exactly one outgoing active pointer at the end of iteration $i - 1$, i.e., one outgoing pointer in $\aps_{i-1}$.
For active nodes, we will denote the unique pointer starting in node $u \neq \rooot$ by $p(u)$.
Finally, for each node $u$ that is not a leaf, we also maintain a tuple $\comp(u) = (\comp^e(u))_{e \in \incid(u)}$ such that each element $\comp^e(u)$ is either a subset of $\sout$ or the special label $\undec$.
(The purpose of these tuples is to store completability information about subtrees hanging from $u$ via different edges).
The multiset for node $u$ is initially set to $\comp_0(u) \coloneqq (\undec, \dots, \undec)$. 

In iteration $i$, the pointer set is updated from $\ps_{i-1}$ to $\ps_i$, the active pointer set from $\aps_{i-1}$ to $\aps_i$, and, for each non-leaf node $u$, the tuple $\comp_{i-1}(u)$ is updated to $\comp_i(u)$.
For each $i \geq 0$, we will ensure that $\aps_i \subseteq \ps_i$ and $\ps_i \subseteq \ps_{i+1}$ (if both are defined).
In order to specify the precise update rules, we need two further definitions.

The first definition specifies which nodes (during the iterative process explained above) should be intuitively regarded as degree-$2$ nodes (because for all incident edges (to children) except one, the corresponding subtree has already been completely processed w.r.t.\ completability information) and which as nodes of degree at least $3$.
(Degree-$1$ nodes will only play a passive role in the update rules.)
The second definition provides an operation that combines two pointers into a larger one.

\begin{definition}[$k$-nodes]\label{def:123node}
	We call a node $u \neq r$ a $2$-node if $u$ has at least one incoming active pointer and for any two incoming active pointers $p, p'$, we have $\last_p = \last_{p'}$.
	For a $2$-node $u$, we call the unique incoming edge $e$ satisfying that for any incoming active pointer $p$ we have $\last_p = e$, the \emph{relevant in-edge} of $u$.
	We call a node $u \neq r$ a $3$-node if $u$ has (at least) two incoming active pointers $p, p'$ satisfying $\last_p \neq \last_{p'}$.
	For a node $u$ that is a $3$-node or the root $\rooot$, an incoming edge $e$ is called a \emph{relevant in-edge} of $u$ if there is an incoming active pointer $p$ satisfying $\last_p = e$.
	We call a node $u \neq r$ a $1$-node if $u$ has no incoming active pointer.
\end{definition}

\begin{definition}[Merge]\label{def:merge}
	Let $v$ be a $2$-node, $p = (u, v)$ and $p' = (v, w)$ two active pointers starting and ending in $v$, respectively, and $\comp(v) = \{ \comp^e(v) \}_{e \in \incid(v)}$ the current tuple at $v$.
	(Note that our construction of the update rules will guarantee (as shown in Observation~\ref{obs:undec}) that (for any $2$-node $v$) we have $\comp^e(v) = \undec$ if and only if $e = \last_p$ or $e = \first_{p'}$.)
	Then, we set $\merge(p, p') \coloneqq (u, w)$.
	Furthermore, we set $\pairs_{\merge(p, p')}$ to be the set of all label pairs $(\ell, \ell')$ such that there exists a tuple $(\ell^e)_{e \in \incid(v)}$ of output labels s.t.
	\begin{enumerate}
		\item $(\ell^e)_{e \in \incid(v)} \in S_v$,
		\item $\ell^e \in \comp^e(v)$, for each $e \in \incid(v) \setminus \{\last_p, \first_{p'}\}$, and
		\item $(\ell, \ell^{\last_p}) \in \pairs_p$ and $(\ell^{\first_{p'}}, \ell') \in \pairs_{p'}$.
	\end{enumerate}
	Finally, we set $\pred_{\merge(p, p')} \coloneqq v$, $\first_{\merge(p, p')} \coloneqq \first_p$, and $\last_{\merge(p, p')} \coloneqq \last_{p'}$.
\end{definition}

The above definition ensures that the new pointer $(u, w)$ satisfies the property that $w$ is an ancestor of $u$, i.e., $(u, w)$ is indeed a pointer.
Now, we are set to define precisely how the sets and tuples we maintain throughout the iterations change.
The update rules for iteration $i\geq 1$ are as follows.
We emphasize that update rules~\ref{step2} and~\ref{step3} are executed ``in parallel'' for all $2$-nodes, $3$-nodes, and the root, i.e., there are no dependencies between these steps. Algorithm $\aone$ is defined as follows.

\begin{enumerate}
	\item Start by setting $\aps_i \coloneqq \ps_i \coloneqq \emptyset$, and $\comp_i(u) \coloneqq \comp_{i-1}(u)$ for each node $u$.
	\item\label{step2} For each active $2$-node $u$ (with outgoing pointer $p(u)$), and each incoming active pointer $p \in \aps_{i-1}$, add the pointer $\merge(p, p(u))$ to $\aps_i$.
	\item\label{step3} For each node $u$ that is a $3$-node or the root $\rooot$ do the following.
	Start by asking, for each relevant in-edge $e = (v, u)$ of $u$, whether there is some incoming active pointer $p = (w, u) \in \aps_{i-1}$ such that $\last_p = e$ and $w$ is a leaf.
	If the answer is ``no'' for at least one relevant in-edge, or if $u = r$, then
	\begin{enumerate}
		\item\label{itema} for all relevant in-edges $e$ for which the answer is ``no'', add all incoming active pointers $p' \in \aps_{i-1}$ with $\last_{p'} = e$ to $\aps_i$, and
		\item\label{itemb} for all relevant in-edges $e = (v, u)$ for which the answer is ``yes'', (only) change $\comp^e_i(u)$ (from $\undec$) to the set of all labels $\ell$ satisfying that there exists a pair $(\ell', \ell'') \in \pairs_p$ with $\ell'' = \ell$ and $(\ell') \in S_w$.
	\end{enumerate}
	If the answer is ``yes'' for all relevant in-edges and $u \neq r$, then change the answer to ``no'' on precisely one arbitrarily chosen relevant in-edge, and proceed as in the previous case (i.e., execute steps~\ref{itema} and~\ref{itemb}).
	\item For each node $u \neq \rooot$ that has an outgoing active pointer, set $p(u)$ to be the unique pointer in $\aps_i$ starting in $u$.
	\item Set $\ps_i \coloneqq \ps_{i-1} \cup \aps_i$.
\end{enumerate}

Algorithm $\aone$ terminates after the first iteration $i$ satisfying $\aps_i = \emptyset$.

\begin{figure}
	\centering
	\includegraphics[width=0.85\textwidth]{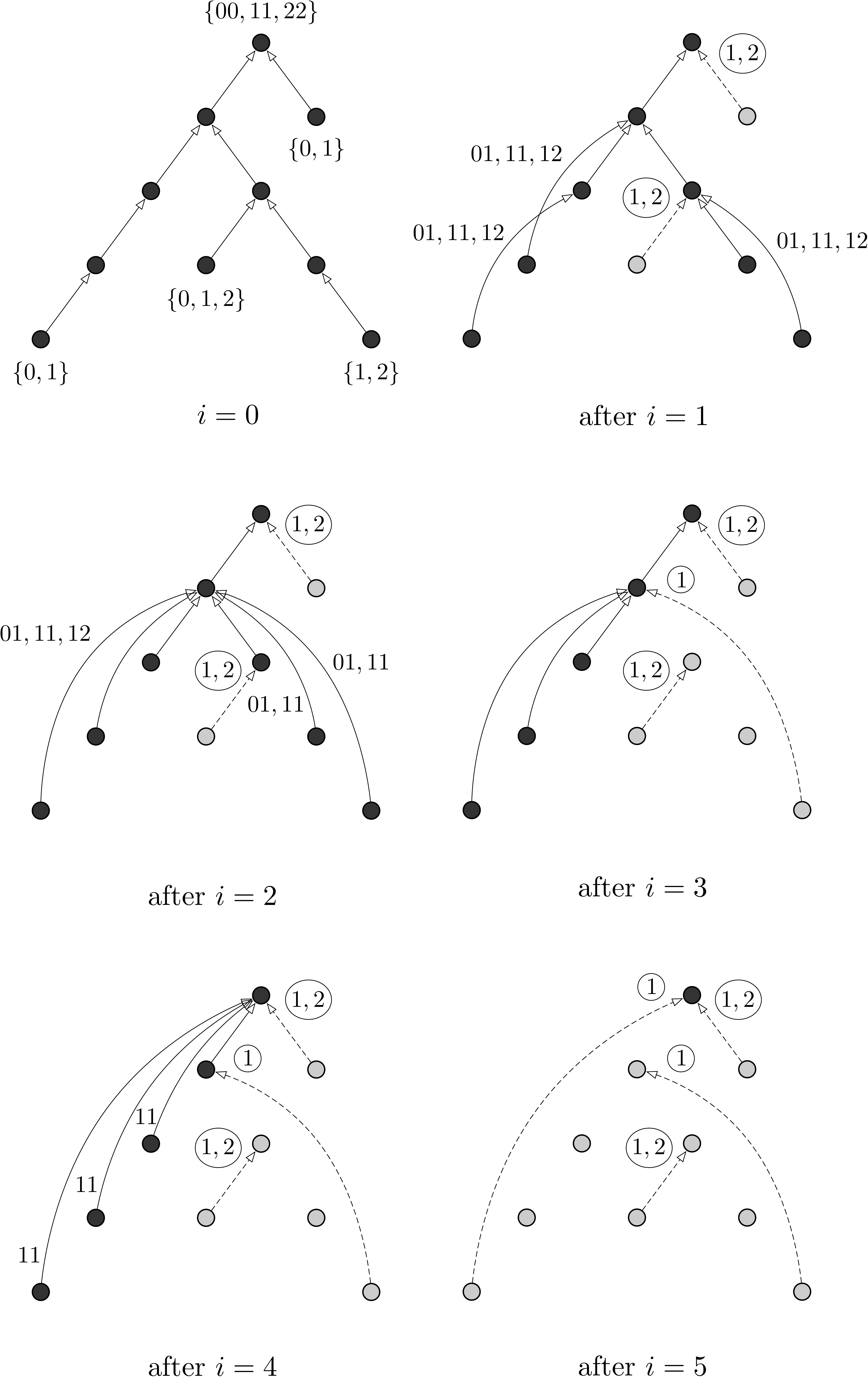}
	\caption{An example execution of the algorithm in Phase I.}
	\label{fig:pointers}
\end{figure}

\paragraph*{An Example for Phase I (leaves-to-root).}
Consider the compatibility tree given in \Cref{fig:pointers}, for $i=0$.
The considered output label set is $\sout = \{ 0, 1, 2 \}$.
For each edge $(u,v)$ in the tree, the associated label pair set $S_{uv}$ is defined as ${(0,1), (1,2)}$, which, for simplicity, we will write in the form $01, 12$.
For each node $w$ that is not a leaf or the root, the associated set $S_w$ is defined as the set of all tuples (of length $\deg(w)$) with pairwise distinct entries (e.g., if $w$ is of degree $3$, each tuple in $S_w$ is a permutation of $(0,1,2)$).
For the leaves and the root, the associated tuple set is given below or above the node.

In iteration $i = 1$, three merge operations are performed by the three non-root nodes of degree $2$.
In each merge operation, two pointers $p = (u,v)$, $p' = (v,w)$, each labeled with $01, 12$, are merged into a larger pointer $(u,w)$ with label $01, 11, 12$.
The label of the new pointer contains, e.g., the pair $12$ since labeling half-edge $(u,(u,v))$ with $1$ and half-edge $(w,(v,w))$ with $2$ can be completed to a labeling of all inbetween half-edges such that the labeling respects the constraints $S_{uv}$, $S_v$, and $S_{vw}$, namely by labeling $(v,(u,v))$ with $2$, and $(v,(v,w))$ with $1$.
Moreover, the two dashed pointers are removed from the set of active pointers in iteration $1$ as they start in a leaf $x$ and end in a $3$-node or the root.
Consequently, the two leaves in which the pointers start become inactive in iteration $1$ (which is illustrated by coloring them gray).
The two nodes $y$ in which the pointers end update their set $\comp(y)$, by setting the entry corresponding to the removed pointer from $\undec$ to the set of all labels that (when written at half-edge $(y,(x,y))$) are completable downwards, i.e., for which a label at the respective leaf $x$ exists that respects $S_x$ and $S_{xy}$.
We illustrate the entries that change from $\undec$ to some set by circling them, e.g., the changed entry of $\comp(\rooot)$ is the circled set consisting of the labels $1$ and $2$.

In iteration $i=2$, three merge operations are performed, by the two non-root $2$-nodes.
In iteration $i=3$, no merge operations are performed as there is no $2$-node.
Furthermore, the left child $z$ of the root (which is a $3$-node) obtains ``yes'' as answer for the question it asks in step~\ref{step3} of the update rules, for both relevant in-edges.
Thus, $z$ changes the answer to ``no'' for one of the relevant in-edges, arbitrarily chosen (in our case the left one).
For the other relevant in-edge $e$, all pointers $p$ with $\last_p = e$ are removed from the set of active pointers, and the corresponding entry in the set $\comp(z)$ is changed from $\undec$ to $\{ 1 \}$.
This also causes three nodes to become inactive.
In iteration $i=4$, the newly born $2$-node performs three merge operations.
In iteration $i=5$, all pointers ending in the root are removed from the set of active pointers due to the existence of a pointer starting in a leaf with the same ``last edge''.
This step causes the set of active pointers to become empty, upon which the algorithm in Phase I terminates.
In \Cref{sec:appphaseone}, we show that Phase I is well-defined and analyze it.
In particular, we prove the following lemma that bounds the number of iterations in Phase I.

\begin{lemma}\label{lem:runtimephaseone}
	Algorithm $\aone$ terminates after $O(\log n)$ iterations.
\end{lemma}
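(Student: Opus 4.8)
The plan is to design a potential function $\Phi_i$, defined on the state of the algorithm after iteration $i$, such that (a) $\Phi_0 = O(n)$, (b) $\Phi_i \geq 0$ always, and (c) $\Phi_i$ drops by a constant factor (or at least by a constant additive amount relative to some amortization) in each iteration, so that after $O(\log n)$ iterations no active pointers remain. The right intuition, as the overview hints, is to fix an arbitrary leaf-to-root pointer chain and count how many ``degree-$2$-ish'' nodes sit on it; more precisely, I would track, for each node $u$ that currently lies on the path traced by some active pointer (equivalently, $u$ is an endpoint of $\first_p$ or $\last_p$ for an active pointer $p$, or an interior node merged away), a charge that reflects whether $u$ has been absorbed into a longer pointer yet. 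Let me phrase it via the ``rake-and-compress''-style layering: assign to each node the iteration in which it last changed status (became a $1$-node that got raked, or got merged as a $2$-node), and define $\Phi_i$ as the sum over active pointers $p=(u,v)$ of the number of tree-edges on the $u$--$v$ path that have \emph{not} yet been ``doubled over'' — i.e.\ roughly $\sum_{p \in \aps_i} \log(\text{length of } p)$ deficit from its eventual length, plus a term counting active $3$-nodes and leaves that still need to be raked.

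Concretely, the key steps I would carry out are: (1) Establish the structural invariants from \Cref{sec:appphaseone} that the excerpt references — in particular that each non-root node has at most one outgoing active pointer, that the active pointers at any moment form a laminar/nested structure along root paths, and Observation~\ref{obs:undec} characterizing when $\comp^e(u) = \undec$. These guarantee the merge operation is always well-defined and that pointers genuinely compose. (2) Partition the nodes into the standard rake-and-compress classes: in each ``macro-phase'' the set of nodes that are leaves or degree-$2$ in the remaining tree shrinks the remaining forest by a constant factor (\Cref{lem:rcanalysis}-style argument), so there are only $O(\log n)$ macro-phases. (3) Show that within a single pointer chain, the $2$-nodes on it get merged pairwise: whenever a $2$-node $u$ has an incoming active pointer, step~\ref{step2} merges it, so the number of active pointers whose path contains a given ``compress-layer'' node halves each iteration — this is where the graph-exponentiation doubling is captured. (4) Handle the subtlety flagged in the overview: new nodes join a path mid-doubling because their last remaining child was just conclusively processed by step~\ref{step3}. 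I would charge the arrival of such a node to the rake step that removed its sibling subtree, so that over all iterations each node is charged $O(1)$ times for ``joining a path'' and the potential still telescopes.

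The main obstacle I expect is precisely step (4): the analysis is not cleanly sequential because merging processes on different sub-paths run at different rates and paths grow dynamically as sibling subtrees finish. The naive potential ``$\sum$ over active pointers of $\log(\text{remaining doublings})$'' can \emph{increase} when a long path absorbs a freshly-finished node, so the fine-tuned potential must be engineered so that these increases are paid for by decreases elsewhere — the natural bookkeeping is to weight each node by the macro-phase (rake-and-compress layer) it belongs to, since a node joining a path in layer $j$ can only do so after $\Theta(j)$ earlier events, and there are $O(\log n)$ layers total. I would therefore define $\Phi_i = \sum_{u \text{ active}} w(u) + \sum_{p \in \aps_i} (\text{layer of } p)$-type terms where $w(u)$ counts $u$'s layer, argue $\Phi_0 = O(n \log n)$ but that it decreases by a constant \emph{fraction} per iteration (not per macro-phase), and conclude termination in $O(\log n)$ iterations; alternatively, and perhaps more cleanly, argue directly that after $t$ iterations any surviving active pointer must have ``length at least $2^{t - c\cdot(\text{its layer})}$'' for a constant $c$, which is impossible once $t$ exceeds $c \log n + \log n$. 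I would verify the base case and the one-iteration decrement carefully, defer the messy constant-chasing, and cite \Cref{lem:rcanalysis} and \Cref{cor:fewernodes} for the layer count of $O(\log n)$.
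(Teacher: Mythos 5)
You have the right high-level strategy---a potential built around how many ``path'' nodes any fixed leaf-to-root pointer chain can contain, with newly created degree-$2$ nodes charged to subtrees/leaves that were just finished---and this is indeed the skeleton the paper follows. But your proposal stops exactly where the work is: no candidate potential is ever shown to decrease, and the two you pin down enough to check do not hold up. The fallback claim that after $t$ iterations every surviving active pointer has length at least $2^{t-c\cdot(\text{layer})}$, with ``layer'' taken from the Miller--Reif rake-and-compress layering, is false: a pointer $(a,v)$ ending at a $3$-node $v$ is re-activated by step~\ref{step3} of the update rules in every iteration in which its last edge is still unresolved, and that edge stays unresolved until the sibling subtree at $v$ delivers a \emph{leaf} pointer. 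If both children of $v$ top paths of length $\Theta(n)$, the length-$1$ pointer $(a,v)$ stays active for $\Theta(\log n)$ iterations even though all nodes involved sit in $O(1)$ rake-and-compress layers (Miller--Reif removes an entire path in one layer, while the doubling process needs $\Theta(\log L)$ iterations to resolve it). For the same reason the layer-weighted potential $\sum_u w(u)$ has no evident constant-factor decrease per iteration: progress of a pointer is gated by when sibling subtrees at $3$-nodes finish, which the static layering does not track; and ``each node is charged $O(1)$ times overall'' is an amortized statement that cannot by itself yield geometric decay per iteration.

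The missing ingredients are precisely the ones the paper's proof is built on. It defines blocks $B_{i,j}$ (maximal runs of non-root degree-$2$ nodes of the active tree $T_i$) and recursively defined quantities $k_{i,j}$ that are \emph{not} node counts or layer weights but upper bounds on $|C\cap B_{i,j}|$ for every leaf-root pointer chain $C$ (\Cref{lem:leafroot}); these halve under merges and absorb newly joining nodes only additively. The potential is $\Phi_i=\Phi'_i+\Phi''_i$ with $\Phi'_i$ the number of leaves of $T_i$ and $\Phi''_i=\sum_j k_{i,j}$, and the engine is the pair of inequalities $\Phi''_{i+1}\le \tfrac12\Phi''_i+X_i$ (the new path nodes are matched injectively to the $X_i$ leaves deactivated in that iteration) and $\Phi'_{i+1}\le\tfrac12\bigl(\Phi'_i+\Phi''_i\bigr)$ (at least half of the leaves pointing at $3$-nodes or the root die each iteration, and the surviving ones are charged to distinct blocks), combined in a two-case analysis that only yields decay over \emph{two} iterations, $\Phi_{i+1}\le\tfrac78\Phi_{i-1}$ (\Cref{lem:calc}); all of this rests on structural invariants (\Cref{lem:propcombi}, \Cref{lem:leaves}, \Cref{cor:niceblocks}) guaranteeing that blocks only merge and that leaves of $T_i$ are leaves of $G'$. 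The identification of the leaf count as the second potential term and the cancellation of the $X_i$ terms between the two parts is the actual mechanism that tames the ``nodes joining a path mid-doubling'' issue you flag; since your plan neither supplies this (or an equivalent) bookkeeping nor a valid substitute, the central decrement step of the proof is a genuine gap.
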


\subsubsection{Phase II (root-to-leaves)}\label{sec:phasetwo}

Let $\ps_{\fin}$ denote the set of pointers at the end of the last iteration of Phase I.
In Phase II, we will go through the pointers of some subset of $\ps_{\fin}$ in some order and ``fix'' them, i.e., for each such pointer $p = (u, v)$ we assign to the two half-edges $(u, \first_p)$ and $(v, \last_p)$ a label from $\sout$ each.
In order to describe the order in which we process the pointers, we group the pointers we want to process into sets $\timey(1), \timey(2), \dots$.
We will process each of the pointers in set $\timey(i)$ in parallel in iteration $i$.

Define $\timey(1)$ to be the set of all pointers $p = (u, \rooot) \in \ps_{\fin}$ for which $u$ is a leaf.
For each $i \geq 2$, define $\timey(i)$ to be the set of all pointers $p'$ s.t.~there is a pointer $p = (u, v) \in \timey(i-1)$ satisfying (1) $p' = (u, \pred_p)$, (2) $p' = (\pred_p, v)$, or (3) $p' = (w, \pred_p)$ where $w$ is a leaf and the edge $\last_{p'}$ does not lie on the path from $u$ to $v$.

Next, we collect some insights about the pointers in $\timey(i)$.
The proofs are deferred to \cref{sec:appphasetwo}.
We start with \Cref{lem:attheroot} which provides information about the leaf-root pointers produced in Phase I.
We continue with \Cref{lem:nicesplit} which highlights which pointers in $\timey(i)$ are ``produced'' by some pointer in $\timey(i-1)$.

\begin{lemma}\label{lem:attheroot}
	For each edge $e$ incoming to the root $\rooot$, there is precisely one pointer $p = (u, \rooot) \in \ps_{\fin}$ such that $u$ is a leaf and $\last_p = e$.
\end{lemma}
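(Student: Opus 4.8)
\textbf{Proof proposal for \Cref{lem:attheroot}.}

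The plan is to show the existence and uniqueness of the claimed leaf-root pointer for each edge $e$ incoming to the root by tracking, through the iterations of Phase~I, an invariant about the active pointers whose last edge is $e$. The key observation is that, per the update rules, the root never becomes a $2$-node and never has an outgoing pointer; it is only ever touched by update rule~\ref{step3} as ``the root $\rooot$''. So I would fix an edge $e = (v, \rooot)$ incoming to the root and study the evolution of the set $\aps_i^e$ of active pointers $p$ with $\last_p = e$ throughout Phase~I. The core claim to prove by induction on $i$ is: in every iteration $i$ before the one in which $\comp^e(\rooot)$ changes from $\undec$, the set $\aps_i^e$ consists of a ``chain'' of pointers, each starting where the previous one ends, whose first pointer starts at a leaf and whose last pointer ends at $\rooot$ --- in other words there is a unique ``maximal'' active pointer $(w,\rooot)$ with last edge $e$, and $w$ is not yet necessarily a leaf, but becomes one exactly when rule~\ref{step3} fires on $e$.

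More precisely, I would argue as follows. Initially $\aps_0^e = \{(v,\rooot)\}$, a single pointer, and the unique node $v$ from which it starts is active. In each iteration, either $v$ is a $2$-node, in which case rule~\ref{step2} merges the incoming pointer with $p(v)$ and produces a single new pointer $(v', \rooot)$ with $\last = e$ (so $\aps_i^e$ is again a singleton and the ``start node'' walks strictly up the tree toward $\rooot$, or rather away from $\rooot$ down a path); or $v$ is a $1$-node or a leaf, in which case no merge happens at $v$ and the pointer survives unchanged (rule~\ref{itema} re-adds it to $\aps_i$ provided the answer at $\rooot$ is ``no'' for $e$, which it is as long as the start node is not a leaf); or $v$ is a $3$-node, which I need to rule out or handle: a $3$-node has two incoming active pointers with distinct last edges, and I would use the structural invariants established for Phase~I (the referenced \Cref{sec:appphaseone}) together with the fact that we are following a single chain to argue the start node of the chain is never a $3$-node --- intuitively, a $3$-node only arises where two already-processed subtrees meet, and the chain from $e$ upward hasn't merged with a sibling. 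Eventually the start node of the chain becomes a leaf $w$ (this must happen since each merge strictly shortens the remaining path and Phase~I terminates by \Cref{lem:runtimephaseone}); at that iteration the answer at $\rooot$ for relevant in-edge $e$ is ``yes'', rule~\ref{step3} fires, $\comp^e(\rooot)$ is set, and all pointers with last edge $e$ are removed from the active set. The pointer $(w,\rooot)$ that was active just before this removal is added to $\ps_i \subseteq \ps_{\fin}$ and is never removed from $\ps_{\fin}$ (only the \emph{active} set shrinks), giving existence. Uniqueness follows because at every iteration the active chain with last edge $e$ was a singleton at its ``top'' pointer (the one ending at $\rooot$), so there is exactly one pointer ending at $\rooot$ with last edge $e$ and leaf start that ever gets frozen into $\ps_{\fin}$.

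The main obstacle I expect is the bookkeeping needed to rule out the start node of the chain ever being a $3$-node while the chain is still growing, and more generally to make precise the ``chain'' invariant in the presence of nodes joining paths late (the phenomenon described in the overview where a node becomes a $2$-node only after all but one of its children are conclusively processed). I would lean on the well-definedness results and the characterization of $\comp$ entries (e.g.\ Observation~\ref{obs:undec}) proven in \Cref{sec:appphaseone} --- in particular the fact that an edge incident to a $2$-node has $\comp = \undec$ exactly on its relevant in-edge and on the first edge of its outgoing pointer --- to show that along the unprocessed path leading up to $e$, no branching into a $3$-node configuration can occur before $e$ itself is processed at the root. Once that invariant is in hand, existence and uniqueness are immediate from the update rules and the monotonicity $\ps_{i-1}\subseteq \ps_i$.
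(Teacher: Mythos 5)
The skeleton of your existence argument matches the paper's: the property ``some active pointer with $\last = e$ ends at $\rooot$'' persists (the root re-adds these pointers whenever its answer for $e$ is ``no''), it can only die once a pointer with $\last = e$ starting at a \emph{leaf} has entered the pointer set, and since Phase~I ends with no active pointers and $\ps$ only grows, such a pointer survives into $\ps_{\fin}$. However, the invariant you build everything on is false. Since $e$ is incident to $\rooot$, \emph{every} pointer with $\last_p = e$ ends at the root, so $\aps_i^e$ cannot be a ``chain of pointers each starting where the previous one ends''; and it is not a singleton either: in each iteration, step~\ref{itema} re-adds \emph{all} old active pointers with $\last = e$ (as long as the answer is ``no''), while step~\ref{step2} may add a freshly merged one. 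On the path $\rooot - a - b - w$ ($w$ a leaf) one already gets $\{(a,\rooot),(b,\rooot)\}$ after one iteration and $\{(a,\rooot),(b,\rooot),(w,\rooot)\}$ after two, all with $\last = e$. This breaks your induction at the first merge, and it breaks your uniqueness argument outright, since ``the active chain was a singleton at its top pointer'' is exactly the statement that fails. Your plan to \emph{rule out} the start node being a $3$-node is also not viable: if the root's child $a$ has two leaf children, then $a$ is a $3$-node carrying the outgoing last-$e$ pointer; the correct handling is simply that a $3$-node start stalls the pointer (harmless for existence) until it becomes a $2$-node (\Cref{lem:noin23}).

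What is actually needed for uniqueness — and what the paper uses — is Observation~\ref{prop:unweird}: two active pointers with $\last = e$ share the edge $e$, so their endpoints lie on one directed path; hence two \emph{leaf-started} such pointers would force two leaves onto a common root-to-leaf path, so at the first iteration $i$ in which a leaf-started last-$e$ pointer becomes active there is exactly one. One then argues that no pointer with $\last = e$ is ever added afterwards: in iteration $i+1$ the root does not re-add them (its answer for $e$ is now ``yes''), and no merge can create one, because a $2$-node with an outgoing active last-$e$ pointer cannot coexist with the leaf-started pointer (its incoming pointer would contradict \Cref{lem:propcombi}(b) together with Observation~\ref{prop:unweird}); once no active last-$e$ pointer remains, none can reappear. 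Your proposal never isolates this ``first appearance plus permanent deactivation'' step, and without it (or a corrected replacement for the singleton invariant) the uniqueness half of the lemma is not established.
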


\begin{lemma}\label{lem:nicesplit}
	Let $p = (u,v)$ be a pointer in $\ps_{\fin}$ with $\pred_p \neq \bot$.
	Then $\pred_p$ has degree at least $2$ in $G'$.
	Moreover,
	\begin{enumerate}
		\item if $\pred_p$ has degree $2$, then $(u, \pred_p), (\pred_p, v) \in \ps_{\fin}$, and
		\item if $\pred_p$ has degree at least $3$, then $(u, \pred_p), (\pred_p, v) \in \ps_{\fin}$, and for each edge $e$ incoming at $\pred_p$ that does not lie on the path from $u$ to $v$, there is exactly one pointer $p' = (w, \pred_p) \in \ps_{\fin}$ such that $w$ is a leaf and $\last_{p'} = e$.
	\end{enumerate}
\end{lemma}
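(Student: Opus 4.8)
The plan is to trace back through the update rules of Algorithm $\aone$ that could have created the pointer $p = (u,v)$ with $\pred_p \neq \bot$. By the initialization, every pointer in $\ps_0$ has $\pred = \bot$, so $p$ must have been added to some $\aps_i$ (and hence to $\ps_i \subseteq \ps_{\fin}$) during an iteration $i \geq 1$. Inspecting the update rules, the only place a pointer with non-empty $\pred$ is created is update rule~\ref{step2}, via a $\merge$ operation performed by some active $2$-node; update rules~\ref{itema} and~\ref{itemb} only move existing pointers or modify $\comp$-entries, and never create new pointers. Hence $p = \merge(q, q')$ for some active $2$-node $x$, where $q = (u, x)$ is an incoming active pointer at $x$ and $q' = p(x) = (x, v)$ is the unique outgoing active pointer of $x$; moreover $\pred_p = x$ by \Cref{def:merge}. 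In particular, $x$ has both an incoming and an outgoing pointer in $G'$, so it is an internal node, giving $\deg_{G'}(x) \geq 2$; and since $x$ was a $2$-node, by \Cref{def:123node} all incoming active pointers of $x$ share the same last edge, the relevant in-edge of $x$.

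Now I would split on the degree of $\pred_p = x$ in $G'$. Observe that a $2$-node in the sense of \Cref{def:123node} need not have degree exactly $2$ in $G'$: it may have higher degree but with all-but-one incident (child) subtrees already fully processed (so that the corresponding $\comp$-entries are no longer $\undec$). This is exactly the distinction the lemma draws. In the case $\deg_{G'}(x) = 2$: here $x$'s unique incoming edge is $\first_{q'} = \last_q = \last_p$'s counterpart, and $q = (u,x)$, $q' = (x,v)$; since $\aps_i \subseteq \ps_i \subseteq \ps_{\fin}$ (using the monotonicity $\ps_i \subseteq \ps_{i+1}$ guaranteed throughout Phase~I) and $q, q'$ were in $\aps_{i-1} \subseteq \ps_{\fin}$, we get $(u,\pred_p), (\pred_p, v) \in \ps_{\fin}$, which is statement~1. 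In the case $\deg_{G'}(x) \geq 3$: again $q = (u,x) \in \ps_{\fin}$ and $q' = (x,v) \in \ps_{\fin}$ by the same monotonicity argument. It remains to handle the other edges incident at $x$. The path from $u$ to $v$ passes through $x$ via the relevant in-edge $\last_q$ and the first edge $\first_{q'}$; any other incoming edge $e$ at $x$ had its $\comp^e(x)$ entry changed from $\undec$ to a set at some earlier iteration, which — inspecting update rule~\ref{itemb} — happens precisely when $x$ was processed as a $3$-node (or root) and there was a unique incoming active pointer $p' = (w, x)$ with $\last_{p'} = e$ and $w$ a leaf. That pointer $p'$ was in $\aps_{i'-1} \subseteq \ps_{\fin}$ at that time. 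The uniqueness of such $p'$ for each such $e$ should follow from the invariant that there is at most one active pointer starting in each node and a counting/structural argument that, for a $3$-node, exactly one leaf pointer per relevant in-edge survives to the moment the entry is set — analogous to \Cref{lem:attheroot}. This yields statement~2.

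The main obstacle I anticipate is the bookkeeping in the $\deg_{G'}(x) \geq 3$ case: one must carefully argue that, at the iteration when $x$ finally acts as a $2$-node and performs the $\merge$ producing $p$, all incident edges other than $\last_q$ and $\first_{q'}$ have already had their $\comp$-entries set, and that each such setting event was driven by a \emph{unique} leaf-pointer whose last edge is that incident edge. This requires combining (i) Observation~\ref{obs:undec} (that $\comp^e(v) = \undec$ iff $e$ is one of the two ``active'' edges of the merge), (ii) the single-outgoing-active-pointer invariant, and (iii) the fact that update rule~\ref{itemb} is the only mechanism changing $\comp$-entries away from $\undec$. I expect the well-definedness statements established in \Cref{sec:appphaseone} (that Phase~I is well-defined, that the invariants on active pointers hold) to carry most of the weight here, and the proof of this lemma to be essentially an unwinding of those invariants together with the definition of $\merge$.
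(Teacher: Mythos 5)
Your overall strategy matches the paper's: trace the creation of $p$ back to a $\merge$ performed by a $2$-node (the only rule that sets $\pred \neq \bot$), conclude $\pred_p$ is internal on the $u$--$v$ path and hence has degree at least $2$, and get $(u,\pred_p),(\pred_p,v)\in\ps_{\fin}$ from $\aps_{i-1}\subseteq\ps_{i-1}\subseteq\ps_{\fin}$. Your route to the \emph{existence} of a leaf pointer for each off-path edge $e$ at $\pred_p$ (via Observation~\ref{obs:undec}: at merge time $\comp^e(\pred_p)\neq\undec$, and only rule~\ref{itemb} can have set it, which presupposes an active leaf pointer with $\last = e$) is valid and a slightly different, arguably cleaner, packaging than the paper's, which instead tracks when the property ``some active pointer ending in $\pred_p$ has last edge $e$'' first becomes false.

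The genuine gap is the \emph{uniqueness} claim in statement~2, which you leave as ``should follow \dots analogous to \Cref{lem:attheroot}.'' Two things are missing. First, the invariant you invoke (at most one outgoing active pointer per node) is not what drives uniqueness in \Cref{lem:attheroot}; what is used there is Observation~\ref{prop:unweird} (two leaf pointers ending in the same node with the same last edge would force two distinct leaves onto a common directed path) to get ``at most one at the first iteration such a pointer appears,'' plus the fact that after the step-\ref{step3} ``yes'' event no active pointer with last edge $e$ survives, so no further pointer with that last edge is ever added to $\ps$. Second, and more importantly, that argument does not transfer verbatim to a non-root node $\pred_p$: the property ``there is an active pointer ending in $\pred_p$ with $\last = e$'' can also be extinguished by step~\ref{step2}, namely by the merges $\pred_p$ itself performs once it becomes a $2$-node (this can happen, e.g., to the edge chosen in the flip-to-``no'' branch of step~\ref{step3}), and in that scenario the \Cref{lem:attheroot} reasoning does not apply. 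The paper closes exactly this hole by observing that any vertex performs merge operations in at most one iteration, and that those merges only consume incoming pointers whose last edge is the relevant in-edge of $\pred_p$, which lies on the path from $u$ to $v$ and is therefore excluded from the lemma's claim; for all other incoming edges the elimination must have gone through step~\ref{step3}, and the \Cref{lem:attheroot} argument then gives both existence and the ``exactly one.'' Without this observation your sketch does not rule out a second leaf pointer with the same last edge being created before $\pred_p$'s pointers are finally deactivated.
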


The next lemma shows that the sets $\timey(i)$ yield a partition of the edge set in a natural way.
For this result we need to introduce a bit of notation.
We call a pointer $(u,v)$ such that $(u,v)$ is an edge of $G'$ a \emph{basic} pointer.
Moreover, we denote by $\done(i)$ the set of all basic pointers contained in $\timey(1) \cup \dots \cup \timey(i)$.
For simplicity, also define $\done(0) \coloneqq \emptyset$.
Finally, for any two nodes $u, v$ such that $v$ is an ancestor of $u$, we denote by $\betw(u,v)$ the set of all edges $(w,x)$ such that 1) $(w,x) = (y,v)$, or 2) $y$ is an ancestor of $w$, but $u$ is not an ancestor of $w$, where $y$ is the child of $v$ that lies on the path from $u$ to $v$.
In other words, $\betw(u,v)$ is the set of all edges that can be reached both from $u$ without crossing $v$, and from $v$ without crossing $u$.
For simplicity, for any pointer $p = (u,v)$, we also define $\betw(p) \coloneqq \betw(u,v)$.

\begin{lemma}\label{lem:between}
	Consider any $i \geq 1$, and any edge $e = (u,v) \in E(G')$.
	If $\done(i-1)$ does not contain the pointer $p = (u,v)$, then there is exactly one pointer $(w,x) \in \timey(i)$ such that $e \in \betw(w,x)$.
	If $\done(i-1)$ contains the pointer $p = (u,v)$, then there is no pointer $(w,x) \in \timey(i)$ such that $e \in \betw(w,x)$.
\end{lemma}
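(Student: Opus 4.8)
The plan is to prove both statements simultaneously by induction on $i$, using the recursive definition of $\timey(i)$ together with the structural information supplied by \Cref{lem:attheroot} and \Cref{lem:nicesplit}. The key conceptual point is that the $\betw$-sets of the pointers in $\timey(i)$ form a refinement of the $\betw$-sets of the pointers in $\timey(i-1)$: each pointer $p=(u,v)\in\timey(i-1)$ is ``split'' by its predecessor $\pred_p$ into the pointers $(u,\pred_p)$ and $(\pred_p,v)$ (plus, when $\pred_p$ has degree $\geq 3$, some leaf-to-$\pred_p$ pointers along the other in-edges of $\pred_p$), and I claim these replacement pointers partition $\betw(u,v)$ minus the basic pointers that get ``finished'' at this step. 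Concretely, I would first establish the base case: by \Cref{lem:attheroot}, for each edge $e$ incoming to $\rooot$ there is exactly one leaf-root pointer $p=(u,\rooot)\in\ps_{\fin}$ with $\last_p=e$, and $\betw(p)$ is precisely the set of edges on the ``$e$-side'' of the root; these sets over all in-edges of $\rooot$ partition $E(G')$, so every edge lies in $\betw(w,x)$ for exactly one $(w,x)\in\timey(1)$, and since $\done(0)=\emptyset$ the claim for $i=1$ holds.

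For the inductive step, assume the statement for $i-1$. Fix an edge $e=(u,v)\in E(G')$. If $e\in\done(i-1)$, then by definition $e$ viewed as a basic pointer lies in $\timey(j)$ for some $j\le i-1$; I would argue that once a basic pointer is processed it ``covers only itself'' and is never reintroduced as part of a larger $\betw$-set at a later time — formally, I would show that for any pointer $(w,x)\in\timey(i)$ we have $e\notin\betw(w,x)$, by tracking that the recursive splitting in the definition of $\timey$ explicitly excludes, in case (3), the edge $\last_{p'}$ lying on the path from $u$ to $v$, and in cases (1)--(2) shrinks the interval strictly. This is where the clause ``the edge $\last_{p'}$ does not lie on the path from $u$ to $v$'' in the definition of $\timey(i)$ does its work. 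If $e\notin\done(i-1)$, then by the inductive hypothesis there is exactly one pointer $p=(u',v')\in\timey(i-1)$ with $e\in\betw(u',v')$. Now split into cases according to $\pred_p$: if $\pred_p=\bot$, then $p$ is a basic pointer (it was never merged), so $\betw(p)=\{p\}$ and $e=p$ would already be in $\done(i-1)$, contradiction — hence $\pred_p\neq\bot$ and \Cref{lem:nicesplit} applies. Using \Cref{lem:nicesplit}, $\betw(u',\pred_p)$ and $\betw(\pred_p,v')$ partition $\betw(u',v')$ except for the basic edge $(\pred_p\text{-to-its-child})$-type edges that are exactly the new basic pointers entering $\done(i)$; and when $\pred_p$ has degree $\ge 3$, the extra leaf-to-$\pred_p$ pointers along the other in-edges of $\pred_p$ are disjoint from $\betw(u',v')$ and contribute $\betw$-sets on the sides of $\pred_p$ not containing $u'$ — but these do not contain $e$ since $e\in\betw(u',v')$. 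Hence exactly one of the replacement pointers in $\timey(i)$ has $e$ in its $\betw$-set, unless $e$ itself became a basic pointer finished at step $i$, in which case $e\in\done(i)\setminus\done(i-1)$ and no $\timey(i)$-pointer covers it — which is consistent with the statement since then $\done(i-1)$ does not contain $p=(u,v)$ but we must check the statement only asserts a claim about $\done(i-1)$, so I would need to verify that ``$e$ becomes basic at step $i$'' cannot happen unless $\done(i-1)$ already excludes it, which it does by hypothesis, and then the unique covering pointer is the basic pointer $(u,v)$ itself if it is in $\timey(i)$. I would reconcile this boundary case carefully by noting $\betw(u,v)=\{(u,v)\}$ when $(u,v)$ is basic, so ``$(u,v)\in\timey(i)$'' and ``$e\in\betw(w,x)$ for $(w,x)=(u,v)$'' coincide.

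The main obstacle I anticipate is the careful bookkeeping in the degree-$\ge 3$ case of \Cref{lem:nicesplit}: one must verify that the family $\{\betw(u',\pred_p),\ \betw(\pred_p,v')\} \cup \{\betw(w,\pred_p) : w \text{ leaf along another in-edge of } \pred_p\}$ together with the newly-basic pointers is genuinely a partition of $\betw(u',v')$ together with the edges hanging off the other sides of $\pred_p$, with no overlaps and no omissions — in particular that the edge from $\pred_p$ to its child on the $u'$-$v'$ path is accounted for exactly once (it becomes basic, i.e., enters $\done(i)$) and that the ``other in-edges'' of $\pred_p$ referenced in case (3) of the $\timey$-definition match exactly the relevant in-edges produced in step~\ref{step3} of Phase I. I would handle this by unwinding the definition of $\betw$ in terms of ancestor relations and doing a short edge-by-edge case analysis: every edge $f$ is classified by where it sits relative to $\pred_p$ and the child of $\pred_p$ on the path from $u'$ to $v'$. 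The other potential subtlety is making sure the induction is well-founded, i.e., that $\timey(i)$ is nonempty only for finitely many $i$ and that the recursion bottoms out at basic pointers; this follows because each application of the splitting strictly decreases the number of edges strictly between the endpoints of a pointer (the $\betw$ interval), so after finitely many steps every pointer in play is basic.
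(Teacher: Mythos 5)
Your overall plan coincides with the paper's proof: induction on $i$, base case from \Cref{lem:attheroot}, inductive step by splitting the unique pointer of $\timey(i-1)$ that covers $e$ via \Cref{lem:nicesplit} (including the nice explicit observation that $\pred_p=\bot$ would force $e\in\done(i-1)$), and the $\done$-case via the fact that each $\timey(i)$ pointer's $\betw$-set is contained in that of the $\timey(i-1)$ pointer it arises from. However, the heart of your inductive step rests on a misreading of $\betw$. By its definition, $\betw(u',v')$ is \emph{not} the set of path edges (an ``interval'') between $u'$ and $v'$: it consists of all edges reachable from both endpoints without crossing the other, i.e., the path edges \emph{plus every edge of every subtree hanging from an internal node of the path} (only the edges strictly below $u'$ and those above $v'$ are excluded). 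Consequently two of your structural claims are false: (i) the sets $\betw(w,\pred_p)$ for the leaf pointers $(w,\pred_p)$ along the other in-edges of $\pred_p$ are \emph{subsets} of $\betw(u',v')$ — they are exactly the subtrees hanging from $\pred_p$ — and not disjoint from it; and (ii) $\betw(u',\pred_p)$, $\betw(\pred_p,v')$ together with these leaf-pointer sets partition $\betw(u',v')$ \emph{exactly}; there is no ``except for'' carve-out, since the edge from $\pred_p$ to its path-child lies in $\betw(u',\pred_p)$ by clause 1) of the definition, and a basic pointer that appears in $\timey(i)$ still covers its own edge at iteration $i$ (the lemma conditions on $\done(i-1)$, not on $\done(i)$).

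This is not a cosmetic slip: for an edge $e$ lying in a subtree hanging from $\pred_p$ via an in-edge not on the path from $u'$ to $v'$ — precisely the situation for which the degree-$\geq 3$ clause of \Cref{lem:nicesplit} and case (3) of the $\timey$ definition exist — your argument either concludes that $e$ is covered by one of $(u',\pred_p)$, $(\pred_p,v')$ (false), or, if one takes your disjointness claim seriously, that $e$ is covered by no $\timey(i)$ pointer at all, whereas the lemma requires it to be covered by exactly one, namely the leaf pointer whose $\last$ edge is the corresponding in-edge of $\pred_p$. The repair is exactly the computation you postpone to your ``anticipated obstacle'' paragraph: unwind the ancestor-based definition of $\betw$ and verify that the successors' $\betw$-sets are pairwise disjoint with union equal to $\betw(u',v')$ (this is how the paper uses \Cref{lem:nicesplit}). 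With that corrected, the remainder of your outline — including the $\done$-case, which is best finished by combining the induction hypothesis with the containment of successor $\betw$-sets and the fact that basic pointers have no successors — goes through and matches the paper's argument.
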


For any $i \geq 1$, and any pointer $p = (u,v) \in \timey(i)$, define $\succc(p)$ to be the set of all pointers $p' \in \timey(i+1)$ satisfying (1) $p' = (u, \pred_p)$, (2) $p' = (\pred_p, v)$, or (3) $p' = (w, \pred_p)$ where $w$ is a leaf and the edge $\last_{p'}$ does not lie on the path from $u$ to $v$.
If $\pred_p = \bot$, set $\succc(p) \coloneqq \emptyset$.
We obtain the following observation.

\begin{observation}\label{obs:summary}
	For any $i \geq 2$, and any pointer $p' \in \timey(i)$, there is exactly one pointer $p \in \timey(i-1)$ such that $p' \in \succc(p)$.
	For any $i \geq 1$, and any pointer $p = (u,v) \in \timey(1)$ with $\pred_p \neq \bot$, we have $\succc(p) = \{ (u, \pred_p), (\pred_p, v), p_1$, $\dots$, $p_{\deg(\pred_p)-2} \}$ where each $p_j$ is a pointer starting in a leaf, ending in $\pred_p$, and satisfying $\last_{p_j} = e_j$, where $e_1, \dots, e_{\deg(\pred_p)-2}$ are the $\deg(\pred_p)-2$ edges incoming to $\pred_p$ that do not lie on the path from $u$ to $v$. 
\end{observation}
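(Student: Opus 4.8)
The plan is to prove the two assertions in turn. The second is a direct unwinding of the definitions combined with \Cref{lem:nicesplit}, while the first (uniqueness of the predecessor pointer) rests on the edge partition supplied by \Cref{lem:between}. As a preliminary, note by a trivial induction on $i$ that $\timey(i) \subseteq \ps_{\fin}$ for every $i \ge 1$: this holds for $i=1$ by definition of $\timey(1)$, and for $i \ge 2$ every element of $\timey(i)$ has the form $(u,\pred_p)$, $(\pred_p,v)$, or $(w,\pred_p)$ for some $p=(u,v) \in \timey(i-1) \subseteq \ps_{\fin}$ with $\pred_p \neq \bot$, all of which lie in $\ps_{\fin}$ by \Cref{lem:nicesplit}. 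Hence \Cref{lem:nicesplit} applies to every pointer of $\timey(i-1)$ and $\timey(i)$.

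For the second assertion, fix $p=(u,v) \in \timey(i)$ with $c \coloneqq \pred_p \neq \bot$. By \Cref{lem:nicesplit}, $c$ has degree at least $2$ and lies strictly between $u$ and $v$ on the path from $u$ to $v$; thus exactly one edge incident to $c$ is a path edge toward $u$, one is a path edge toward $v$, and the remaining $\deg(c)-2$ incoming edges of $c$ lie off this path. Unwinding the definition of $\succc(p)$, clause~(1) is satisfied only by $(u,c)$, clause~(2) only by $(c,v)$, and clause~(3) exactly by the leaf-to-$c$ pointers whose last edge is one of these $\deg(c)-2$ off-path edges. By \Cref{lem:nicesplit}, $(u,c)$ and $(c,v)$ lie in $\ps_{\fin}$, and for each off-path incoming edge $e_j$ of $c$ there is exactly one leaf-to-$c$ pointer $p_j$ with $\last_{p_j}=e_j$; since $c$ lies strictly between $u$ and $v$, the starting leaves of the $p_j$'s sit in pairwise distinct subtrees of $c$, distinct from the subtrees containing $u$ and $v$, so $(u,c)$, $(c,v)$, $p_1,\dots,p_{\deg(c)-2}$ are pairwise distinct. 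Finally, each of these pointers lies in $\timey(i+1)$ by the very definition of $\timey(i+1)$, so the requirement ``$p' \in \timey(i+1)$'' in the definition of $\succc(p)$ is automatic, and $\succc(p)=\{(u,c),(c,v),p_1,\dots,p_{\deg(c)-2}\}$, as claimed. (The observation states this for $\timey(1)$; the identical argument works for any $\timey(i)$.)

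For the first assertion, existence is immediate: if $p' \in \timey(i)$ with $i \ge 2$, then by definition of $\timey(i)$ there is $p \in \timey(i-1)$ --- necessarily with $\pred_p \neq \bot$ --- such that $p'$ satisfies one of clauses~(1)--(3) with respect to $p$, i.e.\ $p' \in \succc(p)$. For uniqueness the key ingredient is the auxiliary claim that \emph{for every $p=(u,v) \in \timey(i-1)$ with $\pred_p \neq \bot$, the sets $\{\betw(p') : p' \in \succc(p)\}$ partition $\betw(p)$}; in particular $\betw(p') \subseteq \betw(p)$ for each $p' \in \succc(p)$. Granting this, suppose $p_1,p_2 \in \timey(i-1)$ both satisfy $p' \in \succc(p_1)$ and $p' \in \succc(p_2)$. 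The set $\betw(p')$ is nonempty (it always contains the edge joining the starting node of $p'$ to its parent), so fix $e \in \betw(p')$; then $e \in \betw(p_1) \cap \betw(p_2)$. Since \Cref{lem:between} guarantees that at most one pointer of $\timey(i-1)$ has $e$ in its $\betw$-set, we conclude $p_1=p_2$.

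It remains to prove the auxiliary claim, where the genuine difficulty lies. Write the path from $u$ to $v$ as $u=z_0,z_1,\dots,z_k=v$ with $\pred_p=z_j$ and $1 \le j \le k-1$, and let $E_x$ denote the set of edges of the subtree of $G'$ rooted at $x$, including the edge from $x$ to its parent. Using the description that $\betw(a,b)$ is the set of edges reachable from one endpoint without crossing the other, one obtains $\betw(a,b)=(E_y \setminus E_a) \cup \{f_a\}$, where $y$ is the child of $b$ on the path from $a$ to $b$ and $f_a$ is the edge from $a$ to its parent. Substituting: $\betw(p)$ consists of the $k$ path edges $(z_0,z_1),\dots,(z_{k-1},z_k)$ together with the edges of the subtrees hanging off the internal nodes $z_1,\dots,z_{k-1}$; the left child $(u,\pred_p)$ contributes $(z_0,z_1),\dots,(z_{j-1},z_j)$ and the subtrees off $z_1,\dots,z_{j-1}$; the right child $(\pred_p,v)$ contributes $(z_j,z_{j+1}),\dots,(z_{k-1},z_k)$ and the subtrees off $z_{j+1},\dots,z_{k-1}$; and each side pointer $p_m$, whose last edge enters $z_j$ through some child $d_m$, has $\betw(p_m)=E_{d_m}$, the whole subtree hanging off $z_j$ through that edge. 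These pieces occupy pairwise disjoint parts of $G'$ and their union is exactly $\betw(p)$, which proves the claim. I expect the delicate point to be precisely this verification: against the boundary behaviour of $\betw$ --- the first path edge is included, ``last'' refers to the edge just below the upper endpoint, and basic sub-pointers have singleton $\betw$-sets --- one must check that the left child captures $f_u$, the right child captures $f_{\pred_p}$, and the subtree of $\pred_p$ toward $u$ lies entirely inside $\betw(u,\pred_p)$, with nothing double-counted or omitted. Once the ``reachable from both ends'' picture is used consistently, this is a routine case analysis.
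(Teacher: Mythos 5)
Your proof is correct and follows essentially the same route as the paper, whose own argument is just a one-line citation of the definition of $\timey(\cdot)$ together with \Cref{lem:nicesplit} and \Cref{lem:between}; you merely fill in the details, and the $\betw$-partition fact you isolate for uniqueness is exactly the disjoint-union observation the paper itself uses inside its proof of \Cref{lem:between}. No gaps.
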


\paragraph*{Algorithm $\atwo$.}

Now we describe algorithm $\atwo$ formally. 
The algorithm proceeds in iterations $i = 1, 2, \dots$ where in each iteration $i$, we process all pointers contained in $\timey(i)$.
When processing a pointer $p = (u,v)$, we assign some output label from $\sout$ to each so-far-unlabeled half-edge from $\{ (u, \first_p), (\last_p, v) \}$.
Due to Observation~\ref{obs:summary}, it suffices to explain
\begin{itemize}
	\item[(a)]\label{point1} how we choose those output labels for each pointer in $\timey(1)$, 
	\item[(b)]\label{point2} for each already processed pointer $p$ with $\pred_p \neq \bot$, how we choose those output labels for each pointer in $\succc(p)$.
\end{itemize}
For point (a), let $p_1 = (u_1, \rooot), \dots, p_k = (u_k, \rooot)$ denote the pointers in $\timey(1)$.
By \Cref{lem:attheroot}, we know that $k = \deg(\rooot)$ and, for each edge $e$ incident to $\rooot$, there is precisely one pointer $p_j$ with $\last_{p_j} = e$.
Recall \Cref{def:comptree,def:merge}.
We first assign labels to the half-edges incident to $\rooot$.
More precisely, for each edge $e \in \incid(\rooot)$, assign to half-edge $(\rooot, e)$ some label $\gout((\rooot, e)) \coloneqq \ell^e$ such that, for the obtained tuple $(\ell^e)_{e \in \incid(\rooot)}$, we have $(\ell^e)_{e \in \incid(\rooot)} \in S_{\rooot}$, and $\ell^e \in \comp^e(\rooot)$, for each $e \in \incid(\rooot)$.
For each pointer $p_j$, we assign to half-edge $(u_j, \first_{p_j})$ a label $\gout((u_j, \first_{p_j})) \coloneqq \ell^*$ s.t.\ $(\ell^*, \ell^{\last_{p_j}}) \in \pairs_{p_j}$ and $(\ell^*) \in S_{u_j}$.
For point (b), let $p = (u,v)$ denote an already processed pointer with $\pred_p \neq \bot$.
Note that, for the two pointers $p' \coloneqq (u, \pred_p)$ and $p'' \coloneqq (\pred_p, v)$, the half-edges $(u, \first_{p'})$ and $(\last_{p''}, v)$ have already been assigned output labels since $p$ has already been processed; denote those output labels by $\ell$ and $\ell'$, respectively.
However, by \Cref{lem:between} and Observation~\ref{obs:summary}, these are the only half-edges that are already labeled, out of all the half-edges that ``by definition'' have to be labeled after processing the pointers in $\succc(p)$.
Out of these unlabeled half-edges, we first assign an output to all half-edges incident to $\pred_p$.
Concretely, for each edge $e \in \incid(\pred_p)$, assign to half-edge $(\pred_p, e)$ some label $\gout((\pred_p, e)) \coloneqq \ell^e$ such that, for the obtained tuple $(\ell^e)_{e \in \incid(\pred_p)}$, we have
\begin{enumerate}
	\item $(\ell^e)_{e \in \incid(\pred_p)} \in S_{\pred_p}$ 
	\item $\ell^e \in \comp^e(\pred_p)$, for each $e \in \incid(\pred_p) \setminus \{\last_{p'}, \first_{p''}\}$ 
	\item $(\ell, \ell^{\last_{p'}}) \in \pairs_{p'}$ and $(\ell^{\first_{p''}}, \ell') \in \pairs_{p''}$ \ .
\end{enumerate}
Finally, for each pointer $p''' = (w, \pred_p)$ where $w$ is a leaf and $\last_{p'''}$ does not lie on the path from $u$ to $v$, we assign to half-edge $(w, \first_{p'''})$ a label $\gout((w, \first_{p'''})) \coloneqq \ell^*$ such that $(\ell^*, \ell^{\last_{p'''}}) \in \pairs_{p'''}$ and $(\ell^*) \in S_{w}$.
By Observation~\ref{obs:summary}, this finishes the processing of all the pointers in $\succc(p)$.
The algorithm in Phase II terminates in the first iteration $i$ in which $\timey(i) = \emptyset$.
This concludes the description of $\atwo$.
In \Cref{sec:appphasetwo}, we show that $\atwo$ is well-defined and analyze $\atwo$.
In particular, we will prove the following lemma that bounds the number of iterations in Phase II.

\begin{lemma}\label{lem:runtimephasetwo}
	Algorithm $\atwo$ terminates after $O(\log n)$ iterations.
\end{lemma}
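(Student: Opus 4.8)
The plan is to equip every pointer that can occur during Phase~II with a nonnegative integer potential --- the index of the Phase~I iteration in which it was created --- and to show that this potential strictly decreases along the relation $\succc$ that drives Phase~II. Concretely, for a pointer $p$ let $\phi(p)$ be the Phase~I iteration in which $p$ (together with its data $\pairs_p,\pred_p,\first_p,\last_p$) is produced, with $\phi(p):=0$ for the basic pointers, i.e., those in $\ps_0=E(G)$. By \Cref{lem:attheroot} and \Cref{lem:nicesplit}, every pointer occurring in any $\timey(i)$ belongs to $\ps_{\fin}$; since the sets $\ps_i$ are nondecreasing and every pointer is produced during one of the $O(\log n)$ iterations of Phase~I guaranteed by \Cref{lem:runtimephaseone}, $\phi$ is well defined on all these pointers and satisfies $\phi(p)=O(\log n)$.

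The crucial claim is that if $p'\in\succc(p)$, then $\phi(p')\le\phi(p)-1$. First, $\succc(p)\neq\emptyset$ forces $\pred_p\neq\bot$, and a pointer with $\pred_p\neq\bot$ is produced by the $\merge$ operation (\Cref{def:merge}) carried out by $\pred_p$ while it is a $2$-node (\Cref{def:123node}), in iteration $\phi(p)\ge 1$; writing $p=(u,v)$, that merge combines the active pointers $(u,\pred_p)$ and $(\pred_p,v)$, which therefore already exist at the end of iteration $\phi(p)-1$, so the type-(1) and type-(2) successors $(u,\pred_p)$ and $(\pred_p,v)$ have $\phi\le\phi(p)-1$. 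For the type-(3) successors, which are leaf-started pointers $(w,\pred_p)$ whose last edge $e$ is not on the $u$--$v$ path, \Cref{lem:nicesplit} tells us they occur only when $\pred_p$ has degree at least $3$ in $G'$, with exactly one such pointer per edge $e$ incident to $\pred_p$ and off the $u$--$v$ path. For $\pred_p$ to be a $2$-node in iteration $\phi(p)$, all of its incoming active pointers must share the same last edge (the one on the $u$--$v$ path); hence, by the end of iteration $\phi(p)-1$, edge $e$ carries no active pointer. Inspecting the update rules, the only way edge $e$ loses the (necessarily leaf-started) active pointer it carries at some point is via the ``yes''-branch of step~\ref{step3}, executed while $\pred_p$ acts as a $3$-node, which happens in some iteration $\le\phi(p)-1$; the absorbed leaf pointer was already active before that iteration, so $\phi((w,\pred_p))\le\phi(p)-1$ as well, proving the claim.

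We then conclude by induction along Phase~II. The set $\timey(1)$ consists of leaf-to-root pointers and so satisfies $\phi\le O(\log n)$. By \Cref{obs:summary}, for $i\ge 2$ every pointer of $\timey(i)$ lies in $\succc(p)$ for a unique $p\in\timey(i-1)$; iterating the claim gives $\phi(p)\le O(\log n)-(i-1)$ for every $p\in\timey(i)$. Since $\phi\ge 0$, $\timey(i)$ must be empty once $i$ exceeds $O(\log n)+1$, and as $\atwo$ halts in the first iteration $i$ with $\timey(i)=\emptyset$, it runs for $O(\log n)$ iterations.

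The step I expect to be the main obstacle is the analysis of the type-(3) successors: making airtight the statements that before $\pred_p$ can perform its merge, all of its incoming edges except the relevant one must have been cleared, and that such a clearing can only happen strictly earlier, through the $3$-node branch of step~\ref{step3}. This needs the exact semantics of step~\ref{step3} (including the forced relabelling of one ``yes'' to ``no'' when $u\neq r$) together with the structural facts about Phase~I --- established when proving that Phase~I is well defined --- that a node only ever absorbs leaf-started active pointers in step~\ref{step3} and only ever performs a single round of merges.
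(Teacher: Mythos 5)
Your overall strategy coincides with the paper's: you attach to every pointer the Phase~I iteration at whose end it first became active, show that this quantity strictly decreases along $\succc$, and combine the decrease with \Cref{lem:runtimephaseone}; your handling of the type-(1) and type-(2) successors $(u,\pred_p)$ and $(\pred_p,v)$ is exactly the paper's argument. The place where you diverge, and where the argument has a genuine gap, is the type-(3) successors.

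You claim that ``the only way edge $e$ loses the active pointer it carries is via the yes-branch of step~\ref{step3}, executed while $\pred_p$ acts as a $3$-node,'' and you say you would support this with the structural fact that a node ``only ever performs a single round of merges.'' That fact is false: a $2$-node keeps receiving fresh incoming active pointers, created in the same or later iterations by the merges of $2$-nodes below it, and therefore merges again and again -- this repeated merging is precisely the pointer-doubling that makes Phase~I finish in $O(\log n)$ iterations. (On a path $a\!-\!b\!-\!c\!-\!d\!-\!\rooot$, node $c$ merges in iteration $1$, producing $(b,d)$, and again in iteration $2$, producing $(a,\rooot)$; the same behaviour occurs in the paper's worked example.) Consequently the update rules do admit a second mechanism by which the active pointers with last edge $e$ could disappear, namely $\pred_p$ acting as a $2$-node with relevant in-edge $e$ and merging them away in step~\ref{step2}, and your argument never excludes it; also, not every pointer dropped in a yes-branch is leaf-started, only the certifying one is (which is $p'$ by \Cref{lem:nicesplit}). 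Your conclusion can be rescued -- for instance, if $\pred_p$ ever merged with relevant in-edge $e$, then from that point on its incoming active pointers could only have last edge $e$ (or be absent), contradicting the merge with a different relevant in-edge that creates $p$ in iteration $\phi(p)$ -- but the cleaner route, and the one the paper actually takes, avoids this case analysis entirely: since $\pred_p$ is a $2$-node at the end of iteration $\phi(p)-1$ with its relevant in-edge on the $u$--$v$ path, every active pointer ending in $\pred_p$ at that time has its last edge on that path, and this property persists for the rest of Phase~I (new incoming active pointers at a node arise only from merges at descendants whose outgoing pointer already ends there with the same last edge). Since $p'\in\ps_{\fin}$ must have been active at the end of some iteration and $\last_{p'}=e$ lies off the path, that iteration precedes $\phi(p)-1$, giving $\phi(p')<\phi(p)$ directly. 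You should replace your ``only via the yes-branch'' step by this persistence argument (or by the contradiction sketched above); as written, the justification you propose for the key step does not hold.
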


\section{The Mid Regime}\label{sec:mid}

In this section, we will prove that all \lcl problems on trees with deterministic complexity $n^{o(1)}$ in the \local model can be solved deterministically in roughly $O(\log \log n)$ time in the low-space \mpc. In particular, we prove the following.

\begin{restatable}[Mid regime]{theorem}{thmMid}
	\label{thm:mid}
	Consider a forest consisting of (disjoint) connected components $C_1,\ldots,C_k$, each $C_i$ of size $n_i$.
	Furthermore, consider an \lcl problem $\Pi$ that can be solved in $O(\log z)$ rounds by a deterministic \local algorithm on instances with at most $z$ nodes.
	There is a deterministic low-space \mpc algorithm that solves $\Pi$ in $O(\log \log \max_i\{ n_i \})+O(\logstar N)$ time using $O(m+n)$ words of global memory where $N$ is the size of the ID space. The algorithm is component-stable.
\end{restatable}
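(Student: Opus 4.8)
The plan is to take the $O(\log n)$-round \local algorithm of Chang and Pettie \cite{CP19} for problems in the mid regime and implement it in the low-space \mpc model with an exponential speedup, while respecting linear global memory. The \local algorithm proceeds by (i) computing a rake-and-compress (equivalently, $H$-partition-style) decomposition of the tree into $O(\log n)$ layers, where each layer consists of paths of degree-$2$ nodes and of leaves, (ii) propagating \emph{compatibility information} of the \lcl from the lowest layers up to the topmost layer, and (iii) fixing labels at the top and propagating feasible choices back down. Our first observation, which resolves the congestion issue, is that the compatibility information that needs to be propagated at each layer can be represented with $O(1)$ bits: at the interface of a processed path or subtree, all that matters is the set of output-label pairs (resp.\ singletons) that can be completed inside the already-processed part, and since $|\Sigma_{\out}|=O(1)$ there are only $O(1)$ such sets. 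This is exactly the \emph{compatibility tree} machinery already developed for the high regime in \Cref{sec:high} (\Cref{def:comptree,def:merge}); indeed, we can reuse the reduction of \Cref{sec:trafothere} to turn $\Pi$ on $G$ into the problem of finding a correct solution for a compatibility tree, and we can reuse the leaves-to-root/root-to-leaves framework of Phases I and II almost verbatim.

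First I would handle the forest-to-tree reduction exactly as in \Cref{sec:proofhigh}: solve each connected component $C_i$ independently, which is legitimate because the target algorithm is component-stable and $\Pi$ admits a solution on every forest (here, on every tree of a given size, since $\Pi$ is solvable in $O(\log z)$ rounds on $z$-node instances, it is in particular solvable). So fix a tree $C_i$ with $n_i$ nodes. The key point that distinguishes the mid regime from the high regime is the \emph{depth} of the rake-and-compress decomposition: here it is only $O(\log n_i)$ layers, and — crucially — because the \local complexity is $O(\log n_i)$ rather than $n_i^{1/k}$, we do not have to worry about $\omega(\log n_i)$-depth trees. The exponential speedup then comes from graph exponentiation applied to the pointer-merging process: a path of degree-$2$ nodes of length $L$ is contracted, via doubling, in $O(\log L)$ \mpc rounds, and since there are only $O(\log n_i)$ layers and each path has length $O(n_i)$, the whole leaves-to-root phase runs in $O(\log\log n_i)$ rounds rather than $O(\log n_i)$. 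Symmetrically, the root-to-leaves phase fixes labels in $O(\log\log n_i)$ rounds. The $O(\logstar N)$ additive term comes from the $O(\logstar N)$-round steps needed to coordinate the layers — e.g.\ to compute an MIS on the degree-$2$ paths in order to identify which layer a node belongs to, analogous to the preprocessing of \Cref{sec:trafothere}.

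The second and more delicate issue is memory. A naive application of graph exponentiation to every node of every path simultaneously introduces up to $O(\log n_i)$ new pointers per node, giving a global memory blow-up of $\Theta(n\log n)$, which violates the required $O(m+n)$ bound. I would resolve this exactly as in the high regime: interleave the doubling (``pointer-forwarding'') steps with \emph{memory-freeing steps} that remove half (or a constant fraction) of the nodes on each path before the next doubling round — concretely, compute an MIS on each path, let only MIS nodes forward pointers, then delete them, so that at all times the number of \emph{new} pointers introduced is only a constant fraction of the current path length and summing a geometric series keeps the total pointer count $O(m+n)$. Since each path is of length $O(n_i)$ and loses a constant fraction of its nodes each round, the path is fully contracted in $O(\log\log n_i)$ rounds, each costing $O(\logstar N)$ for the MIS, which is why the runtime is $O(\log\log\max_i n_i)+O(\logstar N)$ (the MIS cost is paid once at the start, or more precisely it is absorbed into a single $O(\logstar N)$ term since new paths across layers can be handled in a pipelined fashion). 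I would then invoke the correctness arguments of \Cref{obs:equiv} and the analogues of \Cref{lem:gi,lem:wellandcorrect}: the compatibility-tree reduction is solution-preserving, and Phases I and II produce a correct labeling whenever one exists, which it does since $\Pi$ is solvable. Component-stability follows because, as in \Cref{ssec:components-stability}, cross-component communication is used only for global synchrony and never influences any node's output.

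The main obstacle I expect is the careful bookkeeping of the interleaving between layer-by-layer processing (the rake-and-compress structure has $O(\log n_i)$ layers) and the within-layer doubling-plus-MIS-freeing (each costing $O(\log\log n_i)$ rounds): done naively this multiplies to $O(\log n_i\log\log n_i)$, so one must pipeline the layers — starting the processing of layer $j+1$ before layer $j$ is fully contracted, using a potential-function argument in the spirit of the one used for Phase I in \Cref{sec:appphaseone} to show the whole cascade still finishes in $O(\log\log n_i)$ rounds — while simultaneously certifying that the total number of live pointers never exceeds $O(m+n)$. A secondary obstacle is confirming that the $O(\log z)$-round \local algorithm really does reduce to this compatibility-propagation template (this is where \cite{CP19} and the fact that $\Pi$ has complexity $O(\log z)$ on $z$-node trees enter); once that structural fact is in hand, the \mpc implementation is a matter of combining the tools already built for the high and low regimes.
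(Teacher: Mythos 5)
There is a genuine gap, and it sits exactly where the exponential speedup has to come from. You propose to reuse the high-regime machinery (the compatibility tree of \Cref{sec:high}, the preprocessing of \Cref{sec:trafothere}, and Phases I/II) ``almost verbatim'' and claim the leaves-to-root phase now runs in $O(\log\log n_i)$ rounds because the decomposition has $O(\log n_i)$ layers. That inference does not hold: contracting a degree-$2$ path of length $L$ by doubling takes $\Theta(\log L)$ \mpc rounds, so with paths of length up to $\Theta(n_i)$ you get $\Theta(\log n_i)$, not $O(\log\log n_i)$; and the Phase I potential-function analysis you want to pipeline with yields $\Theta(\log n)$ iterations by design --- that is precisely the high-regime bound, which is conditionally optimal for general solvable \lcls under the $1$ vs.\ $2$ cycle conjecture. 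Pipelining layers cannot push it below $O(\log n)$ without using some property that distinguishes the mid regime, and your runtime analysis never actually uses the hypothesis that $\Pi$ is solvable in $O(\log z)$ \local rounds; you relegate it to a ``secondary obstacle,'' but it is the load-bearing ingredient.

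The paper's route makes this hypothesis do the work via Chang--Pettie's structural machinery: the rake-and-compress decomposition is postprocessed with $(l,2l)$-independent sets so that every compress path has \emph{constant} length, and the $O(\log z)$ assumption guarantees (through the constant-size equivalence classes of \Cref{def:class} and the $H^+$ replacement/pumping lemma, \Cref{lemma: replace}) that class computation and label extension need only constant-radius information per layer. The exponential speedup then comes from batching: batch $i$ consists of $2^i c$ consecutive layers and is processed in $O(1)$ rounds on the communication graph $G^{2^i}$, giving $O(\log\log n)$ batches overall, with the $O(\logstar N)$ additive term paying for the independent-set/coloring step. Crucially, the linear-memory argument is not your MIS-delete-and-double trick from the high-regime preprocessing but a counting argument tied to the peeling rate: after enough peeling, batch $i$ contains at most $n\Delta^{-(2^i-1)}$ nodes, so its $\Delta^{2^i}$-size neighborhoods fit in $O(n)$ words (plus a final trick of setting aside the first $\Theta(\log\log n)$ layers to drop the global memory from $O(n\log n)$ to $O(n)$). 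Without the constant-length-path and constant-size-class structure, and without the layer batching matched to exponentiation levels, your plan collapses back to the $O(\log n)$ high-regime algorithm.
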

As the proof of \Cref{cor:SpeedUpDet} requires the same techniques as the proof of \Cref{thm:mid}, we also present the proof of \Cref{cor:SpeedUpDet} at the end of this section.

In previous work, Chang and Pettie showed that in the \local model, there are no \lcl problems on trees whose complexity lies between $\omega(\log n)$ and $n^{o(1)}$~\cite{CP19}. In other words, they showed a complexity gap, giving rise to the \local complexity class $\Theta(\log n)$. They obtain their result by showing that any problem in this range admits a canonical way to solve it using a rake-and-compress decomposition (described in \Cref{sec:rc}) and a careful method that labels the tree, layer by layer (of the decomposition).

In order to prove \Cref{thm:mid}, in \Cref{sec:rc}, we show that both their rake-and-compress decomposition (see \Cref{sec:rc}) and their labeling method (see \Cref{sec:midalgo}) can be sped up to $O(\log \log n)$ in \mpc, while using strict memory parameters. 

\subsection{Rake-and-Compress Decomposition} \label{sec:rc}

In this section, we give a $O(\log \log n)$-time low-space \mpc algorithm for computing a rake-and-compress decomposition. In particular, we prove the following.

\begin{lemma}[Rake-and-Compress] \label{lemma:partitioning}
	Consider a constant-degree forest consisting of (disjoint) connected components $C_1,\ldots,C_k$, each $C_i$ of size $n_i$. There is a deterministic low-space \mpc algorithm that computes a rake-and-compress decomposition in $O(\log \log \max_i\{ n_i \})+O(\logstar N)$ time using $O(m+n)$ words of global memory where $N$ is the size of the ID space. The algorithm is component-stable.
\end{lemma}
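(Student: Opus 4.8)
The plan is to take the standard deterministic $O(\log n)$-round \local algorithm that computes a rake-and-compress decomposition — iteratively rake all current leaves and compress (an independent set among) the current degree-$2$ nodes — and compress its round complexity to $O(\log\log n)$ in the low-space \mpc model via graph exponentiation, while interleaving the exponentiation steps with node removals so that the global memory never exceeds $O(m+n)$ words. It suffices to treat a single tree; for a forest we run the algorithm on each component independently, and since the decomposition of a component depends only on the topology and IDs inside that component (plus a locally computed coloring) this will also yield component-stability in the sense of \Cref{def:componentStabilityRev}, with the runtime governed by the size of the largest component.

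\emph{Preprocessing (the $O(\log^* N)$ term).} First I would compute, for a suitable constant $k$, a distance-$k$ coloring of the input forest with $O(1)$ colors; this can be obtained by directly simulating the $O(\log^* N)$-round \local algorithm of \Cref{lemma: local-k-dist}, which costs $O(\log^* N)$ \mpc rounds and, since $k$ is constant, respects linear memory (and could in fact be sped up via \Cref{lem:distanceColoringMPC}). Given such a coloring, a single iteration of rake-and-compress becomes an $O(1)$-round \local procedure: raking all current leaves is trivially $O(1)$-local, and an MIS on the current degree-$2$ nodes — which induce vertex-disjoint paths — can be computed in $O(1)$ rounds by scanning through the $O(1)$ color classes. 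Hence the task reduces to simulating $T = O(\log n)$ consecutive $O(1)$-local iterations of this process and reading off, for each node, the iteration in which it is removed and whether it is removed by a rake or a compress step; this pair is exactly the node's layer and type in the decomposition, and by \Cref{lem:rcanalysis} (applied repeatedly) the process terminates after $T = O(\log n)$ iterations with at most $(2/3)^r n$ nodes surviving after $r$ iterations.

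\emph{Speedup via interleaved exponentiation (the $O(\log\log n)$ term).} I would now simulate these $T$ iterations in $O(\log T) = O(\log\log n)$ \mpc rounds. In \mpc round $j$ each still-active node performs one graph-exponentiation doubling step — so that it knows a neighborhood of radius $2^j$ in the current virtual forest — and then, using that neighborhood, locally simulates a further batch of $\Theta(2^j)$ rake-and-compress iterations; the outputs of all nodes removed within this batch are permanently fixed, these nodes stop participating, and the virtual forest is updated to the state reached after the batch (raked subtrees are summarized by a constant amount of information stored at their surviving neighbor, and each surviving node retains pointers to its nearest surviving neighbors so that the remaining degree-$2$ chains stay traversable). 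The memory is kept linear by \Cref{lem:rcanalysis}: after \mpc round $j$ we have simulated $\Theta(2^j)$ iterations, so only $(2/3)^{\Theta(2^j)} n$ nodes are still active, and — because the active part of a long compress-chain carries only a one-dimensional neighborhood of $O(2^j)$ nodes while a high-degree ``skeleton'' node never needs to store more than the constant-size summaries of the subtrees hanging off it and the identities of its surviving chain-neighbors — the total memory in use at any time is at most $(2/3)^{\Theta(2^j)} \cdot 2^{O(j)} \cdot n = O(n)$, the doubly-exponential decay of the number of active nodes dominating the single-exponential growth of the per-node view. After $O(\log\log n)$ rounds all $T$ iterations have been simulated and every node has been assigned its layer and type; dropping each component from the simulation once it has been reduced to the empty graph gives the claimed $O(\log\log \max_i n_i) + O(\log^* N)$ bound.

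\emph{Main obstacle.} The technical heart of the proof is the memory bookkeeping of the second step: naively exponentiating every node's neighborhood to the radius $\Theta(\log n)$ needed to finish the simulation would store $\Delta^{\Theta(\log n)} = \poly(n)$ nodes per node and destroy the linear-memory guarantee, so the finalize-and-summarize step must be coupled to the exponentiation carefully enough that (i) high-degree skeleton nodes never materialize a full ball, only constant-size per-edge summaries, and (ii) the bookkeeping needed to keep degree-$2$ chains simulable after parts of them have been removed — in particular a consistent way to recompute the next MIS along a chain whose surviving nodes may be far apart in the original tree — is itself implementable within the $O(\log\log n)$-round and $O(m+n)$-memory budget. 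Making both precise reuses exactly the kind of constant-size condensation of ``compatibility'' information carried out for the high regime in \Cref{sec:trafothere}, and is where most of the work lies.
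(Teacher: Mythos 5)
Your proposal diverges from the paper's proof in a way that opens real gaps. The paper simulates the Chang--Pettie algorithm, whose crucial feature is that the layer-defining peeling (Step~1) involves \emph{no symmetry breaking at all}: a compress step simply tags every node lying on a degree-$2$ path of length at least $l$, so each iteration is a purely local constant-radius test, and batches of $2^i$ iterations can be simulated in $O(1)$ rounds once nodes see radius $2^i$. The single $O(\log^* N)$ ingredient, computing $(l,2l)$-independent sets, is deferred to one postprocessing step applied to paths of the \emph{original} graph, which is why the bound is additive. You instead put an MIS inside every iteration ("compress an independent set among the current degree-$2$ nodes"). This causes two problems. First, it computes a different decomposition: in your process a compressed node is isolated within its layer while having two neighbors in higher layers, so the properties of \Cref{obs:decompProperties} that the labeling method of \Cref{sec:midalgo} relies on (every layer-$i$ degree-$2$ node lies on a layer-$i$ path of length between $l$ and $2l$, and isolated layer nodes have at most one higher-layer neighbor) are violated; \Cref{lemma: replace} has nothing to replace. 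Second, after batching via exponentiation the surviving degree-$2$ chains are virtual paths whose consecutive nodes are far apart in $G$, so the precomputed distance-$k$ coloring no longer breaks symmetry along them; recoloring per batch gives $O(\log\log n\cdot\log^* N)$, not the claimed additive bound. You name exactly this as the "main obstacle" but do not resolve it, and the resolution in the paper is precisely to use a peeling rule that needs no per-iteration MIS.

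The memory accounting is also not sound as written. After $j$ doubling steps a surviving node must know its radius-$2^j$ ball in the surviving forest, which in a constant-degree tree can contain $\Delta^{2^j}$ nodes (e.g., when the survivors form a balanced binary tree), not $2^{O(j)}$; the "one-dimensional neighborhood plus constant-size skeleton summaries" claim is unsubstantiated and would itself require new condensation machinery (the high-regime compatibility condensation of \Cref{sec:trafothere} is built for solving an \lcl, not for simulating the peeling). The paper instead balances view size against survivor count: it fixes a constant $c$ with $(1-1/(4l))^c<1/\Delta$ and performs $c$ executions of \textsf{Peel($2^i$)} before each exponentiation step, so that at most $n/\Delta^{2^i}$ nodes remain and each may store a $\Delta^{2^i}$-sized view, giving $O(n)$ words in every phase. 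Your cancellation $(2/3)^{\Theta(2^j)}\cdot 2^{O(j)}\cdot n=O(n)$ neither uses such a $\Delta$-dependent batch length nor bounds the true view size, so the linear-memory claim does not follow.
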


Informally, the rake-and-compress decomposition of a graph is a disjoint set of nodes, such that the sets (or in other words \emph{layers}) are enumerated, and every node $v$ has at most two neighbors in the same or higher layers. The precise properties of the decomposition are given in \Cref{obs:decompProperties}. 

In \Cref{subsec:decomplocal}, we summarize the \local rake-and-compress algorithm of \cite{CP19}. Then, in \Cref{subsec:decompmpc}, we describe our low-space \mpc algorithm and prove \Cref{lemma:partitioning}.

\subsubsection{Decomposition in \local} \label{subsec:decomplocal}

The algorithm consists of two steps: a decomposition step, where nodes are partitioned into layers and a postprocessing step, where we compute an $(\alpha, \beta)$-independent set (\Cref{def:alpha-beta}), in $O(\log^*N)$ time~\cite{Linial92} and adjust the layers slightly. Recall that $N$ denotes the size of the ID space.

\begin{definition}[$(\alpha, \beta)$-independent set] \label{def:alpha-beta}
	Let $P$ be a path. A set $I \subset V(P)$ is called an $(\alpha, \beta)$-independent set if the following conditions are met: (i) $I$ is an independent set, and $I$ does not contain either endpoint of $P$, and (ii) each connected component induced by $V(P)-I$ has at least $\alpha$ vertices and at most $\beta$ vertices, unless $|V(P)| < \alpha$, in which case $I=\emptyset$.
\end{definition}

\begin{enumerate}
	\item Suppose $l$ is some constant depending on the \lcl problem. The algorithm begins with $U = V(G)$ and $i=1$, repeats Steps (a)--(c) until $U=\emptyset$, then proceeds to Step 2.
	
	\begin{enumerate}
		\item For each $v \in U$:
		
		\begin{enumerate}
			\item \textsf{Compress}. If $v$ belongs to a path $P$ such that $|V(P)| \geq l$ and $\deg_{U}(u) = 2$ for each $u \in V(P)$, then tag $v$ with $i_C$.
			\item \textsf{Rake}. If $\deg_U(v)=0$, then tag $v$ with $i_R$. If $\deg_U(v)=1$ and the unique neighbor $u$ of $v$ in $U$ satisfies either (i) $\deg_U > 1$ or (ii) $\deg_U = 1$ and $\text{ID}(v) > \text{ID}(u)$, then tag $v$ with $i_R$.
		\end{enumerate}
		\item Remove from $U$ all vertices tagged $i_C$ or $i_R$ and set $i \larr i+1$.
	\end{enumerate}
	
	\item Initialize $V_i$ as the set of all vertices tagged $i_C$ or $i_R$. 
	The graph induced by $V_i$ consists of unbounded length paths, but we prefer constant length paths. For each edge $\{u,v\}$ such that $v$ is tagged $i_R$ and $u$ is tagged $i_C$, promote $v$ from $V_i$ to $V_{i+1}$. For each path $P$ that is a connected component induced by vertices tagged $i_C$, compute an $(l,2l)$-independent set $I_P$ of $P$, and then promote every vertex in $I_P$ from $V_i$ to $V_{i+1}$.
\end{enumerate}

\begin{observation}\label{obs:decompProperties}
	The following properties of the rake-and-compress decomposition are either evident or proven by Chang and Pettie~\cite[Section 3.9]{CP19}.
	
	\begin{itemize}
		\item Define $G_i$ as the graph induced by nodes in layer $i$ or higher: $\bigcup_{j=i}^L V_j$. For each $v \in V_i$, $\deg_{G_i}(v) \leq 2$.
		\item Define $\mathcal{P}_i$ as the set of connected components (paths) induced by the nodes in $V_i$ with more than one node. For each $P \in \mathcal{P}_i$, $l \leq |V(P)| \leq 2l$ and $\deg_{G_i}(v)=2$ for each node $v \in V(P)$.
		\item The graph $G_L$ contains only isolated nodes, i.e., $\mathcal{P}_L=\emptyset$.
		\item At least a constant $\Omega(1/l)$ fraction of vertices in $U$ are eliminated in each iteration, resulting in a runtime of $O(\log n)$ and decomposition size $L=O(\log n)$.
	\end{itemize}
	
	As a consequence, each vertex $v \in V_i$ falls into exactly one of two cases: (i) $v$ has $\deg_{G_i}(v) \leq 1$ and has no neighbor in $V_i$, or (ii) $v$ has $\deg_{G_i}(v) = 2$ and is in some path $P \in \mathcal{P}_i$.
	
\end{observation}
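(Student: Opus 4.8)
The plan is to read off all four bulleted properties directly from the explicit two-step construction above, invoking \cite[Section~3.9]{CP19} only for the one genuinely quantitative fact---that each iteration of Step~1 deletes a constant fraction of the surviving vertex set---and to obtain everything else from a short case analysis of how the \emph{promotions} in Step~2 interact with the rake/compress tags of Step~1. The bookkeeping device I would set up first is the following. Let $U^{(i)}$ denote the set $U$ at the start of iteration $i$ of Step~1, so that $U^{(i)}$ is exactly the set of vertices whose \emph{tag-iteration} (the iteration at which they get tagged $j_C$ or $j_R$) is at least $i$. Since a promotion only ever moves a vertex from its current layer to the layer one above, a vertex of tag-iteration $j$ ends up in layer $j$ or $j+1$; hence $G_i = G[U^{(i)} \cup R^{(i-1)}]$, where $R^{(i-1)}$ is the (possibly empty) set of vertices of tag-iteration $i-1$ that were promoted. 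Now a vertex $v \in U^{(i)}$ has \emph{no} neighbour in $R^{(i-1)}$: a vertex $w \in R^{(i-1)}$ is tagged at iteration $i-1$, so every neighbour of $w$ that survives into iteration $i-1$ is tagged in that same iteration and leaves $U$, hence lies outside $U^{(i)}$; if $v \sim w$ then $v$ would be such a neighbour, contradicting $v \in U^{(i)}$. Consequently $\deg_{G_i}(v) = \deg_{U^{(i)}}(v)$ for every $v \in V_i$ of tag-iteration $\ge i$, and the rake rule ($\deg_U \le 1$) and the compress rule ($\deg_U = 2$) immediately give $\deg_{G_i}(v) \le 2$ for all such $v$.

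It then remains to handle the vertices of $V_i$ of tag-iteration $i-1$ that were promoted. The crucial point is that a raked vertex $v$ gets promoted only because its unique surviving neighbour $u$ is compressed, in which case $u$ is necessarily an \emph{endpoint} of its block of compressed vertices (the pendant $v$ has degree $1$, so it cannot lie on the degree-$2$ path that compresses $u$); by the definition of an $(\alpha,\beta)$-independent set (\Cref{def:alpha-beta}), such an endpoint is never placed in the independent set, so $u$ is not promoted and stays in layer $i-1$, i.e.\ $u \notin G_i$. The analogous statement holds for a compressed vertex promoted as an independent-set member: its two path-neighbours are then not in the independent set, so they too remain in layer $i-1$. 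In both cases the promoted vertex has no neighbour in $G_i$ at all, which yields $\deg_{G_i}(v) \le 1$ and shows it is isolated in $G[V_i]$; the same reasoning (a raked vertex adjacent to a compressed one is promoted away) shows that every raked, non-promoted vertex of $V_i$ is also isolated in $G[V_i]$. Hence the multi-node components of $G[V_i]$ are exactly the sub-paths obtained from the blocks of compressed vertices by deleting the promoted independent-set members; by \Cref{def:alpha-beta} each such sub-path has between $l$ and $2l$ vertices, and every vertex on it has both of its $G_i$-neighbours either on the sub-path or in layer $i+1$ (a deleted independent-set vertex, a raked pendant, or a surviving vertex of degree $\ge 3$), so $\deg_{G_i} = 2$ along the whole path. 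This is the content of the first two bullets, and it also gives the ``consequence'' dichotomy: a vertex of $V_i$ either has a neighbour in $V_i$---then it lies in some $P \in \mathcal{P}_i$ and has $G_i$-degree $2$---or it is isolated in $G[V_i]$ and, being raked or promoted, has $G_i$-degree at most $1$.

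For the third bullet, $G_L = G[V_L]$ since $L$ is the top layer; when $U$ is emptied at the final iteration $\tau$ of Step~1, a forest-structure argument forces every component of $G[U^{(\tau)}]$ to be an isolated vertex or a path of length at least $l+2$ whose internal vertices are all compressed and whose two endpoints are raked. Thus $V_L$ is assembled only from isolated-at-$\tau$ vertices, raked path-endpoints, independent-set members of the compressed blocks, and vertices promoted out of layer $\tau-1$, and no two of these are adjacent (endpoints are excluded from the independent set by \Cref{def:alpha-beta}, the independent set is independent, vertices of distinct blocks are non-adjacent, and a promoted vertex has no neighbour in $G_L$ by the analysis above), so $G_L$ is edgeless. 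For the fourth bullet I would simply cite the forest counting argument of \cite[Section~3.9]{CP19}: in any iteration, if only a vanishing fraction of $U$ lies on maximal degree-$2$ paths of length $\ge l$, then $U$ has $\Omega(|U|/l)$ vertices of degree $\ne 2$, and a forest has at least as many vertices of degree $\le 1$ as of degree $\ge 3$, almost all of which are raked; so an $\Omega(1/l)$-fraction of $U$ is eliminated either way. As $l = O(1)$, this gives $|U^{(i)}| \le (1-\Omega(1/l))^{i-1} n$, which drops below $1$ after $O(l \log n) = O(\log n)$ iterations, so Step~1 runs for $O(\log n)$ iterations and $L = O(\log n)$. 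I expect the only real obstacle to be the interplay of the promotions with the layer structure: once the promotions are stripped away, all the degree bounds are immediate from the rake and compress rules, and the whole reason that routing a raked pendant into the layer above its compressed neighbour causes no damage is that, by \Cref{def:alpha-beta}, the independent set never contains a path endpoint, so no same-layer edge and no extra $G_i$-neighbour is ever created.
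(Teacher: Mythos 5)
Your proposal is correct, but it is doing substantially more than the paper does: the paper offers no argument at all for this observation beyond declaring the properties ``evident or proven by Chang and Pettie''~\cite[Section 3.9]{CP19}, whereas you reconstruct everything from the two-step construction and invoke \cite{CP19} only for the $\Omega(1/l)$-per-iteration counting bound. Your reconstruction correctly isolates the one genuinely non-obvious point, namely how Step~2's promotions interact with the layer structure: the identity $G_i = G[U^{(i)} \cup R^{(i-1)}]$, the fact that a promoted vertex has no neighbour in $G_i$ (because a rake-promoted vertex's unique surviving neighbour is a block \emph{endpoint}, excluded from the $(\alpha,\beta)$-independent set by \Cref{def:alpha-beta}, and a compress-promoted vertex is internal to its block with non-IS neighbours), and hence $\deg_{G_i}(v)=\deg_{U^{(i)}}(v)$ for non-promoted vertices, from which all four bullets and the dichotomy follow. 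Two small points of looseness, neither fatal: the blanket claim in your first paragraph that every surviving neighbour of a vertex tagged at iteration $i-1$ is tagged in that same iteration is false in general (a compress-tagged block endpoint can have an untagged degree-$\ge 3$ neighbour); it is true only for \emph{promoted} vertices, which is exactly what your second-paragraph case analysis establishes, so the argument should be read in that order. Similarly, in the third bullet you list candidate members of $V_L$ as if layer-$\tau$ and layer-$(\tau+1)$ vertices could coexist in the top layer; whether $L$ equals the final tagging iteration or one more, the pairwise non-adjacency facts you cite cover both cases, and it is also worth saying explicitly that every block of compress-tagged vertices has at least $l$ vertices (immediate from the compress rule), which is what rules out the $I_P=\emptyset$ short-path clause of \Cref{def:alpha-beta} and justifies the $l \le |V(P)| \le 2l$ bound.
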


\subsubsection{Decomposition in \mpc} \label{subsec:decompmpc}

Let us first define a helper function.

\begin{itemize}
	\item \textsf{Peel($r$)}: compute the lowest $r$ layers of the decomposition by simulating Step 1 of the \local algorithm $r$ times.
\end{itemize}

Recall that by Observation~\ref{obs:decompProperties}, at least a constant $\Omega(1/l)$ fraction of nodes are eliminated in each simulation. When taking a closer look into~\cite{CP19}, the exact fraction is $1/2(l+1) \geq 1/4l$. Hence, we can state that at most a constant $1-1/4l$ fraction of nodes is left in the graph after each step of the \local algorithm. Set constant $c \larr \text{argmin}_c \{c \mid (1-1/4l)^c < 1/\Delta \}$, and observe that since $\Delta$ and $l$ are constants, $c$ is also constant. Our \mpc algorithm is the following.

\begin{enumerate}
	\item For $i = 0,\dots,\log \log n^\delta$ phases: perform $c$ steps of \textsf{Peel($2^i$)}, and then perform one graph exponentiation step.
	\item Perform \textsf{Peel($\delta \log n$)} until the graph is empty.
	\item Simulate Step 2 of the \local algorithm. 
\end{enumerate}

\begin{proof}[Proof of \Cref{lemma:partitioning}]

	Correctness follows from~\cite[Section 3.9]{CP19}, as we only simulate their algorithm.
	Let us bound the time complexity. In Step 1, during any phase $i$, each node sees its $2^i$-radius neighborhood due to graph exponentiation. This vision enables each node to perform $c$ steps of \textsf{Peel($2^i$)}, which altogether takes constant time. After $\log \log n^\delta$ phases, all nodes see their $\delta \log n$-radius neighborhoods and Step 1 terminates. In Step 2, nodes perform \textsf{Peel($\delta \log n$)} until the graph is empty, which takes $O(1/\delta)=O(1)$ time, since there are $O(\log n)$ layers in the decomposition in total by \Cref{obs:decompProperties}. Since the vision of each node is $\delta \log n$ and Step 2 of the \local algorithm takes $O(\log^*b)$ time, we can simulate it in $O(1)$ time. We conclude that the algorithm runs in $O(\log \log n) + O(\log^*b)$ time.
	
	In Step 1, during any phase $i$, performing $c$ steps of \textsf{Peel($2^i$)} results in simulating Step 1 of the \local algorithm $2^{i}c$ times. Hence, after applying \textsf{Peel($2^i$)}, in any phase $i$, there are at most $n \cdot (1-1/4l)^{2^i c} < n / \Delta^{2^i}$ nodes left in the graph. Since the graph exponentiation step of phase $i$ requires at most $\Delta^{2^{i}}$ memory per node, we conclude that each phase (and hence the whole algorithm), requires at most $O(m+n)$ words of global memory. After $\log \log n^\delta$ phases, all nodes see their $\delta \log n$-radius neighborhoods. Since $\Delta$ is constant, the $\delta \log n$-radius neighborhood of any node contains at most $O(n^\delta)$ nodes and the local memory is always respected.
	
	Observe that all of our arguments are local, i.e., nodes in separate components do not communicate. Hence, the algorithm is component-stable when the input graph is a forest, in which case the runtime becomes $O(\log \log \max_i\{ n_i \})+O(\logstar N)$.
\end{proof}

\subsection{The Labeling Method} \label{sec:midalgo}

Suppose that we are given the rake-and-compress decomposition described in \Cref{sec:rc}. Let us adopt the same node-labeled \lcl problem definition as \cite{CP19}. Note that this definition includes port-numberings and is hence equivalent to our previous definition of half-edge labeled \lcls (\Cref{def:lcl}).

\begin{definition}[\lcl, Chang and Pettie~\cite{CP19}] \label{def:newlcl}
	Fix a class $\fG$ of possible input graphs and let $\Delta$ be the maximum degree in any such graph. An \lcl problem $\Pi$ for $\fG$ has a radius $r$, constant size input and output alphabets $\sinn$, $\sout$, and a set $\fC$ of acceptable configurations. Note that $\sinn$ and $\sout$ can include $\bot$. Each $C \in \fC$ is a graph centered at a specific vertex, in which each vertex has a degree, a port numbering, and two labels from $\sinn$ and $\sout$. Given the input graph $G(V,E,\phinn)$ where $\phinn: V(G)  \xrightarrow{} \sinn$, a feasible labeling output is any function $\phout :V(G)  \xrightarrow{} \sout$ such that for each $v \in V(G)$, the subgraph induced by $N^{r}(v)$ (denoting the $r$-neighborhood of $v$ together with information stored there: vertex degrees, port numberings, input labels, and output labels) is isomorphic to a member of $\fC$. A complete labeling output is such that for each $v$, $\phout(v) \neq \bot$. An \lcl can be described explicitly by enumerating a finite number of acceptable configurations.
\end{definition}

Let us revisit some other definitions and results of \cite{CP19} before introducing our algorithm.

\begin{definition}[Class, Chang and Pettie~\cite{CP19}] \label{def:class}
	Consider a rooted tree $T$ with a root $v$ and an \lcl problem $\Pi$ as in \Cref{def:newlcl}. The (equivalence) class of $T$, denoted $\class(T)$ is the set of all possible node labelings of the $r$-hop neighborhood of $v$ such that the labeling can be extended to a complete feasible labeling of $T$ (with respect to the \lcl problem $\Pi$). Note that for constant-degree trees, the number of equivalence classes is constant.
\end{definition}

\begin{lemma}\label{lemma: replace}
	Consider a graph $G$, an \lcl problem $\Pi$ as in \Cref{def:newlcl}, and the rake-and-compress decomposition described in \Cref{sec:rc}. Recall that the decomposition is parameterized by a constant $\ell$ that depends on the input $\lcl$ and that $G_i$ denotes the graph induced by the nodes in layer $i$ or higher.
	
	Let $P = (v_1, v_2, \ldots, v_x)$, for some $x \in [\ell, 2\ell]$ be a path induced by nodes in $G_i$ with degree $2$ and let $s = v_1$ and $t = v_x$. Moreover, let graph $T_1 \cup T_2 \cup \ldots \cup T_x$, denoted by $H = (T_1, T_2, \ldots, T_x)$, correspond to the sequence of disjoint trees rooted from nodes $(v_1,v_2, \dots, v_x)$. Then, there exists a tree $H^+$ with two dedicated nodes $s^+$ and $t^+$ such that the following holds.
	
	Let $G_s$ be the connected component of $G - H$ that is adjacent to node $s$ in $G$. Notice that $s \not\in G_s$. Then, if graph $G_s$ is not empty, graph $G_{s^+}$ is created by connecting a copy of $G_s$ to a copy of $H^+$ via a single edge $\{u,s^+\}$ such that $u \in G_{s}$ and $s^+ \in H^+$. The graph $G_t$ is constructed identically but using a disjoint copy of $H^+$.
	
	Then, if graphs $G_{s^+}$ and $G_{t^+}$ admit feasible node labelings $\phout^{s^+}$ and $\phout^{t^+}$, then the input graph $G$ admits a feasible node labeling $\phout^*$ such that for any $v \in G - H$, we have $\phi^*(v) = \phout^{s^+} (v)$ if $v \in G_s$ and $\phout^*(v) = \phout^{t^+} (v)$ if $v \in G_t$. Furthermore, the graph $H^+$ can be computed with the knowledge of $\class(T_i)$ for each $i$.
\end{lemma}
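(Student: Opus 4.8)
The plan is to import the equivalence-class machinery for rooted trees of Chang and Pettie~\cite{CP19} and to take $H^+$ to be the path $P$ with each hanging subtree $T_i$ replaced by a constant-size representative of its class. First I would recall the facts we need about classes (in the sense of \Cref{def:class}): for the fixed \lcl $\Pi$ there are only finitely many of them; the class of a rooted tree is determined by the classes of the rooted subtrees hanging at its root; by a pumping argument over this finite set, every class attained by some rooted tree is already attained by one of size $O(1)$, which may moreover be chosen to agree with a given tree of that class on any prescribed constant number of top levels; and the subtree-replacement property holds, namely that replacing a hanging rooted subtree $T$ by a rooted tree $T'$ with $\class(T')=\class(T)$ (and matching top levels) changes neither the class of any node outside $T$ nor the feasibility of the whole tree, and feasible labellings restrict compatibly to the common part. (In the node-edge-checkable reformulation of \Cref{sec:prelim} a class is simply a subset of $\sout$ and all of this is elementary; in the radius-$r$ picture of \Cref{def:newlcl} one invokes \cite{CP19} directly.)

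Next I would construct $H^+$. For $1\le i\le x$ let $T_i$ be the rooted subtree consisting of $v_i$ together with everything lying below $v_i$ outside $P$; we are given $\class(T_i)$. Replace each $T_i$ by a constant-size rooted tree $W_i$ with $\class(W_i)=\class(T_i)$ and matching top levels, let $H^+$ be the path on $v_1,\dots,v_x$ with $W_i$ hung at $v_i$, and set $s^+:=v_1$ and $t^+:=v_x$. Since $x\le 2\ell$ and each $W_i$ has size $O(1)$, $H^+$ has size $O(1)$, and it is computable from the $\class(T_i)$ alone. In $H^+$ the node $s^+=v_1$ has exactly one neighbour fewer than $v_1$ has in $G$ — it has lost the edge to $G_s$ — so the gluing edge $\{u,s^+\}$ of $G_{s^+}$ restores its degree; likewise for $t^+=v_x$. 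Consequently, after the two gluings, every node of $P$ has its correct $G$-degree except $v_x$ in $G_{s^+}$ (resp.\ $v_1$ in $G_{t^+}$), which is short by one neighbour. If $G_s$ is empty then $G_{s^+}=H^+$ and $v_1$ already has its correct degree, and symmetrically for $G_t$; these cases are only easier.

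For the merging step, suppose $\phi^{s^+}$ and $\phi^{t^+}$ are feasible labellings of $G_{s^+}$ and $G_{t^+}$. Set $\phi^*:=\phi^{s^+}$ on $G_s$ and $\phi^*:=\phi^{t^+}$ on $G_t$; it remains to fill in $H$. The idea is to let $\phi^{s^+}$ govern $G_s$ together with the left half $v_1,\dots,v_{\lceil x/2\rceil}$ of $P$ and the subtrees below it, and $\phi^{t^+}$ the right half together with $G_t$: by the subtree-replacement property applied to $W_1,\dots,W_{\lceil x/2\rceil}$, the restriction of $\phi^{s^+}$ to $G_s$ and this prefix is the restriction of a feasible labelling of the corresponding part of $G$ and extends over the true subtrees $T_1,\dots,T_{\lceil x/2\rceil}$; symmetrically for $\phi^{t^+}$ on the suffix. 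To make the two partial labellings fit together along the seam in the middle of $P$, one chooses the constant $\ell$ large enough as a function of $\Pi$ (as in \cite{CP19}) and builds $H^+$ so that every feasible labelling of $G_{s^+}$, and likewise of $G_{t^+}$, is forced to one fixed assignment on a reserved constant-length inner window of $P$; then $\phi^{s^+}$ and $\phi^{t^+}$ agree on that window, the bounded-radius neighbourhoods straddling the seam are consistent, and one verifies that $\phi^*$ is feasible on all of $G$ with $\phi^*=\phi^{s^+}$ on $G_s$ and $\phi^*=\phi^{t^+}$ on $G_t$, as required.

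The main obstacle is precisely this merging step. Unlike an ordinary one-interface subtree replacement, here $G_s$ and $G_t$ are coupled \emph{through} $H$, so it is not enough that $\phi^{s^+}|_{G_s}$ and $\phi^{t^+}|_{G_t}$ be separately completable — they must be simultaneously completable on the \emph{same} copy of $H$. Arranging, in constant size and using only the $\class(T_i)$ and the shape of $P$, that feasibility of either glued graph already pins down the labelling near the middle of $P$ (so the two given labellings must coincide there) is the delicate ingredient inherited from \cite{CP19}; the rest is routine bookkeeping to pass between the node-labelled radius-$r$ picture of \Cref{def:newlcl} and the bottom-up class formalism.
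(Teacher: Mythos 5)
Your proposal takes a genuinely different route from the paper: the paper does not reprove this statement at all, but simply invokes Chang and Pettie — existence of $H^+$ via their pumping/tree-surgery/pre-commitment lemma, and its computability from the classes via the proof of their Lemma~13 \cite{CP19}. Reconstructing that argument yourself would be legitimate, but your reconstruction breaks at exactly the step you yourself flag as delicate. The mechanism you propose for the seam — building $H^+$ so that \emph{every} feasible labelling of $G_{s^+}$ (and likewise of $G_{t^+}$) is forced to one fixed assignment on a reserved inner window of $P$, so that $\phout^{s^+}$ and $\phout^{t^+}$ must coincide there — cannot work. For any \lcl whose label set admits a nontrivial symmetry (e.g.\ $3$-colouring, one of this paper's canonical mid-regime problems), no gadget pins down a unique labelling anywhere: permuting the colours of a feasible labelling yields another feasible labelling, so both glued graphs always admit feasible labellings that disagree on any prescribed window. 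The actual argument in \cite{CP19} never makes the two labellings agree; instead, pumping along the path (types of path segments viewed as relations between the two ends) together with pre-commitment guarantees that \emph{any} boundary behaviour realizable in $G_{s^+}$ and \emph{any} realizable in $G_{t^+}$ are \emph{simultaneously} completable inside the genuine $H$. Replacing the hanging trees $T_i$ by constant-size class representatives is only the easy half; the path-direction pumping is the missing ingredient, and it is precisely what the paper outsources.

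A second, related gap: your sketch never uses the standing hypothesis under which this lemma is applied (and under which the corresponding lemma of \cite{CP19} is proved), namely that $\Pi$ is solvable in $n^{o(1)}$ (here $O(\log n)$) rounds in the \local model. Without some such hypothesis the conclusion is simply false: take $\Pi$ to be proper $2$-colouring and $G$ a long path, so that all $T_i$ are trivial. Both $2$-colourings of $G_s$ arise as restrictions of feasible labellings of $G_{s^+}$, and likewise for $G_t$, yet only the parity-compatible combination extends across $H$ in $G$; no constant-size $H^+$, however chosen, can encode this coupling between the two decoupled copies. So any self-contained proof must invoke the assumed complexity of $\Pi$ (as \cite{CP19} does to establish the pumping/pre-commitment property), and a construction that uses only "same class, matching top levels" representatives plus a uniqueness-forcing window cannot be repaired into a correct argument.
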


\begin{proof}
	The existence of $H^+$ is rigorously proven by Chang and Pettie~\cite[Lemma 9]{CP19} through \textit{pumping}, \textit{tree surgery} and \textit{pre-commitment} techniques. The same previous work describes the computation of graph $H^+$~\cite[Proof of Lemma 13]{CP19}.
\end{proof}

Note that in \Cref{def:class,lemma: replace} we talk about rooted trees and rooted subtrees hanging from nodes in a path. One could rightfully assume that we either root the tree beforehand or assume a rooted tree as input. However, we do none of the previous. Instead, when talking about a tree $T_v$ rooted at node $v$ in layer $i$, we simply refer to the subgraph induced by nodes in layers $< i$ that are connected to $v$.

\paragraph*{The Algorithm}

First, we divide the nodes into \emph{batches} according to which layer they belong to in the decomposition. Let $c$ be such that for all $i$, $|\bigcup_{j \geq i + c} V_j| \leq |\bigcup_{j \geq i} V_j| \cdot \Delta^{-1}$, i.e., if we remove $c$ layers from the decomposition, the number of nodes drops by a factor of at least $\Delta$. By Observation~\ref{obs:decompProperties}, we know that $c$ is a constant. Let us define nodes in layers $V_1,\dots,V_c$ as batch $B_0$. For $i>0$ and as long as $\Delta^{2^i} \leq n^{\delta}$, let us define nodes in layers $V_{(2^i-1) \cdot c + 1}, \ldots, V_{(2^{i+1}-1) \cdot c}$ as batch $i$; see \Cref{fig:mid}. Note that there are $O(\log \log n)$ batches as  defined previously, and assuming that there is enough layers, batch $i$ always consists of $2^i c$ layers. All nodes that do not belong to any batch, as defined previously, are defined as batch $B_L$. The algorithm starts with running $O(\log \log n)$ phases, each of which is executed in a constant number of \mpc rounds and consists of the following steps.

\begin{figure}
	\centering
	\includegraphics[width=0.7\textwidth]{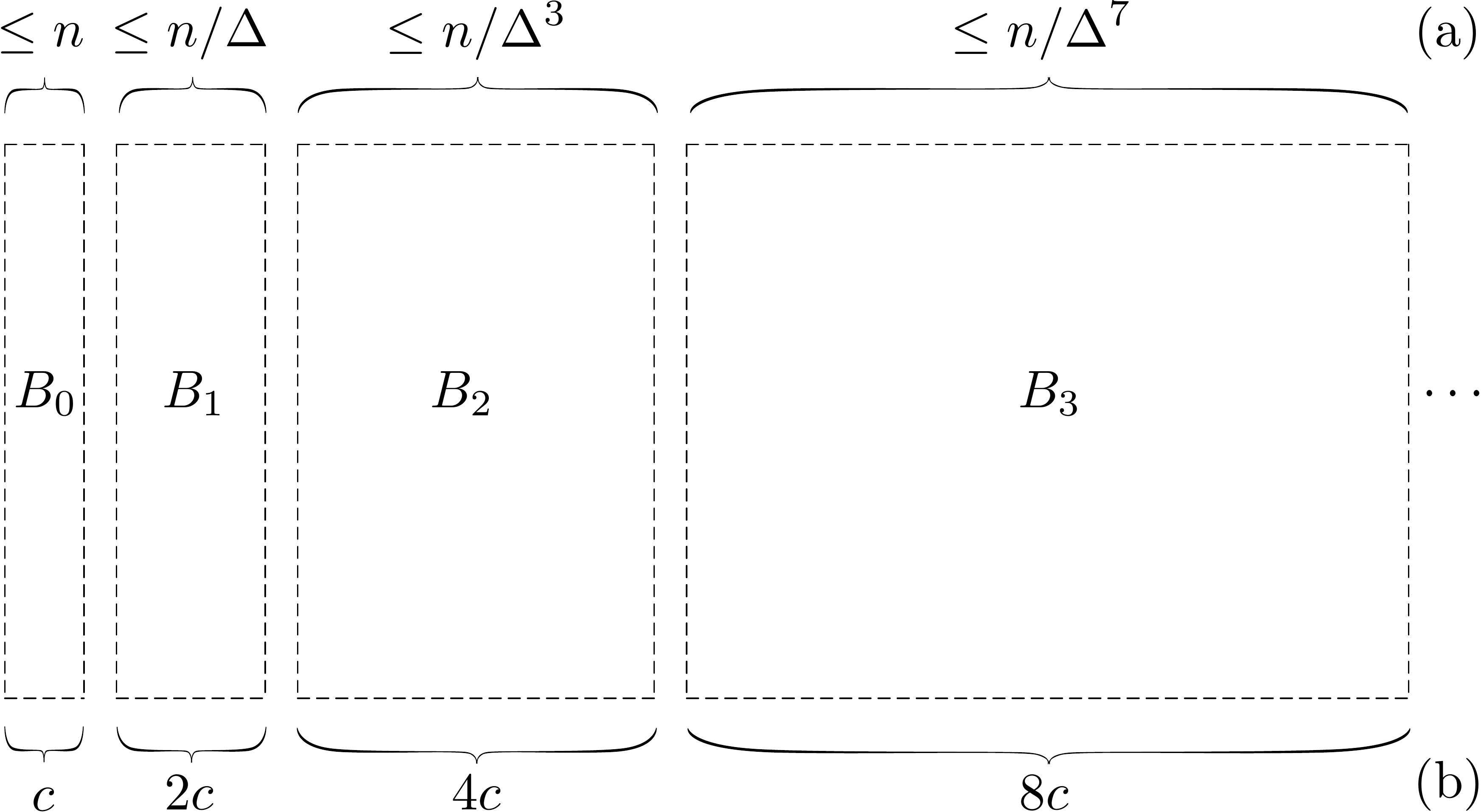}
	\caption{The first four batches, where each batch $i$ contains $2^i \cdot c$ layers and at most $n/\Delta^{2^i-1}$ nodes: (a) number of nodes in a batch, (b) number of layers in a batch.}
	\label{fig:mid}
\end{figure}

\begin{enumerate}
	\item In the start of phase $i$, our communication graph is $G^{2^i}$. We process the $i$:th batch by simulating $2^i c$ iterations of the following local process. If a node $v$ has neighbors only in higher layers, it locally computes $\class(v)$ and informs its (unique) neighbor about the class. If a node $v$ has neighbors in the same layer, then it must be part of a constant length path. Each node that is not an endpoint of such a path, locally computes its class and informs the endpoints of the class. Once an endpoint $s$ learns of all classes on the path, it locally replaces the path with graph $H^+$ as described in \Cref{lemma: replace}. Then $s$ locally computes $\class(s)$ and informs its parent (if any) of the class.
	\item As the second step, for all but the last phase, each node in batch $B_{j > i}$ and in batch $B_L$ performs one step of graph exponentiation. Note that the nodes that have computed their classes, i.e., nodes in  batch $B_{j \leq i}$ do not participate in graph exponentiation. Thus, we obtain $G^{2^{i+1}}$ as our communication graph for batches $>i$ and batch $B_L$. We make the exception for the last phase as we do not want to violate local memory of the nodes in batch $B_L$.
\end{enumerate}

Since we process one batch in each phase, all numbered batches are processed in $O(\log \log n)$ phases. If batch $B_L$ is non-empty, its communication graph is $G^{\Theta(\log n)}$ and the algorithm proceeds by simulating the local process described above until all nodes in $B_L$ has derived their class.

Once all nodes have computed their class, we process the batches in the reversed order. The (local) roots begin by choosing a label that can be extended to a valid labeling on the whole graph. Then, once a node learns the feasible label of its parent (nodes have at most one neighbor in a higher layer), it can choose an extendable label. Similarly, once the parents of both endpoints of a layer-induced path of $H^+$ have decided on their labels, the endpoints choose a valid labeling of the original layer-induced path of $H$.

\begin{proof}[Proof of \Cref{thm:mid}]
	
	We prove the the algorithm of \Cref{sec:midalgo}, since the rake-and-compress algorithm is already proven in \Cref{sec:rc}.  
	
	By the definition of a class and by \Cref{lemma: replace}, the local process in Step 1 is always possible and all nodes can compute their class. The existence of the valid labels for each node is again provided by the definition of a class and by \Cref{lemma: replace}.
	
	Let us bound the time complexity of the algorithm. First, let us analyze the complexity of computing the class of each node. Each phase $i$ indeed takes constant time, since simulating $2^i \cdot c$ iterations of the local process takes $O(1)$ time due to the communication graph in batches $\geq i$ being $G^{2^i}$. As mentioned previously, after $O(\log \log n)$ phases, all batches except $B_L$ have been processed and the communication graph of batch $B_L$ is $G^{\Theta(\log n)}$. Note that by Observation~\ref{obs:decompProperties}, batch $B_L$ contains $O(\log n)$ layers. Hence, simulating the local process of our algorithm on our communication graph $G^{\Theta(\log n)}$ takes constant time, after which all nodes have computed their class. Next, let us analyze the complexity of computing the label of each node. Using the communication graph created throughout the algorithm, we process the batches in the reverse order, which requires the same number of \mpc rounds as computing the class. Since obtaining the rake-and-compress decomposition required for this process takes $O(\log \log n) + O(\logstar N)$ time, the overall runtime of the algorithm is $O(\log \log n) + O(\logstar N)$.
	
	Let us analyze the memory requirement. The local memory bound is respected, since a batch $i$ is defined such that $\Delta^{2^i} \leq n^\delta$ and a node in batch $i$ executes at most $i$ steps of graph exponentiation, resulting in a neighborhood containing at most $\Delta^{2^i}$ nodes. Note that batch $B_L$ performs the same number of graph exponentiation steps as does the last numbered batch and hence, local memory bounds are not violated. Now for the global memory bound. Consider batch $i$. By the definition of $c$, we have that batch $i$ contains at most $|\bigcup_{j \geq (2^i-1) \cdot c} V_j| \leq n \cdot \Delta^{-(2^i-1)}$ nodes. Since batch $i$ executes at most $i$ steps of graph exponentiation, the memory required to store the communication graph per numbered batch is at most $|\bigcup_{j \geq (2^i-1) \cdot c} V_j| \cdot \Delta^{2^i} \leq n \cdot \Delta$. Now recall that nodes in batch $B_L$ performs the same number of graph exponentiation steps as does the last numbered batch. Observe that the size of $B_L$ has the same upper bound as the last numbered batch. Hence, storing the communication graph of $B_L$ also requires at most $n \cdot \Delta$ memory. The number of batches is trivially upper bounded by the number of layers and hence, the global memory use is bounded by $O(n \log n)$.
	
	By separating the nodes in the first $c \log \log n$ layers and computing their classes prior to the execution and their labels after the other nodes have been handled, we can drop the requirement to sharp $O(n)$. This is evident as separating the first $c \log \log n$ layers leaves us with at most $n/\log n$ nodes. This scheme contributes only an additive $O(\log \log n)$ term to the runtime. One source of memory issues during execution could be sending messages of $\omega(1)$ size. Since our messages only contain class information (\Cref{def:class}), and since there are only a constant number of classes, our messages are of constant size.
	
	Observe that all of our arguments are local, i.e., nodes in separate components do not communicate. Hence, the algorithm works equally on a forest, in which case the runtime becomes $O(\log \log \max_i\{ n_i \})+O(\logstar N)$. 
\end{proof}

\section{The Low Regime} \label{sec:low}

In this section, we prove  the following theorem.

\begin{restatable}[Low regime]{theorem}{thmsublogfast}
	\label{thm:sublogfast}
	Any \lcl problem on trees with randomized \local complexity $o(\log n)$ can be solved with a randomized algorithm in $O(\log \log \log n)+O(\logstar b)$ rounds in the low-space \mpc model with $O(m+n)$ words of global memory where IDs and words have $b$ bits. The algorithm is component-stable. 
\end{restatable}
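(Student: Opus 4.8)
The plan is to reduce, in constant time, to disjoint instances of polylogarithmic size and then hand those instances to the mid regime machinery of \Cref{thm:mid}, which solves them fast while respecting linear memory. First, observe that by the complexity landscape for \lcls on trees (see \Cref{ss:landscape} and \cite{CP19}) a problem $\Pi$ with randomized \local complexity $o(\log n)$ on trees cannot lie in the high regime, and hence has deterministic \local complexity $O(\log n)$ on trees; consequently, on any forest all of whose connected components have at most $z$ nodes, $\Pi$ is solved by a deterministic \local algorithm in $O(\log z)$ rounds. Next, invoke the reduction of \cite{BCMOS21}: there is an $O(1)$-round randomized \local procedure that commits output labels at a subset of the nodes so that, with probability $1 - 1/\poly(n)$, the still-undetermined part of the instance is a forest $G'$ each of whose components has at most $z = \poly(\log n)$ nodes (in fact $z = \log n$ suffices, but any polylogarithmic bound is enough), and such that every correct solution of $\Pi$ on $G'$ extends, in $O(1)$ further \local rounds, to a correct solution of $\Pi$ on the input tree $G$. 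Since $\Pi$ is an \lcl and the input has constant degree, this $O(1)$-round reduction and its $O(1)$-round reversal are each simulated in $O(1)$ low-space \mpc rounds with $O(m+n)$ words of memory by gathering constant-radius balls; the randomized step uses the shared seed $\mathcal{S}$.

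The heart of the proof is then to solve $\Pi$ on $G'$ within linear global memory, and here I would apply \Cref{thm:mid} verbatim: $G'$ is a forest, its largest component has $\max_i\{n_i\} \le z = \poly(\log n)$ nodes, and, by the first paragraph, $\Pi$ admits a deterministic $O(\log z)$-round \local algorithm on instances of this size. \Cref{thm:mid} therefore yields a component-stable deterministic low-space \mpc algorithm solving $\Pi$ on $G'$ in $O(\log\log\max_i\{n_i\}) + O(\logstar{N}) = O(\log\log\log n) + O(\logstar{b})$ rounds with $O(m+n)$ words of global memory, using that $\log\log(\poly(\log n)) = O(\log\log\log n)$, that $G'$ is a subgraph of $G$ (so its node and edge counts are $O(n)$ and $O(m)$), and that $\logstar{N} = O(\logstar{b})$ since $N = 2^{\Theta(b)}$. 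Composing the three pieces — the reduction, this algorithm, the reversal — gives total round complexity $O(\log\log\log n) + O(\logstar{b})$, keeps the global memory at $O(m+n)$ throughout, and is correct with probability $1 - 1/\poly(n)$, as the only randomized step is the shattering. Component-stability follows because the reduction and its reversal act locally — at each node their output depends only on that node's connected component, the global count $n$, and the shared seed $\mathcal{S}$ — and \Cref{thm:mid} is itself component-stable; any cross-component communication serves only to keep phases globally synchronized, which is admissible under \Cref{def:componentStabilityRev}. For completeness, when $\Pi$ actually lies in the tiny regime the stronger bound $O(\log\logstar{N}) = O(\logstar{b})$ from \Cref{thm:logstar} already suffices, so the argument above is only needed when the deterministic \local complexity equals $\Theta(\log n)$.

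I expect the main obstacle to be exactly the step that solves the shattered polylogarithmic instances under the $O(m+n)$-memory restriction, not the reduction. The naive approaches — gathering each size-$z$ component onto a single machine, or running graph exponentiation until every node sees its whole component and then simulating a $\poly(\log z)$-round \local algorithm — either cost $\Omega(\log z) = \Omega(\log\log n)$ rounds, which is far too slow, or blow the global memory up by a super-constant factor, violating the budget (this is the route that \cite{componentstable} can afford with extra memory). Overcoming this is precisely the purpose of \Cref{thm:mid}: its rake-and-compress pipeline interleaves exponentiation steps with memory-freeing steps and compresses the propagated compatibility information to constant size, achieving $O(\log\log z)$ time and linear memory at once. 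Thus the technical content of the low regime is not a new algorithm but the observation that the shattered instances can be fed into the mid regime machinery; verifying that the constant-round reductions respect the memory bound and that their composition preserves component-stability is routine.
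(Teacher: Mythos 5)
Your overall route is the same as the paper's: shatter via the \cite{BCMOS21} reduction into independent components of size $O(\log n)$ (or $\poly\log n$), then solve the residual instances with the mid-regime machinery of \Cref{thm:mid} to get $O(\log\log\log n)+O(\logstar b)$ within linear memory. However, there is a genuine gap in the step where you justify feeding the residual instances to \Cref{thm:mid}. You argue that $\Pi$ itself has deterministic \local complexity $O(\log n)$ (true, by the landscape), hence is solvable in $O(\log z)$ rounds on size-$z$ instances, and then you treat the post-shattering instances as if they were plain instances of $\Pi$ on a smaller forest $G'$, with a claimed $O(1)$-round patching step that reconciles independently computed solutions with the committed labels. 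That is not what the \cite{BCMOS21} reduction provides, and it is not valid in general: after the shattering phase the remaining task is a \emph{constrained completion} problem (in the actual framework, completing the assignment of good random bits for a constant-round algorithm $\mathcal{A}_0$, i.e., an \llle-type \lcl $\Pi''$ with the commitments encoded as inputs), not $\Pi$ on fresh components, and solutions of $\Pi$ computed from scratch on the components cannot in general be made consistent with the boundary commitments by a constant-round fixup. The property you actually need — that this residual problem $\Pi''$ is itself an \lcl admitting a \emph{deterministic} $O(\log z)$-round \local algorithm on trees of size $z$ — is precisely item~2 of \Cref{lem:sublogarithmicReduction} and is flagged in the paper as one of the core technical difficulties of \cite{BCMOS21}; it does not follow from applying the complexity landscape to $\Pi$ itself. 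So either you must strengthen your black box to the paper's formulation of the reduction, or supply an argument for the completion problem's complexity, which your proposal currently lacks.

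A smaller inaccuracy: the constant-time reduction of \cite{BCMOS21} is conditioned on a distance-$t_0$ coloring with $f(\Delta)$ colors, which is not free in \mpc; the paper computes it in $O(\log\logstar n)$ rounds via \Cref{lem:distanceColoringMPC} (and this is also where the ID-space dependence relevant to component-stability enters). Your description of the reduction as a pure $O(1)$-round randomized procedure omits this ingredient; it does not change the stated runtime, since the term is dominated by $O(\logstar b)$, but the reduction as you state it is not the one the cited work supplies. Your memory accounting, the use of $\sum_i |C_i| \le n$, and the component-stability discussion otherwise match the paper's proof.
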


In order to prove the above theorem, we restate the main theorem of our mid regime, and introduce an important result from a previous work, for which we also give a proof sketch.

\thmMid*

\begin{lemma}[\cite{BCMOS21,CP19}]
	\label{lem:sublogarithmicReduction}
	Let $\Pi$ be an \lcl (possibly on general graphs) with a sublogarithmic randomized \local algorithm $\mathcal{A}$. Then there exists a constant $t_0$, and an \lcl $\Pi''$ (whose definition only depends on $\Pi$ and $\mathcal{A}$) and an $O(1)$-time \congest reduction such that:
	\begin{enumerate}
		\item  given a $t_0$-distance coloring with $f(\Delta)$  colors (for some function $f$) reduces solving $\Pi$ to solving 
		$\Pi''$ on several independent connected components $C_1,\ldots,C_k$ each of size at most $|C_i|\leq N=O(\log n)$ while IDs use $b=O(\log n)$ bits and $\sum|C_i|\leq n$ holds. 
		\item If the input graph is a tree (general graph), then the \lcl problem $\Pi''$ can be solved in $O(\log z)$ rounds ($\poly\log z$ rounds) by a deterministic \local algorithm on instances with at most $z$ nodes. 
	\end{enumerate} 
\end{lemma}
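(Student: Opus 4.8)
The plan is to reconstruct this statement from \cite{CP19} and \cite{BCMOS21}, where it (or a minor variant of it) is established; it splits naturally into the construction of the auxiliary \lcl $\Pi''$ together with the constant-round reduction (item~1) and the complexity bound for $\Pi''$ (item~2). I would treat item~2 first, as it is the more self-contained.

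For item~2, the starting point is that $\Pi$ has a sublogarithmic randomized \local algorithm $\mathcal{A}$, so by the complexity landscape of \lcls on trees \cite{CKP19,CP19} its deterministic complexity on $z$-node trees is $O(1)$, $\Theta(\logstar{z})$, or $\Theta(\log z)$ --- in every case $O(\log z)$, since the only \local complexities it is \emph{excluded} from are the high ones $\Theta(z^{1/k})$, whose randomized and deterministic complexities coincide and exceed $\log z$. One then checks that the completion problem $\Pi''$ (defined in item~1) inherits a sublogarithmic randomized complexity from $\mathcal{A}$ and hence falls under the same dichotomy, so that it is solved deterministically in $O(\log z)$ rounds by the canonical rake-and-compress / class-propagation algorithm of \cite{CP19} --- the very one reused in \Cref{sec:mid}. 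The same conclusion can also be reached without the gap theorems, via the ``derandomization on small instances'' of \cite{CP19}: running $\mathcal{A}$ on a $z$-node instance but telling it there are $n^\ast=2^{\poly(z)}$ nodes lowers its failure probability below $2^{-\poly(z)}$, small enough to union-bound over all $z$-node inputs (graph, input labels, identifiers), so a globally correct fixing of the randomness exists and can be located by an offline search depending only on $\Pi$ and $\mathcal{A}$; the resulting deterministic \local complexity is $T(2^{\poly(z)})=O(\log z)$. On general graphs this small-instance trick gives only a polynomial bound, but a sublogarithmic randomized \local complexity still upgrades to $\poly\log z$ deterministic complexity through the shattering framework combined with the deterministic network decomposition of \cite{RozhonG20}.

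For item~1, the constant $t_0$ is the threshold (depending only on $\Pi$ and $\mathcal{A}$) beyond which $\mathcal{A}$'s output on any bounded-radius ball is independent of the identifier space, so that --- exactly as in the argument of \cite{NaorS95,CKP19} recalled in \Cref{sec:tiny} --- a $t_0$-distance coloring with $f(\Delta)$ colors can serve as a surrogate identifier assignment. Given such a coloring, the reduction runs a constant number of ``commitment'' steps of the resulting identifier-free process, freezing the output at a set $S$ of nodes chosen so that every connected component of $G-S$ has $O(\log n)$ vertices; $\Pi''$ is then the completion \lcl on a residual component $C_i$ with the frozen $S$-labels as boundary, its configurations being precisely the partial $\Pi$-labelings that provably extend to a correct global $\Pi$-solution (on trees, this extendability is recognizable via the constant-size ``class'' bookkeeping of \cite{CP19}, so $\Pi''$ is a genuine \lcl). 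What remains is routine bookkeeping: that $\Pi''$ depends only on $\Pi$ and $\mathcal{A}$; that the $C_i$ are vertex-disjoint with $\sum_i|C_i|\le n$, $|C_i|=O(\log n)$, and identifiers of $O(\log n)$ bits; that the reduction uses $O(1)$ rounds and only $O(\log n)$-bit messages (colors plus a finite set of configurations); and that any family of per-component $\Pi''$-solutions, together with the frozen $S$-labels, glues into a correct $\Pi$-solution on $G$.

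The main obstacle is the freezing step of item~1: producing, in only $O(1)$ rounds from a constant-distance coloring, a set $S$ whose removal leaves components as small as $O(\log n)$ --- rather than $n^{o(1)}$ or $\poly\log n$. This is precisely where the sublogarithmicity of $\mathcal{A}$'s round complexity and the quantitative shattering estimate of \cite{CP19,BCMOS21} enter, and achieving the $O(1)$-round and $O(\log n)$-size targets simultaneously is the delicate part; I would import this kernel verbatim from \cite{BCMOS21} rather than reprove it. Everything else --- the identifier substitution and the offline derandomization --- is standard once that kernel is in place.
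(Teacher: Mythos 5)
Your overall architecture (speed up $\mathcal{A}$, shatter using a constant-distance coloring, solve the residual problem on $O(\log n)$-size components via the mid-regime machinery on trees and via \llle/network decomposition on general graphs) matches the paper's, but there is a genuine gap in how you define $\Pi''$. You take $\Pi''$ to be the ``completion \lcl'' whose configurations are the partial $\Pi$-labelings that \emph{provably extend} to a correct global $\Pi$-solution. Extendability of a partial labeling is not a constant-radius-checkable property (think of completing a $2$-coloring of a long path: it depends on parity over unbounded distances), so this $\Pi''$ is not obviously an \lcl at all; your appeal to the ``class'' bookkeeping of Chang--Pettie does not repair this, since that machinery lives inside an algorithm built on a rake-and-compress decomposition rather than turning extendability into a local constraint, and in any case it is tree-specific while the lemma requires $\Pi''$ to be an \lcl (with the $\poly\log z$ bound) on general graphs as well. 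The paper's route avoids this entirely: first convert $\mathcal{A}$ into a \emph{constant-time} algorithm $\mathcal{A}_0$ that errs with small constant probability and uses constantly many random bits per node; the object being frozen in the shattering step is then the \emph{random bits}, not $\Pi$-outputs, and $\Pi''$ is ``complete the random-bit assignment so that $\mathcal{A}_0$ errs at no node''---a condition that is locally checkable by construction, depends only on $\Pi$ and $\mathcal{A}$, and moreover is an \llle problem with polynomial criterion and constant dependency degree. That \llle structure is exactly what your item~2 is missing: your claim that $\Pi''$ ``inherits a sublogarithmic randomized complexity from $\mathcal{A}$'' is asserted rather than derived, and it is precisely the core technical point (deferred to \cite{BCMOS21}); in the paper's formulation it follows because polynomial-criterion \llle is solvable in $O(\log\log z)$ randomized rounds on trees (hence $O(\log z)$ deterministically by the gap/derandomization), and in $\poly\log z$ deterministic rounds on general graphs via \cite{RozhonG20,FischerG17}.

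A smaller but real error: your ``without the gap theorems'' alternative does not give $O(\log z)$. If $\mathcal{A}$ runs in $T(n)=o(\log n)$ rounds and you lie that the instance has $n^\ast=2^{\poly(z)}$ nodes, the resulting deterministic complexity is $T(2^{\poly(z)})=o(\poly(z))$, which is only polynomial in $z$; to conclude $O(\log z)$ you must first invoke the randomized gap on trees to push $T$ down to $O(\log\log n)$---so the gap theorems (or the \llle structure) cannot actually be dispensed with there. Finally, note that the shattering step is randomized (the components are small only w.h.p.), which is fine for how the lemma is used but should be stated as part of the reduction rather than left implicit.
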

\begin{proof}[Proof sketch]
	Any \lcl $\Pi$ with randomized sublogarithmic \local complexity can be solved via the following normal form: 
	\begin{enumerate}
		\item Determine (without communication) a constant time $t_0$ algorithm $\mathcal{A}_0$ for $\Pi$ that errs with a small constant probability and uses a constant number of random bits per node. 
		\item Determine good random bits for each node to execute $\mathcal{A}_0$ such that it errs at no node. 
		\item Execute the constant time algorithm $\mathcal{A}_0$ with the computed random bits.
	\end{enumerate}
	Step~2 is most involved. In fact,  one can show that the problem $\Pi'$ of determining good random bits for the nodes is also an \lcl problem and additionally it is a so-called \llle problem with a polynomial \llle criterion and a constant dependency degree. For more details on \llle{}s see \cite{BCMOS21}. The crucial point is that such \llle problems can be solved via the shattering method: Given a suitable constant distance coloring one can, in a constant number of rounds, set the random bits of some nodes such that remaining nodes, w.h.p, form small components $C_1,\ldots,C_k$, each of size $\leq N=O(\log n)$. One core technical difficulty in \cite{BCMOS21}  is to show that the remaining problem on each component is a proper \lcl problem $\Pi''$ that on trees has an $O(\log z)$ deterministic \local algorithm on instances of size $z$. On general graphs the problem $\Pi''$ can be solved in $\poly\log z$ rounds via the \llle algorithm of \cite{RozhonG20,FischerG17} on instances of size $z$ as the problem is also an \llle problem. 
	
	All parts of this reduction, except for solving $\Pi''$ on the small components, can clearly be executed in a constant number of rounds. 
\end{proof}

We are now ready to prove \Cref{thm:sublogfast}.

\begin{proof}[Proof of \Cref{thm:sublogfast}]
	We apply \Cref{lem:sublogarithmicReduction}.
	All steps of the constant time reduction from \Cref{lem:sublogarithmicReduction} can clearly be executed in $O(1)$ rounds in the \mpc model. A distance-$t_0$ coloring can be computed in $O(\log\logstar n)$ rounds using  \Cref{lem:distanceColoringMPC}. Hence, it remains to solve the \lcl problem $\Pi''$ on several independent instances each of size at most $N$ in parallel. Further, the lemma provides us with the fact that $\Pi''$ can be solved with a deterministic $O(\log z)$ round algorithm on instances of size $z$. Using \Cref{thm:mid}, we obtain that $\Pi''$ can be solved in $O(\log \log z)$ rounds on instances of size $z$. Using this on each component in parallel (setting $z=N$) we obtain an $O(\log\log N)=O(\log\log\log n)$ rounds algorithm. Note that when applying both lemmas IDs use the $b=O(\log n)$ bits.
	
	The global space constraints are met as $\sum|C_i|\leq n$ and the algorithm from \Cref{thm:mid} only requires linear global space. All used subroutines are component-stable. 
\end{proof}

We say an algorithm solves the \emph{connected components problem} if each node in each connected component $C$ outputs $\min_{v\in C} {ID_v}$. 
Note, that by definition, the output of an algorithm solving the connected components algorithm cannot depend on other components, that is, any connected components algorithm is component-stable by definition. \footnote{Note that its crucial that we define the problem slightly different than typically, see e.g., \cite{coy2021deterministic}, where each nodes in a connected component can output any arbitrary number as long as they output the same number and the number is not used by any other component. }
\begin{observation}
	\label{obs:conCompIscomponentStable}
	Any connected components algorithm is component-stable by definition. 
\end{observation}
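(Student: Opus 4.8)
The plan is to unfold the revised definition of component-stability (\Cref{def:componentStabilityRev}) and observe that the output prescribed by the connected components problem already has precisely the required form. By definition, any algorithm solving this problem must make every node $v$ output $\min_{u \in C(v)} ID_u$, where $C(v)$ denotes the connected component containing $v$. First I would note that this value is, literally, a deterministic function of the vertex set and the ID assignment of $C(v)$: the topology of $C(v)$ determines which nodes lie in the component, and the IDs of those nodes determine the minimum. In particular the value does not depend on the nodes' \emph{names}, on any shared random seed, on the exact number of nodes or maximum degree of the whole graph, or on anything in the other connected components.

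Next I would match this against \Cref{def:componentStabilityRev}: a deterministic (resp.\ randomized) \mpc algorithm is component-stable if its output at $v$ can be expressed as a deterministic function $A_{\mpc}(CC(v), v, n, \Delta)$ (resp.\ $A_{\mpc}(CC(v), v, n, \Delta, \mathcal{S})$). Since $\min_{u \in C(v)} ID_u$ is already a function of $CC(v)$ alone---hence, trivially, of the larger tuple $(CC(v), v, n, \Delta)$ or $(CC(v), v, n, \Delta, \mathcal{S})$---any algorithm that correctly solves the connected components problem outputs at every node exactly the value produced by such a function. This establishes the observation.

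There is essentially no obstacle: the statement holds ``by definition'', and the only point worth making explicit is the one already flagged in the footnote, namely that we have fixed the canonical output of a component to be its \emph{minimum} ID rather than allowing any component-unique label (as in, e.g., \cite{coy2021deterministic}). Under the weaker, more common formulation a node's output could legitimately depend on the IDs occurring in \emph{other} components (to make the chosen representatives globally distinct), and the observation would no longer hold; pinning the output to $\min_{u \in C(v)} ID_u$ is exactly what removes this cross-component dependence. Thus the proof reduces to three steps: recall the prescribed output, observe that it is a function of $CC(v)$ only, and conclude via \Cref{def:componentStabilityRev}.
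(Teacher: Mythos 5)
Your proposal is correct and matches the paper's reasoning: the paper likewise argues that because the problem definition forces every node to output $\min_{v\in C}\mathrm{ID}_v$, the output is a function of the component alone and component-stability follows immediately from \Cref{def:componentStabilityRev}. You also correctly identify the same caveat the paper puts in its footnote, namely that fixing the output to the minimum ID (rather than any component-unique label) is what makes the observation hold.
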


\subparagraph{The connected component algorithm by \cite{googleconnectivity,coy2021deterministic}:} The crucial ingredient of the deterministic connected components algorithm by \cite{coy2021deterministic} are two deterministic subroutines, one to compute large matchings in paths/cycles, and one to solve a certain set-cover instance. Solving these problems deterministically is sufficient to derandomize the algorithm by \cite{googleconnectivity}. While the problems at hand are irrelevant for this paragraph, the way they are solved is interesting.  Via the method of conditional expectation the random bits of a shared random seed are deterministically chosen in a suitable way to compute the output from it. Here, all parts of the graph, in particular all different components,  use the same seed. Changes in one component of the graph can incur changes in chosen seed, and hence can influence the output of other components of the graph. Thus, the used technique is inherently non component-stable. However, due to \Cref{obs:conCompIscomponentStable} we obtain the following theorem. 

\begin{theorem}[\cite{googleconnectivity,coy2021deterministic}]
	\label{thm:arturConnected}
	There is a deterministic component-stable algorithm to solve the connected components problem with components $C_1,\ldots,C_k$ in $O(\log \max_{i}\textsf{diam}(G[C_i]))+\log\log \max_i |C_i|)$ rounds. 
\end{theorem}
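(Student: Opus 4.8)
The plan is to derive the theorem as an immediate consequence of the deterministic connected components algorithm of Coy and Czumaj~\cite{coy2021deterministic}, which derandomizes the randomized algorithm of~\cite{googleconnectivity}, combined with \Cref{obs:conCompIscomponentStable}. First I would invoke~\cite{coy2021deterministic}: on any input graph with components $C_1,\dots,C_k$, their algorithm deterministically computes, for every node $v$ lying in a component $C$, the value $\min_{u \in C} ID_u$, and it runs in $O(\log \max_i \textsf{diam}(G[C_i]) + \log\log \max_i |C_i|)$ rounds in the low-space \mpc model. Internally this uses their two deterministic subroutines---one computing large matchings in paths and cycles, the other solving a particular set-cover instance---to replace the randomness of~\cite{googleconnectivity}; we treat these as a black box, so the claimed round complexity is simply quoted verbatim. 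This settles the runtime part of the statement.

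It remains to argue component-stability, and this is where the only subtlety lies. The derandomization of~\cite{coy2021deterministic} proceeds via the method of conditional expectation applied to a \emph{single} shared random seed that is used by all components at once, so that modifying one component can change which seed bits are fixed and hence can alter the \emph{execution} of the algorithm---and the intermediate quantities it computes---on the other components. Consequently the algorithm is not component-stable in the sense of its execution. The key observation of the plan is that component-stability in the sense of \Cref{def:componentStabilityRev} is a property of the \emph{output} only: an algorithm is component-stable precisely when the output at each node $v$ can be written as a deterministic function $A_{\mpc}(CC(v), v, n, \Delta)$. Since the connected components problem \emph{as we have defined it} forces every node of $C$ to output exactly $\min_{u \in C} ID_u$, the output at $v$ is literally a function of the ID set of $CC(v)$ alone, hence trivially of the required form. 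This is exactly the content of \Cref{obs:conCompIscomponentStable}, and the theorem follows.

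I expect the only conceptual obstacle---dispatched here in a single sentence---to be precisely this reconciliation: one must not confuse ``the run of the algorithm is influenced by other components'' with ``the algorithm is component-unstable''. It is also essential that we use the canonical-output variant of the connected components problem (each node outputs the minimum ID in its component) rather than the weaker ``output any common label not used by another component'' variant, for which the output could genuinely depend on other components and the statement would fail. No further calculation is required.
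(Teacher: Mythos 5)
Your handling of component-stability is correct and is exactly the paper's route: the problem is defined so that every node must output $\min_{u \in C} ID_u$, so by \Cref{obs:conCompIscomponentStable} the output is a function of $CC(v)$ alone even though the conditional-expectation derandomization over a single shared seed makes the \emph{execution} depend on other components.

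The gap is in treating the quantitative claim as a verbatim quote from \cite{coy2021deterministic,googleconnectivity}. Neither the output convention nor the stated bound of the cited work matches the theorem as written, and the paper's proof sketch has to open the black box at precisely these two points. First, the cited algorithms solve connected components in the usual sense (each component agrees on some arbitrary label not used elsewhere), not the min-ID variant; the paper argues that in the penultimate state each component $C$ is represented by virtual nodes $v_1,\dots,v_k$ (connected as a clique) whose associated ID sets partition the IDs of $C$, so the minimum ID can be aggregated and disseminated at no asymptotic cost. Your proposal asserts that the cited algorithm already computes $\min_{u\in C} ID_u$, which it does not. Second, the runtime in \cite{coy2021deterministic} is stated as $O(\log \max_i \textsf{diam}(G[C_i])) + O(\log\log n)$ with the \emph{global} $n$; obtaining $\log\log \max_i |C_i|$ requires the paper's additional argument that the $\log\log n$ term comes from the level of a virtual node, and that this level is bounded by the number of nodes the virtual node has come in contact with, hence by $O(\log\log |C_i|)$ for the component it stems from. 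Since the theorem's content is exactly these two refinements (canonical min-ID output and per-component $\log\log$ term), assuming them as part of the citation leaves the actual proof obligation undischarged.
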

\begin{proof}[Proof Sketch]
	It is immediate that the algorithms of  \cite{googleconnectivity,coy2021deterministic} can solve the aforementioned version of connected components. In the pen-ultimate state the algorithm has several virtual nodes (connected via  a clique) $v_1,\ldots,v_k$ for each connected component $C$, and each of these nodes has an associated set containing some of the IDs of the nodes in the component. The ID of each node appears in exactly one set. Hence, one can easily determine the minimum ID of the component.  
	
	The runtime of the algorithm is stated as $O(\log \max_{i}\textsf{diam}(G[C_i]))+\log\log n)$. However, the $O(\log\log n)$ term stems from the fact that the \emph{level} of a virtual node cannot grow beyond $O(\log \log n)$. In fact, the level is associated with the number of nodes that the virtual node has come in contact with and it cannot grow beyond the $O(\log\log |C_i|)$ where the node stems from component $C_i$. 
\end{proof}

\begin{theorem}
	\label{thm:generalGraphMoreSpace}
	Any \lcl problem on general graphs with randomized \local complexity $o(\log n)$ can be solved with a randomized component-stable algorithm in $O(\log \log n)$ rounds in the low-space \mpc model using $O(n\log n)$ words of global memory.
\end{theorem}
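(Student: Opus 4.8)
The plan is to combine the constant-round reduction of \Cref{lem:sublogarithmicReduction} with the distance-coloring subroutine of \Cref{lem:distanceColoringMPC}, and then solve the resulting small independent instances by brute-force graph exponentiation, rather than by the more memory-frugal rake-and-compress machinery used for trees. Concretely, first I would invoke \Cref{lem:distanceColoringMPC} with $k = t_0$ (the constant from \Cref{lem:sublogarithmicReduction}) to compute a distance-$t_0$ coloring with $O(\Delta^{2t_0}) = O(1)$ colors in $O(\log\logstar n)$ \mpc rounds using $O(m+n)$ words of global memory. Feeding this coloring into \Cref{lem:sublogarithmicReduction} yields an $O(1)$-round \congest reduction; since \congest messages have $O(\log n)$ bits and the graph has constant degree, each such round is simulated by $O(1)$ \mpc rounds within linear global memory. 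After the reduction it remains to solve the \lcl $\Pi''$ on the independent components $C_1,\ldots,C_k$, which w.h.p.\ satisfy $|C_i| \le N = O(\log n)$ and $\sum_i |C_i| \le n$, with IDs of $O(\log n)$ bits.

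For these small instances I would simply have every node $v$ gather the \emph{entire} connected component containing it. Because each component has at most $N = O(\log n)$ nodes, and, being constant-degree, at most $O(N)$ edges, $\lceil \log N \rceil = O(\log\log n)$ rounds of graph exponentiation suffice for each node to learn its whole component, and each node ever stores only $O(\log n)$ words; summed over all $n$ nodes this is $O(n\log n)$ words of global memory, while the local-memory bound $O(\log n) = O(n^\delta)$ is respected (the per-round messages also have size $O(\log n)$ words since degrees are constant). Once a node sees its whole component it computes, offline, a fixed canonical solution of $\Pi''$ on that component, e.g.\ the lexicographically smallest feasible output with respect to the node IDs; such a solution exists because $\Pi''$ is solvable on each $C_i$ (indeed \Cref{lem:sublogarithmicReduction} even guarantees a deterministic $\poly\log z$-round \local algorithm for $\Pi''$ on size-$z$ instances, which one could simulate instead, but brute force is enough here). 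All nodes of a component agree on the same solution since it is a deterministic function of the component topology and its IDs. Finally, translating back through the $O(1)$-round reduction produces a solution of $\Pi$ on $G$.

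Summing the phases gives a runtime of $O(\log\logstar n) + O(1) + O(\log\log n) + O(1) = O(\log\log n)$ and a global-memory bound of $O(n\log n)$ words, as claimed. Component-stability holds throughout: the coloring of \Cref{lem:distanceColoringMPC} is component-stable, the reduction of \Cref{lem:sublogarithmicReduction} is local by design, and the gather-and-solve step on each $C_i$ depends only on $CC(v)$, $v$ itself, and the IDs in $CC(v)$. No step's output depends on cross-component communication — a node simply runs the gather phase for the worst-case $O(\log\log n)$ rounds regardless of the actual size of its component — so the algorithm is component-stable in the sense of \Cref{def:componentStabilityRev}.

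I expect the main point requiring care to be the memory accounting of the gather step, and, relatedly, the conceptual observation that here we can afford to be wasteful: on general graphs we lack a rake-and-compress-type structure, so unlike in the tree case (\Cref{thm:sublogfast}) we cannot keep the global memory linear, and instead gather entire components; the bound stays at $O(n\log n)$ precisely because the shattering in \Cref{lem:sublogarithmicReduction} caps component sizes at $O(\log n)$, which in turn caps the graph-exponentiation blow-up. Everything else — simulating the $O(1)$-round \congest reduction in \mpc, the correctness of the canonical offline solver, and the runtime bookkeeping (note $\log\logstar n = o(\log\log n)$) — is routine given the results cited above.
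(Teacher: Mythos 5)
Your proposal is correct and follows essentially the same route as the paper's own proof: compute the distance-$t_0$ coloring via \Cref{lem:distanceColoringMPC}, apply the constant-round shattering reduction of \Cref{lem:sublogarithmicReduction}, and then use graph exponentiation on the size-$O(\log n)$ components with exactly the paper's memory accounting $\sum_i |C_i|^2 \le O(n\log n)$. The only (harmless) deviations are that you gather entire components in $O(\log\log n)$ exponentiation steps and solve $\Pi''$ by a canonical brute-force solution, whereas the paper only gathers $T$-hop balls for $T=\poly\log\log n$ and simulates the \local/\llle algorithm for $\Pi''$ (and also sketches an alternative using its deterministic connected-components routine of \Cref{thm:arturConnected} plus load balancing); all variants stay within the claimed $O(\log\log n)$ rounds and $O(n\log n)$ words.
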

\begin{proof}
	Apply \Cref{lem:sublogarithmicReduction}.
	First, compute a $t_0$-distance coloring via \Cref{lem:distanceColoringMPC} in $O(\log \logstar n)$ rounds.  The constant number of rounds of the reduction can clearly be executed in $O(1)$ rounds in the \mpc model. To solve the problem $\Pi''$ on all components $C_1,\ldots,C_k$, each of size $\leq N$ in parallel, we first run the connected components algorithm from \Cref{thm:arturConnected}. It runs in $O(\log |C_i|)=O(\log N)=O(\log\log n)$ rounds. Afterwards, every node knows the minimum ID in its component. We use a deterministic load balancing algorithm to send all $\leq N$ nodes of $C_i$, including their incident edges, to the same machine, where we can solve $\Pi''$. No additional global space is required. 
	
	We next prove the result for $O(n\log n)$ words of global memory. The problem $\Pi''$ can be solve in $T=\poly\log z$ rounds in the \local model on instances of size $z$ via \cite{RozhonG20} as $\Pi''$ is not just an \lcl but also an \llle problem with a polynomial criterion. Now, we perform graph exponentiation for $\log T$ rounds after which every node knows its $T$-hop neighborhood, which is enough to determine its output. In the worst case, during the exponentiation, each node of $C_i$ learns all of $C_i$, that is the total number of words that we need is $\sum_{i=1}^k |C_i|^2\leq O(n\cdot N)=O(n\log n)$.
\end{proof}

\corSpeedUpDet*
\begin{proof}
	In \cite{componentstable}, Czumaj, Davies, and Parter provided a deterministic component-unstable counterpart of the constant time reduction in \Cref{lem:sublogarithmicReduction}. Essentially, the reduction in \cite{BCMOS21} uses the $\poly \Delta=O(1)$-round shattering framework of \cite{FischerG17} for so called \lovasz Local Lemma instances. The aforementioned authors replace this shattering phase with a deterministic shattering procedure that uses optimal global memory and $\poly \Delta=O(1)$ rounds.   
\end{proof}

\section{An Automatic Procedure}
\label{sec:automatic}
In our results we claim that, if we just know the complexity of a problem in the distributed setting, then we can directly obtain an exponentially faster \mpc algorithm. In some of our proofs we will assume something stronger: that we are given a problem, its distributed complexity, and \local \emph{algorithm} with such a complexity. We now show that assuming that an algorithm for a problem is given is not stronger than assuming that just its asymptotic complexity is provided.
In order to do so, we now consider all the possible complexities that a problem can have in the distributed setting (as discussed in \Cref{ss:landscape}) and show how to obtain a distributed algorithm for free. In this way, given the complexity of a problem in the distributed setting, one can first apply the following procedure, and then apply our speedup results to obtain an exponentially faster \mpc algorithm.

\begin{lemma}
	Consider an \lcl problem on trees, for which we are given its deterministic time complexity $f(n)$ (resp.\ its randomized time complexity $g(n)$) in the \local model. It is possible to automatically find a \local algorithm with deterministic time complexity $f(n)$ (resp.  randomized time complexity $g(n)$).
\end{lemma}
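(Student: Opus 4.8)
The plan is to split according to the discrete set of possible complexities on trees recalled in \Cref{ss:landscape}, namely $f(n)\in\{\Theta(1),\Theta(\log^* n),\Theta(\log n)\}\cup\{\Theta(n^{1/k}):k\in\mathbb N\}$, and to exhibit, \emph{for each class separately}, an effective procedure producing a \local algorithm of exactly that complexity. The unifying tool in the two ``tiny'' cases is a brute-force search over a candidate family that is finite for each fixed locality radius, combined with the fact that checking whether a fixed candidate is globally correct is decidable: for an \lcl of checkability radius $r$, a solution is correct iff every $r$-hop neighborhood matches an allowed configuration, and the output of a radius-$t$ (order-invariant, or coloring-based and \textsc{id}-free) algorithm on such a neighborhood depends only on finitely much data, so correctness reduces to inspecting an explicitly enumerable finite set of bounded-radius neighborhoods and bounded \textsc{id}-order types.

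Concretely: \textbf{(i)} If $f(n)=\Theta(1)$, by \cite{NaorS95} any $O(1)$-round algorithm can be turned into an \emph{order-invariant} one of some constant radius $t$; for each fixed $t$ there are only finitely many order-invariant radius-$t$ maps, so we enumerate $t=1,2,\dots$, test each map for global correctness (decidable as above, handling instances with $n$ below the Ramsey threshold by gathering the whole constant-size component), and output the first correct one — termination is guaranteed because we are \emph{given} that $f(n)=O(1)$. \textbf{(ii)} If $f(n)=\Theta(\log^* n)$, we reuse the machinery of \Cref{sec:tiny}: by \cite{CKP19}, since the complexity is $o(\log n)$, $\Pi$ has a normal-form algorithm that first computes a distance-$k$ coloring (in $O(\log^* n)$ rounds, \Cref{lemma: local-k-dist}) and then runs a $k$-round \textsc{id}-free algorithm $B$ on the coloring; by \cite{NaorS95} it is decidable for each $k$ whether such a $B$ exists, so enumerating $k$ finds it, and composing with the coloring algorithm yields an $O(\log^* n)$-round algorithm. \textbf{(iii)} If $f(n)=\Theta(\log n)$, we do not search but construct directly: by Chang--Pettie \cite{CP19}, every \lcl on trees of complexity $n^{o(1)}$ — in particular $\Theta(\log n)$ — is solved by the canonical $O(\log n)$-round algorithm that computes a rake-and-compress decomposition with a parameter $\ell$ and then labels layer by layer via the graphs $H^+$ of \Cref{lemma: replace}; crucially $\ell$ and the finite data ($H^+$ and the equivalence classes $\class(\cdot)$) are computable functions of $\Pi$ (see \cite[Lemma 13]{CP19}), so the algorithm is obtained outright. \textbf{(iv)} If $f(n)=\Theta(n^{1/k})$, a bounded-radius brute force no longer applies, so we again invoke the constructive classification of the tree-\lcl landscape \cite{CP19}: given $\Pi$ together with the information that its complexity equals $\Theta(n^{1/k})$, the canonical $O(n^{1/k})$-round algorithm for $\Pi$ is extracted from the same finite combinatorial data.

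For the randomized statement, recall that randomness helps only in the mid regime. Hence if $g(n)\notin\Theta(\log\log n)$, the randomized complexity equals the deterministic one and we simply output the deterministic algorithm found above. If $g(n)=\Theta(\log\log n)$, we combine the constant-round randomized shattering reduction of \cite{BCMOS21,CP19} — itself effective once we fix a suitable constant-round randomized subroutine $\mathcal A_0$, which is found by the same bounded-radius brute force, now testing that the per-neighborhood error probability is a small constant — with the $\Theta(\log z)$-round deterministic algorithm from case (iii) applied to the resulting independent instances of size $O(\log n)$, giving randomized round complexity $O(\log\log n)$.

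I expect the main obstacle to be the effectiveness and decidability bookkeeping rather than any new idea: making fully rigorous the reduction from ``$A$ is correct on every tree'' (and, in the randomized case, ``$\mathcal A_0$ errs with probability at most $\varepsilon$ at every node of every tree'') to a statement about an explicitly enumerable finite collection of bounded-radius neighborhoods and \textsc{id}-order types, and spelling out for the polynomial regime that the structural classification of \cite{CP19} is genuinely constructive, i.e.\ that the data determining the canonical $\Theta(n^{1/k})$ algorithm can be computed from $\Pi$. Termination of all brute-force searches is \emph{not} an obstacle, since it is handed to us for free by the promise that $f(n)$ (resp.\ $g(n)$) has the stated value.
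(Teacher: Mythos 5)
Your proposal follows essentially the same route as the paper's own proof: a case split over the discrete complexity classes, Naor--Stockmeyer-style decidable brute force for the $O(1)$ and $O(\log^* n)$ regimes (via the coloring-plus-constant-round normal form of \cite{CKP19}), the constructive classification of \cite{CP19,Chang2020} for the $\Theta(\log n)$ and $\Theta(n^{1/k})$ regimes, and, when $g(n)=\Theta(\log\log n)$, brute-forcing a constant-round randomized algorithm with small local failure probability and finishing via the shattering/\llle machinery on the $O(\log n)$-size components. The only cosmetic deviations are that the paper dovetails over the two parameters (coloring distance $d$ and round count $k$) instead of tying them to a single $k$, and in the randomized search it explicitly enumerates the per-node random-bit budget $b$ together with $k$ and $p=\phi(k,\Pi)$ so that the candidate family is finite---a bookkeeping detail your sketch leaves implicit but which is needed for the enumeration to make sense.
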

\begin{proof}
	In \cite{NaorS95}, it is shown that for any \lcl $\Pi$ and for any given $k$,  it is possible to decide  whether $\Pi$ can be solved in $k$ rounds. Moreover, if the answer is affirmative, one also obtains an algorithm. Hence, if we already know that $f(n) = O(1)$, we can use this method to get an algorithm for free.
	
	In \cite{CKP19}, it is shown that any $f(n) = O(\log^* n)$ solvable problem can be solved in a very specific way: first compute a distance-$d$ $O(\Delta^{2d})$-coloring, for some specific value of $d$, and then apply a constant time algorithm. Also, observe that if we know what is the right value for $d$, then we can use the method for the $f(n) = O(1)= k$ case to find the algorithm. But we may not know $d$, and we cannot just start testing for $k = 1,2,\ldots$, because when testing for $d=1$ for example, there may not exist any constant time algorithm that solves the problem if given an $O(\Delta)$-coloring, so the procedure may not find ant valid $k$, and just diverge. But we can test differently: we proceed in iterations, and in each iteration $i$ we check all possible values of $k$ and $d$ satisfying $k,d \le i$.
	
	If the problem satisfies $f(n) = \Omega(\log n)$, then, by \cite{CP19,Chang2020}, we know that we can automatically decide what is the right asymptotic value of $f(n)$, and throughout the process, also obtain a \local algorithm for free.
	
	If $g(n) \neq \Theta(\log \log n)$, then we know by prior work that $f(n) = g(n)$, and hence in that case we already showed how to obtain an algorithm. If $g(n) =  \Theta(\log \log n)$, then in particular we know that $g(n) = o(\log n)$. For all problems falling into this category, we know that they can be \emph{sped up} to $O(\log \log n)$ as follows (see \cite{CP19}):
	\begin{itemize}[noitemsep]
		\item Convert the $o(\log n)$-rounds randomized algorithm into a $k = O(1)$-rounds randomized algorithm that has small enough local failure probability $p = \phi(k,\Pi)$ for some function $\phi$ defined in \cite{CP19} (where the local failure probability is the probability that a given specific node fails to produce a correct solution).
		\item Use a distributed \llle algorithm to find good random bits, such that if we run the obtained algorithm with them, it does not fail. This part requires $O(\log \log n)$ on trees.
	\end{itemize}
	We observe that the same techniques used to prove that we can find a constant time algorithm by brute force (see \cite[Theorem 4.3]{NaorS95}) also extend to the randomized case. In particular, we can use the procedure of \cite{NaorS95} to decide whether there exists a $k$-round randomized algorithm that uses at most $b$ random bits on each node and that locally fails with probability at most $p$, for any constant $k$, $b$, and $p$.
	Hence, we can test $b$ and $k$ (and $p = \phi(k,\Pi)$) in phases to find such an algorithm, since by assumption it exists. This algorithm, combined with \llle, gives an $O(\log \log n)$ algorithm for free.
\end{proof}

\begin{remark}
	For the ease of presentation, most of our paper is written from the viewpoint of a single tree. However, we want to point out that all our algorithms work on forests, too. For the tiny, high and the mid regime this is reasoned in detail in the respective proofs. For constant time algorithms this is immediate. 
	In \Cref{sec:tiny}, we reasoned that the speedup in the tiny regime applies also to forests. The main reason is that our asymptotic runtime solely depends on the size of the ID space. Hence, these algorithms are component-stable as long as another component cannot change the ID space. The same is true for the low regime in \Cref{sec:low} for the following reasons. The algorithm is based on a constant time shattering procedure that is component-stable and a post-shattering phase which relies on the component-stable algorithm of the mid regime. However, as the shattering phase requires a sufficiently large constant distance coloring, the same ID space dependency as in the tiny regime applies. 
\end{remark}

\clearpage
\appendix

\section{Rooting}\label{sec:rooting}

In this section, we describe a novel low-space \mpc algorithm that roots a forest deterministically in $O(\log n)$ time. Rooting entails orients the edges of the graph such that in each connected component, they point towards a unique root.

\begin{lemma}[Rooting] \label{lemma:rooting}
	Consider an arbitrary-degree forest consisting of (disjoint) connected components $C_1,\ldots,C_k$, each $C_i$ of size $n_i$. There is a deterministic, component-stable, low-space \mpc algorithm that roots the forest in $O(\log \max_i\{ n_i \})$ time using $O(m+n)$ words of global memory.
\end{lemma}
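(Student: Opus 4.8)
The plan is to root each tree by repeatedly contracting it while keeping track, for every contracted super-edge, of the path of original vertices it represents, so that at the end a single surviving vertex per component can be declared the root and all orientations can be recovered. Concretely, I would run an $O(\log n)$-phase process in which each phase removes a constant fraction of the vertices: in each phase, every degree-$1$ vertex (leaf) gets raked, and every maximal path of degree-$2$ vertices gets compressed, exactly as in the Miller--Reif rake-and-compress scheme already invoked in the mid regime (see \Cref{obs:decompProperties}). The key bookkeeping is that when a leaf $v$ attached to $u$ is raked, $u$ records "child $v$ (and everything raked below $v$) hangs off me via edge $(u,v)$", and when a degree-$2$ path $v_1,\dots,v_\ell$ between anchors $a,b$ is compressed, the new super-edge $(a,b)$ stores the ordered list of internal vertices together with the two endpoint edges $(a,v_1)$ and $(v_\ell,b)$. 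This is precisely the pointer/compatibility-tree machinery of \Cref{sec:trafothere} stripped of the \lcl labels, so I can reuse it verbatim.

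The subtle point is the runtime. A naive sequential compression of a degree-$2$ path of length $L$ costs $\Omega(\log L)$, and because new degree-$2$ vertices keep appearing as the rake proceeds, one cannot afford $\Omega(\log L)$ per phase while still finishing in $O(\log n)$ phases total. The fix is the same interleaving used in the high regime: in each phase perform only \emph{one} graph-exponentiation (pointer-doubling) step along the current degree-$2$ paths, so each phase is $O(1)$ rounds, and argue via a potential-function analysis (analogous to \Cref{lem:runtimephaseone}) that after $O(\log n)$ phases every component has collapsed to one vertex. Alternatively --- and more cheaply --- one can first run $\Theta(\log\log n)$ memory-light preprocessing iterations (each using an MIS on the degree-$2$ paths, computed in $O(\log^* N)$ rounds, contracting only MIS vertices) to bring the vertex count down to $O(n/\log n)$, exactly as in \Cref{cor:fewernodes}; after that, one can afford the $O(\log n)$-factor blowup of full pointer doubling on the residual graph while staying within $O(m+n)$ global memory. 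Either way, once a component is a single vertex $z$, declare $z$ the root (for component-stability, pick $z$ to be determined by the component's topology and IDs alone --- e.g.\ when the last contraction merges two candidates, keep the one surviving the deterministic rake/compress tie-break, which depends only on IDs within the component). Then reverse the contraction: in $O(\log n)$ phases, un-compress each super-edge (orienting its internal path toward the anchor nearer $z$) and un-rake each leaf subtree (orienting every raked edge away from the now-rooted side), which recovers the orientation of every original edge toward $z$.

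The main obstacle is the runtime analysis of the interleaved doubling step: just as in the high regime, vertices that are \emph{not yet} degree-$2$ can become degree-$2$ mid-doubling and join an in-progress path contraction, so a straightforward "each path halves its length per phase" argument fails. I would handle this with a carefully chosen potential counting, for each leaf-to-root pointer chain in the current contracted graph, the number of still-uncontracted degree-$2$ vertices on it, and show this potential drops by a constant factor each phase --- mirroring the proof sketched for \Cref{lem:runtimephaseone}. The remaining points --- respecting $O(n^\delta)$ local memory (each vertex stores only $O(1)$ super-edges at a time under the MIS-preprocessing variant, and the compressed path lists are distributed across machines), staying within $O(m+n)$ global memory, and handling disconnected components in parallel (all operations are local, hence component-stable, and the runtime depends only on $\max_i n_i$) --- are routine and follow the same reasoning as in \Cref{lemma:partitioning,thm:high}.
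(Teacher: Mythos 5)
Your plan is sound and, at its core, it is the same algorithmic idea as the paper's: interleave removal of leaves with a single pointer-doubling (the paper calls it path exponentiation) step per phase on the current degree-$2$ paths, argue $O(\log n)$ phases, keep per-node virtual-edge storage constant, and finish with a reverse pass that orients the compressed paths. The differences are in execution. First, the paper never contracts a component to a single vertex: whenever a compressed path acquires a leaf endpoint virtually adjacent to its other endpoint, that path is oriented on the spot (leaf towards the other endpoint) and set aside, and the root is simply the higher-ID endpoint of the final path whose both endpoints are leaves; only the set-aside paths need the $O(\log n)$-round postprocessing, which plays the role of your uncontraction. Second, and more importantly, the paper does not need a high-regime-style potential: it introduces \emph{midpoints} (former branching nodes that just became degree-$2$), which act as temporary endpoints for one phase, and this yields a direct counting argument -- on a shortest virtual path at least a $1/3$ fraction of nodes are already-doubling internal nodes, these shrink by a $3/2$ factor, and together with the removal of all leaves a constant ($2/9$) fraction of all nodes disappears per phase. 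Your proposed potential ``per leaf-to-root pointer chain'' is, as stated, circular (there is no root yet); it can be repaired by rooting at an arbitrary vertex for analysis only, but note that \Cref{lem:runtimephaseone} is proved for the asymmetric rooted process of the high regime, so transferring it to your symmetric contraction is real work rather than a citation -- the midpoint device is precisely the paper's cheaper substitute. Third, memory: the paper stays within $O(m+n)$ words without any preprocessing by having each internal node keep only its two most recent virtual edges (the ``furthest-away'' bookkeeping), whereas your MIS-based $\Theta(\log\log n)$ preprocessing buys an $O(\log n)$ slack per node; that variant works but only addresses memory, not the interleaving analysis, so you still need the termination argument either way. Your treatment of arbitrary degrees, component-stability, and the $O(\log\max_i n_i)$ dependence matches the paper's.
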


The rooting algorithm is an essential subroutine in the high regime, but it may also be of independent interest. We start by introducing the technique of \textit{path exponentiation}, which is used to contract long paths in logarithmic time in a memory efficient way. By leveraging the fact that in trees, at least half of the nodes are of degree $<3$, one could apply path exponentiation in a straightforward manner to root a tree in $O(\log^2 n)$ time. Our main contribution is pipelining this process, reducing the runtime to $O(\log n)$, and solving numerous small challenges that arise along the way.

\subparagraph{Path Exponentiation.}

Let us introduce path exponentiation, which is a logarithmic time technique to compress a path such that upon termination, the endpoints share a virtual edge (defined next). The technique is memory efficient in the sense that in addition to the input edges (i.e., edges incident to a node in the input graph), all nodes in a path keep at most two virtual edges in memory. Consider a path $P$ with \textit{endpoints} $s,t$ and \textit{internal nodes} in $P \setminus \{s,t\}$. Leaf node are considered to be endpoints, and degree-2 nodes are consider to be internal nodes. Nodes in $P$ always keep their input edges in memory. Path exponentiation is initialized by duplicating all edges in $P$ and calling this new path the \textit{virtual graph}. Nodes connected by a virtual edge are called virtual neighbors. A new virtual edge $\{v,w\}$ can be created by node $u$ if there previously existed virtual edges $\{u,v\}$ and $\{u,w\}$. In practice, creating a virtual edge $\{v,w\}$ entails node $u$ informing $v$ the ID of $w$ and $w$ the ID of $v$, i.e., node $u$ \textit{connects} nodes $v$ and $w$. Path exponentiation is executed only on this virtual graph. So henceforth, when talking about neighbors and edges, we refer to virtual neighbors and edges, unless specified otherwise.

In each (path) exponentiation step, endpoints and internal nodes are handled separately. During exponentiation, an internal node $u$ has exactly two neighbors and it can be one of three types: (1) neither neighbor is an endpoint (2) one neighbor is an endpoint and one is an internal node (3) both neighbors are endpoints. In each exponentiation step, an internal node $u$ does the following. 

\begin{itemize}
	\item Node $u$ communicates with its neighbors to learn if it is of type 1, 2 or 3
	\item For each node type:
	\begin{enumerate}
		\item Connects its neighbors $v$ and $w$ with an edge and removes edges $\{u,v\}$ and $\{u,w\}$.
		\item Connects its internal neighbor $v$ to its endpoint neighbor $s$ (or $t$) with an edge. It removes the edge $\{u,v\}$, but keeps the edge $\{u,s\}$ (or $\{u,t\}$) in memory.
		\item Connects its endpoint neighbors $s$ and $t$ with an edge and keeps edges $\{u,s\}$ and $\{u,t\}$ in memory.
	\end{enumerate}
\end{itemize}

If we perform the former exponentiation steps as is, both endpoints will aggregate one edge for each node in the path, which may break local and global memory restrictions. To resolve this issue, we implicitly assume the following scheme. If a node is connected to an endpoint, it keeps track if is the furthest away from said endpoint in the input graph, among all nodes that are connected to the endpoint. Immediately after initializing path exponentiation, the furthest away node is the neighbor of the endpoint in the input graph. During exponentiation:

\begin{itemize}
	\item If a node is the furthest away from an endpoint, when creating a new edge between an endpoint and an internal node, it informs both nodes of the new edge. 
	
	\item If a node is not the furthest away from an endpoint, when creating a new edge between an endpoint and an internal node, it only informs the internal node of the new edge. 
	
	\item If a node is an endpoint, upon receiving a new edge, it drops the old one. 
\end{itemize}

This scheme results in endpoints $s$ and $t$ effectively doing nothing during path exponentiation, except keeping track of the latest edge connecting them to an internal node. Eventually, exponentiation terminates when $s$ and $t$ get connected and all internal nodes have two edges, one for each endpoint. As the shortest distance between $s$ and $t$ in the virtual graph decreases by at least a factor of $3/2$ in each step, path exponentiation terminates in $O(\log n)$ time. Due to the aforementioned memory saving scheme, all nodes in $P$ keep at most two edges in memory, resulting in $O(m+n)$ global memory.

\begin{figure}
	\centering
	\includegraphics[width=0.55\textwidth]{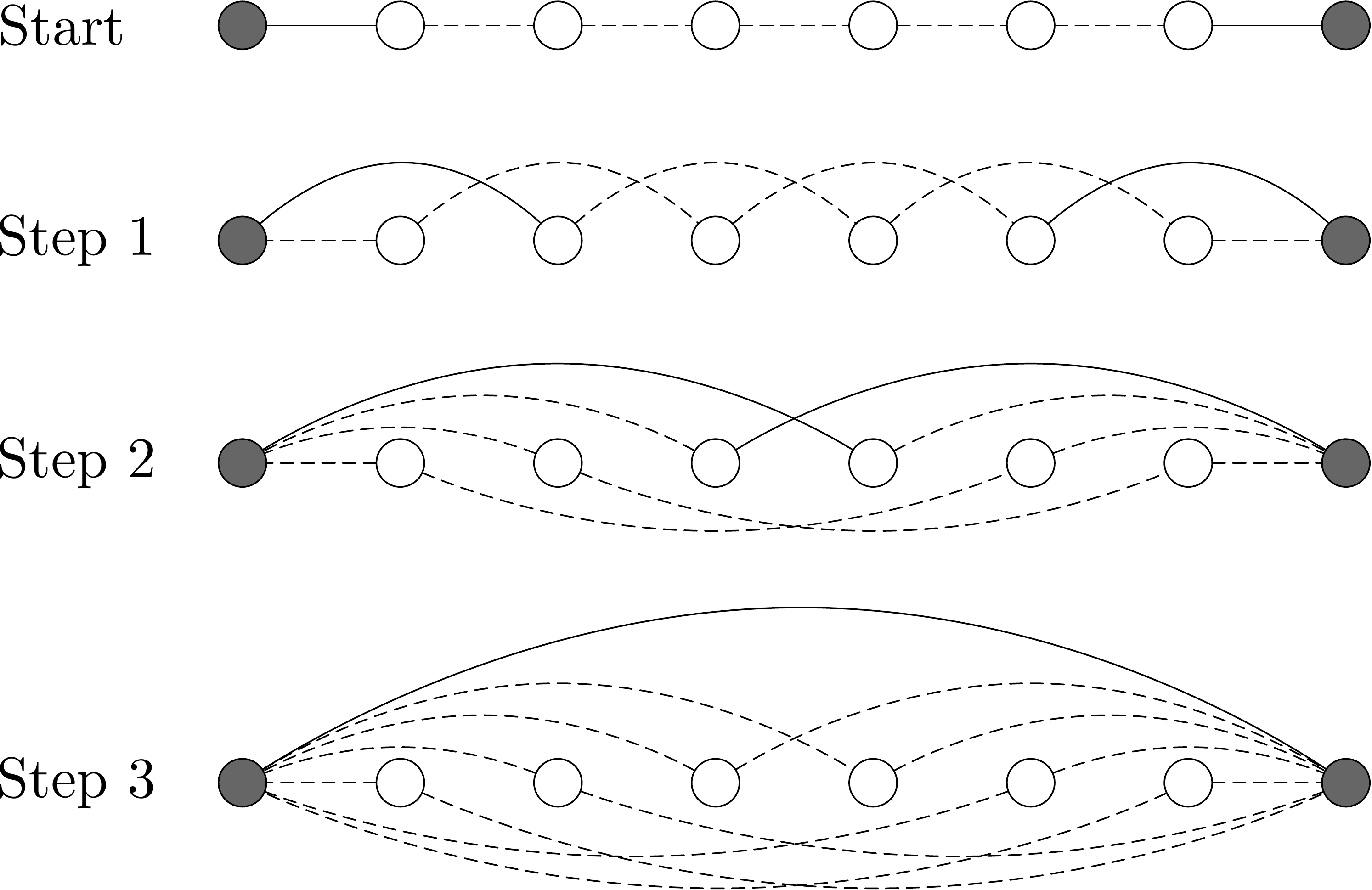}
	\caption{Path exponentiation on a path with 8 nodes. All edges are virtual, and solid edges emphasize the edges that endpoints keep track of. Exponentiation terminates in 3 steps, after which, internal nodes are connected to both endpoints and endpoints are connected by an edge.}
	\label{fig:effexp}
\end{figure}

\subsection{Rooting Algorithm} \label{subsection:treerooting}

The algorithm is split into two parts. First we find the root node (\Cref{subsubsec:findroot}), during which we set a collection of unoriented paths aside. Then, during postprocessing (\Cref{subsubsec:orientingpaths}), we orient said paths in parallel, resulting in a total runtime of $O(\log n)$, local memory $O(n^\delta)$, and global memory $O(m+n)$. We emphasize that we execute the following algorithms on the virtual graph, and not on the input graph. Initially, the virtual graph is an identical copy of the input graph. Also, when talking about neighbors and edges, we refer to virtual neighbors and edges, unless specified otherwise.

\subsubsection{Finding the Root} \label{subsubsec:findroot}

The high level idea is straightforward: perform path exponentiation in all current paths (note that now, endpoints are either leaf nodes or nodes of degree $\geq 3$), and when an endpoint of a path is a leaf that is connected to the other endpoint, we set the path aside (this will become apparent later). For this to work, one major issue must be addressed. An endpoint of degree $\geq 3$ can turn into a degree-$2$ node, extending the current path. The difficulty in this scenario stems from the fact that some nodes are in the middle of path exponentiation and some have not yet started. We resolve this issue by defining nodes that were endpoints of degree $\geq 3$ in the previous phase, but are nodes of degree $2$ in the current phase, as \textit{midpoints}.

\begin{enumerate}
	\item In phase $i$, each node $u$ in a path first identifies if it is an endpoint, a midpoint, or an internal node.
	
	\begin{itemize}
		\item If $u$ is a leaf node that is connected to the other endpoint, we set the path containing $u$ aside$^*$. Note that this also applies to paths of length 1.
		
		\item If $u$ is an endpoint that is not connected to the other endpoint, it does nothing except act as endpoint for the corresponding internal nodes.
		
		\item If $u$ is a midpoint such that both of its neighbors are other midpoints or endpoints, it transforms into an internal node and acts as such henceforth. 
		
		\item If $u$ is a midpoint such that at least one of its neighbors is an internal node, it does nothing except act as endpoint node for the corresponding internal nodes. 
		
		\item If $u$ is an internal node (or a midpoint that has turned into an internal node), it performs path exponentiation.
	\end{itemize}
	
	$^*$Setting path $P$ aside entails leaf node $s$ informing the other endpoint $t$ that the orientation is going to be from $s$ to $t$, so that the algorithm can proceed. The internal nodes of the path do not need to be informed that they are set aside, since they will not do anything for the remainder of the algorithm. Also, instead of $P$, we are actually setting aside $P \setminus t$, since $t$ may be of high degree and has to remain in the graph. Observe that this means that both endpoints of the path we are setting aside are leaves that know the orientation of the path. The edges of the paths remain unoriented until the root is found, after which these paths are oriented in parallel during the postprocessing in \Cref{subsubsec:orientingpaths}. Note that if both endpoints of a path are leaves, the algorithm terminates and the higher ID node is chosen as the root.
\end{enumerate}

\begin{proof}[Proof of \Cref{lemma:rooting}, Finding the root]
	
	Since we only orient paths connected to at least one leaf node, we end up with a valid orientation and a unique root.
	
	Consider endpoints $s$ and $t$ of some path during some phase. The aim of the algorithm is essentially to construct edge $\{s,t\}$. Now consider a current shortest (virtual) path $P_v$ between $s$ and $t$.  Observe that the nodes responsible for eventually creating edge $\{s,t\}$ constitute $P_v$. Hence, all other nodes are redundant and can be thought of as removed.
	
	In order to analyze the number of nodes that are removed in a phase, we want to first count the number of internal nodes in paths such as $P_v$. Since a midpoint is always incident to an internal node (otherwise it would transform into an internal node), at least $1/3$ of all nodes in $P_v$ are internal nodes. This is evident from the ``worst case'' where two consecutive midpoints are followed by one internal node.
	
	Since all internal nodes in $P_v$ perform path exponentiation, the number of internal nodes in $P_v$ drops by a factor of at least $3/2$ in one phase. Observe that in addition to removing at least $1/3$ of the internal nodes in all paths such as $P_v$, we also remove all leaf nodes. Since the average degree of a node in a tree is $<2$, the number of leaves in a tree is larger than the number of nodes of degree $\geq 3$. Hence, in each phase, we remove at least $1/3 \cdot 2/3 = 2/9$ of all of the nodes in the graph, and the algorithm finds the root after $O(\log n)$ phases.
	
	The only memory usage stems from path exponentiation, where in each path, in addition to the input edges (i.e., edges incident to a node in the input graph), all nodes keep at most two virtual edges in memory. Observe that endpoints can partake in multiple path exponentiations. However, since endpoints keep track of only one virtual edge (per path exponentiation), it is easy to see that an endpoint can never have more virtual edges than input edges. Hence, local memory $O(n^\delta)$ and global memory $O(m+n)$ are respected.
\end{proof}

\subsubsection{Postprocessing} \label{subsubsec:orientingpaths}

Before initializing this part, we first we have to ensure that the root finding has terminated, which can be done using the broadcast tree (\Cref{sec:broadcasttree}) in constant time. Then, we can start orienting the paths that were set aside by the root finding algorithm. Recall that they are paths where both endpoints are leaves that know the orientation. We want to orient all edges in these (possibly very long) paths in parallel.

\begin{enumerate}
	\item Consider performing path exponentiation on a path $P$ such that when an edge is created between an endpoint and an internal node, it is oriented according to the orientation information at the endpoint. Upon termination, all nodes orient their input edges according to the orientation of their virtual edges. Note that this requires nodes to keep track which virtual edge corresponds to which edge in the input graph.
\end{enumerate}

\begin{proof}[Proof of \Cref{lemma:rooting}, Postprocessing]
	Observe that an oriented edge is created only by nodes that already have an oriented edge (are of type 2 or 3 in Path Exponentiation) and hence, the orientation will be correct. As we only perform path exponentiation, the runtime is $O(\log n)$. Clearly, this only has a constant overhead compared to path exponentiation. Hence, local memory $O(n^\delta)$ and global memory $O(m+n)$ are respected.
\end{proof}

\begin{proof}[Proof of \Cref{lemma:rooting}, Arbitrary degree and component-stability]
	The extension to arbitrary-degree trees is straightforward, since all meaningful operations are performed on degree-2 nodes. If all edges of a node fit into one machine, nothing changes from the constant-degree case. Otherwise, we can use the broadcast tree structure \Cref{sec:broadcasttree}) for every node $v$ with degree $\omega(n^\delta)$. If some node $u$ wants to communicate with $v$, the communication happens with the machine storing edge $\{u,v\}$.
	
	Component-stability and the compatibility with forests is simple to argue about. The only communication between disconnected components happens in the beginning of postprocessing, when all components wait until the root has been found in all components. Clearly, this does not affect the resulting rooting in each component. It does however affect the runtime, since smaller components may have to wait until larger components have found the root. Hence, the runtime becomes $O(\log \max_i \{n_i\})$.
\end{proof}

\section{The High Regime Algorithm (Analysis and Implementation Details)}\label{sec:apphigh}

\subsection{Analysis of Phase I} \label{sec:appphaseone}
In this section, we will analyze Phase I of algorithm $\fA'$ defined in \Cref{sec:phaseone}. In particular, we will prove that Phase I, i.e., the leaves-to-root phase, terminates after $O(\log n)$ iterations.

\paragraph*{Properties of the Phase I Algorithm.}
We start by collecting some properties of the defined process.
They imply, in particular, that the update rules are well-defined.

\begin{observation}\label{prop:atmostone}
	Each node has, at any point in time, at most one outgoing active pointer.
\end{observation}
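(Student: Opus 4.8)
The plan is to prove the statement by induction on the iteration index $i$, carrying along the invariant that, for every node $x$, the active pointer set $\aps_i$ contains at most one pointer starting in $x$ (and, as a degenerate case, none starting in the root). The base case $i=0$ is immediate: $\aps_0=\ps_0$ is just the edge set of the rooted compatibility tree, which contains exactly one pointer per non-root node, namely the one from that node to its parent, and none starting in $\rooot$. Moreover, no pointer can ever start in $\rooot$, since a pointer $(a,b)$ requires $b$ to be a proper ancestor of $a$, which is impossible for $a=\rooot$. So from here on I would focus on the inductive step.

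For the inductive step, I would fix a node $x$ and ask which pointers starting in $x$ can enter $\aps_i$ during iteration $i$. By inspection of the update rules, $\aps_i$ is initialized empty and is only enlarged by the rule that processes active $2$-nodes (rule~\ref{step2}) and the rule that processes $3$-nodes and the root (rule~\ref{step3}); the remaining steps only rename the pointers $p(\cdot)$ or modify $\ps_i$. In rule~\ref{step2}, the pointer added for an active $2$-node $v$ and an incoming active pointer $p$ is $\merge(p,p(v))$, which by \Cref{def:merge} starts in the start node of $p$; hence, for it to start in $x$, the pointer $p$ must be an outgoing active pointer of $x$ in $\aps_{i-1}$, which by the induction hypothesis is unique (if it exists at all), say $p=(x,v)$, and then $v$ is forced. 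In rule~\ref{step3}, the pointers added are elements of $\aps_{i-1}$ themselves, so a pointer added there that starts in $x$ must again be that same unique outgoing active pointer $p=(x,v)$ of $x$, and it is added only if $v$ is a $3$-node or the root. The decisive remaining point is that $v$ cannot be simultaneously an active $2$-node and a $3$-node or the root: by \Cref{def:123node}, a $2$-node has all incoming active pointers sharing the same $\last$-edge, whereas a $3$-node has two incoming active pointers with distinct $\last$-edges, and the root is explicitly excluded from being a $2$-node. Therefore at most one of the two rules can contribute a pointer starting in $x$, and it contributes exactly one; and if $x$ has no outgoing active pointer in $\aps_{i-1}$ at all, then by the same reasoning it acquires none in $\aps_i$. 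This closes the induction.

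I do not expect a genuine obstacle: the statement is essentially a bookkeeping claim about the update rules, and the only non-routine ingredient is the mutual exclusivity of the relevant node types, which falls out directly from \Cref{def:123node}. The one place to be slightly careful is to set up the induction so that it simultaneously covers nodes with no outgoing active pointer and the root, and to note that $p(v)$ is well-defined whenever $v$ is active (which is built into the definition of ``active'').
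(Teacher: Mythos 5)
Your proof is correct and follows essentially the same route as the paper's: the paper's one-line argument is exactly the invariant you establish by induction, namely that each active pointer is, in every iteration, either merged into a single larger pointer with the same start node (when its endpoint is an active $2$-node), retained unchanged (when its endpoint is a $3$-node or the root), or removed. Your write-up just makes the induction and the mutual exclusivity of the two update rules explicit, which the paper leaves implicit.
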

\begin{proof}
	This directly follows from the fact that, in the beginning, each node has at most one outgoing active pointer, and in each iteration, each active pointer is either merged into a larger pointer (if it points to a $2$-node), or left unchanged or removed.
\end{proof}

\begin{observation}\label{prop:state}
	If a node is inactive, it can never become active again.
	If a node is an $X$-node, where $X \in \{ 1, 2, 3 \}$, it will never in the further course of the process become a $Y$-node, where $Y\in \{ 1, 2, 3 \}$ and $Y > X$.
\end{observation}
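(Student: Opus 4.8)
The plan is to derive both assertions from a single structural fact about how the set of active pointers changes within one iteration, and then to conclude by a short induction on the iteration index. The structural fact is this: inspecting algorithm $\aone$, iteration $i$ starts by resetting $\aps_i$ to $\emptyset$, and the only two update rules that ever insert a pointer into $\aps_i$ are the merge step performed at active $2$-nodes and the step performed at $3$-nodes and the root in which the surviving incoming active pointers are re-added (the remaining rules only relabel the pointers $p(\cdot)$, update the tuples $\comp(\cdot)$, or modify $\ps_i$ rather than $\aps_i$). Hence every pointer $q \in \aps_i$ is of one of two kinds: (a) $q \in \aps_{i-1}$, re-added verbatim, or (b) $q = \merge(p,p')$ for two pointers $p,p' \in \aps_{i-1}$. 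In case~(b), \Cref{def:merge} moreover tells us that $q$ \emph{starts} where $p$ starts, \emph{ends} where $p'$ ends, and satisfies $\last_q = \last_{p'}$. Writing $\mathrm{St}_i$ (resp.\ $\mathrm{En}_i$) for the set of nodes that are the start (resp.\ end) of some pointer in $\aps_i$, we thus get $\mathrm{St}_i \subseteq \mathrm{St}_{i-1}$ and $\mathrm{En}_i \subseteq \mathrm{En}_{i-1}$.

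For part~1, note that a node $u \neq \rooot$ is active in iteration $i$ iff it has exactly one---hence, by \Cref{prop:atmostone}, at least one---outgoing pointer in $\aps_{i-1}$, that is, iff $u \in \mathrm{St}_{i-1}$, while $\rooot$ is always active. So if $u$ is inactive in iteration $i$ then $u \neq \rooot$ and $u \notin \mathrm{St}_{i-1}$; since $\mathrm{St}_j \subseteq \mathrm{St}_{i-1}$ for every $j \geq i-1$, the node $u$ stays outside every $\mathrm{St}_j$ and is therefore inactive in all later iterations.

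For part~2, recall that by \Cref{def:123node} the type of a node $u \neq \rooot$ in iteration $i$ is governed by the $\last$-values of the pointers of $\aps_{i-1}$ incoming to $u$: no incoming pointer makes $u$ a $1$-node, all incoming pointers sharing one $\last$-value makes $u$ a $2$-node, and two incoming pointers with different $\last$-values make $u$ a $3$-node. By the structural fact, every pointer of $\aps_i$ incoming to $u$ is either (a) a pointer of $\aps_{i-1}$ incoming to $u$, or (b) $\merge(p,p')$ with $p'$ a pointer of $\aps_{i-1}$ incoming to $u$ and $\last_{\merge(p,p')}=\last_{p'}$; in both cases its $\last$-value equals that of \emph{some} pointer of $\aps_{i-1}$ incoming to $u$. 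Consequently, if $u$ is a $1$-node in iteration $i$ there are no such pointers, so $u$ has no incoming pointer in $\aps_i$ and is again a $1$-node in iteration $i+1$; and if $u$ is a $2$-node in iteration $i$, all its incoming pointers in $\aps_{i-1}$ share one $\last$-value, so all its incoming pointers in $\aps_i$ do too, whence $u$ is a $1$-node or a $2$-node in iteration $i+1$. As there is nothing to prove for $3$-nodes, an immediate induction over the iterations shows that the type of a node never increases, which is exactly the second assertion.

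The main obstacle is not a mathematical one: it is the bookkeeping behind the structural fact---checking that, across all five update rules, nothing ever enters $\aps_i$ except pointers already in $\aps_{i-1}$ and merges $\merge(p,p')$ with $p,p'\in\aps_{i-1}$, and that $\merge$ relates the start, end, and $\last$ of the new pointer to those of its two constituents exactly as in \Cref{def:merge}. Put differently, one has to verify that no update rule ever turns a node into the start of an active pointer, or into the end of an active pointer, unless it already had that role in the previous iteration; this is precisely the property driving both monotonicity statements, and a few degenerate configurations (e.g.\ a $2$-node that, upon performing a merge, simultaneously loses its incoming pointers and becomes inactive) should be sanity-checked but are all consistent with it.
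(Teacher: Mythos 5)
Your proof is correct and follows essentially the same route as the paper, whose one-line argument invokes exactly the facts you spell out: that new active pointers arise only via $\merge(\cdot,\cdot)$ (your steps (a)/(b)), and that the definitions of $\merge$ and of $1$-, $2$-, $3$-nodes then force the start/end/$\last$ monotonicity. Your write-up is simply a fleshed-out version of that argument, with the bookkeeping over the update rules made explicit.
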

\begin{proof}
	The observation follows from the definitions of the $1$-, $2$-, and $3$-nodes, the definition of $\merge(\cdot,\cdot)$, and the fact that the only new active pointers that are produced during our process are created via $\merge(\cdot,\cdot)$.
\end{proof}

\begin{observation}\label{obs:undec}
	For any $2$-node $v$ with relevant in-edge $e'$ and outgoing edge $e''$, we have $M^e(v) = \undec$ if and only if $e \in \{ e', e'' \}$. 
\end{observation}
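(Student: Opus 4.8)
The plan is to prove the equivalence by an induction on the iteration count, tracking, for \emph{every} node, which entries of its $\comp$-tuple still equal $\undec$, and then specializing to an active $2$-node $v$ (as used in \Cref{def:merge}, the ``outgoing edge'' of $v$ is $e'' := \first_{p(v)}$, which is well defined precisely because $v$ is active). Two elementary observations about the update rules of $\aone$ set the stage. First, each entry $\comp^e(u)$ starts as $\undec$ and is modified only by rule~\ref{itemb}; that rule is invoked only when $u$ is a $3$-node or the root $\rooot$, only for relevant in-edges $e$ of $u$, and it always changes the entry from $\undec$ to a genuine subset of $\sout$ and never back. In particular, the entry of $\comp(v)$ on the outgoing edge $e''$ is never touched (it is never an in-edge), so $\comp^{e''}(v) = \undec$ throughout; and whenever $\comp^e(v)$ is a set, $e$ must be an in-edge of $v$ that was, at some earlier iteration, processed as a ``yes'' relevant in-edge of $v$ while $v$ was a $3$-node or the root.

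Second, I would argue that the entry on the relevant in-edge $e'$ is still $\undec$. By definition of a $2$-node, $v$ currently has an incoming active pointer $p$ with $\last_p = e'$. If $\comp^{e'}(v)$ were already a set, then by the first observation it was changed at an iteration $j_0$ at which $v$ was a $3$-node (it cannot have been $\rooot$, since $\rooot$ never becomes a $2$-node); there, the incoming active pointers with last edge $e'$ were \emph{not} re-added, since rule~\ref{itema} re-adds only pointers on ``no'' in-edges. From $j_0$ on, no active pointer with last edge $e'$ ending at $v$ can exist: such a pointer could only be produced by a re-add on a ``no'' in-edge (impossible, as $e'$ is settled) or by a $\merge(\cdot,\cdot)$ at a $2$-node $u$ whose outgoing pointer $p(u)$ ends at $v$ with $\last_{p(u)} = e'$ --- but then $p(u)$ is itself already an incoming active pointer of $v$ with last edge $e'$ one iteration earlier, so an easy induction on the iteration rules this out. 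This contradicts the existence of $p$, so $\comp^{e'}(v) = \undec$.

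Third, I would show that $\comp^{e_j}(v)$ is a \emph{set} for every in-edge $e_j \neq e'$ of $v$. Since all incoming active pointers of the $2$-node $v$ have last edge $e'$, there is currently no active pointer with last edge $e_j$; yet initially (in $\ps_0 = \aps_0$) there is one, namely the pointer on the edge $e_j$ itself. Let $j'$ be the first iteration at which this changes. At iteration $j'$ the vanishing pointer ends at $v$, so $v$ is neither a $1$-node nor $\rooot$ there, hence $v$ is a $2$-node or a $3$-node at iteration $j'$. If $v$ is a $2$-node at $j'$, then its relevant in-edge there equals $\last_p = e_j$; but $v$ is also a $2$-node at the current iteration with relevant in-edge $e' \neq e_j$, while Observation~\ref{prop:state} (node types never increase, inactive nodes stay inactive, and hence $1$-nodes stay $1$-nodes) forces $v$ to be an active $2$-node at \emph{every} iteration in between, with an unchanging relevant in-edge --- a contradiction. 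Therefore $v$ is a $3$-node at iteration $j'$; then $e_j$ is a relevant in-edge of $v$ there, and since its pointers are not re-added by rule~\ref{itema}, rule~\ref{itemb} must process $e_j$ as a ``yes'' edge and thus set $\comp^{e_j}(v)$. Combining the three steps, $\comp^e(v) = \undec$ if and only if $e \in \{e', e''\}$.

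The step I expect to be the main obstacle is the bookkeeping around a node $v$ that transitions from a $3$-node to a $2$-node: showing that, once $\comp^{e'}(v)$ has been fixed, no active pointer with last edge $e'$ can ever reappear at $v$, and that the relevant in-edge of a node cannot change while it remains an active $2$-node. Both facts require tracing exactly which pointers rules~\ref{step2}, \ref{itema}, and~\ref{itemb} insert into $\aps_i$ in each iteration, together with the auxiliary monotonicity statements that accompany Observation~\ref{prop:state} (an inactive node stays inactive, a $1$-node stays a $1$-node). The remaining parts --- the behaviour of the outgoing-edge entry and the ``yes''-processing case --- are direct from the definitions.
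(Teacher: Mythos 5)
Your overall route is the same as the paper's: the entry on the outgoing edge is never touched; for in-edges other than the relevant in-edge, the first iteration in which their pointers ending at $v$ vanish must be a ``yes''-processing in step~\ref{itemb} (the alternative that $v$ was a $2$-node then being excluded via \Cref{prop:state} and the unchanging relevant in-edge), so those entries are set; and for the relevant in-edge $e'$ one argues that no ``yes''-processing of $e'$ at $v$ can ever have occurred. Your first and third steps are correct elaborations of exactly this (very terse) argument.

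The genuine gap is in your second step, at the base case of the ``easy induction''. Suppose $\comp^{e'}(v)$ was set at iteration $j_0$, when $v$ was a $3$-node processing $e'$ as ``yes''. For $t>j_0$ your inductive step is fine, but at $t=j_0$ the argument ``$p(u)$ would itself be such a pointer one iteration earlier'' excludes nothing: at the end of iteration $j_0-1$ active pointers with last edge $e'$ ending at $v$ \emph{do} exist (at least the leaf pointer $(w,v)$ that triggered the ``yes''). Hence a $2$-node $u$ with an incoming pointer and with $p(u)$ ending at $v$, $\last_{p(u)}=e'$, is not ruled out at that moment, and its step~\ref{step2} merge in iteration $j_0$ would re-populate the set of $e'$-pointers ending at $v$ in the very iteration in which $\comp^{e'}(v)$ is set --- which is precisely the scenario that would later make $v$ a $2$-node with relevant in-edge $e'$ and a non-$\undec$ entry. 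What you must rule out is the coexistence, at the end of iteration $j_0-1$, of the leaf pointer $(w,v)$ and such a $2$-node $u$; this does not follow from \Cref{prop:state} and the update rules alone. It does follow from the structural statements proved for Phase~I: by \Cref{prop:unweird} (and \Cref{prop:three}, which places the relevant in-edge of $u$ on the path of $(w,v)$), $u$ lies strictly inside the path from $w$ to $v$ and its incoming pointer $(y,u)$ starts at a strict ancestor $y$ of the leaf $w$ (with $y\neq w$ by \Cref{prop:atmostone}); then \Cref{lem:propcombi} applied to the active pair $(w,y)$ forces the target of $y$ to be an ancestor of (or equal to) the target $v$ of $w$, whereas $u$ is a strict descendant of $v$ --- a contradiction. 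Since \Cref{prop:three}, \Cref{prop:unweird} and \Cref{lem:propcombi} are established without using the present observation, invoking them here is not circular; the paper's own one-paragraph proof silently relies on the same fact, but as written your ``easy induction'' is the one step that fails.
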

\begin{proof}
	Due to the design of the update rules, if $e$ is an outgoing edge (for $v$), then $M^e(v)$ remains $\undec$ indefinitely, and if $e$ is an incoming edge, then $M^e(v)$ is set to some label set $L \neq \undec$ in the first iteration at the end of which there is no active pointer $p$ ending in $v$ and satisfying $\last_p = e$.
	(Note that we use here that there cannot have been a merge of two pointers starting and ending in $v$ so far since otherwise $v$ would not have any incoming pointers and could not be a $2$-node, by Observation~\ref{prop:state}.)
	Since the design of the update rules ensures that once there is no active pointer $p$ ending in $v$ and satisfying $\last_p = e$, this property does not change thereafter, we obtain the lemma statement, by the definition of a relevant in-edge.
\end{proof}

\begin{observation}\label{prop:three}
	At the end of each iteration, it holds that for each active pointer $(u, v)$, the unique path from $u$ to $v$ does not contain a $3$-node, except possibly $u$ and/or $v$.
	Also, if the path from $u$ to $v$ contains a $2$-node $w \neq u$, then the relevant in-edge of $w$ is the edge incoming to $w$ that lies on this path.
\end{observation}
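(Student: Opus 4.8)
The plan is to prove the statement by induction on the iteration number $i$, viewing it as an invariant of the active pointer set $\aps_i$. For $i=0$ we have $\aps_0=E(G)$, so every active pointer $(u,v)$ is an edge whose path from $u$ to $v$ is the single edge $(u,v)$: this path has no internal node, so the first claim is vacuous, and for the endpoint $v$ the second claim holds because $(u,v)\in\aps_0$ ends in $v$ with $\last_{(u,v)}=(u,v)$, so if $v$ is a $2$-node its relevant in-edge is by definition $(u,v)$, which is the unique path edge incoming to $v$.

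For the inductive step, assume the invariant at the end of iteration $i-1$. Every pointer in $\aps_i$ arises in one of two ways: (a)~it is carried over unchanged from $\aps_{i-1}$ by the ``no''-answer part of update rule~\ref{step3}, which can only happen when it ends at a node $b$ that is a $3$-node or the root along a relevant in-edge whose answer is ``no''; or (b)~it equals $\merge(p,p(u))$ for some active $2$-node $u$, where $p=(a,u)\in\aps_{i-1}$ is an incoming active pointer of $u$ and $p(u)=(u,b)\in\aps_{i-1}$ is its outgoing pointer (update rule~\ref{step2}). In case~(a) the path is unchanged; in case~(b) the new path from $a$ to $b$ is the concatenation at $u$ of the old path from $a$ to $u$ and the old path from $u$ to $b$, so these sub-paths are a prefix and a suffix of the new path and the internal nodes of the new path are exactly those of the two sub-paths together with $u$. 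The ingredients I will use are: node types only decrease over time (\Cref{prop:state}); each node has at most one outgoing active pointer (\Cref{prop:atmostone}); and, for a $2$-node $v$, $\comp^e(v)=\undec$ precisely on its relevant in-edge and its outgoing edge (\Cref{obs:undec}), which is exactly why \Cref{def:merge} combines the two pointers along those two edges of $u$.

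The crux is an auxiliary ``frozen relevant in-edge'' property: once $w$ is a $2$-node, by \Cref{prop:state} it stays a $2$-node or becomes a $1$-node, and as long as it is a $2$-node its relevant in-edge never changes. To prove it one maintains, alongside the main statement, the sub-invariant that every $2$-node is active (a node can only lose its outgoing active pointer when it is a leaf whose pointer gets absorbed at a $3$-node or the root), so a $2$-node $w$ with relevant in-edge $e$ is processed by update rule~\ref{step2}, which replaces each incoming active pointer $p$ of $w$ by $\merge(p,p(w))$, a pointer no longer ending at $w$; the only pointers that can newly end at $w$ have the form $\merge(\cdot,p(z))$ where $p(z)=(z,w)$ is the outgoing pointer of an active $2$-node $z$, and since $p(z)$ was an incoming active pointer of the $2$-node $w$ it has $\last_{p(z)}=e$, whence by \Cref{def:merge} the new pointers also have last edge $e$. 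Granting this, case~(a) is immediate: each internal node of the unchanged path is not a $3$-node at the end of $i-1$ by the inductive hypothesis, hence not at the end of $i$ by \Cref{prop:state}, and if it is a $2$-node its relevant in-edge is unchanged and so still equals the path edge incoming to it; the endpoint $b$ is handled directly, since the carried-over pointer witnesses that every incoming active pointer of $b$ has last edge equal to the path edge incoming to $b$, which (if $b$ is a $2$-node) is exactly its relevant in-edge. In case~(b), the internal nodes inherited from the two sub-paths are handled the same way via the inductive hypothesis applied to $p$ and $p(u)$ (the relevant path edges do not change because the sub-paths sit inside the new path), the merge node $u$ is a $2$-node at the end of $i-1$ hence not a $3$-node at the end of $i$, and by \Cref{obs:undec} and \Cref{def:merge} the edge of the new path incoming to $u$ is precisely $u$'s relevant in-edge $\last_p$, which stays its relevant in-edge by the frozen property; the endpoint $b$ is handled as in case~(a).

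I expect the main difficulty to lie in making the frozen-property step fully rigorous: one must pin down exactly which pointers can end at a given node after an iteration, and in particular handle the case where an entire run of consecutive degree-$2$ nodes along a path is merged in the same iteration, so that a node is simultaneously absorbed into a new long pointer and receives a fresh incoming pointer from a merging child. Verifying that such a fresh pointer carries the same last edge (so the relevant in-edge cannot drift) and that no $2$-node is left inactive with an orphaned incoming pointer is where most of the care goes; these checks rely on \Cref{prop:atmostone,prop:state,obs:undec} together with the small additional facts that a $2$-node has exactly one incoming active pointer and that at most one active pointer carries any given last edge into a given node.
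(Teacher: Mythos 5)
Your overall strategy coincides with the paper's: the paper proves \Cref{prop:three} exactly by noting that the invariant holds initially and is preserved because new pointers arise only from merges performed by active $2$-nodes whose relevant in-edge lies on the path of the pointer they produce, invoking \Cref{prop:state} together with the fact that a $2$-node never changes its relevant in-edge. Your cases (a)/(b) are a fleshed-out version of this, and the core induction is sound.

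However, several auxiliary claims that you explicitly say your rigor will rest on are false, and one of them is circular. The sub-invariant ``every $2$-node is active'' is precisely \Cref{lem:noin23}, which the paper proves \emph{after} \Cref{prop:three} and \emph{using} it (together with \Cref{lem:propcombi} and \Cref{prop:unweird}); your one-line justification --- that only a leaf can lose its outgoing active pointer --- is incorrect: when a relevant in-edge of a $3$-node or the root gets answer ``yes'', \emph{every} node whose active pointer ends there with that last edge becomes inactive, leaf or not (in the paper's running example, iteration $3$ deactivates three nodes, not all of them leaves). Likewise, the two ``small additional facts'' in your last paragraph are both false: a $2$-node can have several incoming active pointers, and several active pointers can simultaneously carry the same last edge into the same node. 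For instance, on a path $a\!-\!b\!-\!c\!-\!w$ hanging from a node $w$ of degree at least $3$, after one or two iterations the pointers $(b,w)$ and $(c,w)$ (and later $(a,w)$) are all active with common last edge $(c,w)$: step~\ref{itema} carries over \emph{all} pointers with a given last edge, and a merging $2$-node forwards \emph{each} of its incoming pointers. Fortunately, none of these claims is needed. The frozen-relevant-in-edge property follows from two facts only: (i) a pointer can newly come to end at a node $w$ only through a step~\ref{step2} merge at an active $2$-node $z$ with $(z,w)\in\aps_{i-1}$, and the new pointer's last edge is $\last_{(z,w)}$, which by the very definition of a $2$-node equals $w$'s relevant in-edge whenever $w$ was a $2$-node at the end of iteration $i-1$ (whether or not $w$ is active, since step~\ref{step3} can add pointers ending at $w$ only if $w$ is a $3$-node or the root); and (ii) by \Cref{prop:state}, once $w$ drops to a $1$-node it can never be a $2$-node again. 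Rewriting your frozen-property step this way removes both the false statements and the circularity, after which your induction matches the paper's (much terser) argument.
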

\begin{proof}
	These statements follow since they hold in the beginning of the process and do not change during the process as the merge operation is only ``performed'' by active $2$-nodes (whose relevant in-edge will lie on the path corresponding to the pointer(s) they produce).
	Here, we implicitly use Observation~\ref{prop:state} and the fact that a $2$-node never changes its relevant in-edge (which follows with an analogous argument to the one used in the proof of Observation~\ref{prop:state}).
\end{proof}

\begin{observation}\label{prop:unweird}
	Let $p = (u, v)$ and $p' = (w, x)$ be two active pointers at an end of an iteration, and assume that the unique paths from $u$ to $v$ and from $w$ to $x$ intersect in at least one edge.
	Then there is a directed path that contains all of these four nodes.
\end{observation}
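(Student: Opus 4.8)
The plan is to use the fact that every active pointer $(a,b)$ corresponds to a \emph{vertical} path in the rooted tree $G'$, namely the unique path from $a$ to $b$, on which $b$ is the unique topmost node and every node is a weak ancestor of $a$. Let $P_1$ denote the path from $u$ to $v$ and $P_2$ the path from $w$ to $x$, let $e^*$ be an edge shared by $P_1$ and $P_2$, and let $c$ be its lower endpoint (so its other endpoint is $\mathrm{par}(c)$); then $c$ and $\mathrm{par}(c)$ lie on $P_1$ and on $P_2$. First I would observe that $v$ and $x$ are both weak ancestors of $\mathrm{par}(c)$, hence comparable (the weak ancestors of any node form a chain); by relabelling the two pointers, assume $v$ is a weak ancestor of $x$. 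Then $x$ lies on the ancestor chain of $c$ at a depth in $[\mathrm{depth}(v),\mathrm{depth}(c))$, hence on the upward part of $P_1$, so $u$, $v$, $x$ already lie on the single vertical path $P_1$. Moreover $v$ is a weak ancestor of all of $u$, $w$, $x$, and $c$ is a weak ancestor of both $u$ and $w$. It therefore suffices to show that $u$ and $w$ are comparable: then all four nodes lie on the (directed) path from the deeper of $u$ and $w$ up to $v$.

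The core of the argument is to rule out that $u$ and $w$ are incomparable. Suppose they are. Then neither equals $c$ (otherwise it would be a weak ancestor of the other) nor $v$ or $x$ (which are strict ancestors of $c$), so $c$ is a node \emph{strictly internal} to both active pointers $P_1$ and $P_2$. By \Cref{prop:three}, $c$ is not a $3$-node; and since a node becomes strictly internal to a pointer only by being \emph{merged over}, which by \Cref{def:merge} happens only at an active $2$-node, and since (by the update rules, using \Cref{prop:state} and \Cref{prop:atmostone}) a node that has been merged over no longer has an outgoing active pointer and hence is never merged over again, $c$ was an active $2$-node at the unique iteration in which it became strictly internal. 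At that iteration $c$ has a fixed relevant in-edge $e(c)=\{c,c^*\}$ pointing to a specific child $c^*$, every incoming active pointer at $c$ has $\last$ equal to $e(c)$, and the first edge of $c$'s (unique) outgoing pointer is $\{c,\mathrm{par}(c)\}$; hence each pointer produced by merging over $c$, and therefore each active pointer with $c$ strictly internal, passes through $c$ entering via $e(c)$ and leaving via $\{c,\mathrm{par}(c)\}$. In particular both $P_1$ and $P_2$ run from $c$ down to the same child $c^*$, so they share the edge $\{c,c^*\}$ as well.

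Now I would finish by descent along $P_1$: the same dichotomy applies at $c^*$. If $c^*\in\{u,w\}$, or if $u$ and $w$ are comparable, we are done (a contradiction); otherwise $c^*$ is strictly internal to both $P_1$ and $P_2$, so by the previous paragraph the two paths again leave $c^*$ downward through one common child $c^{**}$, and the edge $\{c^*,c^{**}\}$ is shared as well. Iterating yields an infinite, strictly descending chain $c=c^{(0)},c^{(1)},c^{(2)},\dots$ of distinct nodes, all strictly internal to the \emph{finite} path $P_1$, which is impossible. Hence $u$ and $w$ are comparable, completing the proof. I expect the main obstacle to be the middle step: establishing cleanly that, over the whole execution of Phase~I, every active pointer having a fixed node $c$ in its interior passes through $c$ via the same two edges. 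This requires tracking how \Cref{def:merge} propagates $\first$ and $\last$, and the fact---provable from the update rules together with \Cref{prop:state}---that once $c$ has been merged over it can neither start a new pointer nor be merged over again. With that structural lemma in hand, the remainder is bookkeeping.
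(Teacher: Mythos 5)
Your reduction of the statement to ``$u$ and $w$ are comparable'' is fine, and the overall plan (every active pointer that has a fixed node $c$ in its interior must traverse $c$ through one fixed pair of edges, then derive a contradiction) is essentially the mechanism the paper uses as well -- except that the paper applies it once, at the lowest common ancestor $y$ of $u$ and $w$ (if $u,w$ were incomparable, both paths would have to pass through $y$ entering via \emph{different} children, which is impossible once $y$ has stopped being a $3$-node), whereas you descend edge by edge from the shared edge; your descent is correct but could be short-circuited the same way. The genuine gap is in the supporting fact you flag as the crux: it is \emph{not} true that once $c$ has been merged over it can neither start a new pointer nor be merged over again. When a $2$-node $c$ performs its merges in step~\ref{step2} of iteration $i$, its outgoing pointer $p(c)$ is not necessarily deactivated: if $p(c)$ ends at a $3$-node or at the root, rule~\ref{itema} can re-add $p(c)$ to $\aps_i$ in that very iteration (steps~\ref{step2} and~\ref{step3} are executed in parallel); moreover, $2$-nodes below $c$ whose outgoing pointers end at $c$ create, by their own merges, fresh active pointers ending at $c$. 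Concretely, on a path $a-b-c-d$ with $d$ a $3$-node: in iteration $1$, $b$ creates $(a,c)$, $c$ creates $(b,d)$, and $(c,d)$ is kept active by $d$; in iteration $2$, $c$ is again an active $2$-node and merges $(a,c)$ with $(c,d)$. So $c$ can be merged over in several iterations, and your argument as written does not exclude that a later merge at $c$ uses a different in-edge, which is exactly what your structural lemma must rule out.

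The gap is repairable. What you need is: from the first iteration at the end of which $c$ is a $2$-node onward, \emph{every} active pointer ending at $c$ has the same $\last$-edge $e(c)$. This follows by induction over iterations: new pointers ending at $c$ arise only as $\merge(q,p(z))$ at $2$-nodes $z$ whose outgoing pointer $p(z)$ ends at $c$ (so $\last_{p(z)}=e(c)$ by induction, and merging preserves the $\last$-edge), while no old pointer ending at $c$ is ever retained by rule~\ref{itema}, because that rule only acts at $3$-nodes and the root, and by \Cref{prop:state} $c$ can never again be a $3$-node (nor is it the root). With this in place, every merge ever performed at $c$ enters via $e(c)$ and leaves via the edge to $c$'s parent, your structural lemma holds, and the rest of your argument (the checks that the descending nodes avoid $\{u,v,w,x\}$ and the finite-descent contradiction) goes through.
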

\begin{proof}
	Suppose for a contradiction that this is not the case, which implies that no directed path contains both $u$ and $w$, and let $y$ denote the lowest common ancestor of $u$ and $w$.
	In the beginning, node $y$ is an active $3$-node.
	At the point when $y$ stops being an active $3$-node (which has to happen due to Observation~\ref{prop:three}), the update rules ensure that there is at most one incoming edge $e$ at $y$ such that there exist an active pointer $(z, a)$ such that the unique path from $z$ to $a$ contains $e$.
	Now, the contradiction follows from an analogous argument to the one used in the proof of Observation~\ref{prop:three}.
\end{proof}

\begin{lemma}\label{lem:propcombi}
	At the end of each iteration it holds that (a) for each active node $u$, all nodes on the path from $u$ to $r$ are active, and (b) for any two active nodes $u, v$ such that $v$ is an ancestor of $u$, the node $x$ node $v$ points to is an ancestor of the node $w$ node $u$ points to, or $w = x$.
\end{lemma}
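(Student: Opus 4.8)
I would prove Lemma~\ref{lem:propcombi} by induction on the iteration index $i$, with induction hypothesis ``(a) and (b) hold at the end of iteration $i-1$''. Throughout, for an active node $v\neq\rooot$ write $\tau(v)$ for the node $v$ points to (the endpoint of $p(v)$), and read ``the path from $u$ to $\rooot$'' in~(a) as the chain $u,\tau(u),\tau(\tau(u)),\dots,\rooot$. The base case $i=0$ is immediate: the initial pointer set consists of all edges, each pointing from child to parent, and $\rooot$ is active by definition, so every node is active and (a) is trivial, while (b) holds because $\tau$ is the $G'$-parent map and the parent of a strict $G'$-ancestor of $u$ is a $G'$-ancestor of $\tau(u)$.

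For the inductive step I would first establish a \emph{characterization} of the end of iteration $i$ in terms of its end of iteration $i-1$: for $u\neq\rooot$, $u$ is active at the end of iteration $i$ iff it was active at the end of iteration $i-1$ and, writing $z:=\tau(u)$ for the value at the end of iteration $i-1$, either $z$ is a $2$-node, or $z$ is a $3$-node/$\rooot$ whose relevant in-edge $\last_{p(u)}$ keeps final answer ``no'' (after the forced flip of update rule~\ref{step3}); moreover the new value of $\tau(u)$ is then $z$ itself if $z$ is a $3$-node/$\rooot$, and the old $\tau(z)$ if $z$ is a $2$-node. This is obtained by unwinding update rules~\ref{step2}--\ref{itemb}: by Observation~\ref{prop:atmostone} everything is governed by the single pointer $p(u)=(u,z)$; the node $z$ has an incoming active pointer and so is not a $1$-node; and by~(a) of the induction hypothesis $z$ is active at the end of iteration $i-1$, so $p(z)$ exists when $z$ is a $2$-node. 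The pointer $p(u)$ is then merged forward by update rule~\ref{step2} exactly when $z$ is a $2$-node, re-added unchanged by update rule~\ref{itema} exactly when $z$ is a $3$-node/$\rooot$ with ``no'' at $\last_{p(u)}$, and discarded otherwise. Let $\sigma$ be the map sending $z$ to the old $\tau(z)$ if $z$ is a $2$-node and to $z$ otherwise, so the new $\tau(v)=\sigma(\text{old }\tau(v))$ for every node $v$ active at the end of iteration $i$.

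Statement~(b) at the end of iteration $i$ then reduces to monotonicity of $\sigma$ along the targets that occur. If $v$ is a $G'$-ancestor of $u$ and both are active at the end of iteration $i$, then (Observation~\ref{prop:state}) both were active at the end of iteration $i-1$, so by~(b) of the induction hypothesis the old $\tau(v)=:a$ is a $G'$-ancestor of, or equal to, the old $\tau(u)=:b$; one checks $\sigma(a)$ is a $G'$-ancestor of, or equal to, $\sigma(b)$ by cases on which of $a,b$ is a $2$-node. The only nontrivial case is $b$ a $2$-node and $a$ not: by the characterization $a$ is then a $3$-node or $\rooot$, and if $\sigma(b)$ were a strict $G'$-ancestor of $a$ then $a$ would lie strictly inside the $G'$-path of the active pointer $p(b)$, contradicting Observation~\ref{prop:three}; the other cases are trivial because pointers point strictly toward $\rooot$. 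For~(a), take $u$ active at the end of iteration $i$ and walk its new chain: by the characterization the first new node is $z':=\sigma(z)$ with $z$ the old $\tau(u)$, and it suffices to show $z'$ is active at the end of iteration $i$ (then iterate; the walk terminates at $\rooot$ since pointers strictly decrease $G'$-depth). Since $z'$ was active at the end of iteration $i-1$ and $z'\neq\rooot$ in the only doubtful case, the characterization says $z'$ fails to be active only if $\tau(z')=:z''$ (at the end of iteration $i-1$) is a $3$-node/$\rooot$ with final answer ``yes'' at $\last_{p(z')}$, which I claim is impossible.

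This last claim is the crux. ``Answer yes'' means some leaf pointer $(\ell,z'')$ shares its last in-edge with the active pointer $(z',z'')$; by Observation~\ref{prop:unweird} the nodes $\ell,z',z''$ lie on one directed $G'$-path, and since $\ell$ is a leaf and $z'$ points to $z''$, they occur in the order $\ell$, then $z'$, then $z''$ toward $\rooot$, so $z'$ lies strictly inside the $G'$-path of $(\ell,z'')$. If $z$ (the old $\tau(u)$) is a $3$-node or $\rooot$, then $z'=z$ is itself a $3$-node strictly inside that path, contradicting Observation~\ref{prop:three}. If $z$ is a $2$-node, then $z'$ has the incoming active pointer $(z,z')$, hence is not a $1$-node, is not a $3$-node (that would again contradict Observation~\ref{prop:three}), and is not $\rooot$ (it points to $z''$); so $z'$ is a $2$-node, and by the second part of Observation~\ref{prop:three} its relevant in-edge lies on the $G'$-path of $(\ell,z'')$. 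That in-edge is the last edge of $(z,z')$, so $z$ also lies on the $G'$-path of $(\ell,z'')$ and, $\ell$ being a leaf, $z$ is a strict $G'$-ancestor of $\ell$; both $z$ and $\ell$ are active at the end of iteration $i-1$, so~(b) of the induction hypothesis gives that $\tau(z)=z'$ is a $G'$-ancestor of, or equal to, $\tau(\ell)=z''$ — but $z'$ points to $z''$, making $z''$ a \emph{strict} $G'$-ancestor of $z'$, a contradiction. Hence $z'$ is active and~(a) follows. The main obstacle, as this indicates, is precisely controlling the interaction between update rule~\ref{step2} (which forwards many pointers simultaneously) and update rule~\ref{step3} (whose pointer removals are triggered only by leaf pointers): showing that no forwarded pointer is ever stranded is exactly what forces the combined use of Observations~\ref{prop:state}, \ref{prop:three}, \ref{prop:unweird} together with~(b) of the induction hypothesis.
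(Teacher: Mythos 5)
Your reinterpretation of part (a) is where the proposal falls short of the lemma. In the paper, ``the path from $u$ to $\rooot$'' means the tree path in $G'$: the paper's own proof of (a) takes $v$ to be the $G'$-parent of an active node $u$ and derives a contradiction from $v$ being inactive, and the lemma is used later precisely in this strong form, e.g.\ to conclude that the active nodes induce a rooted subtree $T_i$ of $G'$ and, in \Cref{cor:niceblocks}, that all $G'$-ancestors of a node of $T_i$ are again in $T_i$. The chain version you prove (only $u,\tau(u),\tau(\tau(u)),\dots$ are active) is strictly weaker and would not support those applications. Concretely, your characterization together with the crux only shows that a node carrying an incoming active pointer stays active; it says nothing about a $G'$-ancestor $v$ of $u$ that lies strictly inside the path of $p(u)$ and has no incoming active pointer (a $1$-node). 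A priori such a $v$ could have its own pointer dropped at a $3$-node or the root because of a leaf pointer $(x,w)$ with the same last edge, while $u$ remains active; excluding this is the remaining (and hardest) part of the paper's proof of (a). There, one shows that if $u$ lies on the path from $x$ to $w$, then $\last_{p(u)}=\last_{(x,w)}$ and $u$'s pointer is dropped as well (using (b) and Observation~\ref{prop:three}), and that the case $\tau(u)\neq v$ reduces to this via Observation~\ref{prop:unweird}; the paper's final case $\tau(u)=v$ is the only one your crux-type argument covers (indeed more directly than the paper's rewind to iteration $i-2$, since there $v$ has the incoming active pointer $(u,v)$). Without the first two cases, part (a) as stated is not established.

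Apart from this, the induction for (b) and the crux are sound and close in substance to the paper's argument, relying on the same ingredients: Observations~\ref{prop:atmostone}, \ref{prop:state}, \ref{prop:three}, and~\ref{prop:unweird} together with the nestedness hypothesis (b). One small jump remains in the crux: from ``the relevant in-edge of $z'$ equals $\last_{(z,z')}$ and lies on the path of $(\ell,z'')$'' you conclude that $z$ lies on that path, which does not follow directly (only the child endpoint of that edge need lie on it). It is easily repaired: apply Observation~\ref{prop:unweird} once more to the active pointers $(z,z')$ and $(\ell,z'')$, which share that edge, to get that $z$ and $\ell$ are comparable, and hence, since $\ell$ is a leaf and $z$ is a $2$-node (so not a leaf), that $z$ is a strict ancestor of $\ell$; the appeal to the induction hypothesis (b) then goes through as you wrote it.
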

\begin{proof}
	Suppose for a contradiction that the lemma statement is false, and let $i$ be the first iteration such that at the end of iteration $i$ the statement is not satisfied.
	
	Consider first the case that property (b) does not hold, which implies that at the end of iteration $i$, there are active nodes $u, v$ with active pointers $p = (u, w)$, $p' = (v, x)$ such that $v$ is an ancestor of $u$, and $w$ an ancestor of $x$.
	Let $y$ and $z$ denote the nodes $u$ and $v$, respectively, were pointing to at the beginning of iteration $i$.
	Due to the minimality of $i$, we have that $z$ is an ancestor of $y$, or $y = z$, and that $y$ and $z$ are active at the beginning of round $i$.
	The pointer $(u, w)$ must be the result of a merge operation in iteration $i$, as otherwise $w = y$, which would imply that $w$ cannot be an ancestor of $x$.
	Hence, at the beginning of iteration $i$, the active pointer starting at $y$ must be $(y, w)$.
	Also, at the beginning of iteration $i$, we must have $z = x$, or the pointer starting at $z$ must be $(z, x)$ (as otherwise we could not have the active pointer $p' = (v, x)$ at the end of iteration $i$).
	In the latter case, we obtain $y \neq z$ (as otherwise $w = x$), and we see that the nodes $y, z, x, w$ satisfy that, at the beginning of iteration $i$, $z$ is an ancestor of $y$, $w$ is an ancestor of $x$, $z$ points to $x$, and $y$ points to $w$, which yields a contradiction to the minimality of $i$.
	Hence, we can assume that $z = x$.
	This implies that, at the beginning of iteration $i$, $z$ is not an active $2$-node (as otherwise $v$ would not point to $x$ at the end of iteration $i$); since $z$ is an active node (by Observation~\ref{prop:state}) and has an incoming pointer (from $v$), it must be a $3$-node.
	Since $w$ is an ancestor of $x = a$, and we (still) have $y \neq z$, we see that for the active pointer $(y, w)$ at the beginning of iteration $i$, the path from $y$ to $w$ contains a $3$-node that is distinct from both $y$ and $w$, yielding a contradiction to Observation~\ref{prop:three}.
	
	Now consider the second case, namely that property (a) does not hold, which implies that at the end of iteration $i$, there are two vertices $u, v$ such that $v$ is the parent of $u$, $u$ is active, and $v$ is inactive.
	Due to the minimality of $i$ and Observation~\ref{prop:state}, $v$ (as well as $u$) must have been active at the beginning of iteration $i$.
	By the design of the update rules, the only way in which $v$ can have become inactive at the end of iteration $i$ is that the node $w$ node $v$ points to at the beginning of iteration $i$ is a $3$-node that some leaf $x$ satisfying $\last_{(v,w)} = \last_{(x,w)}$ points to as well, at the beginning of iteration $i$.
	Let $y$ denote the end of the active pointer starting in $u$ at the beginning of iteration $i$.
	
	If $u$ lies on the path from $x$ to $w$, then, by property (b), we have that $y$ is an ancestor of $w$, or $y = w$.
	The former cannot be true, as otherwise we would have an active pointer (from $u$ to $y$) at the beginning of iteration $i$ such that the corresponding path contains an internal node that is a $3$-node (namely $w$), which would contradict Observation~\ref{prop:three}.
	However, also the latter cannot be true, as otherwise $u$ would have become inactive at the end of iteration $i$ since $\last_{(u,y)} = \last_{(x,w)}$.
	Hence, $u$ does not lie on the path from $x$ to $w$.
	
	If $y \neq v$, we the two active pointers $(u, y)$ and $(x, w)$ at the beginning of iteration $i$ yield a contradiction to Observation~\ref{prop:unweird}.
	Hence, $y = v$.
	Observe that $i \geq 2$, as at at the beginning of iteration $1$, the only pointers we have are the directed edges of $G'$, and the pointer $(x, w)$ is not such a pointer (as it contains the internal node $v$; we have $x \neq v$ as $x$ is a leaf while $v$ has a child, namely $u$).
	Hence, iteration $i-1$ exists, and at the beginning of iteration $i-1$, node $u$ must have pointed to node $v$ and node $v$ cannot have been a $2$-node, as otherwise we could not have an active pointer $(u,v)$ at the beginning of iteration $i$.
	Since, at the beginning of iteration $i - 1$, node $v$ had an incoming active pointer (from $u$), it cannot have been a $1$-node either, so it must have been a $3$-node.
	Now consider the node $z$ leaf $x$ was pointing to at the beginning of iteration $i-1$.
	As the active pointer starting in $x$ at the beginning of iteration $i$ is $(x,w)$, there are only $3$ possibilities, due to the design of the update rules: 1) $z = w$, or 2) $z \neq v$ lies on the path from $v$ to $w$, or 3) $z \neq v$ lies on the path from $x$ to $v$ and there is an active pointer from $z$ to $w$.
	In either case, we obtain a contradiction to Observation~\ref{prop:three}.
\end{proof}

\begin{lemma}\label{lem:noin23}
	When a node stops being an active $3$-node, it becomes an active $2$-node.
	When a node stops being an active $2$-node, it becomes an active $1$-node.
	When a node stops being an active node, it turns from an active $1$-node into an inactive $1$-node, and remains an inactive $1$-node until the end of Phase I.
	In particular, there are no inactive $2$- or $3$-nodes.
\end{lemma}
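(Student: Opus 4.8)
The plan is to reduce everything to two structural facts we already have — the \emph{state trichotomy} (every active node $u\neq\rooot$ is, by \Cref{def:123node}, exactly one of a $1$-, $2$-, or $3$-node, while $\rooot$ is none of these and leaves are always $1$-nodes) and the \emph{monotonicity} of \Cref{prop:state} (a node's type can only move along $3\to 2\to 1$, and inactivity is permanent) — together with one new claim: \emph{whenever a node $u$ loses its unique outgoing active pointer during some iteration, $u$ was a $1$-node at the start of that iteration}. Granting this claim, the lemma follows: a node that stops being an active $3$-node is still active (by the claim) and — as shown below — still has an incoming active pointer, hence is a $2$- or $3$-node, hence a $2$-node; a node that stops being an active $2$-node is still active, hence by monotonicity a $1$-node; a node that goes inactive was, by the claim, an active $1$-node the iteration before; and once inactive it can never again acquire an incoming active pointer, since such a pointer would be produced by a merge at an active $2$-node $w$ below it, which by \Cref{lem:propcombi}(a) would force $u$ (an ancestor of the active node $w$) to be active — contradiction. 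Hence there are no inactive $2$- or $3$-nodes.

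Two obligations are easy. First, the update rule governing a $3$-node $u\neq\rooot$ in $\aone$ always re-adds to $\aps_i$ all incoming active pointers on at least one relevant in-edge (the "no" edge; if every relevant in-edge carries answer "yes", one is flipped to "no"), so after the iteration $u$ still has an incoming active pointer. Second, a $3$-node never goes inactive — and this follows from the earlier observations alone: if $p(u)=(u,v)$ were removed, then inspecting the update rules shows $v$ must be a $3$-node or $\rooot$ with $\last_{p(u)}$ a resolved relevant in-edge, so there is a leaf $x$ with $(x,v)\in\aps_{i-1}$ and $\last_{(x,v)}=\last_{p(u)}$; by \Cref{prop:unweird} the paths of $(u,v)$ and $(x,v)$ lie on a common directed path, and since $x$ is a leaf and $v$ a common ancestor, $x$ is a proper descendant of $u$, so $u$ lies strictly inside the path of the active pointer $(x,v)\in\aps_{i-1}$ — contradicting \Cref{prop:three}, which forbids internal $3$-nodes on active-pointer paths. (If $v$ were instead a $2$-node, $p(u)$ would be merged, not removed.)

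The real work is the new claim in the case where the vanishing pointer starts at a $2$-node. I would prove the strengthened statement $P(i)$: \emph{if $(x,v)\in\aps_i$ with $x$ a leaf, then every node strictly between $x$ and $v$ has no incoming active pointer at the end of iteration $i$}, by strong induction on $i$. This yields the claim: if $p(u)=(u,v)$ is removed in iteration $i$, then (as above) there is a leaf $x$ with $(x,v)\in\aps_{i-1}$ and $\last_{(x,v)}=\last_{p(u)}$, and either $u=x$ (a leaf, hence a $1$-node) or $u$ lies strictly between $x$ and $v$ and $P(i-1)$ makes $u$ a $1$-node at iteration $i$. For the induction step: if $(x,v)\in\aps_{i-1}$ too, use $P(i-1)$ and \Cref{prop:state}; otherwise $(x,v)$ was produced in iteration $i$, and since a $3$-node/root only re-adds pointers already in $\aps_{i-1}$, it must be $(x,v)=\merge((x,w),(w,v))$ for the $2$-node $w=\pred_{(x,v)}$, with $(x,w),(w,v)\in\aps_{i-1}$. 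Nodes strictly between $x$ and $v$ split into those strictly between $x$ and $w$ (handled by $P(i-1)$ applied to the leaf-pointer $(x,w)$), the node $w$, and those strictly between $w$ and $v$. For $w$: it consumes $(x,w)$ in the merge, so it ends the iteration with no incoming active pointer unless some active $2$-node $w'$ below $w$ has $p(w')$ ending at $w$; but then $(w',w)$ and $(x,w)$ are incoming pointers of the $2$-node $w$, hence share a $\last$ edge, so by \Cref{prop:unweird} and $x$ being a leaf, $w'$ lies strictly between $x$ and $w$, whence $P(i-1)$ applied to $(x,w)$ makes $w'$ a $1$-node at iteration $i$ — contradicting that $w'$ performs a merge.

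The case I expect to be the genuine obstacle — and where I would have to work hardest — is the nodes strictly between $w$ and $v$: the pointer $(w,v)\in\aps_{i-1}$ does \emph{not} start at a leaf, and indeed a non-leaf active pointer can legitimately have active $2$-nodes strictly inside it, so $P(i-1)$ does not apply to $(w,v)$ directly. Here I would combine \Cref{prop:three} applied to $(w,v)$ (no internal $3$-nodes, and each internal $2$-node has its relevant in-edge on the path) with the shape of $(x,v)$ and the fact that the leaf $x$ lies \emph{below} every internal node of $(w,v)$: any such internal node $z$ that still performs a merge would have an incoming active pointer whose start — by \Cref{prop:unweird} against $(x,v)$ and $x$ being a leaf — lies on the $x$-side of $z$, and tracing this back through the merge history (or via \Cref{lem:propcombi}(b)) should show $z$ must already have collapsed to a $1$-node before its own pointer could reach all the way up to $v$ — the same telescoping mechanism as for $w$. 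Concretely I would likely carry a slightly stronger invariant alongside $P(i)$ (for instance, that inside any active pointer the internal nodes whose own outgoing pointer spans more than one edge are already $1$-nodes), chosen precisely so that this last case closes by exactly the argument used for $w$.
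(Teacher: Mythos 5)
Your overall reduction is sound and matches the paper's strategy: the heart of the matter is indeed the claim that a node which loses its outgoing active pointer in some iteration was already a $1$-node at the start of that iteration, and your derivations of the three bullet points from this claim (3-node retains an incoming pointer by the ``no''-edge rule, monotonicity via \Cref{prop:state}, permanence of inactivity via \Cref{lem:propcombi}(a)) are fine. The gap is in your proof of the claim itself in the $2$-node case. You replace a direct argument by an induction on the invariant $P(i)$, and in the induction step you explicitly leave open the subcase of nodes strictly between $w$ and $v$, i.e., the internal nodes of the merged-in pointer $(w,v)\in\aps_{i-1}$: for these you only offer a sketch (``tracing this back through the merge history \dots should show \dots'') and the hope of an unspecified stronger invariant. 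As written, that case is not established, so the proposal does not constitute a complete proof.

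The gap is also unnecessary: no induction is needed, because the $2$-node case (and in fact your whole invariant $P(i)$) follows directly from the facts you already cite. This is what the paper does. If $u$'s outgoing pointer $(u,v)$ is discarded in iteration $i$ on account of a pointer $(w,v)\in\aps_{i-1}$ starting at a leaf $w$ with $\last_{(w,v)}=\last_{(u,v)}$, then $u$ lies on the path from $w$ to $v$; \Cref{prop:three} already rules out that $u$ is a $3$-node or a $2$-node whose relevant in-edge is off this path; and if $u$ were a $2$-node whose relevant in-edge lies on the path, then its incoming active pointer would start at some node $y\neq w$ strictly between $w$ and $u$ (by \Cref{prop:unweird}, $w$ being a leaf, and \Cref{prop:atmostone}), and now \Cref{lem:propcombi}(b) applied to the active nodes $y$ (an ancestor of $w$, pointing to $u$) and $w$ (pointing to $v$) is violated, since $u$ is a proper descendant of $v$. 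Hence $u$ was a $1$-node, which is exactly the claim. The same three-line use of \Cref{lem:propcombi}(b) (distinguishing whether the would-be merging $2$-node lies above or below $w$) also disposes of the internal nodes of $(w,v)$ in your induction step --- precisely the case you could not close --- so the auxiliary invariant you propose to carry along is not needed.
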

\begin{proof}
	Consider an active node $u$ that becomes inactive at the end of some iteration $i$.
	By the update rules, $u$ can only become inactive due to having an active pointer to some node $v \neq u$ at the beginning of iteration $i$, and $v$ having another incoming active pointer from some leaf $w$ (where, potentially, $w = u$) such that $\last_{(u, v)} = \last_{(w, v)}$.
	In particular, $u$ lies on the path from $w$ to $v$.
	At the beginning of iteration $i$, node $u$ cannot be a $3$-node or a $2$-node with the relevant in-edge not lying on the path from $w$ to $v$, since otherwise $u \neq w$, and the active pointer $(w, v)$ together with node $u$ would yield a contradiction to Observation~\ref{prop:three}.
	At the beginning of iteration $i$, node $u$ also cannot be a $2$-node with the relevant in-edge lying on the path from $w$ to $v$ as otherwise $u$ would have an incoming active pointer from some node $x \neq w$ on the path from $w$ to $u$, yielding a contradiction to \Cref{lem:propcombi}.
	Hence, $u$ is a $1$-node at the beginning of iteration $i$.
	By Observation~\ref{prop:state}, the inactive node that $u$ becomes at the end of iteration $i$ must be a $1$-node, and $u$ will remain an inactive $1$-node.
	
	Now consider an active $3$-node that stops being an active $3$-node at the end of some iteration $i$.
	By the above discussion, $u$ is still active at the end of iteration $i$, and, by the design of the update rules, $u$ retains at least one relevant in-edge, which implies that it becomes a $2$-node.
	
	Finally, consider an active $2$-node that stops being an active $2$-node at the end of some iteration $i$.
	Again, we obtain that $u$ is still active at the end of iteration $i$, and, again by the design of the update rules, we see that there can be at most one edge $e$ incoming at $u$ such that there exists an active pointer $p$ satisfying $\last_p = e$, which implies that $u$ is a $1$-node at the end of iteration $i$ (as $u$ stops being an active $2$-node).
\end{proof}

Due to \Cref{lem:noin23}, we will not have to distinguish between active and inactive $2$-nodes (or $3$-nodes) in the remainder of the paper as we know that such nodes cannot be inactive.

\paragraph*{Bounding the Number of Iterations.}

In the following, we will fix some notation that is required to prove that the number of iterations until Phase I terminates is in $O(\log n)$.
We will denote the induced tree consisting of \emph{active} nodes at the end of iteration $i$ by $T_i$; we set $T_0 \coloneqq G'$.
Due to \Cref{lem:propcombi}, we know that $T_i$ is indeed a (rooted) subtree of $G'$, and that its root is the root of $G'$, namely $\rooot$; we also know that, for any $i \geq 1$, tree $T_i$ is an induced subtree of $T_{i-1}$, due to Observation~\ref{prop:state}.
For each $T_i$, we denote the maximal connected components consisting of non-root degree-$2$ nodes by $B_{i,1}, \dots, B_{i,z_i}$, in an arbitrary, but fixed, order.
Here $z_i$ denotes the number of such maximal connected components in $T_i$.
We call the $B_{i,j}$ \emph{blocks} of $T_i$.
For simplicity, we will also use $B_{i,j}$ to denote the set of nodes of $B_{i,j}$.
In the following we will collect some insights about the $T_i$ and $B_{i,j}$.

\begin{lemma}\label{lem:leaves}
	Consider some iteration $i \geq 1$.
	Any leaf $u \neq \rooot$ of $T_i$ is also a leaf of $T_{i-1}$.
	Moreover, any leaf $u \neq \rooot$ of $T_i$ is also a leaf of $G'$.
\end{lemma}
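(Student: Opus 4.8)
\textbf{Proof plan for \Cref{lem:leaves}.}

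The plan is to use the structural results established earlier, especially \Cref{lem:noin23} and \Cref{lem:propcombi}, to constrain how the degree of a node can change from $T_{i-1}$ to $T_i$, and then iterate to reach $G'$. First I would observe that if $u \neq \rooot$ is a leaf of $T_i$, then $u$ is a $1$-node at the end of iteration $i$: a leaf of $T_i$ has no active children, hence no incoming active pointer (by the active-pointer structure guaranteed by \Cref{lem:propcombi}(b), any incoming active pointer would correspond to an active descendant whose whole path to $u$ is active, forcing $u$ to have an active child). By \Cref{lem:noin23}, a $1$-node can only have become a $1$-node from a previous $1$-node (a $3$-node becomes a $2$-node, a $2$-node becomes a $1$-node, but only $1$-nodes stay $1$-nodes, and the transitions go one step at a time); combined with Observation~\ref{prop:state} this means $u$ was already a $1$-node at the end of iteration $i-1$. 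A $1$-node of $T_{i-1}$ has no incoming active pointer, which (again via \Cref{lem:propcombi}) means it has no active children at the end of iteration $i-1$, i.e., it is a leaf of $T_{i-1}$ (note $u \in T_{i-1}$ since $T_i$ is an induced subtree of $T_{i-1}$, using Observation~\ref{prop:state}).

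The one subtlety I would need to nail down is the base case and the direction of the argument: I want ``leaf of $T_i$ implies leaf of $T_{i-1}$'' to hold for all $i \geq 1$, and then ``leaf of $T_i$ implies leaf of $G' = T_0$'' follows by downward induction on $i$, peeling off one iteration at a time. For the step from $T_1$ to $T_0 = G'$, the same reasoning applies: a leaf $u \neq \rooot$ of $T_1$ is a $1$-node at the end of iteration $1$, hence was a $1$-node ``at the end of iteration $0$'', which we interpret as: $u$ had no incoming pointer in $\aps_0 = \ps_0 = E(G')$ directed into it from a child, i.e.\ $u$ has no children in $G'$, i.e.\ $u$ is a leaf of $G'$. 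I should double-check that the initialization is consistent with calling a childless non-root node a ``$1$-node'' at time $0$; this is immediate from \Cref{def:123node} since such a node has no incoming active pointer in $\aps_0$.

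The main obstacle I anticipate is making the link ``leaf of $T_i$ $\iff$ $1$-node at end of iteration $i$'' fully rigorous, specifically the forward direction: I must rule out that a leaf of $T_i$ is a $2$-node or $3$-node. A $2$- or $3$-node has an incoming active pointer $p=(w,u)$; by \Cref{lem:propcombi}(a) all nodes on the path from $w$ to $u$ are active, so in particular the child of $u$ on that path is active, contradicting $u$ being a leaf of $T_i$. So the forward direction holds, and the reverse direction (a $1$-node of $T_i$ is a leaf of $T_i$) is not even needed — only that leaves are $1$-nodes. Once this is established, \Cref{lem:noin23} and Observation~\ref{prop:state} do the rest, and the second claim of the lemma follows by applying the first claim inductively down to $G'$.
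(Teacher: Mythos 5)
Your opening step is fine and matches the paper: a leaf $u \neq \rooot$ of $T_i$ has no incoming active pointer (via \Cref{lem:propcombi}(a)) and is therefore a $1$-node at the end of iteration $i$. But the backward propagation that you build on top of this breaks down at both of its links. First, \Cref{lem:noin23} does not say that ``only $1$-nodes stay $1$-nodes'': it explicitly allows the transition from an active $2$-node to an active $1$-node (your own parenthetical concedes this), so ``$u$ is a $1$-node at the end of iteration $i$'' does not, by that lemma, imply ``$u$ was a $1$-node at the end of iteration $i-1$''. (The intermediate claim happens to be true for leaves of $T_i$, but proving it already requires the observation that a merging $2$-node keeps the start nodes of its merged pointers active, together with \Cref{lem:propcombi}(a) — not just the type-transition statement you cite.) Second, and more seriously, the inference ``$u$ has no incoming active pointer, hence no active children, hence $u$ is a leaf of $T_{i-1}$'' is simply false: an active child $v$ of $u$ may hold a pointer that ends at a \emph{strict} ancestor of $u$, bypassing $u$ entirely. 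This is not a pathological case — it is exactly what happens one iteration after a $2$-node performs its merges: that node becomes a $1$-node, yet the descendants whose pointers were merged past it (and, by \Cref{lem:propcombi}(a), its child on the corresponding path) remain active. So being a $1$-node at the end of iteration $i-1$ does not make $u$ a leaf of $T_{i-1}$, and \Cref{lem:propcombi} gives no help here.

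What your plan never confronts is the actual crux of the lemma: you must rule out that some child $v$ of $u$ is active at the end of iteration $i-1$ but becomes inactive at the end of iteration $i$ while $u$ itself stays active. The paper's proof attacks precisely this scenario: by the update rules, $v$ can only lose its outgoing pointer through step~\ref{step3}, i.e., its pointer ends at a $3$-node or the root $w$ and there is an active pointer from a leaf $x$ with $\last_{(x,w)} = \last_{(v,w)}$; then, using \Cref{lem:noin23} to get $w \neq u$, and Observation~\ref{prop:three} together with \Cref{lem:propcombi} to show that $u$'s own active pointer also ends in $w$ with the same last edge, one concludes that $u$ would be deactivated in the same step — contradicting $u \in T_i$. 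Without an argument of this kind, your chain ``leaf of $T_i$ $\Rightarrow$ $1$-node $\Rightarrow$ $1$-node at $i-1$ $\Rightarrow$ leaf of $T_{i-1}$'' does not close, so the proposal has a genuine gap. (Your reduction of the second claim of the lemma to iterating the first one down to $T_0 = G'$ is the same as in the paper and is fine once the first claim is actually established.)
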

\begin{proof}
	For a contradiction, suppose that, for some $i \geq 1$, tree $T_i$ contains a leaf $u \neq \rooot$ that is not a leaf of $T_{i-1}$.
	Note that $u$ cannot have any incoming active pointer at the end of iteration $i$, and therefore must be a $1$-node at that point in time.
	Let $v$ be a child of $u$ in $T_{i-1}$, and let $w$ denote the node $v$ is pointing to at the end of iteration $i-1$.
	Due to our assumption, $v$ is active at the end of iteration $i-1$, but inactive at the end of iteration $i$.
	Due to the design of our update rules, the only way in which this can happen is that at the end of iteration $i-1$, $w$ is a $3$-node or the root, and there is an active pointer $(x, w)$ from some leaf $x$ with $\last_{(x, w)} = \last_{(v, w)}$.
	Observe that, by \Cref{lem:noin23}, $u$ cannot be a $3$-node at the end of iteration $i - 1$ (as it is a $1$-node at the end of iteration $i$), which implies $w \neq u$.
	Hence, $w$ is an ancestor of $u$, and, by Observation~\ref{prop:three} and \Cref{lem:propcombi}, it follows that at the end of iteration $i-1$, the active pointer starting at $u$ must end in $w$, and $\last_{(u, w)} = \last_{(x, w)}$.
	But this implies, again by the design of the update rules, that if $v$ becomes inactive at the end of iteration $i$, then so does $u$.
	This yields a contradiction to the fact that $u$ is active at the end of iteration $i$.
	
	Since we showed that any leaf $u \neq \rooot$ in the tree of active nodes at the end of some iteration is also a leaf in the tree of active nodes at the end of the previous iteration, we obtain, by applying this argumentation iteratively, that $u$ must also be a leaf in $T_0 = G'$.
\end{proof}

\begin{corollary}\label{cor:niceblocks}
	Consider some iteration $i \geq 1$, and some block $B_{i,j}$.
	Let $u$ be a node in $B_{i,j}$.
	If $u$ is in some block $B_{i-1,j'}$, then $B_{i-1,j'} \subseteq B_{i,j}$ (considered as node sets).
\end{corollary}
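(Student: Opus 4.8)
The plan is to track the \emph{entire} block of $T_{i-1}$ through the transition to $T_i$. Recall that $T_i$ is an induced rooted subtree of $T_{i-1}$, both rooted at $\rooot$. Write the block $B_{i-1,j'}$ containing $u$ as a path $v_1 - v_2 - \cdots - v_k$ ordered from the root side towards the leaves, so that $u = v_m$ for some $m$ and each $v_\ell$ is a non-root node with $\deg_{T_{i-1}}(v_\ell) = 2$. The goal is to show that all of $v_1, \dots, v_k$ survive into $T_i$ and still form there a connected chain of non-root degree-$2$ nodes; since any such chain lies inside a single maximal connected component of non-root degree-$2$ nodes of $T_i$, i.e.\ inside a single block of $T_i$, and since this chain contains $u \in B_{i,j}$, that block must be $B_{i,j}$, which is exactly the claim.

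The main technical ingredient I would isolate first is a degree-preservation statement: if $v$ is a non-root node with $\deg_{T_{i-1}}(v) = 2$ and $v \in V(T_i)$, then $\deg_{T_i}(v) = 2$, and moreover the parent and the (unique) child of $v$ in $T_i$ coincide with those in $T_{i-1}$. The inequality $\deg_{T_i}(v) \le 2$ is immediate since $T_i$ is an induced subtree of $T_{i-1}$, and $\deg_{T_i}(v) \ge 1$ since $v \neq \rooot$ still has a parent in the rooted tree $T_i$; the only real content is ruling out $\deg_{T_i}(v) = 1$, i.e.\ that $v$ becomes a non-root leaf of $T_i$ — but this is precisely what \Cref{lem:leaves} forbids, since it forces every non-root leaf of $T_i$ to also be a leaf of $T_{i-1}$. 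The parent/child coincidence then follows because the inclusion $T_i \subseteq T_{i-1}$ respects the ancestor relation and $v$ has a unique child already in $T_{i-1}$. I expect this degree-preservation step to be the crux of the argument: it is the only place \Cref{lem:leaves} is used and the only place where things could go wrong, since a degree-$2$ node quietly dropping to degree $1$ would break the block structure entirely.

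Given this, showing $v_\ell \in V(T_i)$ for every $\ell$ splits into two easy halves. For $\ell \le m$, the nodes $v_1, \dots, v_m$ all lie on the path from $u$ to $\rooot$, hence are active at the end of iteration $i$ by \Cref{lem:propcombi}(a). For $\ell > m$, I would induct downward along the block: $u = v_m \in B_{i,j}$ has $\deg_{T_i}(u) = 2$, so by the degree-preservation statement its unique child in $T_i$ is $v_{m+1}$, placing $v_{m+1} \in V(T_i)$; now $v_{m+1}$ is again a non-root node that has degree $2$ in $T_{i-1}$ and lies in $V(T_i)$, so the same reasoning yields $v_{m+2} \in V(T_i)$, and so on up to $v_k$. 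Applying degree-preservation to each $v_\ell$ then shows each is a non-root degree-$2$ node of $T_i$, and consecutive $v_\ell, v_{\ell+1}$ are adjacent in $T_i$ because the connecting edge is the parent–child edge preserved from $T_{i-1}$. Hence $v_1 - \cdots - v_k$ is a connected path in $T_i$ consisting entirely of non-root degree-$2$ vertices, therefore contained in a single block of $T_i$; as it contains $u \in B_{i,j}$, that block is $B_{i,j}$, and $B_{i-1,j'} = \{v_1, \dots, v_k\} \subseteq B_{i,j}$ as claimed.
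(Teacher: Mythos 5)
Your proof is correct and follows essentially the same route as the paper's: the ancestors of $u$ inside the block survive into $T_i$ by \Cref{lem:propcombi}, and the descendants survive because a non-root degree-$2$ node of $T_{i-1}$ that lost its child would be a non-root leaf of $T_i$ but not of $T_{i-1}$, contradicting \Cref{lem:leaves}. The only difference is presentational: you argue directly that the entire block persists with degrees and parent/child relations preserved, whereas the paper runs the same two observations as a node-by-node proof by contradiction.
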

\begin{proof}
	Let $u$ be as described in the lemma, and suppose, for a contradiction, that there is some node $v \neq u$ satisfying $v \in B_{i-1,j'}$ and $v \notin B_{i,j}$.
	By the definition of blocks, either $v$ is an ancestor of $u$, or $u$ is an ancestor of $v$.
	In the former case, observe that, due to \Cref{lem:propcombi}, all ancestors of $u$ in $T_{i-1}$ are also contained in $T_i$, which implies that $v$ is a degree-$2$ node in $T_i$ belonging to $B_{i,j}$, yielding a contradiction.
	In the latter case, observe that $v$ and its child in $T_{i-1}$ must be contained in $T_i$ as otherwise some node on the path from $v$ to $u$ must be a leaf in $T_i$ while not being a leaf in $T_{i-1}$, which would contradict \Cref{lem:leaves}.
	Now we obtain a contradiction in an analogous way to the previous case.
\end{proof}

For each block $B_{i,j}$ with $i \geq 1$, we denote the set of blocks $B_{i-1,j'}$ that have non-empty intersection with $B_{i,j}$ by $\prev(B_{i,j})$.
Due to \Cref{cor:niceblocks}, we know that the union of all node sets contained in $\prev(B_{i,j})$ is a subset of $B_{i,j}$.
Moreover, for each block $B_{i,j}$ with $i \geq 1$, we denote the set of vertices in $B_{i,j}$ that are not contained in some $B \in \prev(B_{i,j})$ by $\new(B_{i,j})$.

We will also need the notion of a pointer chain.
\begin{definition}[Pointer chain]
	A \emph{pointer chain} (from a node $c_0$ to a node $c_y$) at the end of some iteration $i$ is a finite sequence $C = (c_0, \dots, c_y)$ of nodes such that for any $1 \leq j \leq y$, there is an active pointer $(c_{j-1}, c_j)$ at the end of iteration $i$.
	We call a pointer chain a \emph{leaf-root pointer chain} if $c_0$ is a leaf in $G'$ and $c_y = \rooot$.
\end{definition}
Note that any pointer chain at the end of some iteration $i$ consists only of active nodes, i.e., of nodes from $T_i$ (this holds for the last node in the pointer chain due to \Cref{lem:propcombi}).

\begin{observation}
	For any iteration $i$, and any leaf $u \neq \rooot$ in $T_i$, there exists a leaf-root pointer chain from $u$ to $\rooot$ at the end of iteration $i$.
\end{observation}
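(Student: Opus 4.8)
The plan is to build the desired leaf-root pointer chain greedily: start at $u$ and repeatedly follow the unique outgoing active pointer of the current node, showing that this walk never gets stuck and always terminates at $\rooot$. First I would dispose of the degenerate case $i = 0$: there $T_0 = G'$ and $\aps_0$ consists precisely of the edges of $G'$ directed from each child to its parent, so repeatedly passing to the parent yields the chain immediately; hence assume $i \geq 1$. By \Cref{lem:leaves}, the leaf $u \neq \rooot$ of $T_i$ is then also a leaf of $G'$, which is exactly the ``leaf'' requirement in the definition of a leaf-root pointer chain. Set $c_0 \coloneqq u$. Since $c_0$ is active at the end of iteration $i$ and $c_0 \neq \rooot$, by the definition of an active node (and Observation~\ref{prop:atmostone}) it has exactly one outgoing pointer $p(c_0) = (c_0, c_1)$ in $\aps_i$, and $c_1$ is a strict ancestor of $c_0$ in $G'$.

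Next I would iterate this step, maintaining two invariants: every node reached so far is active (so the chain can be extended), and every such node is an ancestor of $c_0$ (so the nodes stay distinct). The key point is that each node $c_j$ produced this way is a strict ancestor of $c_0$ in $G'$ — this follows by transitivity of the ancestor relation, since $c_{j+1}$ is a strict ancestor of $c_j$ and $c_1$ is a strict ancestor of $c_0$ — and hence $c_j$ lies on the unique path from $c_0$ to $\rooot$ in $G'$. Invoking \Cref{lem:propcombi}(a) a single time, for the active node $c_0$, then shows that \emph{every} node on that path, in particular every $c_j$, is active at the end of iteration $i$. Therefore, as long as $c_j \neq \rooot$, node $c_j$ possesses a unique outgoing active pointer $p(c_j) = (c_j, c_{j+1}) \in \aps_i$ with $c_{j+1}$ a strict ancestor of $c_j$, and the construction continues.

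Finally I would close the argument by termination and by identifying the last node. Because the $c_j$ are strictly increasing in the (finite) ancestor order of $G'$, they are pairwise distinct and the sequence must terminate, say $c_0, c_1, \dots, c_y$. The walk can only stop at a node that has no outgoing active pointer; since every active non-root node has exactly one outgoing active pointer and the root has none, this forces $c_y = \rooot$. Hence $C = (c_0, \dots, c_y)$ is a pointer chain at the end of iteration $i$ with $c_0 = u$ a leaf of $G'$ and $c_y = \rooot$, i.e., a leaf-root pointer chain, as required. I expect no substantive difficulty in this proof; the only point needing care is keeping every newly reached node active — so that it still has an outgoing active pointer with which to extend the chain — which is exactly what the single application of \Cref{lem:propcombi}(a) provides, together with the fact that the greedy walk cannot halt before $\rooot$ since every active non-root node emits exactly one active pointer.
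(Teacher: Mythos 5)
Your proof is correct and follows essentially the same route as the paper, which derives the observation directly from the definition of a pointer (ensuring strict progress toward the root), \Cref{lem:propcombi}(a) (all nodes on the path from an active node to $\rooot$ are active, so the greedy walk never gets stuck), and \Cref{lem:leaves} (a leaf of $T_i$ is a leaf of $G'$, giving the ``leaf'' requirement). Your write-up merely makes the greedy construction and the termination argument explicit; no substantive difference.
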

\begin{proof}
	This follows directly from the definition of a pointer, \Cref{lem:propcombi}, and \Cref{lem:leaves}.
\end{proof}

In order to maintain a certain guarantee (given in \Cref{lem:leafroot}) throughout Phase I (that will help us to bound the number of iterations), we will need to assign an integer value $k_{i,j}$ to each block $B_{i,j}$ that, roughly speaking, provides an upper bound for the number of nodes from $B_{i,j}$ contained in any leaf-root pointer chain.
Define $\prev(k_{i,j})$ to be the set of all indices $j'$ such that $B_{i-1,j'} \in \prev(B_{i,j})$.
For each block $B_{0,j}$, we set $k_{0,j} \coloneqq |B_{0,j}|$.
For each block $B_{i,j}$ with $i \geq 1$, we set
\[
k_{i,j} \coloneqq |\new(B_{i,j})| + 1/2 \cdot \sum_{j' \in \prev(k_{i,j})} k_{i-1, j'}\enspace.
\]

\begin{lemma}\label{lem:leafroot}
	Consider a leaf-root pointer chain $C = (c_0, \dots, c_y = \rooot)$ at the end of some iteration $i \geq 0$ (where we set the end of iteration $0$ to be the starting point of our process).
	For each block $B_{i,j}$, the number of nodes contained in $C \cap B_{i,j}$ is at most $k_{i,j}$.
\end{lemma}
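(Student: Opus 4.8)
The plan is to prove this by induction on the iteration $i$. For the base case $i=0$, a leaf-root pointer chain consists only of the directed edges of $G'$, so the set $C$ is exactly the set of nodes on the unique path from the leaf $c_0$ to the root; its intersection with any block $B_{0,j}$ is contained in $B_{0,j}$, and since $k_{0,j} = |B_{0,j}|$ by definition, the bound holds trivially. For the inductive step, fix a leaf-root pointer chain $C = (c_0, \dots, c_y = \rooot)$ at the end of iteration $i \geq 1$ and a block $B_{i,j}$. I want to relate the nodes of $C$ lying in $B_{i,j}$ to nodes lying in the ``predecessor'' blocks $B_{i-1,j'} \in \prev(B_{i,j})$ at the end of iteration $i-1$, plus the freshly added nodes in $\new(B_{i,j})$.

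The key structural point is that each pointer $(c_{t-1}, c_t)$ in $C$ is, at the end of iteration $i$, either an unchanged active pointer from $\aps_{i-1}$, or the result of a single merge operation performed at some active $2$-node $v$ in iteration $i$ (by the update rules, no pointer can be the result of two merges in one iteration, since merging at a $2$-node produces a pointer ending at an ancestor, and after the merge $v$ has no incoming active pointers). In the first case, $(c_{t-1}, c_t) \in \aps_{i-1}$, so $c_{t-1}$ and $c_t$ were already joined by an active pointer at the end of iteration $i-1$ and the path between them is unchanged. In the second case, there was a node $v$ with active pointers $(c_{t-1}, v), (v, c_t) \in \aps_{i-1}$. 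Thus, by inserting these intermediate $2$-nodes $v$ back into the chain, I obtain a leaf-root pointer chain $C'$ at the end of iteration $i-1$ whose node set $V(C')$ contains $V(C)$, and $V(C') \setminus V(C)$ consists exactly of certain degree-$2$ nodes of $T_{i-1}$ that were merged away in iteration $i$; in particular each such node belonged to some block $B_{i-1,j'}$. Using Corollary~\ref{cor:niceblocks}, every node of $C'$ that lies in a block $B_{i-1,j'} \in \prev(B_{i,j})$ lies in $B_{i,j}$, and conversely every node of $C \cap B_{i,j}$ that is not in $\new(B_{i,j})$ lies in some $B_{i-1,j'} \in \prev(B_{i,j})$. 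Applying the inductive hypothesis to $C'$ on each block $B_{i-1,j'}$ with $j' \in \prev(k_{i,j})$ bounds $|V(C') \cap B_{i-1,j'}| \leq k_{i-1,j'}$.

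The remaining, and I expect main, obstacle is to extract the factor $1/2$: I need to show that of every block $B_{i-1,j'}$ contributing to $B_{i,j}$, \emph{at least half} of its nodes appearing on the chain $C'$ were removed in iteration $i$, i.e., do not appear on $C$. This is where the merge dynamics on a block-path must be analyzed: along the portion of $C'$ lying inside $B_{i-1,j'}$, the nodes alternate between those that act as active $2$-nodes performing a merge in iteration $i$ (and hence get removed from the chain) and those that survive; more precisely, two consecutive surviving degree-$2$ nodes on the path cannot both fail to perform a merge because the merge operation in step~\ref{step2} is performed by every active $2$-node, and any internal node of the block that is an active $2$-node at the start of iteration $i$ does merge. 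I would argue, using Observation~\ref{prop:state}, Observation~\ref{prop:three}, and \Cref{lem:noin23}, that along this path every other node performs a merge, so the number of surviving nodes is at most half of $|V(C') \cap B_{i-1,j'}|$ (rounding appropriately, and handling the block endpoints, which may be the only nodes not guaranteed to be merged). Summing over $j' \in \prev(k_{i,j})$ gives $\sum_{j'} \tfrac12 k_{i-1,j'}$ as the contribution of the predecessor blocks to $C \cap B_{i,j}$, and the nodes in $\new(B_{i,j}) \cap C$ contribute at most $|\new(B_{i,j})|$; adding these yields exactly $k_{i,j}$, completing the induction. I would be careful here that ``$\new$'' nodes are genuinely not double-counted with predecessor-block nodes, which follows directly from the definition of $\new(B_{i,j})$ together with Corollary~\ref{cor:niceblocks}.
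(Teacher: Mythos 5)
Your overall skeleton matches the paper's: induct on $i$, lift the chain $C$ at the end of iteration $i$ to a leaf-root chain $C'$ at the end of iteration $i-1$ that contains it (the paper takes $C'$ to be the chain starting at $c_0$ at the end of iteration $i-1$ and notes $C$ is a subsequence of it, via \Cref{lem:propcombi}), then bound survivors block by block and add $|\new(B_{i,j})|$ using \Cref{cor:niceblocks}. However, the part you yourself flag as the main obstacle --- extracting the factor $1/2$ --- is where your sketch does not work as stated. Your mechanism, that the nodes of a block segment ``alternate between those that act as active $2$-nodes performing a merge (and hence get removed from the chain) and those that survive,'' is not how the process behaves: in step~\ref{step2} \emph{every} active $2$-node of the block performs a merge in the same iteration (all merges are executed in parallel on $\aps_{i-1}$), and a node that performs a merge can perfectly well remain on the chain --- on a path of $2$-nodes every node merges, yet every other node survives. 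If ``merge $\Rightarrow$ removed from the chain'' were true, the entire block segment would vanish from $C$, which is false. The correct statement, and the one the paper proves, is chain-based rather than merge-based: if $c'_a \in C$ and its successor $c'_{a+1}$ on $C'$ is a $2$-node, then the outgoing active pointer of $c'_a$ at the end of iteration $i$ is the merged pointer ending at $c'_{a+2}$, so $c'_{a+1} \notin C$; this is what yields ``no two consecutive survivors.''

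Second, the endpoint issue you defer with ``rounding appropriately, handling the block endpoints'' is not a cosmetic detail but precisely the step that turns the bound into exactly $\tfrac12\,|C'\cap B_{i-1,j'}|$ rather than $\lceil |C'\cap B_{i-1,j'}|/2\rceil$. With only the no-two-consecutive-survivors property, a segment of odd length $s$ could keep $(s+1)/2$ nodes, and then $k_{i,j}$ as defined (with the exact coefficient $1/2$) is no longer an upper bound --- which would also break the potential argument in \Cref{lem:calc} downstream. The paper closes this by showing that the \emph{first} node $c'_p$ of each block segment is never in $C$: its chain-predecessor $c'_{p-1}$ is the leaf $c_0$ or a degree-$\geq 3$ node of $T_{i-1}$ (hence a $3$-node that stays on $C$), and after the merge performed at the $2$-node $c'_p$ the outgoing pointer of $c'_{p-1}$ at the end of iteration $i$ ends at $c'_{p+1}$, so the chain skips $c'_p$; combined with the alternation this gives at most $\lfloor s/2\rfloor \leq s/2$ survivors per segment. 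Your proposal needs this argument (or an equivalent one) made explicit; as written, the inductive step does not close.
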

\begin{proof}
	We prove the statement by induction in $i$.
	For $i = 0$, the statement trivially holds, by the definition of $k_{0,j}$.
	Now, consider some $i \geq 1$, and assume that the statement holds for $i-1$.
	Let $C = (c_0, \dots, c_y)$ be an arbitrary leaf-root pointer chain at the end of iteration $i$, and let $C' = (c'_0 = c_0, c'_1, \dots, c'_{y'})$ denote the leaf-root pointer chain starting at $c_0$ at the end of iteration $i-1$.
	By the design of the update rules, the definition of the function $\merge(\cdot, \cdot)$, and \Cref{lem:propcombi}, the sequence $C$ is a subsequence of $C'$, i.e., $C$ is obtained from $C'$ by removing elements.
	Furthermore, we observe that any (non-root) degree-$2$ node in $T_{i-1}$ with an incoming active pointer at the end of iteration $i-1$ must be a $2$-node (by the definition of a $2$-node), and any (non-root) node $u$ of degree at least $3$ in $T_{i-1}$ must be a $3$-node at the end of iteration $i-1$ (as, for each child $v$ of $u$ in $T_{i-1}$, there must be an active pointer $(v, u)$, due to Observations~\ref{prop:three} and~\ref{prop:unweird}).
	
	Consider an arbitrary block $B_{i,j}$, and an arbitrary block $B_{i-1,j'} \in \prev(B_{i,j})$.
	Due to the definitions of a pointer and a block, the nodes in $C' \cap B_{i-1,j'}$ form a subsequence of $C'$ consisting of consecutive nodes $c'_p, \dots, c'_q$.
	By the definition of $\merge(\cdot, \cdot)$ and the fact that the nodes in $C'' \coloneqq (c'_p, \dots, c'_q)$ are degree-$2$ nodes in $T_{i-1}$ (and hence $2$-nodes at the end of iteration $T_{i-1}$), we see that for any two consecutive nodes in $C''$, at most one of the nodes is contained in $C$ (by the design of the update rules).
	Moreover, as $\merge$ operations are only ``performed'' by $2$-nodes, any node in $C'$ that is a (non-root) node of degree at least $3$ in $T_{i-1}$ (and hence a $3$-node) will be contained in $C$, which implies that $c'_p$ is not contained in $C$ (as $c'_{p-1}$ is contained in $C$ and points to $c'_{p+1}$ at the end of iteration $i$).
	Hence, we can conclude that for each $B_{i-1,j'}$, we have $|C \cap B_{i-1,j'}| \leq 1/2 \cdot |C' \cap B_{i-1,j'}| \leq 1/2 \cdot k_{i-1,j'}$.
	
	By \Cref{cor:niceblocks}, we have
	\[
	B_{i,j} = \new(B_{i,j}) \cup \bigcup_{B \in \prev(B_{i,j})} B \enspace,
	\]
	which yields
	\begin{align*}
		\left|C \cap B_{i,j}\right| &\leq \left|\new(B_{i,j})\right| + \sum_{B \in \prev(B_{i,j})} \left|C \cap B\right| \\
		&\leq \left|\new(B_{i,j})\right| + \sum_{j' \in \prev(k_{i,j})} \left(1/2 \cdot k_{i-1,j'}\right) = k_{i,j}
	\end{align*}
	as desired.
\end{proof}

In order to bound the number of iterations in Phase I, we will make use of a potential function argument.
Recall that $z_i$ denotes the number of blocks of $T_i$.
For each $i \geq 0$ (such that Phase I has not terminated after $i - 1$ iterations), set $\Phi_i \coloneqq \Phi'_i + \Phi''_i$, where $\Phi'_i$ is the number of leaves in $T_i$, and $\Phi''_i \coloneqq \sum_{1 \leq j \leq z_i} k_{i,j}$.

\begin{lemma}\label{lem:calc}
	Consider any iteration $i \geq 1$ such that Phase I does not terminate at the end of iteration $i$ or $i+1$.
	Then $\Phi_{i+1} \leq 7/8 \cdot \Phi_{i-1}$.
\end{lemma}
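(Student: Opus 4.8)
The plan is to track how the potential $\Phi_i=\Phi'_i+\Phi''_i$ evolves over two consecutive iterations and show a constant-factor drop. First I would recall the rake-and-compress intuition that drives the analysis: in each iteration, every leaf of $T_i$ that is not the root is eliminated (by Lemma~\ref{lem:leaves}, leaves of $T_i$ are leaves of $G'$, and the update rules remove such leaves once they point to a $3$-node or the root), and every block of degree-$2$ nodes is ``halved'' in the sense made precise by Lemma~\ref{lem:leafroot}: the $k$-values satisfy $k_{i,j}=|\new(B_{i,j})|+\tfrac12\sum_{j'\in\prev(k_{i,j})}k_{i-1,j'}$. The difficulty is that new leaves and new degree-$2$ nodes can appear in each iteration (a $3$-node becomes a $2$-node, a $2$-node becomes a $1$-node, etc.), so a single iteration does not necessarily decrease $\Phi$. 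The standard fix, which I would carry out here, is to amortize over \emph{two} iterations: the $\Phi''$ part contracts by a factor close to $1/2$ because of the halving in Lemma~\ref{lem:leafroot}, while the ``injection'' of new block-mass from former degree-$\ge 3$ nodes is controlled by the number of leaves, i.e.\ by $\Phi'$; conversely, the creation of new leaves is limited because a tree has more leaves than internal nodes of degree $\ge 3$. Running this bookkeeping for two steps lets the cross-terms cancel favorably and yields the factor $7/8$.

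Concretely, the key steps in order: (1) Bound $\Phi'_{i}$ in terms of quantities at iteration $i-1$ — every leaf of $T_{i}$ was already a leaf of $T_{i-1}$ (Lemma~\ref{lem:leaves}), and at least a constant fraction of leaves of $T_{i-1}$ get removed, so $\Phi'_{i}\le \Phi'_{i-1}$ with strict decrease unless all removed structure is blocks; more usefully, combine this with the fact that any node that turns into a fresh block node (an old $\ge 3$-degree node that dropped to degree $2$) is in bijection with, or dominated by, leaves that were consumed. (2) Bound $\Phi''_{i}$: using Lemma~\ref{lem:leafroot} and $k_{i,j}\le |\new(B_{i,j})|+\tfrac12\sum_{j'\in\prev(k_{i,j})}k_{i-1,j'}$, sum over all blocks $j$ of $T_i$. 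Since the $\prev$ sets are disjoint across distinct blocks of $T_i$ (Corollary~\ref{cor:niceblocks}), the summed halved terms total at most $\tfrac12\Phi''_{i-1}$. The $\sum_j|\new(B_{i,j})|$ term counts nodes that are degree-$2$ in $T_i$ but were \emph{not} degree-$2$ in $T_{i-1}$; each such node was a leaf, a $3$-node, or the root of $T_{i-1}$, and the number of (non-root) nodes of degree $\ge 3$ in a tree is at most its number of leaves. So $\sum_j|\new(B_{i,j})|\le \Phi'_{i-1}+O(1)$. This gives roughly $\Phi''_{i}\le \tfrac12\Phi''_{i-1}+\Phi'_{i-1}$, and symmetrically $\Phi'_{i}\le$ (a constant fraction of) $\Phi'_{i-1}$ plus new leaves, where new leaves of $T_i$ come from old degree-$2$ nodes being raked away, again bounded via the block mass. (3) Iterate the two inequalities once more, from $i-1$ to $i+1$, plug in, and simplify: the coupled recursion $\Phi'_{i+1}\lesssim a\Phi'_{i-1}+b\Phi''_{i-1}$, $\Phi''_{i+1}\lesssim c\Phi'_{i-1}+\tfrac14\Phi''_{i-1}$ with the appropriate small constants $a,b,c$ yields $\Phi_{i+1}=\Phi'_{i+1}+\Phi''_{i+1}\le \tfrac78\Phi_{i-1}$.

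I expect the main obstacle to be step (1)/(2)'s accounting of $\new$ and new leaves: one must argue carefully that a newly-degree-$2$ node of $T_i$ truly corresponds (injectively, up to the additive $O(1)$ for the root) to either a leaf of $T_{i-1}$ that has since been removed or to a former high-degree node, and then that the count of former high-degree nodes is absorbed by $\Phi'_{i-1}$ via the ``#leaves $\ge$ #degree-$\ge 3$ nodes'' fact for trees. There is also a subtlety that $T_i$ need not be a subtree rooted the same way unless one invokes Lemma~\ref{lem:propcombi} (it is, rooted at $\rooot$), and that a block of $T_i$ may absorb several blocks of $T_{i-1}$ plus new nodes — which is exactly what Corollary~\ref{cor:niceblocks} and the definition of $\new$ are designed to handle. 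Once these structural facts are in place, the arithmetic to reach $7/8$ is routine; the only real choice is bundling the decrease over two iterations rather than one so that the $\tfrac12$-contraction of $\Phi''$ compounds to $\tfrac14$ and dominates the $\Phi'$-fed injection term, after which $\Phi'$ being itself a decreasing-by-a-constant-factor quantity (leaves being raked) closes the argument.
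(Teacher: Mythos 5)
Your high-level shape matches the paper's (split $\Phi$ into leaf count $\Phi'$ and block mass $\Phi''$, let the $k$-values halve, charge injected block mass to leaves, amortize over two iterations), but two of your key estimates are too weak for the arithmetic to close, and these are exactly where the paper does something sharper. First, you bound $\sum_j|\new(B_{i,j})|$ by $\Phi'_{i-1}$ via ``$\#$leaves $\ge$ $\#$degree-$\ge 3$ nodes''. The paper instead charges each node of $\new$ injectively to a leaf that becomes \emph{inactive in that very iteration}: a node that newly drops to degree $2$ has an incoming active pointer from a leaf which deactivates, so $|\new(i'+1)|\le X_{i'}=\Phi'_{i'}-\Phi'_{i'+1}$ (using \Cref{lem:leaves} and Observation~\ref{prop:atmostone}). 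This makes the injection term telescope against the simultaneous drop in $\Phi'$, giving $\Phi_{i'+1}\le \Phi'_{i'}+\tfrac12\Phi''_{i'}$. With your weaker bound the coupled recursion $\Phi''_i\le \tfrac12\Phi''_{i-1}+\Phi'_{i-1}$, $\Phi'_i\le \tfrac12(\Phi'_{i-1}+\Phi''_{i-1})$ has iteration matrix with spectral radius $\tfrac12+\tfrac1{\sqrt2}>1$, so two steps do not contract and no choice of ``appropriate small constants'' rescues the claimed $7/8$.

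Second, your step (1) asserts that ``at least a constant fraction of leaves of $T_{i-1}$ get removed''; this is false in general and is the crux the paper has to work around. Leaves whose active pointer ends in a $2$-node (i.e., inside a block) are \emph{not} removed; only among leaves pointing to a $3$-node or the root is at least half removed. The paper controls the persistent leaves via the set $\stays(i')$: distinct staying leaves point into distinct blocks (Observation~\ref{prop:three}), each such block contributes at least $1$ to $\Phi''_{i'}$, hence $|\stays(i')|\le\Phi''_{i'}$ and $\Phi'_{i'+1}\le\tfrac12(\Phi'_{i'}+\Phi''_{i'})$. Your proposal has no substitute for this argument, and your remark that ``new leaves of $T_i$ come from old degree-$2$ nodes being raked away'' contradicts \Cref{lem:leaves} (there are no new leaves; the problem is old leaves that refuse to die). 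Finally, even with both corrected inequalities, one still needs the paper's closing case distinction on whether $\Phi''_{i-1}\ge\tfrac13\Phi'_{i-1}$ to extract the constant $7/8$ (resp.\ $5/6$); the purely matrix-style two-step composition you sketch does not by itself produce it.
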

\begin{proof}
	For $i' \in \{ i-1, i \}$, let $X_j$ denote the number of leaves that are contained in $T_{i'}$, but not in $T_{i'+1}$.
	By \Cref{lem:leaves}, $X_{i'} = \Phi'_{i'} - \Phi'_{i'+1}$.
	
	By the definition of the $k_{a,j}$ and $\new(B_{a,j})$ (as well as \Cref{cor:niceblocks}), we have $\Phi''_{i'+1} \leq 1/2 \cdot \Phi''_{i'} + |\new(i'+1)|$, where $\new(i'+1)$ denotes the set of (non-root) nodes in $T_{i'+1}$ that have degree $2$ in $T_{i'+1}$ but not in $T_{i'}$.
	Recall (from the proof of \Cref{lem:leafroot}) that any (non-root) node of degree at least $3$ in $T_{i'}$ must be a $3$-node at the end of iteration $i'$, and observe that any leaf in $T_{i'}$ must be a $1$-node at the end of iteration $i'$.
	By the design of the update rules, it follows that to any node $u$ from the set $\new(i'+1)$, we can assign a leaf $f_u$ of $T_{i'}$ such that there is an active pointer $(f_u,u)$ at the end of iteration $i'$, and $f_u$ becomes inactive at the end of iteration $i'+1$.
	As $f_u \neq f_{u'}$ for any two nodes $u \neq u'$ from $\new(i'+1)$ (due to Observation~\ref{prop:atmostone}), we obtain $|\new(i'+1)| \leq X_{i'}$, which implies
	\[
	\Phi''_{i'+1} \leq 1/2 \cdot \Phi''_{i'} + X_{i'} \enspace.
	\]
	As the next step, we bound $\Phi'_{i'+1}$ in terms of $\Phi''_{i'}$.
	Let $\stays(i')$ be the set of all leafs $u$ of $T_{i'}$ such that at the end of iteration $i'$ the active pointer starting in $u$ does not end in a $3$-node or the root.
	Since all (non-root) nodes that have degree at least $3$ in $T_{i'}$ are $3$-nodes at the end of iteration $i'$ (as already observed above), any node $u \in \stays(i')$ must point to some (non-root) degree-$2$ node $f_u$ in $T_{i'}$.
	For any two distinct nodes $u, u' \in \stays{i'}$, the nodes $f_u$ and $f_{u'}$ must lie in different blocks of $T_{i'}$ (due to Observation~\ref{prop:three}), and each block $B_{i',j}$ containing such a node $f_u$ must satisfy $k_{i',j} \geq 1$ (as the leaf-root pointer chain starting in $u$ contains at least one node of $B_{i',j}$, namely $f_u$).
	Hence, $|\stays(i')| \leq \Phi''_{i'}$.
	Moreover, the design of the update rules ensures that out of all the leaves in $T_{i'}$ that point to a $3$-node or the root at the end of iteration $i'$, at least half will become inactive at the end of iteration $i'+1$.
	Thus, we obtain
	\begin{align*}
		\Phi'_{i'+1} &\leq |\stays(i')| + 1/2 \cdot \left(\Phi'_{i'} - |\stays(i')|\right) \\
		&= 1/2 \cdot \left( \Phi'_{i'} + |\stays(i')| \right) \leq 1/2 \cdot \left( \Phi'_{i'} + \Phi''_{i'} \right) \enspace.
	\end{align*}
	
	To finish our calculations, we consider two cases.
	Let us first consider the case that $\Phi''_{i-1} \geq 1/3 \cdot \Phi'_{i-1}$, which implies $\Phi'_{i-1} \leq 3/4 \cdot \Phi_{i-1}$.
	Then, using the equations and inequalities derived above, we obtain
	\begin{align*}
		\Phi_i = \Phi'_i + \Phi''_i &\leq \left( \Phi'_{i-1} - X_{i-1} \right) + \left( 1/2 \cdot \Phi''_{i-1} + X_{i-1} \right)\\
		&= \Phi'_{i-1} + 1/2 \cdot \Phi''_{i-1} \\
		&= 1/2 \cdot \Phi_{i-1} + 1/2 \cdot \Phi'_{i-1} \\
		&\leq 7/8 \cdot \Phi_{i-1} \enspace.
	\end{align*}
	Similarly to above, we see that
	\[
	\Phi_{i+1} \leq \Phi'_i + 1/2 \cdot \Phi''_i \leq \Phi_i \enspace,
	\]
	which implies
	\[
	\Phi_{i+1} \leq 7/8 \cdot \Phi_{i-1} \enspace.
	\]
	
	Now, consider the case that $\Phi''_{i-1} < 1/3 \cdot \Phi'_{i-1}$.
	Using the inequalities derived earlier, we obtain
	\[
	\Phi'_i \leq 1/2 \cdot \left( \Phi'_{i-1} + \Phi''_{i-1} \right) \leq 2/3 \cdot \Phi'_{i-1}
	\]
	and
	\[
	\Phi''_i \leq 1/2 \cdot \Phi''_{i-1} + X_{i-1} \leq \Phi''_{i-1} + \Phi'_{i-1} - \Phi'_i \enspace.
	\]
	Similarly to the previous case, we see that
	\begin{align*}
		\Phi_{i+1} \leq \Phi'_i + 1/2 \cdot \Phi''_i &\leq \Phi'_i + 1/2 \cdot \left( \Phi''_{i-1} + \Phi'_{i-1} - \Phi'_i \right)\\
		&= 1/2 \cdot \Phi''_{i-1} + 1/2 \cdot \left( \Phi'_{i-1} + \Phi'_i \right)\\
		&\leq 1/2 \cdot \Phi''_{i-1} + 5/6 \cdot \Phi'_{i-1} \\
		&\leq 5/6 \cdot \Phi_{i-1} \enspace.
	\end{align*}
	Hence, in both cases, we have $\Phi_{i+1} \leq 7/8 \cdot \Phi_{i-1}$, as desired.
\end{proof}

Using \Cref{lem:calc}, we are finally able to bound the number of iterations in Phase I and prove \Cref{lem:runtimephaseone}.

\begin{lemma}[Restating \Cref{lem:runtimephaseone}]
	Algorithm $\aone$ terminates after $O(\log n)$ iterations.
\end{lemma}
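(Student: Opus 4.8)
The plan is to feed the geometric per-round decay from \Cref{lem:calc} into a simple ``the potential stays $\ge 1$ until termination'' argument, so the number of iterations is logarithmic in the initial value of the potential.

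\emph{Step 1: bound the initial potential.} First I would observe that $\Phi_0 = O(n)$. Indeed, $\Phi'_0$ is the number of leaves of $T_0 = G'$, so $\Phi'_0 \le |V(G')| \le n$; and $\Phi''_0 = \sum_{j} k_{0,j} = \sum_j |B_{0,j}|$, where the blocks $B_{0,j}$ are pairwise vertex-disjoint subsets of $V(G')$, so $\Phi''_0 \le |V(G')| \le n$ as well. Hence $\Phi_0 \le 2n$. (One could even invoke \Cref{cor:fewernodes} to get $\Phi_0 \le 2n/\log n$, but $O(n)$ is all we need.)

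\emph{Step 2: lower-bound the potential before termination.} Next I would note that $\Phi_i \ge 1$ whenever $\aone$ has not terminated after iteration $i$. By the termination rule of $\aone$, non-termination after iteration $i$ means $\aps_i \ne \emptyset$, so some pointer $(u,v)$ is active at the end of iteration $i$; since no active pointer may start in $\rooot$, we have $u \ne \rooot$, and (using \Cref{lem:propcombi}) both $u$ and $v$ lie in $T_i$, so $T_i$ has at least two nodes. A rooted tree with at least two nodes has a non-root leaf, hence $\Phi'_i \ge 1$, and since $\Phi''_i \ge 0$ this gives $\Phi_i \ge 1$.

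\emph{Step 3: iterate \Cref{lem:calc} and conclude.} Finally, I would argue by contradiction: suppose $\aone$ runs for more than $2k$ iterations, where $k := \lceil \log_{8/7}(2n) \rceil + 1 = O(\log n)$. Then Phase~I does not terminate at the end of any iteration $i \le 2k$, so \Cref{lem:calc} applies with $i = 1, 3, \dots, 2k-1$, giving $\Phi_{2\ell} \le \tfrac{7}{8}\, \Phi_{2\ell-2}$ for every $1 \le \ell \le k$, and therefore $\Phi_{2k} \le (7/8)^k \, \Phi_0 \le (7/8)^k \cdot 2n < 1$ by the choice of $k$. But Step~2 yields $\Phi_{2k} \ge 1$, a contradiction. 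Hence $\aone$ terminates within $2k = O(\log n)$ iterations. The substantive work has already been done in \Cref{lem:calc} (and the potential bookkeeping behind it, in particular \Cref{lem:leafroot}); the only mild obstacle remaining is to check that the chain of applications of \Cref{lem:calc} is legitimate, i.e.\ that $\Phi_i$ is well-defined (Phase~I not yet terminated) at every index appearing in the chain — which is exactly guaranteed by the contradiction hypothesis — and that the phrasing of \Cref{lem:calc}, which advances by two iterations per application, indeed covers every even index up to $2k$.
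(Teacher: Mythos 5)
Your proof is correct and follows essentially the same route as the paper's: bound $\Phi_0 \le 2n$, iterate the $7/8$-decay of \Cref{lem:calc}, and use the fact that an active pointer forces a non-root leaf in $T_i$ (hence $\Phi_i \ge 1$) to conclude termination within $O(\log n)$ iterations. Your Steps 2 and 3 merely spell out more carefully the two points the paper's proof leaves implicit (why $\Phi_i \ge 1$ before termination, and why the two-step applications of \Cref{lem:calc} chain together under the contradiction hypothesis), which is fine.
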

\begin{proof}
	Suppose for a contradiction that there is no constant $c$ such that $\aone$ always terminates after at most $c \cdot \log n$ iterations.
	Observe that $\Phi_0 = \Phi'_0 + \Phi''_0 \leq n + n = 2n$.\footnote{Note that the compatibility tree $G'$ has actually only $O(n/\log n)$ nodes, by \Cref{cor:fewernodes}, but upper bounding this by $n$ suffices.}
	By \Cref{lem:calc}, there exists some constant $c$ such that $\Phi_{c \cdot \log n} < 1$.
	By the definition of $\Phi_i$, it follows that the tree $T_{c \cdot \log n}$ of active nodes obtained after $c \cdot \log n$ iterations does not contain any leaves (apart from, potentially, the root).
	This implies that there is no active pointer after $c \cdot \log n$ iterations, which implies that $\aone$ terminates after at most $c \cdot \log n$ iterations, yielding a contradiction.
\end{proof}

\subsection{Analysis of Phase II}\label{sec:appphasetwo}
In this section, we will analyze Phase II of algorithm $\fA'$ defined in \Cref{sec:phasetwo}. In particular, we will prove that Phase II terminates after $O(\log n)$ iterations.
We will start by providing the missing proofs for
\Cref{lem:attheroot,lem:nicesplit,lem:between}
and Observation~\ref{obs:summary}.

\begin{lemma}[Restating \Cref{lem:attheroot}]
	For each edge $e$ incoming to the root $\rooot$, there is precisely one pointer $p = (u, \rooot) \in \ps_{\fin}$ such that $u$ is a leaf and $\last_p = e$.
\end{lemma}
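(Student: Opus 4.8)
The plan is to track, for each edge $e$ incoming to the root $\rooot$, what happens to the set of active pointers whose ``last edge'' is $e$, and to show that throughout Phase~I exactly one such pointer is ``retired'' into $\ps_{\fin}$ (i.e.\ dropped from the active set without being merged further), and that this retired pointer has its start vertex equal to a leaf of $G'$.

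First I would isolate the relevant edge $e = (v, \rooot)$ and consider the child $v$ of $\rooot$ that is incident to $e$. I would argue that at the start of Phase~I there is exactly one pointer whose last edge is $e$, namely the basic pointer $(v,\rooot)$, and that $\first_{(v,\rooot)} = \last_{(v,\rooot)} = e$. Then I would follow the ``pointer chain'' reasoning already developed for Phase~I: by \Cref{lem:propcombi}(b), the pointers starting in descendants of $v$ that pass through $e$ form a nested family, and by Observations~\ref{prop:three} and~\ref{prop:unweird} together with the merge rules, at any end of an iteration there is a unique active pointer $p$ with $\last_p = e$ provided such an active pointer exists at all — because all such pointers would have to lie on a single directed path from some descendant of $v$ up to $\rooot$ (if two had different ``last edges'' it wouldn't be $e$; if two coexisted with the same last edge $e$ and distinct start vertices, \Cref{lem:propcombi} and the merge dynamics would force one to be obtained from the other). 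I would phrase this as: ``for each iteration $i$, there is at most one active pointer $p \in \aps_i$ with $\last_p = e$.''

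Next I would show that such an active pointer with last edge $e$ eventually disappears from $\aps_i$, and when it does, it does so via step~\ref{step3} at the root (since $\rooot$ is always treated as a ``$3$-node-like'' node in the update rules, never merged). Examining step~\ref{step3} applied to $u = \rooot$: for the relevant in-edge $e$, the algorithm checks whether the unique incoming active pointer $p = (w,\rooot)$ with $\last_p = e$ satisfies that $w$ is a leaf. If yes, $p$ is added to $\aps_i$ anyway (via step~\ref{itema}, since $u = r$ always proceeds as in the ``no'' case), so the pointer with last edge $e$ persists; but then $w$ is a leaf, and the leaf $w$ becomes inactive — the merge operations at $2$-nodes along the path keep producing a single pointer with last edge $e$ whose start vertex is always a leaf of $G'$ (by \Cref{lem:leaves}, any node that was a leaf stays a leaf). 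So the pointer with last edge $e$, when it finally stabilizes at termination, has a leaf as its start vertex. I would combine this with \Cref{lem:leafroot} / the termination argument (\Cref{lem:runtimephaseone}): Phase~I terminates when $\aps_i = \emptyset$, but since $\rooot$ has incoming edge $e$ and $G'$ is a tree rooted at $\rooot$, there is at least one leaf-root pointer chain through $e$, so at least one pointer with last edge $e$ survives into $\ps_{\fin}$; uniqueness then follows from the ``at most one active pointer with last edge $e$'' invariant, since once a pointer is in $\ps_{\fin}$ with last edge $e$, the fact that it is no longer active (it ends at $\rooot$, which is never merged) together with the invariant shows no second one can exist.

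The main obstacle I anticipate is making the ``at most one active pointer with last edge $e$'' invariant fully rigorous: one must carefully argue, using \Cref{lem:propcombi}, Observation~\ref{prop:three}, and Observation~\ref{prop:unweird}, that two distinct active pointers at the end of an iteration with the same last edge $e$ incoming to $\rooot$ cannot coexist — the candidates' paths would overlap in $e$, hence by Observation~\ref{prop:unweird} lie on one directed path, and then by \Cref{lem:propcombi}(b) their start vertices would be comparable, but the ``$3$-node'' treatment of $\rooot$ in step~\ref{step3} ensures the update rules never keep two pointers with the same last edge at $\rooot$ active simultaneously (at most one pointer per relevant in-edge is retained, and any merge would have been performed earlier by a $2$-node on the path). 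I would also need a short lemma that the start vertex of the surviving pointer is a leaf: every pointer is either basic (start vertex could be any node) or created by a merge at a $2$-node $v$, in which case $\first$ is inherited from the ``lower'' pointer, so by induction $\first_p$ traces back to the start vertex of some basic pointer, and the only way this basic pointer's start vertex is \emph{not} a leaf is if step~\ref{step3}\ref{itemb} would have absorbed it as completability information rather than keeping the pointer active — but the root always executes step~\ref{itema}, keeping the pointer; tracing the start vertex down, it must be a leaf of $G'$ by \Cref{lem:leaves}. Once these two pieces are in place the proof is essentially a bookkeeping argument.
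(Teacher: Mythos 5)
Your central invariant---``at the end of each iteration there is at most one active pointer $p$ with $\last_p = e$''---is false, and both your uniqueness argument and your description of the dynamics at the root rest on it. Concretely, let $G'$ be the path on vertices $a,b,c,\rooot$ (rooted at $\rooot$, $a$ a leaf, $e = (c,\rooot)$). In iteration $1$ the active $2$-node $c$ merges $(b,c)$ with $(c,\rooot)$ and adds $(b,\rooot)$ to $\aps_1$, while \emph{in parallel} the root executes step~\ref{step3}: its only incoming active pointer with $\last = e$ is $(c,\rooot)$, whose start vertex is not a leaf, so the answer is ``no'' and step~\ref{itema} re-adds $(c,\rooot)$ to $\aps_1$. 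Hence $(b,\rooot)$ and $(c,\rooot)$ are simultaneously active with last edge $e$, and in later iterations even more such pointers accumulate (in this example $(a,\rooot)$, $(b,\rooot)$, $(c,\rooot)$ are all active with last edge $e$ before the final retirement). Note that step~\ref{itema} adds \emph{all} incoming active pointers with that last edge, not one per relevant in-edge as you assert. You also misread the clause ``or if $u=r$'' in step~\ref{step3}: it only means the root skips the fallback of flipping one ``yes'' to ``no''; when the answer for $e$ is ``yes'' the root still executes step~\ref{itemb} for $e$, i.e., it retires all active pointers with last edge $e$ and sets $\comp^e(\rooot)$---they are \emph{not} ``added to $\aps_i$ anyway'', contrary to your claim that the pointer then persists. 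Finally, the claim that merges ``keep producing a single pointer with last edge $e$ whose start vertex is always a leaf'' is false ($(b,\rooot)$ above), and observing that $\first_p$ traces back to a basic pointer proves nothing, since every vertex starts a basic pointer.

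The statement is not about coexisting active pointers at a fixed time but about \emph{leaf-started} pointers in $\ps_{\fin}$, which contains every pointer ever created (so it does contain many non-leaf-started pointers with last edge $e$; only the leaf-started one is unique). The paper's argument is: for existence, at the start there is an active pointer ending in $\rooot$ with last edge $e$, and by the update rules this property can only cease after a pointer with last edge $e$ starting in a leaf has been added; since no active pointer remains at termination, such a pointer exists in $\ps_{\fin}$. For uniqueness, consider the first iteration at whose end a leaf-started pointer with last edge $e$ is active: two such pointers would, by Observation~\ref{prop:unweird}, place two distinct leaves on a common directed path, which is impossible, so there is exactly one; in the following iteration the root retires all active pointers with last edge $e$, and since a new pointer with last edge $e$ can only arise from merging an active pointer with that last edge, none is created afterwards. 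Your proposal gestures at some of these ingredients (Observation~\ref{prop:unweird}, termination, the root's special role), but the bookkeeping it is built on does not hold, so as written it has a genuine gap.
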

\begin{proof}
	Fix an arbitrary edge $e$ incoming to the root $\rooot$.
	In the beginning of Phase I, there is an active pointer $p'$ that ends in $\rooot$ and satisfies $\last_{p'} = e$.
	Due to the design of the update rules for Phase I, this can only change once such a pointer that additionally starts in a leaf has been added to the pointer set.
	As at the end of Phase I, there is no active pointer left, it follows that there is at least one pointer $p$ that ends in $\rooot$ and satisfies $\last_{p} = e$.
	
	In order to show that there is at most such pointer, consider the first iteration $i$ in which such a pointer appeared in the set of active pointers (and therefore also in the set of pointers).
	By Observation~\ref{prop:unweird}, there can only be one such pointer at the end of iteration $i$.
	Moreover, due to the design of the update rules, no pointer $p''$ satisfying $\last_{p''} = e$ is added to the set of pointers in any later iteration (as no active pointer $p'''$ satisfying $\last_{p'''} = e$ remains at the end of iteration $i+1$).
	It follows that for edge $e$, there is precisely one pointer as described in the lemma.
\end{proof}

\begin{lemma}[Restating \Cref{lem:nicesplit}]
	Let $p = (u,v)$ be a pointer in $\ps_{\fin}$ with $\pred_p \neq \bot$.
	Then $\pred_p$ has degree at least $2$ in $G'$.
	Moreover,
	\begin{enumerate}
		\item if $\pred_p$ has degree $2$, then $(u, \pred_p), (\pred_p, v) \in \ps_{\fin}$, and
		\item if $\pred_p$ has degree at least $3$, then $(u, \pred_p), (\pred_p, v) \in \ps_{\fin}$, and for each edge $e$ incoming at $\pred_p$ that does not lie on the path from $u$ to $v$, there is exactly one pointer $p' = (w, \pred_p) \in \ps_{\fin}$ such that $w$ is a leaf and $\last_{p'} = e$.
	\end{enumerate}
\end{lemma}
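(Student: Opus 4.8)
The plan is to first discharge the ``cheap'' clauses and then concentrate on the multiplicity statement in case~2, which carries the real content. Since $\pred_p \neq \bot$, the pointer $p = (u,v)$ is not one of the initial edge-pointers of $\ps_0$ (all of which have $\pred = \bot$), so it was produced by a $\merge$ operation in some iteration $i$ of Phase~I. By \Cref{def:merge}, that merge was carried out by an active $2$-node $w$; we get $\pred_p = w$ and $p = \merge(q,q')$ for active pointers $q = (u,w)$ and $q' = (w,v)$ present at the end of iteration $i-1$. As pointer sets only grow ($\ps_{i-1} \subseteq \ps_{\fin}$), the pointers $(u, \pred_p) = q$ and $(\pred_p, v) = q'$ both lie in $\ps_{\fin}$, which establishes the ``$(u,\pred_p),(\pred_p,v) \in \ps_{\fin}$'' clause of both cases. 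Moreover, as a $2$-node (\Cref{def:123node}), $w$ is distinct from $\rooot$ and has an incoming active pointer; since a leaf of $G'$ has no proper descendants it has no incoming pointers, so $w$ is not a leaf and hence has degree $\geq 2$ in $G'$. This proves the first sentence.

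Now assume $w = \pred_p$ has degree $\geq 3$ in $G'$, let $e^{\ast} \coloneqq \last_q$ be the in-edge of $w$ lying on the path from $u$ to $v$, and note that the in-edges of $w$ ``not lying on that path'' are exactly the remaining $\deg(w)-2 \geq 1$ in-edges. Fix one such in-edge $e = (c,w)$. The technical core will be the invariant that \emph{at the end of every iteration there is at most one active pointer ending at $w$ with $\last$-edge $e$}. I would prove this by induction on iterations: initially the only such pointer is the edge $(c,w)$ itself; in any later iteration a new pointer ending at $w$ with $\last$-edge $e$ can arise only from a $\merge$ performed by some $2$-node strictly below $w$ whose outgoing active pointer is the (by induction, unique) active pointer ending at $w$ with $\last$-edge $e$, and this merge removes the old pointer from the active set while adding the new one, whereas update rule~\ref{step3} applied at $w$ only deletes active pointers and updates $\comp$, never creating pointers. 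Consequently the pointers ending at $w$ with $\last$-edge $e$ that ever appear form a chain $(c,w) = (a_0,w), (a_1,w), (a_2,w), \dots$ in which each $a_{j+1}$ is a strict descendant of $a_j$, with at most one active at a time.

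Next I would argue that the active pointer carried on $e$ must eventually be removed by update rule~\ref{step3} at $w$ while $w$ is still a $3$-node: $w$ keeps $e$ as a relevant in-edge exactly as long as some active pointer ends at $w$ with $\last$-edge $e$; merges at $2$-nodes strictly below $w$ merely lengthen that pointer downwards (keeping $e$ as its last edge); and by \Cref{lem:noin23}, $w$ eventually becomes a $1$-node, passing through a $2$-node stage in which it has a single relevant in-edge, necessarily $e^{\ast}$ (the relevant in-edge used in the merge that created $p$). Hence $e$ is resolved in some iteration $j$ in which $w$ is a $3$-node and rule~\ref{step3} answers ``yes'' for $e$, i.e.\ the (unique, by the invariant) active pointer ending at $w$ with $\last$-edge $e$ at that moment, say $(a_k,w)$, starts at a leaf $x = a_k$; this $(x,w) \in \ps_{\fin}$ witnesses existence. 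For uniqueness, any $(x',w) \in \ps_{\fin}$ with $x'$ a leaf and $\last = e$ is a chain member $(a_{k'},w)$; if $k' < k$ then $a_{k'}$ is a strict ancestor of the leaf $a_k$ and thus has a child, contradicting that it is a leaf; if $k' > k$ then $(a_{k'},w)$ would have to be created after iteration $j$, but after $j$ no active pointer ends at $w$ with $\last$-edge $e$ (the one carried on $e$ was dropped, and it started at the leaf $x$, which is not a $2$-node and performs no merge that could re-introduce one), so no later $\merge$ can produce it; hence $k' = k$.

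I expect the main obstacle to be the chain invariant together with the accompanying claim that $e$ is genuinely resolved \emph{by a leaf pointer}: this requires carefully tracking how update rules~\ref{step2} and~\ref{step3} interact — in particular, that although they are executed in parallel they cannot conspire to re-introduce a pointer on $e$ after it has been resolved, and that the pointer on $e$ never gets ``stuck'' without reaching a leaf (which is ruled out by combining \Cref{lem:noin23} with Observation~\ref{prop:three}). This mirrors, but is somewhat more delicate than, the argument already used for \Cref{lem:attheroot}.
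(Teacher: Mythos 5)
Your opening part is fine: the derivation of $\deg(\pred_p)\geq 2$ and of $(u,\pred_p),(\pred_p,v)\in\ps_{\fin}$ from the merge that created $p$ matches the paper's argument. The genuine gap is your ``chain invariant'' for an off-path in-edge $e=(c,w)$ of $w=\pred_p$: the claim that at the end of every iteration at most one active pointer ends at $w$ with $\last$-edge $e$ is false, and so is the step you use to justify it, namely that the merge at the $2$-node below ``removes the old pointer from the active set while adding the new one.'' Whether a pointer ending at $w$ survives an iteration is not decided by the merging $2$-node but at $w$ itself: the active set is rebuilt each iteration, and as long as $w$ is still a $3$-node and the answer for $e$ in step~\ref{step3} is ``no'', step~\ref{itema} re-adds \emph{every} incoming active pointer with $\last$-edge $e$ in the very same iteration in which step~\ref{step2} at the $2$-node below creates a longer one. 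Concretely, if a path $x\!-\!a\!-\!c$ hangs from $w$ through $e$ (with $x$ a leaf), then after one iteration both $(c,w)$ and $(a,w)$ are active with $\last$-edge $e$, and after two iterations $(x,w),(a,w),(c,w)$ all are. A second failure of the single-chain picture: a $2$-node can hold several incoming active pointers with the same $\last$-edge (a former $3$-node keeps, via step~\ref{itema}, all the pointers accumulated on its one remaining relevant in-edge before turning into a $2$-node), so one merge step can spawn several new pointers ending at $w$ with $\last$-edge $e$ simultaneously. Your existence argument is essentially salvageable and is close in spirit to the paper's: since the relevant in-edge of $w$ as a $2$-node is the on-path edge $e^{\ast}$, merges performed by $w$ never consume pointers with $\last$-edge $e$, so the only way the active pointers on $e$ can vanish is a ``yes'' resolution in step~\ref{step3}, which requires a leaf-source pointer that then sits in $\ps_{\fin}$. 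But your uniqueness argument is built entirely on the false invariant and collapses with it.

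For comparison, the paper makes no attempt to bound the number of simultaneously active pointers on $e$; it transfers the argument of \Cref{lem:attheroot}: at the first iteration in which a leaf-source active pointer with $\last$-edge $e$ ends at $w$, Observation~\ref{prop:unweird} shows it is the unique leaf-source one (two leaf sources would have to lie on a common directed path), and afterwards no pointer with $\last$-edge $e$ ending at $w$ is ever added to the pointer set again, since after the resolving iteration no such active pointer remains and no $2$-node below can recreate one; the only \Cref{lem:nicesplit}-specific addition is the observation that step~\ref{step2} at $w$ only ever concerns $e^{\ast}$ and hence cannot be the reason the pointers on $e$ disappear. To repair your proof you would need to replace the chain invariant by this kind of argument, using Observation~\ref{prop:unweird}, Observation~\ref{prop:atmostone} and \Cref{lem:propcombi} to control the (possibly many) coexisting pointers on $e$, rather than assuming there is only one at a time.
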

\begin{proof}
	Since $\pred_p \neq \bot$, the pointer $p$ must be the result of a merge operation, which, by the definition of $\merge(\cdot,\cdot)$, implies that $\pred_p$ is an internal vertex of the path from $u$ to $v$.
	Hence, $\deg(\pred_p) \geq 2$.
	
	First, consider the case that $\deg(\pred_p) = 2$.
	From the design of the update rules of Phase I and the definition of $\merge(\cdot,\cdot)$, it follows directly that at some point during Phase I, there must have existed active pointers $(u, \pred_p), (\pred_p, v)$.
	This implies $(u, \pred_p)$, $(\pred_p, v) \in \ps_{\fin}$.
	
	Now, consider the case that $\deg(\pred_p) \geq 3$.
	Analogously to the previous case, we obtain $(u, \pred_p), (\pred_p, v) \in \ps_{\fin}$.
	Now what is left to be shown follows from an analogous argumentation to the one provided in the proof of \Cref{lem:attheroot}, with only one difference: for the considered edge $e$ incoming at $\pred_p$, it could also be the case that, at the point in Phase I where the property that there is an active pointer $p''$ that ends in $\pred_p$ and satisfies $\last_{p''} = e$ becomes false, this happens due to step~\ref{step2}, and not due to step~\ref{step3}, of the update rules.
	However, observe that any vertex $x$ can ``perform'' merge operations in at most one iteration during Phase I (where the node considered to perform a merge operation is the node $\pred_{p'''}$ where $p'''$ is the pointer created during the merge operation) since, by the design of the update rules, the merge operations of that iteration will make sure that no active pointer that ends in $x$ remains (which cannot change thereafter).
	Observe further that for $x = \pred_p$, each of those merge operations must have merged two pointers where the one incoming at $\pred_p$ (let us call it $q$) satisfies $\last_q = e'$ where $e'$ is the edge incoming at $\pred_p$ that lies on the path from $u$ to $v$.
	Hence, the aforementioned difference only applies to pointers $q$ ending at $\pred_p$ satisfying $\last_q = e'$, and since the lemma statement only concerns pointers $p'$ with $\last_{p'} \neq e'$, that difference is irrelevant, and we can simply apply the argumentation from the proof of \Cref{lem:attheroot}.
\end{proof}

\begin{lemma}[Restating \Cref{lem:between}]
	Consider any $i \geq 1$, and any edge $e = (u,v) \in E(G')$.
	If $\done(i-1)$ does not contain the pointer $p = (u,v)$, then there is exactly one pointer $(w,x) \in \timey(i)$ such that $e \in \betw(w,x)$.
	If $\done(i-1)$ contains the pointer $p = (u,v)$, then there is no pointer $(w,x) \in \timey(i)$ such that $e \in \betw(w,x)$.
\end{lemma}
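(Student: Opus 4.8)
The plan is to prove the slightly stronger statement that, for every $i\ge 1$, the sets $\{\betw(p)\mid p\in\timey(i)\}$ are pairwise disjoint and $\bigcup_{p\in\timey(i)}\betw(p)=E(G')\setminus\done(i-1)$; the lemma is just this statement read one edge at a time. The proof is by induction on $i$, and the whole thing reduces to one combinatorial identity describing how $\betw(\cdot)$ behaves under the splitting performed by $\succc(\cdot)$. First I would record two easy facts. A pointer $p=(u,v)$ is \emph{basic} (i.e.\ $(u,v)\in E(G')$) if and only if $\pred_p=\bot$: a basic pointer lies in $\ps_0$ and hence has $\pred=\bot$, whereas every pointer produced by a $\merge$ operation spans a path of length at least two; and $\betw(p)=\{(u,v)\}$ for every basic $p$, directly from the definition of $\betw$ (case (2) is vacuous when $y=u$). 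Consequently $\succc(p)=\emptyset$ precisely for basic $p\in\timey(i)$, and the basic pointers of $\timey(i-1)$ are exactly $\done(i-1)\setminus\done(i-2)$ — using, for the latter, the disjointness part of the inductive hypothesis for $i-1$, which forces $\betw(p)=\{e\}\subseteq E(G')\setminus\done(i-2)$ for such $p$.

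The heart of the argument is the following \emph{splitting identity}: for every non-basic $p=(u,v)\in\ps_{\fin}$, writing $z\coloneqq\pred_p$ (an internal node of the $u$–$v$ path, by the proof of \Cref{lem:nicesplit}),
\[
\betw(p)=\bigsqcup_{p'\in\succc(p)}\betw(p')\qquad(\text{disjoint union}).
\]
By \Cref{lem:nicesplit} and the definition of $\succc$ we have $\succc(p)=\{(u,z),(z,v)\}\cup\{q_1,\dots,q_{\deg(z)-2}\}$, where each $q_k=(w_k,z)$ has $w_k$ a leaf and $\last_{q_k}=e_k$, the $e_k$ being the child-edges of $z$ that do not lie on the $u$–$v$ path. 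Writing $T_a$ for the subtree of $G'$ rooted at $a$ and $y_v$ for the child of $v$ on the $u$–$v$ path, the definition of $\betw$ unwinds to $\betw(u,v)=\{(y_v,v)\}\cup(E(T_{y_v})\setminus E(T_u))$, and analogously $\betw(u,z)=\{(y_z,z)\}\cup(E(T_{y_z})\setminus E(T_u))$, $\betw(z,v)=\{(y_v,v)\}\cup(E(T_{y_v})\setminus E(T_z))$, and $\betw(q_k)=\{e_k\}\cup E(T_{y_k})$ (the last using $E(T_{w_k})=\emptyset$), where $y_z,y_k$ are the children of $z$ on the path to $u$, resp.\ via $e_k$. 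Since the children of $z$ partition $E(T_z)$ into $\{(y_z,z)\}\cup E(T_{y_z})$ and the pieces $\{e_k\}\cup E(T_{y_k})$, and since $E(T_u)\subseteq E(T_{y_z})\subseteq E(T_z)\subseteq E(T_{y_v})$, a short set computation gives $\betw(u,z)\cup\bigcup_k\betw(q_k)=E(T_z)\setminus E(T_u)$ and then $\betw(z,v)\cup\big(E(T_z)\setminus E(T_u)\big)=\{(y_v,v)\}\cup(E(T_{y_v})\setminus E(T_u))=\betw(u,v)$, with all pieces pairwise disjoint because they lie in distinct branches (resp.\ levels) of the tree. The degree-$2$ case is the special case with no $q_k$'s.

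Granted these, the induction is routine. For $i=1$: $\done(0)=\emptyset$, and by \Cref{lem:attheroot} we have $\timey(1)=\{p_1,\dots,p_{\deg(\rooot)}\}$ with $p_j=(u_j,\rooot)$, $u_j$ a leaf, and the edges $\last_{p_j}$ ranging over the edges at $\rooot$; hence $\betw(p_j)=\{\last_{p_j}\}\cup E(T_{y_j})$ (again $E(T_{u_j})=\emptyset$), and these sets partition $E(G')$ exactly as the subtrees hanging from $\rooot$ do. For the inductive step, Observation~\ref{obs:summary} yields $\timey(i)=\bigsqcup_{p\in\timey(i-1)}\succc(p)$, so by the splitting identity (and $\succc(p)=\emptyset$ for basic $p$) together with the inductive hypothesis,
\[
\bigcup_{p'\in\timey(i)}\betw(p')=\bigcup_{\substack{p\in\timey(i-1)\\ p\text{ non-basic}}}\betw(p)=\Big(\bigcup_{p\in\timey(i-1)}\betw(p)\Big)\setminus\big(\done(i-1)\setminus\done(i-2)\big)=E(G')\setminus\done(i-1),
\]
and pairwise disjointness of $\{\betw(p')\mid p'\in\timey(i)\}$ is inherited from the disjointness of the $\betw(p)$, $p\in\timey(i-1)$ (inductive hypothesis) and the disjointness inside each $\succc(p)$.

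The main obstacle I expect is the splitting identity: getting the ancestor/subtree bookkeeping exactly right — which children of $z$ are counted, whether $\betw$ includes the edge immediately above a path endpoint, and the boundary cases where $z=y_v$ or where $(u,z)$ or $(z,v)$ is itself basic — and, in particular, verifying that the pieces genuinely partition $\betw(p)$ rather than merely cover it. Once that identity is nailed down, everything else is a formal manipulation of the already-established structure of $\timey(\cdot)$, $\succc(\cdot)$ and $\done(\cdot)$.
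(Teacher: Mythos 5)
Your proof is correct and follows essentially the same route as the paper's: an induction on $i$ whose base case is \Cref{lem:attheroot} and whose inductive step rests on the fact that, for a non-basic pointer $p$, the sets $\betw(p')$ over $p'\in\succc(p)$ partition $\betw(p)$ (a consequence of \Cref{lem:nicesplit} and the definition of $\betw$); the paper argues edge-by-edge while you phrase it as a global partition of $E(G')\setminus\done(i-1)$, and you additionally spell out the subtree bookkeeping behind the splitting identity, which the paper only asserts. One caveat: you invoke Observation~\ref{obs:summary} for $\timey(i)=\bigsqcup_{p\in\timey(i-1)}\succc(p)$, but in the paper that observation is itself proved using \Cref{lem:between}, so citing it here would be circular; fortunately the only facts you actually need---that every pointer of $\timey(i)$ lies in $\succc(p)$ for at least one $p\in\timey(i-1)$, and the explicit description of $\succc(p)$---follow directly from the definitions of $\timey(\cdot)$ and $\succc(\cdot)$ together with \Cref{lem:nicesplit}, and the uniqueness of the parent (the disjointness of that union as a union of pointer sets) is never really used in your argument, so the proof stands once that citation is replaced.
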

\begin{proof}
	We prove the statement by induction in $i$.
	For $i = 1$, we have $\done(i-1) = \emptyset$, so $p \notin \done(i-1)$.
	By \Cref{lem:attheroot}, for each edge $e'$ incident to $\rooot$, there is exactly one pointer $p_{e'} \in \timey(1)$ with $\last_{p_{e'}} = e'$, and $p_{e'}$ is guaranteed to be a leaf-root pointer.
	By the definition of $\betw(\cdot)$, it follows that $e$ is contained in $\betw(p_{e'})$ where $e'$ is the unique edge incident to $\rooot$ that lies on the path connecting $e$ with $\rooot$, and that $e$ is not contained in $\betw(p'')$ for any pointer $p'' \neq p_{e'}$ from $\timey(1)$.
	This covers the base of the induction.
	
	For the induction step assume that the lemma statement holds for $i-1$ (where $i \geq 2$); we aim to show that it then also holds for $i$.
	Consider first the case that $\done(i-1)$ does not contain the pointer $p = (u,v)$.
	Then also $\done(i-2)$ does not contain the pointer $p = (u,v)$, and the induction hypothesis guarantees that there is exactly one pointer $p' = (y,z) \in \timey(i-1)$ such that $e \in \betw(p')$.
	By the definitions of $\timey(i)$ and $\betw(\cdot)$, the only pointers $p'' \in \timey(i)$ that could possibly satisfy $e \in \betw(p'')$ are $(y, \pred_{p'})$, $(\pred_{p'}, z)$, and some $(a, \pred_{p'})$ where $a$ is a leaf and $\last_{(a, \pred_{p'})}$ does not lie on the path from $y$ to $z$.
	Now, \Cref{lem:nicesplit} (together with the definition of $\timey(i)$) guarantees that $e$ is contained in $\betw(p'')$ for exactly one of those possible choices for $p''$, since the sets $\betw(y, \pred_{p'})$, $\betw(\pred_{p'}, z)$, $\betw(b_1, \pred_{p'})$, $\dots$, $\betw(b_{\deg(\pred_{p'})-2}, \pred_{p'})$ are pairwise disjoint and their union is $\betw(y, z)$.
	(Here, $b_1, \dots, b_{\deg(\pred_{p'})-2}$ are the starting vertices of the precisely $\deg(\pred_{p'})-2$ pointers $p''' \in \timey(i)$ ending in $\pred_{p'}$ and satisfying that $\last_{p'''}$ does not lie on the path from $y$ to $z$, whose existence is guaranteed by \Cref{lem:nicesplit}.)
	Hence, there is exactly one pointer $(w,x) \in \timey(i)$ such that $e \in \betw(w,x)$, as desired.
	
	Now consider the second case, i.e., that $\done(i-1)$ contains the pointer $p = (u,v)$.
	By the induction hypothesis, there is no pointer $p' \neq p$ contained in $\timey(i-1)$ such that $e \in \betw(p')$.
	By the definitions of $\timey(i)$ and $\betw(\cdot)$ (and the fact that $\pred_p = \bot$), it follows that there is no pointer $(w,x) \in \timey(i)$ such that $e \in \betw(w,x)$, as desired.
\end{proof}

\begin{observation}[Restating Observation~\ref{obs:summary}]
	For any $i \geq 2$, and any pointer $p' \in \timey(i)$, there is exactly one pointer $p \in \timey(i-1)$ such that $p' \in \succc(p)$.
	Moreover, for any $i \geq 1$, and any pointer $p = (u,v) \in \timey(1)$ with $\pred_p \neq \bot$, we have $\succc(p) = \{ (u, \pred_p), (\pred_p, v), p_1$, $\dots$, $p_{\deg(\pred_p)-2} \}$ where each $p_j$ is a pointer starting in a leaf, ending in $\pred_p$, and satisfying $\last_{p_j} = e_j$, where $e_1, \dots, e_{\deg(\pred_p)-2}$ are the $\deg(\pred_p)-2$ edges incoming to $\pred_p$ that do not lie on the path from $u$ to $v$. 
\end{observation}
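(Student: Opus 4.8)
The plan is to handle the two parts of the observation separately; the explicit description in the second part is a direct consequence of \Cref{lem:nicesplit}, while the uniqueness statement in the first part is the part that actually needs an argument.

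For the second part, fix $p = (u,v) \in \timey(1)$ with $\pred_p \neq \bot$. Since $p \in \timey(1) \subseteq \ps_{\fin}$, \Cref{lem:nicesplit} applies and yields that $\deg(\pred_p) \geq 2$, that $(u, \pred_p)$ and $(\pred_p, v)$ lie in $\ps_{\fin}$, and, if $\deg(\pred_p) \geq 3$, that for each of the $\deg(\pred_p) - 2$ edges $e_j$ incoming at $\pred_p$ that do not lie on the path from $u$ to $v$ there is exactly one leaf-pointer $p_j = (w_j, \pred_p) \in \ps_{\fin}$ with $\last_{p_j} = e_j$. Each of $(u, \pred_p)$, $(\pred_p, v)$, $p_1, \dots, p_{\deg(\pred_p)-2}$ witnesses membership in $\timey(2)$ via $q = p$ under condition (1), (2), (3) of the definition of $\timey(2)$, respectively, so all of them lie in $\succc(p)$. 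Conversely, any third-type pointer in $\succc(p)$ has the form $(w', \pred_p)$ with $w'$ a leaf and $\last$-edge an incoming edge of $\pred_p$ off the path from $u$ to $v$ (note that $\pred_p$ is strictly interior to that path, so its incoming edges are exactly one path-edge together with the $\deg(\pred_p)-2$ edges $e_j$); by the uniqueness clause of \Cref{lem:nicesplit} such a pointer equals one of the $p_j$. This gives the claimed equality, with the convention that the list $p_1,\dots,p_{\deg(\pred_p)-2}$ is empty when $\deg(\pred_p) = 2$.

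For the first part, existence is immediate: if $p' \in \timey(i)$ then by definition of $\timey(i)$ there is some $p = (u,v) \in \timey(i-1)$ satisfying one of conditions (1)--(3) for $p'$ (which forces $\pred_p \neq \bot$), and these conditions are precisely those defining $p' \in \succc(p)$. The substance is uniqueness, and I would route it through the sets $\betw(\cdot)$ and \Cref{lem:between}. The key auxiliary fact is that $\emptyset \neq \betw(p') \subseteq \betw(p)$ for every $p' \in \succc(p)$: nonemptiness holds because $\betw(\cdot)$ of any pointer contains its top edge, and the inclusion follows by a short case check against the definition of $\betw$ — for $p' = (u, \pred_p)$ the path from $u$ to $\pred_p$ is a prefix of the path from $u$ to $v$; for $p' = (\pred_p, v)$ it is a suffix and the subtree of $\pred_p$ contains that of $u$, so fewer edges are subtracted; and for a leaf-successor $p' = (w, \pred_p)$ the whole set $\betw(p')$ sits inside the subtree hanging off $\pred_p$ along the edge $\last_{p'}$, which hangs off an interior node of the path from $u$ to $v$ and is disjoint from the subtree of $u$. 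In fact this inclusion is already established inside the proof of \Cref{lem:between}, where the $\betw$-sets of the successors of a pointer are shown to partition its own $\betw$-set, so one may simply cite it there.

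With this in hand, uniqueness is short. Suppose $p' \in \succc(p) \cap \succc(\tilde p)$ with $p, \tilde p \in \timey(i-1)$, and pick any $e \in \betw(p')$; then $e \in \betw(p) \cap \betw(\tilde p)$. Since $p' \in \timey(i)$ has $e$ in its $\betw$-set, the contrapositive of \Cref{lem:between} at level $i$ shows that the basic pointer on $e$ is not in $\done(i-1)$, hence not in $\done(i-2)$ either; then \Cref{lem:between} at level $i-1$ guarantees that exactly one pointer in $\timey(i-1)$ has $e$ in its $\betw$-set, forcing $p = \tilde p$. I expect the only real friction to be in the auxiliary inclusion $\betw(p') \subseteq \betw(p)$: the definition of $\betw$ is fiddly at the endpoints of a path (which of the first and last path-edges are counted), and the leaf-successor case requires care that $\last_{p'}$ truly hangs off a node strictly between $u$ and $v$ rather than off $u$ or $v$ themselves. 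Everything else reduces to unwinding the definitions of $\timey$, $\succc$, and $\betw$ and invoking \Cref{lem:nicesplit} and \Cref{lem:between}.
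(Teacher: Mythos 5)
Your proposal is correct and takes essentially the same route as the paper, whose proof is a one-line citation deriving the statement from the definition of $\timey(\cdot)$ together with \Cref{lem:nicesplit} (giving the explicit description of $\succc(p)$) and \Cref{lem:between} (giving uniqueness of the predecessor). Your fleshed-out uniqueness step via $\emptyset \neq \betw(p') \subseteq \betw(p)$ is precisely the partition fact already asserted inside the proof of \Cref{lem:between}, so the two arguments coincide in substance.
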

\begin{proof}
	The observation follows from the definition of $\timey(\cdot)$, \Cref{lem:nicesplit,lem:between}.
\end{proof}

Next, we show that $\atwo$ is well-defined and correct.

\paragraph*{Well-Definedness and Correctness.}

From the description of the algorithm in Phase II it is not clear that the labels with certain properties the algorithm is supposed to output do actually exist.
In order to show that the algorithm is well-defined, we first need a helper lemma based on the following definitions.

\begin{definition}
	Let $e = (u, v)$ be an edge in $G'$.
	We denote the set of vertices that have ancestor $u$ or are equal to $u$ by $V_e$, the set of edges with at least one endpoint in $V_e$ by $E_e$, and the set of half-edges $(w, e')$ with $e' \in E_e$ by $H_e$.
	A labeling of the half-edges in $H_e$ (with a label from $\sout$ each) is a \emph{correct solution} for $H_e$ if it is a correct solution on the compatibility subtree $CS_e$ induced by $V_e \cup \{ u \}$ where we do not have any constraint for node $u$ (in the definition of a correct solution for a compatibility tree (see \Cref{def:comptree})).
\end{definition}

\begin{definition}
	Let $u,v$ be two nodes such that $v$ is an ancestor of $u$.
	We denote the set of nodes that are an endpoint of some edge in $\betw(u,v)$ by $V_{u,v}$, and the set of half-edges $(w, e)$ with $e \in \betw(u,v)$ by $H_{u,v}$.
	A labeling of the half-edges in $H_{u,v}$ (with a label from $\sout$ each) is a \emph{correct solution} for $H_{u,v}$ if it is a correct solution on the compatibility subtree $CS_{u,v}$ induced by $V_{u,v}$ where we do not have any constraint for nodes $u$ and $v$ (in the definition of a correct solution for a compatibility tree).
\end{definition}

\begin{lemma}\label{lem:encodecomplete}
	At the beginning of Phase I (i.e., for $i = 0$), and after any iteration $i$, the following two properties hold.
	\begin{enumerate}
		\item For any half-edge $(u,e)$ satisfying $M^e(u) \neq \undec$, the set $M^e(u)$ contains precisely the output labels $\ell$ such that there exists a labeling of the half-edges in $H_e$ that is a correct solution on $CS_e$ and labels $(u,e)$ with $\ell$.
		\item For any pointer $p = (v,w) \in \ps_i$, the set $\pairs_p$ contains precisely the pairs $(\ell, \ell')$ of output labels such that there exists a labeling of the half-edges in $H_{v,w}$ that is a correct solution on $CS_{v,w}$ and labels $(v,\first_p)$ with $\ell$ and $(\last_p,w)$ with $\ell'$.
	\end{enumerate}
\end{lemma}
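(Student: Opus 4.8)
\textbf{Overall strategy.} The plan is to prove both statements simultaneously by induction on the iteration index $i$. The base case $i = 0$ is immediate from the definitions: at the start of Phase~I every $M^e(u) = \undec$, so statement~1 is vacuous, and every pointer $p = (v,w) \in \ps_0$ is a basic pointer with $\first_p = \last_p = (v,w)$, $\pairs_p = S_{vw}$, and $CS_{v,w}$ consists of the single edge $(v,w)$ with no node constraints; statement~2 then says exactly that $(\ell,\ell') \in S_{vw}$ iff the one-edge labeling respecting $S_{vw}$ exists, which is the definition of $S_{vw}$ (recall how $G_0$ encodes $\Pi$). For the inductive step, assume both properties hold at the end of iteration $i-1$; we must verify them at the end of iteration $i$. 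Since the update rules modify pointers only via $\merge(\cdot,\cdot)$ (step~\ref{step2}) and modify the tuples $M^e(u)$ only in step~\ref{itemb}, it suffices to check that each such modification preserves the claimed characterization, while for pointers/entries left untouched the claim carries over verbatim from the induction hypothesis (using $\ps_{i-1} \subseteq \ps_i$ and that $M^e(u)$, once set to a non-$\undec$ value, is never changed again, by Observation~\ref{obs:undec} and the remark after it).

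\textbf{The two inductive cases.} For statement~2, consider a newly created pointer $p = \merge(q, q')$ produced in iteration $i$ by a $2$-node $x$, with $q = (v,x)$, $q' = (x,w)$. By Observation~\ref{prop:three}/\Cref{lem:propcombi} the paths underlying $q$ and $q'$ together with $x$ form exactly the directed path underlying $p$, and the sets $H_{v,x}$, $H_{x,w}$, and the half-edges incident to $x$ partition $H_{v,w}$; moreover the subtrees $CS_{v,x}$, $CS_{x,w}$, and the subtrees hanging from $x$ via its other incident edges (whose completability information is exactly $M^e(x)$ for $e \notin \{\last_q, \first_{q'}\}$, which is non-$\undec$ by Observation~\ref{obs:undec}) assemble, together with the node constraint $S_x$, into $CS_{v,w}$ with no constraints at $v$ and $w$. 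Plugging the induction hypothesis for $\pairs_q$, $\pairs_{q'}$, and the $M^e(x)$'s into the definition of $\pairs_{\merge(q,q')}$ (\Cref{def:merge}) then yields precisely the claimed characterization: a pair $(\ell,\ell')$ is in $\pairs_p$ iff a consistent label tuple at $x$ can be chosen extending to correct solutions on both sides and on the hanging subtrees, i.e.\ iff a correct solution on $CS_{v,w}$ with the prescribed boundary labels exists. For statement~1, consider an entry $M^e_i(u)$ changed from $\undec$ in step~\ref{itemb}, where $e = (v,u)$ is a relevant in-edge of a $3$-node (or the root) $u$, $p = (w,u)$ is the incoming active pointer with $\last_p = e$, and $w$ is a leaf. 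Then $H_e$ decomposes into $H_{w,u}$ together with the half-edge $(w,\first_p)$ and nothing else (since $w$ is a leaf, so $\first_p = \last_p$ when $p$ is basic, and more generally $CS_e$ is exactly $CS_{w,u}$ glued with the leaf constraint $S_w$ at $w$); the induction hypothesis for $\pairs_p$, combined with the requirement $(\ell') \in S_w$ in the definition, gives exactly that $M^e_i(u)$ is the set of labels $\ell$ at $(u,e)$ completable downward on $CS_e$.

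\textbf{Bookkeeping and the main obstacle.} Two pieces of bookkeeping need care. First, one must confirm that the decompositions of half-edge sets and compatibility subtrees used above are genuinely disjoint unions covering the target set; this relies on the structural observations (Observations~\ref{prop:three}, \ref{prop:unweird}, and \Cref{lem:propcombi}) establishing that the paths underlying active pointers meeting at a node are arranged as they should be, and that $\betw(\cdot)$ sets and $V_e$ sets nest/partition correctly. Second, one must check that entries $M^e(u)$ that are \emph{not} changed in iteration $i$ but are relied upon in a $\merge$ of iteration $i$ are indeed already correct — this follows from the induction hypothesis plus the fact (Observation~\ref{obs:undec}) that for a $2$-node $x$ performing a merge, $M^e(x) \neq \undec$ precisely for the edges $e$ not on the merged path. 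I expect the main obstacle to be precisely this careful matching of ``syntactic'' objects (the stored sets and the subtree decompositions dictated by the pointer structure) with the ``semantic'' objects (actual correct solutions on the induced compatibility subtrees): one has to be meticulous that the node $x$'s own constraint $S_x$ is applied exactly once, that the hanging-subtree contributions are counted exactly once via the $M^e(x)$ entries, and that no constraint at the boundary nodes $v,w$ (or $u$) is erroneously imposed, since $CS_{v,w}$ and $CS_e$ are defined with those node constraints \emph{dropped}. Everything else is a routine unrolling of \Cref{def:merge} and the update rules against the induction hypothesis.
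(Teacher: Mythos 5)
Your proposal is correct and follows essentially the same route as the paper's proof: an induction over the iterations in which $M^e(u)$-entries are set and pointers are created, with the base case given by the initialization and the inductive step obtained by unrolling the merge operation (step~\ref{step2}) and the update in step~\ref{itemb} against the induction hypothesis. The paper states this very tersely, whereas you additionally spell out the decomposition of $H_{v,w}$, resp.\ $H_e$, into the pieces governed by $\pairs_q$, $\pairs_{q'}$, the $\comp^e(x)$-entries and the constraint $S_x$ (resp.\ $S_w$), which is exactly the bookkeeping the paper's ``directly follows from the precise definition'' leaves implicit.
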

\begin{proof}
	We prove the statement by induction in the first iteration $i$ in which $M^e(u)$ was set to some label set $L \neq \undec$, resp.\ in which $p$ was added to the set of pointers.
	The base of the induction is implied by the initialization of $M^e(u)$ and the pointer set at time $i = 0$.
	The induction step directly follows from the precise definition of the two steps in the update rules that create new pointers and change the values $M^e(u)$, namely step~\ref{step2} (which relies on the precise definition of the merge operation) and step~\ref{itemb}, respectively.
\end{proof}

Now, we are set to show that the algorithm for Phase II is well-defined and correct.

\begin{lemma}\label{lem:wellandcorrect}
	Assuming that a correct solution for the compatibility tree $G'$ exists, the algorithm $\atwo$ for Phase II is well-defined and correct.
\end{lemma}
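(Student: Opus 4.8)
I would prove well-definedness and correctness of $\atwo$ by induction over the iterations, using \Cref{lem:encodecomplete} as the central invariant together with \Cref{obs:summary}, \Cref{lem:between}, and \Cref{lem:attheroot,lem:nicesplit} to control exactly which half-edges are already labeled at each step. The inductive hypothesis I would maintain is: after iteration $i$, the partial labeling $\gout$ produced so far is a \emph{partial correct solution} in the sense that (a) every half-edge labeled so far lies in $\bigcup_{j\le i}\big(\betw(p)$-half-edges for $p\in\timey(j)\big)$ together with the root/pred half-edges assigned explicitly, (b) all node, edge, and input-output constraints of $G'$ that involve only labeled half-edges are already satisfied, and (c) for each pointer $p=(u,v)\in\timey(i)$ not yet ``opened'', the pair $(\gout((u,\first_p)),\gout((\last_p,v)))$ lies in $\pairs_p$, and for each half-edge $(w,e)$ incident to a node $w$ on the ``frontier'' whose subtree below is still unlabeled, the already-chosen label lies in $M^e(w)$ whenever $M^e(w)\neq\undec$. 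Intuitively this says the partial solution is always completable.

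First I would treat the base case, point (a): processing $\timey(1)$. By \Cref{lem:attheroot}, $\timey(1)$ consists of exactly one leaf-root pointer $p_e$ for each edge $e$ incident to $\rooot$, so the half-edges $(u_j,\first_{p_j})$ are pairwise distinct, and $\betw(p_1),\dots,\betw(p_k)$ partition $E(G')$ (by \Cref{lem:between} with $i=1$). Existence of a tuple $(\ell^e)_{e\in\incid(\rooot)}\in S_\rooot$ with $\ell^e\in M^e(\rooot)$ follows because a correct solution for $G'$ exists: by \Cref{lem:encodecomplete}(1), each $M^e(\rooot)$ (those that are not $\undec$ — and at the end of Phase I all relevant entries at $\rooot$ are non-$\undec$, which I would note follows from \Cref{lem:attheroot} and the update rules of step~\ref{step3}) contains precisely the labels completable in the corresponding subtree, and the global solution restricted to $\rooot$ gives a compatible tuple. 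Then for each $p_j$, existence of $\ell^*$ with $(\ell^*,\ell^{\last_{p_j}})\in\pairs_{p_j}$ and $(\ell^*)\in S_{u_j}$ follows from \Cref{lem:encodecomplete}(2) applied to $p_j$ together with the fact that $\ell^{\last_{p_j}}$ was chosen from the completability set $M^{\last_{p_j}}(\rooot)$ — so a completion on $\betw(p_j)$ exists; picking its label at $(u_j,\first_{p_j})$ works since $u_j$ is a leaf and the only constraint at $u_j$ is $S_{u_j}$.

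For the inductive step, point (b): given an already-processed pointer $p=(u,v)$ with $\pred_p\neq\bot$, by \Cref{lem:nicesplit} the pointers $(u,\pred_p),(\pred_p,v)$ and the $\deg(\pred_p)-2$ leaf-pointers at $\pred_p$ are all in $\ps_\fin$, and by \Cref{obs:summary} these are exactly $\succc(p)$. Since $p$ was processed, $(u,\first_{p'})$ and $(\last_{p''},v)$ carry labels $\ell,\ell'$ with $(\ell,\ell')\in\pairs_p$ (inductive hypothesis (c)). By the definition of $\merge$ (\Cref{def:merge}) — which is exactly how $\pairs_p$ was built from $\pairs_{p'},\pairs_{p''},S_{\pred_p}$, and the $M^e(\pred_p)$ — there exists a tuple $(\ell^e)_{e\in\incid(\pred_p)}\in S_{\pred_p}$ with $\ell^e\in M^e(\pred_p)$ off the two special edges, $(\ell,\ell^{\last_{p'}})\in\pairs_{p'}$, and $(\ell^{\first_{p''}},\ell')\in\pairs_{p''}$; this is precisely what the algorithm needs at $\pred_p$. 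Then, exactly as in the base case, for each leaf-pointer $p'''$ at $\pred_p$, \Cref{lem:encodecomplete}(2) together with $\ell^{\last_{p'''}}\in M^{\last_{p'''}}(\pred_p)$ guarantees the desired $\ell^*$ at the leaf. I would then check that this choice (i) keeps the invariant (c) for the newly opened pointers — immediate from $(\ell,\ell^{\last_{p'}})\in\pairs_{p'}$, $(\ell^{\first_{p''}},\ell')\in\pairs_{p''}$, and $(\ell^*,\ell^{\last_{p'''}})\in\pairs_{p'''}$ — and (ii) that no half-edge is labeled twice, which follows from \Cref{lem:between} plus \Cref{obs:summary} (the $\betw$-sets of $\succc(p)$ partition $\betw(p)$ minus the two already-labeled endpoints). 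Finally, termination of $\atwo$ in finitely many iterations together with \Cref{lem:between} (every edge of $G'$ lies in some $\betw(p)$ for a $p$ eventually processed) shows every half-edge gets labeled, and the per-node check built into each step shows the final labeling satisfies $S_w$ at every $w$ and $S_{uv}$ at every edge — i.e., it is a correct solution for $G'$.

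**Main obstacle.** The delicate point is bookkeeping: arguing that at the moment we process $\succc(p)$, the \emph{only} already-labeled half-edges among those that ``should'' be labeled afterward are $(u,\first_{p'})$ and $(\last_{p''},v)$, and that nothing we assign now collides with a future assignment. This requires using \Cref{lem:between} and \Cref{obs:summary} carefully to see that the family $\{\betw(p)\}_{p\in\timey(i)}$ behaves like a laminar refinement across iterations; the half-edges at $\pred_p$ and at the leaf-pointers' leaves are ``fresh'' precisely because $\betw$ of the children pointers are disjoint and their union is $\betw(p)$, minus exactly the two endpoint half-edges inherited from $p$. The rest is a routine unwinding of \Cref{def:merge} and \Cref{lem:encodecomplete}; I would present it as the induction above, relegating the purely combinatorial disjointness checks to citations of \Cref{lem:between,lem:nicesplit} and \Cref{obs:summary}.
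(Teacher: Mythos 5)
Your proposal is correct and follows essentially the same route as the paper's proof: induction over the Phase II iterations, using \Cref{lem:encodecomplete} (equivalently, unpacking \Cref{def:merge}) for the existence of the required labels, \Cref{lem:attheroot,lem:nicesplit}, \Cref{obs:summary}, and \Cref{lem:between} for the bookkeeping that only the two endpoint half-edges are already labeled when $\succc(p)$ is processed, and termination plus \Cref{lem:between} to conclude every half-edge gets labeled. The only cosmetic difference is that the paper phrases the invariant as ``the partial solution extends to a correct global solution'' and handles the $\comp^e(u)\neq\undec$ well-definedness as an explicit separate step (via Observation~\ref{obs:undec}), whereas you carry a locally-phrased invariant and fold that check into the merge definition; both arguments have the same substance.
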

\begin{proof}
	As the first step, we show that each time $\atwo$ is supposed to choose some output label that has to be contained in $\comp^e(u)$ for some half-edge $(u,e)$, we have $\comp^e(u) \neq \undec$.
	From the description of $\atwo$, it follows that when the above situation occurs, the half-edge $(u,e)$ in question is 1) incident to the root, or 2) incident to some node of the form $\pred_p$ for some pointer $p = (v,w) \in \ps_{\fin}$ and $e$ does not lie on the path from $v$ to $w$.
	If $(u, e)$ is incident to the root, i.e., if $u = \rooot$, then \Cref{lem:attheroot}, together with the design of the update rules and the fact that at the end of Phase I no active pointer remains, implies that $\comp^e(u) \neq \undec$.
	In the other case, observe that the existence of $p$ implies that at some point during Phase I, $\pred_p$ was a $2$-node with relevant in-edge on the path from $v$ to $w$ (due to the design of the update rules).
	Now, Observation~\ref{obs:undec} yields the desired inequality.
	
	Next, we show by induction that at each step of $\atwo$, labels with the required properties are available and the partial solution produced by choosing such labels is part of some correct global solution.
	Note that, due to \Cref{lem:between} and Observation~\ref{obs:summary}, it suffices to show the induction step for each set $\succc(p)$ of pointers separately as the processing of two distinct sets $\succc(p), \succc(p')$ is independent of each other (due to the facts that the half-edges considered when processing $\succc(p)$ are separated by some already selected output label from the half-edges considered when processing $\succc(p')$, and that the correctness of a solution for a compatibility tree is defined via constraints on edges and constraints on nodes).
	For simplicity, we will use the notation from the description of $\atwo$ in the following.
	
	For the base of the induction, observe that for the first step of point (a), there is a choice of output labels with the described properties due to \Cref{lem:encodecomplete} (in conjunction with the very first step of this proof) and the fact that a correct solution for $G'$ exists.
	Moreover, the obtained partial solution is part of some correct global solution due to the properties required in the first step of point (a) (and \Cref{lem:encodecomplete}).
	Observe also that the same holds for the second step of point (a): \Cref{lem:encodecomplete} together with the fact that there exists a correct global solution that respects the partial coloring computed so far ensures that labels with the required properties exist; the properties (and \Cref{lem:encodecomplete}) in turn imply that the new obtained partial solution is still part of some correct global solution.
	(A bit more concretely, the fact that the partial solution after the first step of point (a) is extendable to a correct global solution implies that there must be a label $\ell^*$ as described in the second step of point (a) since $\pairs_{p_j}$ contains all label pairs that can be completed to a correct solution inside the subtree hanging from $\last_{p_j}$ (by \Cref{lem:encodecomplete}) and the condition $(\ell^*) \in S_{u_j}$ just states that the output is correct ``at $u_j$''; the fact that the resulting output label pair (at the half-edges $(r,\first_{p_j})$ and $(\last_{p_j},u_j)$) is contained in $\pairs_{p_j}$ implies that the new partial solution can still be extended to a correct global solution, again due to the characterization of $\pairs_{p_j}$ given in \Cref{lem:encodecomplete}.)
	
	For the induction step, an analogous argumentation shows that, also for point (b), the extendability of the obtained partial solution implies the availability of labels with the stated properties, and the properties of the labels imply the extendability of the new obtained partial solution to a correct global solution.
	
	To prove the correctness of the algorithm and conclude the proof, given the above, it suffices to show that each half-edge becomes labeled at some point.
	To this end, observe that \Cref{lem:between}, Observation~\ref{obs:summary}, and the definition of $\timey(\cdot)$ imply that any pointer $p \in \ps_{\fin}$ is containedagree in at most one $\timey(i)$.
	Since there are only finitely many pointers in $\ps_{\fin}$, there must be some positive $i$ such that $\timey(i) = \emptyset$; hence $\atwo$ terminates.
	Since \Cref{lem:between} implies that all basic pointers have been processed when $\atwo$ terminates, we obtain the desired statement that each half-edge becomes labeled.
\end{proof}

\paragraph*{Bounding the Number of Iterations.}

What is left to be done is to bound the number of iterations in Phase II.

\begin{lemma}[Restating \Cref{lem:runtimephasetwo}]
	Algorithm $\atwo$ terminates after $O(\log n)$ iterations.
\end{lemma}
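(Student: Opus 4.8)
The plan is to bound the number of iterations of Phase~II from above by relating them, in reverse order, to the iterations of Phase~I, of which there are only $T = O(\log n)$ many by \Cref{lem:runtimephaseone}. For a pointer $p \in \ps_{\fin}$, let $\tau(p)$ be the first iteration of Phase~I at the end of which $p$ lies in the pointer set; thus $\tau(p) = 0$ for basic pointers and $\tau(p) \le T$ for every $p$. The heart of the argument is the claim that for every $p \in \ps_{\fin}$ with $\pred_p \ne \bot$ and every $p' \in \succc(p)$ we have $\tau(p') < \tau(p)$. Granting this, a routine induction on $i$ shows that every $p \in \timey(i)$ satisfies $\tau(p) \le T - (i-1)$: the base case $i = 1$ holds since $\tau \le T$ always, and for $i \ge 2$ one selects, via Observation~\ref{obs:summary}, the unique $q \in \timey(i-1)$ with $p \in \succc(q)$ and combines the inductive hypothesis with the claim. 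Hence $\timey(i) \ne \emptyset$ forces $T - (i-1) \ge 0$, that is $i \le T+1 = O(\log n)$, so the first $i$ with $\timey(i) = \emptyset$ -- at which $\atwo$ halts -- is reached after $O(\log n)$ iterations.

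It remains to establish the claim. Fix $p = (u,v) \in \ps_{\fin}$ with $\pred_p \ne \bot$. Then $p$ was produced in iteration $\tau(p)$ by the node $\pred_p$ performing a $\merge$ operation (see \Cref{def:merge}), at which moment $\pred_p$ was a $2$-node (\Cref{def:123node}); in particular $\pred_p \ne \rooot$, since the root never performs a merge. By the definition of $\merge$, the two pointers combined are precisely $(u,\pred_p)$ and $(\pred_p,v)$, and both are active at the start of iteration $\tau(p)$, so each has $\tau \le \tau(p)-1$; this settles the type-(1) and type-(2) members of $\succc(p)$. For a type-(3) member $p' = (w,\pred_p)$, which exists only when $\deg(\pred_p) \ge 3$, \Cref{lem:nicesplit} identifies $p'$ as the unique pointer in $\ps_{\fin}$ that starts at a leaf, ends at $\pred_p$, and has $\last_{p'} = e$ for some in-edge $e$ of $\pred_p$ not lying on the path from $u$ to $v$. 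For $\pred_p$ to be a $2$-node at the start of iteration $\tau(p)$, the edge $e$ must already be ``resolved'', that is, carry no active pointer; by inspection of the update rules this can only happen through the sub-step of update rule~\ref{step3} that (while $\pred_p$ is still a $3$-node) sets $\comp^e(\pred_p)$ by absorbing $p'$. By Observation~\ref{prop:state}, $\pred_p$ is a $3$-node strictly before it becomes a $2$-node, so $p'$ is active in some iteration $< \tau(p)$, whence $\tau(p') < \tau(p)$.

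The main obstacle is the type-(3) case just sketched: one must track precisely, within Phase~I, the iteration at which a high-degree node $\pred_p$ finishes absorbing the leaf-pointers on all but one of its incident in-edges and thereby turns from a $3$-node into a $2$-node, and argue that the off-path leaf-pointers recorded in $\ps_{\fin}$ at $\pred_p$ are exactly the absorbed ones and that each was created strictly before the $\merge$ that creates $p$. The type-(1)/(2) case, by contrast, is immediate from the definition of $\merge$. One should also dispose of the harmless boundary cases -- pointers in $\timey(1)$ with $\pred = \bot$ (for which $\succc(\cdot) = \emptyset$), the fact that $\pred_p$ is never the root, and the property (already noted in the proof of \Cref{lem:wellandcorrect}) that each pointer lies in at most one set $\timey(i)$.
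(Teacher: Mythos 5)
Your proposal is correct and follows essentially the same route as the paper's proof: you assign to each pointer its first Phase~I iteration of activity, show that every pointer in $\succc(p)$ became active strictly before $p$ (the type-(1)/(2) case via the definition of $\merge$, the type-(3) case via the fact that $\pred_p$ is a $2$-node with on-path relevant in-edge at the time of the merge, so the off-path leaf pointer of \Cref{lem:nicesplit} must have been absorbed in an earlier $3$-node step), and then conclude via \Cref{lem:runtimephaseone} and Observation~\ref{obs:summary} that $\timey(i)$ empties out after $O(\log n)$ iterations. The only difference is cosmetic: you argue the type-(3) case backward from the first ``resolution'' of the off-path edge, whereas the paper argues forward that no off-path active pointer at $\pred_p$ can exist from iteration $\tau(p)-1$ onward; both rest on the same monotonicity facts.
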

\begin{proof}
	Let $p = (u,v)$ be some pointer contained in some $\timey(i)$ satisfying $\pred_p \neq \bot$, and let $p' \in \succc(p)$.
	Let $j$, resp.\ $j'$, denote the first iteration in Phase I such that $p$, resp.\ $p'$, was an active pointer at the end of iteration $j$ (possibly $j = 0$ or $j' = 0$).
	Our first goal is to show that $j' < j$.
	
	To this end, observe that, due to the design of the update rules in Phase I (and the fact that the (active) outgoing pointer of a node can only grow), $p$ must be the result of the merge operation $\merge(q, q')$, where $q = (u, \pred_p)$ and $q' = (\pred_p, v)$, and this merge must have been performed in iteration $j$.
	Hence, $q$ and $q'$ must have been active pointers in iteration $j - 1$, which implies that, if $p' \in \{ q, q'\}$, then $j' < j$, as desired.
	Thus, assume that $p' \notin \{ q, q'\}$, which, by Observation~\ref{obs:summary}, implies that $p' = (w, \pred_p)$ for some leaf $w$, and $\last_{p'}$ does not lie on the path from $u$ to $v$.
	Since, at the end of iteration $j - 1$, node $\pred_p$ is a $2$-node with its relevant in-edge lying on the path from $u$ to $v$ (as $\merge(q, q')$ is performed in iteration $j$), each active pointer $q''$ ending in $\pred_p$ at the end of iteration $j - 1$ must satisfy that $\last_{q''}$ lies on the path from $u$ to $v$, and this fact cannot change in the further course of Phase I.
	Hence, $p'$ must have been active before iteration $j - 1$, and, again, we obtain $j' < j$.
	
	By Observation~\ref{obs:summary}, we can conclude that for any pointer $p'$ in any $\timey(i')$, there must be a pointer $p$ in $\timey(i'-1)$ such that the first iteration in Phase I at the end of which $p$ was active is strictly larger than the first iteration in Phase I at the end of which $p'$ was active (where we consider the starting configuration to be ``at the end of iteration $0$'').
	This implies that $\timey(i'' + 2) = \emptyset$, where $i''$ is the number of iterations in Phase I.
	Hence, Algorithm $\atwo$ terminates after $O(\log n)$ iterations, by \Cref{lem:runtimephaseone}.
\end{proof}

\subsection{Implementation in the \mpc Model} \label{sec:highimple}
In this section, we describe how to implement algorithm $\fA$ from \Cref{sec:high} in the low-space \mpc model. As the implementation of the rooting is provided in \Cref{sec:rooting}, we can focus on the main two phases.

We first consider the preprocessing of $\fA$ from \Cref{sec:trafothere}, i.e., the part of the algorithm where we bring the number of nodes down to $O(n/\log n)$, and the part where the solution computed on the compatibility tree in \Cref{sec:phasetwo} is transformed into a solution for the considered \lcl, i.e., algorithm $\atwo$.
Both parts can be implemented in a straightforward manner, due to \Cref{lem:gi}, unless we run into memory issues due to the fact that we have to execute $O(\log \log n)$ of the iterations described in \Cref{sec:trafothere}, instead of just one.
Note that a global memory overhead can only be possibly produced by the new edges that are introduced in the graphs $G_1, G_2, \dots, G_t$ since the node set only shrinks during that process and the memory required for the ``compatibility information'' of the compatibility graphs cannot be asymptotically larger than the memory required for storing the edges.
Moreover, as the total number of edges (produced during the process) that are incident to any particular vertex is at most $O(\log \log n)$, we cannot run into issues with the local memory.
Hence, it suffices to show that the total number of edges produced during the $O(\log \log n)$ iterations does not exceed $\Theta(n)$.
However, this directly follows from the fact that for each new edge that is introduced during those iterations, a node is removed.

For the remainder of the section, we consider the part of $\fA$ from \Cref{sec:phaseone}, i.e., algorithm $\aone$, that solves the compatibility tree.
We start by collecting all information that has to be stored during $\fA'$.
For simplicity, we already assign this information to the nodes of the compatibility tree $G'$.
We observe that unless a node has to store more than $n^{\delta}$ words or the total amount of information to be stored is in $\omega(n)$, the algorithm can be na\"{\i}vely implemented by standard techniques.
Soon we will see that the total amount of information to be stored is in $O(n)$, and the only issue to be taken care of is that the local memory of ``nodes'' is exceeded.
We will explain later how to resolve this issue.

For Phase I (see \Cref{sec:phaseone}), we maintain two pieces of information, namely
\begin{enumerate}
	\item a set of pointers, and
	\item the sets $\comp(u)$.
\end{enumerate}
For each pointer $p$, some additional information ($\pairs_p$, $\pred_p$, $\first_p$, $\last_p)$, and whether it is active or not) has to be stored, but since the memory required for this additional information is only a constant multiple of the memory required to store the pointer itself (in particular as there are only a constant number of output labels in $\sout$), we can ignore this information.
Moreover, as the design of the update rules in Phase I ensures, the number of pointers produced in each iteration is upper bounded by the number of nodes of the compatibility tree, which is $O(n/\log n)$, by \Cref{cor:fewernodes}.
Since, by \Cref{lem:runtimephaseone}, there are only $O(\log n)$ iterations in Phase I, the total number of pointers that have to be stored is in $O(n)$; hence, our global memory of $O(m+n)$ is not exceeded.
We already note that we will store each pointer $p = (u,v)$ at both of its endpoints; the overhead introduced by this does not change the required global memory asymptotically.

Together with a set $\comp(u)$, we also have to store the information about all the leaf-root pointers in $\ps_{\fin}$ that end in $u$ (in order to perform the steps in Phase II); however, by the design of the update rules of Phase I and the fact that degrees are bounded (and $|\sout|$ is constant), all of this information requires just a constant number of words to be stored.
We will store each set $\comp(u)$ and its associated information in node $u$; as the required amount of memory per node is constant (in words), we can ignore this information in the remainder of this section.

For the implementation of Phase II, the information stored in Phase I is still required, but will not be changed or expanded.
Note that the characterization of the pointers in $\timey(i)$, given by Observation~\ref{obs:summary}, provides a straightforward implementation: the pointers that are processed in iteration $1$ are easily identified (as they are the only leaf-root pointers in $\ps_{\fin}$), and the pointers to be processed in any later iteration are precisely those in a set of the form $\succc(p)$, where $p$ is a pointer processed in the previous iteration.
Observe also that, by \Cref{lem:nicesplit}, Observation~\ref{obs:summary} and the fact that degrees are bounded, each node is involved in the processing of only a constant number of pointers in each iteration.
Hence, each iteration of Phase II can be easily implemented in a constant number of rounds.

From the description of the update rules of Phase I, it is easy to see that, again, each iteration (now of Phase I) can be performed in constant time provided that we can perform all merge operations (i.e., step~\ref{step2}) in constant time.
From the above discussion, it follows that, in order to obtain the desired runtime of $O(\log n)$ rounds for the complete algorithm $\fA$, the only thing left to be done is to show that we can perform the merge operations in each iteration in constant time while storing the pointers in a way that does not exceed the local memory of the machines.
In the following, we explain how to achieve this.
Note that the merge operation that creates a pointer $(u,w)$ from $(u,v)$ and $(v,w)$ can be understood as $v$ forwarding (the head of) pointer $(u,v)$ to node $w$.

\paragraph*{Pointer Forwarding Tree.}

Our approach relies on a broadcast tree structure that we create for each node with a large number of incoming active pointers (see \Cref{def:aggTreeStructure} of \Cref{sec:broadcasttree}).
Each node $v$ creates a $n^\delta$-ary virtual rooted tree, where the idea is to store the incoming active pointers in the leaves of the tree.
Importantly, different nodes might be stored at different machines, but since the number of incoming active pointers is bounded by $O(n)$, the communication tree has constant depth which allows us to perform operations efficiently.

To perform the actual pointer forwarding, consider a non-virtual node $v$ and suppose that $v$ wants to forward its incoming active pointers to the non-virtual node $u$ in which the active pointer starting from $v$ ends.
Notice that the active pointers incoming to $v$ are stored in the leaves of the virtual tree $T_v$ rooted at $v$ and similarly for $u$ in the tree $T_u$ rooted at $u$.
Now, we can simply attach the tree $T_v$ to the node in $T_u$ that currently stores the pointer $(u,v)$ (for $u$).
Thereby, the pointers previously incoming to $v$ are now stored in the broadcast tree of $u$, and are therefore incoming to $u$.

This might, however, result in the depth of the broadcast tree increasing by an additive term of $1/\delta$.
To mend this, consider the following balancing process.
\begin{observation}
	Let $T_u$ be a virtual rooted tree of depth $d = O(1/\delta)$ of at most $n$ nodes.
	Then, in $O(1/\delta)$ rounds, we can reduce the depth to $2/\delta$ such that all the leaf nodes of $T_u$ are still leaf nodes.
\end{observation}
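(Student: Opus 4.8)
The plan is to \emph{flatten} the leaves of $T_u$ into a linear order and then \emph{rebuild} a fresh balanced $n^\delta$-ary tree on top of them, discarding the old internal nodes. Since the virtual tree is $n^\delta$-ary by construction (see \Cref{def:aggTreeStructure}), a balanced tree over $\ell \le n$ leaves has depth $\lceil \log_{n^\delta} \ell \rceil \le \lceil 1/\delta \rceil \le 2/\delta$, which is exactly the target depth. The crucial point is that the rebuild only modifies \emph{internal} nodes of the tree: every original leaf keeps whatever data it stores and merely receives a new parent pointer, so it remains a leaf throughout.

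First I would run a bottom-up aggregation on $T_u$: each node computes the number of leaves in its subtree by summing the counts reported by its children. As $T_u$ has depth $d = O(1/\delta)$ and every node has at most $n^\delta$ children (each reporting a single number, which fits into local memory), this takes $O(1/\delta)$ rounds, after which the root knows the total number $\ell$ of leaves. Next I would perform a top-down pass that assigns each leaf a distinct rank in $\{0,1,\dots,\ell-1\}$: the root partitions $\{0,\dots,\ell-1\}$ into consecutive intervals whose sizes equal the children's subtree-leaf-counts, forwards each interval to the corresponding child, and the child recurses. This again takes $O(1/\delta)$ rounds, after which each leaf knows a unique global index.

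Given these indices, the new tree is completely determined by arithmetic: the leaf of rank $r$ has, as its level-$j$ ancestor, the virtual node with index $\lfloor r / n^{\delta j} \rfloor$, for $j = 1, \dots, \lceil \log_{n^\delta} \ell \rceil$. I would instantiate each such virtual internal node on a canonically chosen machine (for instance, the machine hosting the leaf of rank $k \cdot n^{\delta j}$ for the virtual node of index $k$ at level $j$); since the number of virtual internal nodes is at most $\sum_{j \ge 1} \big( \ell / n^{\delta j} + 1 \big) = O(\ell / n^\delta) + O(1/\delta) = O(n)$, the global memory bound is respected, and since each such node has at most $n^\delta$ children it receives at most $n^\delta$ one-word parent-request messages, so the local memory bound is respected as well. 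Finally I would install the parent pointers level by level, from the leaves upward, in $O(\lceil 1/\delta \rceil) = O(1/\delta)$ rounds, obtaining a tree of depth at most $2/\delta$ in which every original leaf is still a leaf and still carries its data.

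The main obstacle is essentially bookkeeping: ensuring that the two passes (leaf-counting upward, rank-assignment downward) and the subsequent re-linking all respect the $O(n^\delta)$-word local-memory bound even though the tree is distributed across many machines, and that discarding the old internal structure neither orphans nor duplicates any leaf. Both points are handled by the facts that each node has at most $n^\delta$ children, every exchanged message is $O(1)$ words, the tree depth stays $O(1/\delta) = O(1)$ at all times, and the new internal nodes are placed according to a deterministic rule computable from the leaf ranks alone.
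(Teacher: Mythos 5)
Your overall strategy coincides with the paper's: a bottom-up converge-cast that computes counts in $O(1/\delta)$ rounds, followed by a redistribution of the (unchanged) leaves into a fresh balanced $n^\delta$-ary tree of depth at most $\lceil \log_{n^\delta}\ell\rceil \le 2/\delta$, and your depth and memory accounting are fine. The real gap is in the final linking step. In this \mpc model a machine can only send a message to a machine it can explicitly name; after your two passes each leaf knows only its own rank, so nobody can address ``the machine hosting the leaf of rank $k\cdot n^{\delta j}$''. Instantiating the virtual node of index $k$ at level $j$ on that machine is unproblematic (the host recognizes itself from its own rank), but installing the parent pointers level by level requires a rank-to-machine translation that your construction never provides: the machine hosting the level-$j$ node of index $k$ (i.e., the host of the leaf of rank $k\,n^{\delta j}$) must contact the host of the leaf of rank $\lfloor k/n^\delta\rfloor\, n^{\delta(j+1)}$, and nothing in your two passes tells it which machine that is.

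This is precisely where the paper's proof proceeds differently: there the root builds the new tree $T^*$ and assigns its subtrees to the root's old-tree children proportionally to their counts, and this assignment is pushed down recursively along the edges of the old tree; hence every parent/child pointer of the new tree is created in a step where the two endpoints are already connected by an old-tree edge and therefore know each other's machine identities, and the final ``change the pointers'' step only rewires links between machines that have already communicated. Your arithmetic rebuild discards the old tree before the new links are in place and thereby loses exactly this addressing information. The gap is fixable within your framework — for instance, establish the new links during your top-down interval pass (which mirrors the paper's recursion), or perform a deterministic sort-based join between anchor announcements of the form $(\mathrm{rank},\mathrm{machine})$ and parent requests, both within $O(1/\delta)$ rounds — but as written the ``install the parent pointers level by level'' step is not implementable in the model.
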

\begin{proof}
	The root initiates the following operation.
	First, using converge-cast, it learns the number $n_v$ of nodes in each (virtual) tree rooted from each of its child $v$.
	Then, the root creates a new $n^\delta$-ary virtual tree $T^*$ of depth $O(1/\delta)$.
	Proportionally to the number $n_v$, root $u$ assigns subtrees of $T^*$ to child $v$, such that all incoming pointers corresponding to $v$ fit into the subtree.
	Clearly, this is possible since $n_v \leq n$ and the new virtual node is assigned to at most $n^\delta$ children of $u$.
	This process is recursively continued until the leaves of $T^*$ are assigned to the leaves of $T_u$.
	Then, we can change the pointers from the old broadcast tree nodes to the new ones, and we have obtained our broadcast tree of depth $O(1/\delta)$.
	Notice that a na\"{\i}ve implementation results in a $1/\delta$ number of converge-casts, but the number of leaves per subtree can be pre-computed and stored.
\end{proof}

\subsection{Proof of Theorem~\ref{thm:high}}\label{sec:proofhigh}

Before executing anything, we first run the deterministic connected components algorithm from \cite{coy2021deterministic} on $G^2$ (which is another graph on the same vertex set, but in which two vertices are adjacent if their distance in $G$ is at most 2) that runs in $O(\log D)+O(\log_{m/n}\log n)=O(\log n)$ rounds with $O(m+n)$ words of global memory. The algorithm is component-stable, if, when contracting (during the algorithm of \cite{coy2021deterministic}), we aggregate the minimum ID for every component. Using this minimum ID, all nodes can compute the size of their component using the aggregation tree structure (see \Cref{def:aggTreeStructure} in \Cref{sec:broadcasttree}).

Before running the algorithm of \Cref{sec:high}, we root the input graph using the method described in \Cref{sec:rooting}, which is compatible with forests and is component-stable. 

Regarding the algorithm itself, all arguments are local, i.e., nodes in separate components do not communicate, the algorithm is component-stable. Furthermore, since every node knows $n_i$, which is the size of component $i$ it belongs to, we can substitute $n$ with $n_i$ in all global memory arguments of the section. Since $\sum n_i=n$, the global memory bound holds. As we now can apply our algorithm on each component separately, we can in the rest of the proof assume that we are given a single tree. 

The runtime bound follows from  \Cref{cor:fewernodes} and \Cref{lem:gi,lem:runtimephaseone,lem:runtimephasetwo} as these bound the runtimes of Phase~I (leaves-to-root),  Phase~II (root-to-leaves), and of each iteration of the preprocessing phase and the postprocessing phase. Additionally, each of the steps can be implemented under the memory constraints given by the low-space \mpc model, as argued in  \Cref{sec:highimple}.

For the correctness, \Cref{lem:wellandcorrect} shows that we obtain a valid solution of the compatibility tree that is procuded after the preprocessing. Then, \Cref{lem:gi} shows that the postprocessing phase transforms the latter solution correctly to a solution of the original \lcl on the actual input graph.

\section{The Broadcast Tree} \label{sec:broadcasttree}

A commonly used subroutine in the \mpc model is the broadcast (converge-cast, aggregation) tree. The \mpc broadcast tree is constant-depth $n^\epsilon$-ary tree structure. It enables broadcasting messages to all machines in constant time while respecting the $O(n^\delta)$ local memory and $O(m+n)$ global memory bounds. It is often assumed to exist without much discussion~\cite{GGC20, sirocco, Behnezhad19, broadcast}. Let us restate its formal definition for completeness. 

\begin{definition}[Aggregation Tree Structure, \cite{MPCbasictools}] \label{def:aggTreeStructure}
	Assume that an \mpc algorithm receives a collection of sets $A_1,\dots, A_k$ with elements from a totally ordered domain as input. In an aggregation tree structure for $A_1,\dots,A_k$, the elements of $A_1,\dots,A_k$ are stored in lexicographically sorted order (they are primarily sorted by the number $i \in \{1,\dots,k\}$ and within each set $A_i$ they are sorted increasingly). For each $i \in \{1,\dots,k\}$ such that the elements of $A_i$ appear on at least 2 different machines, there is a tree of constant depth containing the machines that store elements of $A_i$ as leafs and where each inner node of the tree has at most $\sqrt{S}$ children. The tree is structured such that it can be used as a search tree for the elements in $A_i$ (i.e., such that an in-order traversal of the tree visits the leaves in sorted order). Each inner node of these trees is handled by a separate additional machine. In addition, there is a constant-depth aggregation tree of degree at most $\sqrt{S}$ connecting all the machines that store elements of $A_1 ,\dots, A_k$.
\end{definition}

\clearpage
\phantomsection
\addcontentsline{toc}{section}{References}
\bibliographystyle{alphaurl}
\bibliography{mpc-landscape-arxiv}

\newcommand{\etalchar}[1]{$^{#1}$}
\begin{thebibliography}{CFG{\etalchar{+}}19b}

\bibitem[ABI86]{Alon86}
Noga Alon, L\'{a}szl\'{o} Babai, and Alon Itai.
\newblock {A Fast and Simple Randomized Parallel Algorithm for the Maximal
  Independent Set Problem}.
\newblock {\em {Journal of Algorithms}}, pages 567--583, 1986.
\newblock \href {https://doi.org/10.1016/0196-6774(86)90019-2}
  {\path{doi:10.1016/0196-6774(86)90019-2}}.

\bibitem[ANOY14]{mpcrefine1}
Alexandr Andoni, Aleksandar Nikolov, Krzysztof Onak, and Grigory Yaroslavtsev.
\newblock Parallel algorithms for geometric graph problems.
\newblock In {\em Proceedings of the Symposium on Theory of Computing (STOC)},
  pages 574--583, 2014.
\newblock \href {https://doi.org/10.1145/2591796.2591805}
  {\path{doi:10.1145/2591796.2591805}}.

\bibitem[BBD{\etalchar{+}}19]{Behnezhad19}
Soheil Behnezhad, Sebastian Brandt, Mahsa Derakhshan, Manuela Fischer,
  MohammadTaghi Hajiaghayi, Richard~M. Karp, and Jara Uitto.
\newblock {Massively Parallel Computation of Matching and MIS in Sparse
  Graphs}.
\newblock In {\em the Proceedings of the International Symposium on Principles
  of Distributed Computing (PODC)}, pages 481--490, 2019.
\newblock \href {https://doi.org/10.1145/3293611.3331609}
  {\path{doi:10.1145/3293611.3331609}}.

\bibitem[BBE{\etalchar{+}}20]{balliu2020}
Alkida Balliu, Sebastian Brandt, Yuval Efron, Juho Hirvonen, Yannic Maus,
  Dennis Olivetti, and Jukka Suomela.
\newblock {Classification of Distributed Binary Labeling Problems}.
\newblock In {\em DISC}, pages 17:1--17:17, 2020.
\newblock \href {https://doi.org/10.1145/3382734.3405703}
  {\path{doi:10.1145/3382734.3405703}}.

\bibitem[BBH{\etalchar{+}}19]{Balliu2019}
Alkida Balliu, Sebastian Brandt, Juho Hirvonen, Dennis Olivetti, Mika{\"{e}}l
  Rabie, and Jukka Suomela.
\newblock {Lower Bounds for Maximal Matchings and Maximal Independent Sets}.
\newblock In {\em the Proceedings of the {IEEE} Symposium on Foundations of
  Computer Science (FOCS)}, pages 481--497, 2019.
\newblock \href {https://doi.org/10.1109/FOCS.2019.00037}
  {\path{doi:10.1109/FOCS.2019.00037}}.

\bibitem[BBKO22]{hideandseek}
Alkida Balliu, Sebastian Brandt, Fabian Kuhn, and Dennis Olivetti.
\newblock {Distributed {\(\Delta\)}-Coloring Plays Hide-and-Seek}.
\newblock In {\em Proceedings of the Symposium on Theory of Computing (STOC)},
  2022.
\newblock URL: \url{https://arxiv.org/abs/2110.00643}.

\bibitem[BBO20]{lbrs}
Alkida Balliu, Sebastian Brandt, and Dennis Olivetti.
\newblock {Distributed Lower Bounds for Ruling Sets}.
\newblock In {\em the Proceedings of the {IEEE} Symposium on Foundations of
  Computer Science (FOCS)}, pages 365--376, 2020.
\newblock \href {https://doi.org/10.1109/FOCS46700.2020.00042}
  {\path{doi:10.1109/FOCS46700.2020.00042}}.

\bibitem[BBO{\etalchar{+}}21]{Balliu21}
Alkida Balliu, Sebastian Brandt, Dennis Olivetti, Jan Studeny, Jukka Suomela,
  and Aleksandr Tereshchenko.
\newblock {Locally Checkable Problems in Rooted Trees}.
\newblock In {\em the Proceedings of the International Symposium on Principles
  of Distributed Computing (PODC)}, pages 263--272, 2021.
\newblock \href {https://doi.org/10.1145/3465084.3467934}
  {\path{doi:10.1145/3465084.3467934}}.

\bibitem[BBOS18]{tree3}
Alkida Balliu, Sebastian Brandt, Dennis Olivetti, and J.~Suomela.
\newblock {Almost Global Problems in the LOCAL Model}.
\newblock In {\em Proceedings of the International Symposium on Distributed
  Computing (DISC)}, pages 9:1--9:16, 2018.
\newblock \href {https://doi.org/10.4230/LIPIcs.DISC.2018.9}
  {\path{doi:10.4230/LIPIcs.DISC.2018.9}}.

\bibitem[BCM{\etalchar{+}}21]{BCMOS21}
Alkida Balliu, Keren Censor{-}Hillel, Yannic Maus, Dennis Olivetti, and Jukka
  Suomela.
\newblock {Locally Checkable Labelings with Small Messages}.
\newblock In {\em Proceedings of the International Symposium on Distributed
  Computing (DISC)}, pages 8:1--8:18, 2021.
\newblock \href {https://doi.org/10.4230/LIPIcs.DISC.2021.8}
  {\path{doi:10.4230/LIPIcs.DISC.2021.8}}.

\bibitem[BDE{\etalchar{+}}20]{googleconnectivity}
Soheil Behnezhad, Laxman Dhulipala, Hossein Esfandiari, Jakub Ł\k{a}cki, and
  Vahab Mirrokni.
\newblock {Near-Optimal Massively Parallel Graph Connectivity}.
\newblock In {\em the Proceedings of the {IEEE} Symposium on Foundations of
  Computer Science (FOCS)}, pages 1615--1636, 2020.
\newblock \href {https://doi.org/10.1109/FOCS.2019.00095}
  {\path{doi:10.1109/FOCS.2019.00095}}.

\bibitem[BFH{\etalchar{+}}16]{sinkless16}
Sebastian Brandt, Orr Fischer, Juho Hirvonen, Barbara Keller, Tuomo
  Lempi{\"{a}}inen, Joel Rybicki, Jukka Suomela, and Jara Uitto.
\newblock {A Lower Bound for the Distributed Lov{\'{a}}sz Local Lemma}.
\newblock In {\em Proceedings of the Symposium on Theory of Computing (STOC)},
  pages 479--488. ACM Press, 2016.
\newblock \href {https://doi.org/10.1145/2897518.2897570}
  {\path{doi:10.1145/2897518.2897570}}.

\bibitem[BFU19]{sirocco}
Sebastian Brandt, Manuela Fischer, and Jara Uitto.
\newblock {Breaking the Linear-Memory Barrier in MPC: Fast MIS on Trees with
  Strongly Sublinear Memory}.
\newblock In {\em the Proceedings of the International Colloquium on Structural
  Information and Communication Complexity}, pages 124--138, 2019.
\newblock \href {https://doi.org/10.1007/978-3-030-24922-9_9}
  {\path{doi:10.1007/978-3-030-24922-9_9}}.

\bibitem[BHK{\etalchar{+}}17]{Brandt17}
Sebastian Brandt, Juho Hirvonen, Janne~H. Korhonen, Tuomo Lempi\"{a}inen,
  Patric~R.J. \"{O}sterg\r{a}rd, Christopher Purcell, Joel Rybicki, Jukka
  Suomela, and Przemys\l{}aw Uzna\'{n}ski.
\newblock {LCL Problems on Grids}.
\newblock In {\em the Proceedings of the International Symposium on Principles
  of Distributed Computing (PODC)}, pages 101--110, 2017.
\newblock \href {https://doi.org/10.1145/3087801.3087833}
  {\path{doi:10.1145/3087801.3087833}}.

\bibitem[BHK{\etalchar{+}}18]{BHKLOS18}
Alkida Balliu, Juho Hirvonen, Janne~H. Korhonen, Tuomo Lempi{\"{a}}inen, Dennis
  Olivetti, and Jukka Suomela.
\newblock {New Classes of Distributed Time Complexity}.
\newblock In {\em Proceedings of the Symposium on Theory of Computing (STOC)},
  pages 1307--1318, 2018.
\newblock \href {https://doi.org/10.1145/3188745.3188860}
  {\path{doi:10.1145/3188745.3188860}}.

\bibitem[BHOS19]{tree2}
Alkida Balliu, Juho Hirvonen, Dennis Olivetti, and Jukka Suomela.
\newblock {Hardness of Minimal Symmetry Breaking in Distributed Computing}.
\newblock In {\em the Proceedings of the International Symposium on Principles
  of Distributed Computing (PODC)}, pages 369--378, 2019.
\newblock \href {https://doi.org/10.1145/3293611.3331605}
  {\path{doi:10.1145/3293611.3331605}}.

\bibitem[BKM20]{MPCbasictools}
Philipp Bamberger, Fabian Kuhn, and Yannic Maus.
\newblock {Efficient Deterministic Distributed Coloring with Small Bandwidth}.
\newblock In {\em the Proceedings of the International Symposium on Principles
  of Distributed Computing (PODC)}, pages 243--252, 2020.
\newblock \href {https://doi.org/10.1145/3382734.3404504}
  {\path{doi:10.1145/3382734.3404504}}.

\bibitem[BKS17]{mpcrefine2}
Paul Beame, Paraschos Koutris, and Dan Suciu.
\newblock Communication steps for parallel query processing.
\newblock {\em Journal of the ACM (JACM)}, 64(6):40, 2017.
\newblock \href {https://doi.org/10.1145/3125644} {\path{doi:10.1145/3125644}}.

\bibitem[CC22]{coy2021deterministic}
Sam Coy and Artur Czumaj.
\newblock Deterministic massively parallel connectivity.
\newblock In {\em Proceedings of the 54th Annual ACM SIGACT Symposium on Theory
  of Computing}, STOC 2022, 2022.
\newblock \href {https://doi.org/10.1145/3519935.3520055}
  {\path{doi:10.1145/3519935.3520055}}.

\bibitem[CDP20]{Czumaj20}
Artur Czumaj, Peter Davies, and Merav Parter.
\newblock {Graph Sparsification for Derandomizing Massively Parallel
  Computation with Low Space}.
\newblock In {\em Proceedings of the Symposium on Parallel Algorithms and
  Architectures (SPAA)}, pages 175--185, 2020.
\newblock \href {https://doi.org/10.1145/3350755.3400282}
  {\path{doi:10.1145/3350755.3400282}}.

\bibitem[CDP21a]{componentstable}
Artur Czumaj, Peter Davies, and Merav Parter.
\newblock {Component Stability in Low-Space Massively Parallel Computation}.
\newblock In {\em the Proceedings of the International Symposium on Principles
  of Distributed Computing (PODC)}, pages 481--491, 2021.
\newblock \href {https://doi.org/10.1145/3465084.3467903}
  {\path{doi:10.1145/3465084.3467903}}.

\bibitem[CDP21b]{detcol}
Artur Czumaj, Peter Davies, and Merav Parter.
\newblock {Improved Deterministic {$(\Delta+1)$} Coloring in Low-Space MPC}.
\newblock In {\em the Proceedings of the International Symposium on Principles
  of Distributed Computing (PODC)}, pages 469--479, 2021.
\newblock \href {https://doi.org/10.1145/3465084.3467937}
  {\path{doi:10.1145/3465084.3467937}}.

\bibitem[CFG{\etalchar{+}}19a]{ChangMPCColoring}
Yi-Jun Chang, Manuela Fischer, Mohsen Ghaffari, Jara Uitto, and Yufan Zheng.
\newblock {The Complexity of $(\Delta + 1)$ Coloring in Congested Clique,
  Massively Parallel Computation, and Centralized Local Computation}.
\newblock In {\em the Proceedings of the International Symposium on Principles
  of Distributed Computing (PODC)}, pages 471--480, 2019.
\newblock \href {https://doi.org/10.1145/3293611.3331607}
  {\path{doi:10.1145/3293611.3331607}}.

\bibitem[CFG{\etalchar{+}}19b]{Chang2019}
Yi-Jun Chang, Manuela Fischer, Mohsen Ghaffari, Jara Uitto, and Yufan Zheng.
\newblock {The Complexity of {$(\Delta+1)$} Coloring in Congested Clique,
  Massively Parallel Computation, and Centralized Local Computation}.
\newblock In {\em the Proceedings of the International Symposium on Principles
  of Distributed Computing (PODC)}, pages 471--480, 2019.
\newblock \href {https://doi.org/10.1145/3293611.3331607}
  {\path{doi:10.1145/3293611.3331607}}.

\bibitem[Cha20]{Chang2020}
Yi-Jun Chang.
\newblock {The Complexity Landscape of Distributed Locally Checkable Problems
  on Trees}.
\newblock In {\em Proceedings of the International Symposium on Distributed
  Computing (DISC)}, pages 18:1--18:17, 2020.
\newblock \href {https://doi.org/10.4230/LIPIcs.DISC.2020.18}
  {\path{doi:10.4230/LIPIcs.DISC.2020.18}}.

\bibitem[CHL{\etalchar{+}}19]{tree1}
Yi-Jun Chang, Qizheng He, Wenzheng Li, Seth Pettie, and Jara Uitto.
\newblock {Distributed Edge Coloring and a Special Case of the Constructive
  Lovász Local Lemma}.
\newblock {\em ACM Transactions on Algorithms (TALG)}, pages 1--51, 2019.
\newblock \href {https://doi.org/10.1145/3365004} {\path{doi:10.1145/3365004}}.

\bibitem[CKP19]{CKP19}
Yi{-}Jun Chang, Tsvi Kopelowitz, and Seth Pettie.
\newblock {An Exponential Separation between Randomized and Deterministic
  Complexity in the {LOCAL} Model}.
\newblock {\em SIAM Journal on Computing}, 48(1):122--143, 2019.
\newblock \href {https://doi.org/10.1137/17M1117537}
  {\path{doi:10.1137/17M1117537}}.

\bibitem[CP19]{CP19}
Yi{-}Jun Chang and Seth Pettie.
\newblock {A Time Hierarchy Theorem for the {LOCAL} Model}.
\newblock {\em {SIAM} Journal of Computing}, 48(1):33--69, 2019.
\newblock \href {https://doi.org/10.1137/17M1157957}
  {\path{doi:10.1137/17M1157957}}.

\bibitem[CV86]{ColeV86}
Richard Cole and Uzi Vishkin.
\newblock {Deterministic Coin Tossing with Applications to Optimal Parallel
  List Ranking}.
\newblock {\em Inf. Control.}, 70(1):32--53, 1986.
\newblock \href {https://doi.org/10.1016/S0019-9958(86)80023-7}
  {\path{doi:10.1016/S0019-9958(86)80023-7}}.

\bibitem[DFKL21]{spanner}
Michal Dory, Orr Fischer, Seri Khoury, and Dean Leitersdorf.
\newblock {Constant-Round Spanners and Shortest Paths in Congested Clique and
  MPC}.
\newblock In {\em the Proceedings of the International Symposium on Principles
  of Distributed Computing (PODC)}, pages 223--233, 2021.
\newblock \href {https://doi.org/10.1145/3465084.3467928}
  {\path{doi:10.1145/3465084.3467928}}.

\bibitem[DG08]{dg04}
Jeffrey Dean and Sanjay Ghemawat.
\newblock {{MapReduce}: Simplified Data Processing on Large Clusters}.
\newblock In {\em {Communications of the ACM}}, pages 107--113, 2008.

\bibitem[FG17]{FischerG17}
Manuela Fischer and Mohsen Ghaffari.
\newblock {Sublogarithmic Distributed Algorithms for Lov{\'{a}}sz Local Lemma,
  and the Complexity Hierarchy}.
\newblock In {\em Proceedings of the International Symposium on Distributed
  Computing (DISC)}, pages 18:1--18:16, 2017.
\newblock \href {https://doi.org/10.4230/LIPIcs.DISC.2017.18}
  {\path{doi:10.4230/LIPIcs.DISC.2017.18}}.

\bibitem[GGJ20]{GGC20}
Mohsen Ghaffari, Christoph Grunau, and Ce~Jin.
\newblock {Improved {MPC} Algorithms for {MIS}, Matching, and Coloring on Trees
  and Beyond}.
\newblock In {\em Proceedings of the International Symposium on Distributed
  Computing (DISC)}, pages 34:1--34:18, 2020.
\newblock \href {https://doi.org/10.4230/LIPIcs.DISC.2020.34}
  {\path{doi:10.4230/LIPIcs.DISC.2020.34}}.

\bibitem[GGR21]{GhaffariGR21}
Mohsen Ghaffari, Christoph Grunau, and V{\'{a}}clav Rozhon.
\newblock {Improved Deterministic Network Decomposition}.
\newblock In {\em Proceedings of ACM-SIAM Symposium on Discrete Algorithms
  (SODA)}, pages 2904--2923, 2021.
\newblock \href {https://doi.org/10.1137/1.9781611976465.173}
  {\path{doi:10.1137/1.9781611976465.173}}.

\bibitem[Gha16]{Ghaffari16}
Mohsen Ghaffari.
\newblock {An Improved Distributed Algorithm for Maximal Independent Set}.
\newblock In {\em Proceedings of ACM-SIAM Symposium on Discrete Algorithms
  (SODA)}, pages 270--277, 2016.
\newblock \href {https://doi.org/10.1137/1.9781611974331.ch20}
  {\path{doi:10.1137/1.9781611974331.ch20}}.

\bibitem[GHK18]{GHK18}
Mohsen Ghaffari, David~G. Harris, and Fabian Kuhn.
\newblock {On Derandomizing Local Distributed Algorithms}.
\newblock In {\em the Proceedings of the {IEEE} Symposium on Foundations of
  Computer Science (FOCS)}, pages 662--673, 2018.
\newblock \href {https://doi.org/10.1109/FOCS.2018.00069}
  {\path{doi:10.1109/FOCS.2018.00069}}.

\bibitem[GKU19]{focs}
Mohsen Ghaffari, Fabian Kuhn, and Jara Uitto.
\newblock {Conditional Hardness Results for Massively Parallel Computation from
  Distributed Lower Bounds}.
\newblock In {\em the Proceedings of the {IEEE} Symposium on Foundations of
  Computer Science (FOCS)}, pages 1650--1663, 2019.
\newblock \href {https://doi.org/10.1109/FOCS.2019.00097}
  {\path{doi:10.1109/FOCS.2019.00097}}.

\bibitem[GPS88]{GoldbergPS88}
Andrew~V. Goldberg, Serge~A. Plotkin, and Gregory~E. Shannon.
\newblock Parallel symmetry-breaking in sparse graphs.
\newblock {\em {SIAM} J. Discret. Math.}, 1(4):434--446, 1988.
\newblock \href {https://doi.org/10.1137/0401044} {\path{doi:10.1137/0401044}}.

\bibitem[GRB22]{lclcomplete}
Christoph Grunau, Vaclav Rozhon, and Sebastian Brandt.
\newblock {The Landscape of Distributed Complexities on Trees and Beyond},
  2022.
\newblock \href {http://arxiv.org/abs/2202.04724} {\path{arXiv:2202.04724}}.

\bibitem[GS17]{GhaSu17}
Mohsen Ghaffari and Hsin{-}Hao Su.
\newblock {Distributed Degree Splitting, Edge Coloring, and Orientations}.
\newblock In {\em Proceedings of ACM-SIAM Symposium on Discrete Algorithms
  (SODA)}, pages 2505--2523, 2017.
\newblock \href {https://doi.org/10.1137/1.9781611974782.166}
  {\path{doi:10.1137/1.9781611974782.166}}.

\bibitem[GSZ11]{broadcast}
Michael~T. Goodrich, Nodari Sitchinava, and Qin Zhang.
\newblock {Sorting, Searching, and Simulation in the Mapreduce Framework}.
\newblock In {\em {International Symposium on Algorithms and Computation
  (ISAAC)}}, pages 374--383, 2011.
\newblock \href {https://doi.org/10.1007/978-3-642-25591-5_39}
  {\path{doi:10.1007/978-3-642-25591-5_39}}.

\bibitem[GU19]{GU19}
Mohsen Ghaffari and Jara Uitto.
\newblock {Sparsifying Distributed Algorithms with Ramifications in Massively
  Parallel Computation and Centralized Local Computation}.
\newblock In {\em Proceedings of ACM-SIAM Symposium on Discrete Algorithms
  (SODA)}, pages 1636--1653, 2019.
\newblock \href {https://doi.org/10.1137/1.9781611975482.99}
  {\path{doi:10.1137/1.9781611975482.99}}.

\bibitem[IBY{\etalchar{+}}07]{Isard:2007}
Michael Isard, Mihai Budiu, Yuan Yu, Andrew Birrell, and Dennis Fetterly.
\newblock {Dryad: Distributed Data-Parallel Programs from Sequential Building
  Blocks}.
\newblock In {\em {SIGOPS Operating Systems Review}}, pages 59--72, 2007.
\newblock \href {https://doi.org/10.1145/1272996.1273005}
  {\path{doi:10.1145/1272996.1273005}}.

\bibitem[KLM{\etalchar{+}}14]{pathexp}
Raimondas Kiveris, Silvio Lattanzi, Vahab Mirrokni, Vibhor Rastogi, and Sergei
  Vassilvitskii.
\newblock {Connected Components in {MapReduce} and Beyond}.
\newblock In {\em {{ACM} Symposium on Cloud Computing}}, pages 18:1--18:13,
  2014.
\newblock \href {https://doi.org/10.1145/2670979.2670997}
  {\path{doi:10.1145/2670979.2670997}}.

\bibitem[KMW16]{KuhnMW16}
Fabian Kuhn, Thomas Moscibroda, and Roger Wattenhofer.
\newblock Local computation: Lower and upper bounds.
\newblock {\em Journal of the ACM (JACM)}, 63:17:1--17:44, 2016.
\newblock \href {https://doi.org/10.1145/2742012} {\path{doi:10.1145/2742012}}.

\bibitem[KSV10]{KarloffSV10}
Howard~J. Karloff, Siddharth Suri, and Sergei Vassilvitskii.
\newblock {A Model of Computation for {MapReduce}}.
\newblock In {\em Proceedings of ACM-SIAM Symposium on Discrete Algorithms
  (SODA)}, pages 938--948, 2010.
\newblock \href {https://doi.org/10.1137/1.9781611973075.76}
  {\path{doi:10.1137/1.9781611973075.76}}.

\bibitem[Lin87]{linial}
Nathan Linial.
\newblock {Distributive Graph Algorithms - Global Solutions from Local Data}.
\newblock In {\em the Proceedings of the {IEEE} Symposium on Foundations of
  Computer Science (FOCS)}, pages 331--335, 1987.
\newblock \href {https://doi.org/10.1109/SFCS.1987.20}
  {\path{doi:10.1109/SFCS.1987.20}}.

\bibitem[Lin92]{Linial92}
Nathan Linial.
\newblock {Locality in Distributed Graph Algorithms}.
\newblock {\em {SIAM} Journal on Computing}, 21(1):193--201, 1992.
\newblock \href {https://doi.org/10.1137/0221015} {\path{doi:10.1137/0221015}}.

\bibitem[Lub85]{Luby85}
Michael Luby.
\newblock {A Simple Parallel Algorithm for the Maximal Independent Set
  Problem}.
\newblock In {\em Proceedings of the Symposium on Theory of Computing (STOC)},
  pages 1--10, 1985.
\newblock \href {https://doi.org/10.1145/22145.22146}
  {\path{doi:10.1145/22145.22146}}.

\bibitem[LW10]{wattenhofer}
Christoph Lenzen and Roger Wattenhofer.
\newblock {Brief Announcement: Exponential Speed-Up of Local Algorithms Using
  Non-Local Communication}.
\newblock In {\em the Proceedings of the International Symposium on Principles
  of Distributed Computing (PODC)}, pages 295--296, 2010.
\newblock \href {https://doi.org/10.1145/1835698.1835772}
  {\path{doi:10.1145/1835698.1835772}}.

\bibitem[MR89]{MillerReif89}
Gary~L. Miller and John~H. Reif.
\newblock {Parallel Tree Contraction Part 1: Fundamentals}.
\newblock {\em Adv. Comput. Res.}, 5:47--72, 1989.

\bibitem[NS95]{NaorS95}
Moni Naor and Larry Stockmeyer.
\newblock {What Can Be Computed Locally?}
\newblock {\em SIAM Journal on Computing}, 24(6):1259--1277, 1995.
\newblock \href {https://doi.org/10.1137/S0097539793254571}
  {\path{doi:10.1137/S0097539793254571}}.

\bibitem[Pel00]{peleg}
David Peleg.
\newblock {\em {Distributed Computing: A Locality-Sensitive Approach}}.
\newblock Society for Industrial and Applied Mathematics, 2000.
\newblock \href {https://doi.org/10.1137/1.9780898719772}
  {\path{doi:10.1137/1.9780898719772}}.

\bibitem[PR01]{PanconesiR01}
Alessandro Panconesi and Romeo Rizzi.
\newblock {Some Simple Distributed Algorithms for Sparse Networks}.
\newblock {\em Distributed Computing}, 14(2):97--100, 2001.
\newblock \href {https://doi.org/10.1007/PL00008932}
  {\path{doi:10.1007/PL00008932}}.

\bibitem[RG20]{RozhonG20}
V{\'{a}}clav Rozhon and Mohsen Ghaffari.
\newblock {Polylogarithmic-Time Deterministic Network Decomposition and
  Distributed Derandomization}.
\newblock In {\em Proceedings of the Symposium on Theory of Computing (STOC)},
  pages 350--363, 2020.
\newblock \href {https://doi.org/10.1145/3357713.3384298}
  {\path{doi:10.1145/3357713.3384298}}.

\bibitem[RVW18]{Roughgarden18}
Tim Roughgarden, Sergei Vassilvitskii, and Joshua~R. Wang.
\newblock {Shuffles and Circuits (On Lower Bounds for Modern Parallel
  Computation)}.
\newblock {\em J. ACM}, pages 1--12, 2018.
\newblock \href {https://doi.org/10.1145/3232536} {\path{doi:10.1145/3232536}}.

\bibitem[Whi12]{White:2012}
Tom White.
\newblock {\em {Hadoop: The Definitive Guide}}.
\newblock O'Reilly Media, Inc., 2012.
\newblock \href {https://doi.org/10.5555/1717298} {\path{doi:10.5555/1717298}}.

\bibitem[ZCF{\etalchar{+}}10]{ZahariaCFSS10}
Matei Zaharia, Mosharaf Chowdhury, Michael~J. Franklin, Scott Shenker, and Ion
  Stoica.
\newblock {Spark: Cluster Computing with Working Sets}.
\newblock In {\em {USENIX Workshop on Hot Topics in Cloud Computing
  (HotCloud)}}, 2010.
\newblock \href {https://doi.org/10.5555/1863103.1863113}
  {\path{doi:10.5555/1863103.1863113}}.

\end{thebibliography}

\end{document}